\documentclass[12pt, letterpaper]{article}

\usepackage{arxiv}

\usepackage[utf8]{inputenc} 
\usepackage[T1]{fontenc}    

\usepackage{amsmath, amsfonts, amsthm, bm, bbm}
\usepackage{mathtools}
\usepackage{mathrsfs}
\usepackage{wasysym}

\usepackage{graphicx}
\usepackage{float}
\usepackage{adjustbox}
\usepackage{array}
\usepackage{booktabs, multirow, tabularx, colortbl}
\usepackage{rotating}
\usepackage{dirtree}

\usepackage{caption}
\usepackage{subcaption}

\usepackage{enumerate}
\usepackage[inline]{enumitem}

\usepackage{xcolor}      
\usepackage{soul}
\usepackage{indentfirst}

\usepackage{algorithm}
\usepackage{algpseudocode}

\usepackage{listings}
\usepackage{cancel}
\usepackage{environ}

\usepackage{hyperref}

\usepackage[colorinlistoftodos]{todonotes}

\usepackage[authoryear]{natbib}

\newtheorem{prop}{Proposition}

\theoremstyle{remark}
\newtheorem*{remark}{Remark}

\newcommand{\y}{\checkmark}

\title{Changepoint Detection As Model Selection: A General Framework}

\author{
Michael Grantham \\
Department of Statistics,\\
University of Nebraska-Lincoln\\
\texttt{michaelgrantham@live.com} \\
\And
Bertrand Clarke \\
Department of Statistics,\\
University of Nebraska-Lincoln\\
\texttt{bclarke3@unl.edu} 
\And
Xueheng Shi \\
Department of Statistics,\\
University of Nebraska-Lincoln\\
\texttt{shixueheng@gmail.com} \\
}

\begin{document}

\maketitle

\begin{abstract}

This dissertation presents a general framework for changepoint detection based on $\ell_0$ model selection. The core method, Iteratively Reweighted Fused Lasso (IRFL), improves upon the generalized lasso by adaptively reweighting penalties to enhance support recovery and minimize criteria such as the Bayesian Information Criterion (BIC). The approach allows for flexible modeling of seasonal patterns, linear and quadratic trends, and autoregressive dependence in the presence of changepoints.

Simulation studies demonstrate that IRFL achieves accurate changepoint detection across a wide range of challenging scenarios, including those involving nuisance factors such as trends, seasonal patterns, and serially correlated errors. The framework is further extended to image data, where it enables edge-preserving denoising and segmentation, with applications spanning medical imaging and high-throughput plant phenotyping.

Applications to real-world data demonstrate IRFL’s utility. In particular, analysis of the Mauna Loa CO\textsubscript{2} time series reveals changepoints that align with volcanic eruptions and ENSO events, yielding a more accurate trend decomposition than ordinary least squares. Overall, IRFL provides a robust, extensible tool for detecting structural change in complex data.

\end{abstract}

\section{Changepoint Problems}

\subsection*{Changepoint Analysis Intuitions}
Changepoint detection is a fundamental problem in time series analysis, with applications across fields such as climate science, finance, genomics, and engineering. A changepoint refers to a time at which the statistical properties of a sequence of observations undergo an abrupt and meaningful shift. Such changes may occur in the mean, trend, variance, distribution, or correlation structure of a stochastic process. Detecting these changes is essential for accurate modeling, forecasting, and decision-making, as neglecting them can result in misleading inferences about the underlying data-generating mechanism and unreliable predictions.

Changepoint problems can be categorized based on the type of structural shift present in the data. One of the most widely studied cases is the mean-shift changepoint, where the expected value of a process changes at an unknown time. For instance, in the United States, weather stations typically upgrade or replace their gauges every 25 years. Such replacements often introduce unintentional mean shifts into the recorded measurements. Similarly, modifications in measurement procedures—such as changes in observation time, relocation of instruments or entire stations, or the occurrence of extreme meteorological or geological events—can also result in mean shifts. In the climate literature, these artificial disturbances are referred to as inhomogeneities. They introduce unintended biases into time series analyses, compromising estimation accuracy and predictive reliability. Detecting and correcting for such changepoints—through a process known as homogenization \citep{domonkos-2021-homogenisation, ribeiro-2016-homogenisation} — is therefore essential for producing consistent and reliable climate data records.

The following examples illustrate common forms of changepoints and highlight why their detection and proper treatment are essential. Subsection \ref{oil drilling example} presents a real-world case demonstrating the impact of mean shifts. However, changepoints can also occur in other structural components of a time series. Subsection \ref{global warming example} explores a trend shift in a climatological context, while subsection \ref{autocorrelation example} emphasizes the importance of accounting for autocorrelation in the same dataset.

In addition to mean and trend shifts, changes in variability play a critical role in understanding system dynamics, as they reveal transitions in volatility or stability over time. Subsection \ref{liver example} illustrates such a changepoint through a practical example involving a shift in variability.

\subsection{Oil Drilling} \label{oil drilling example}
One real-world example of changepoint analysis comes from oil well drilling. Nuclear magnetic resonance (NMR) measurements are gathered using specialized NMR logging tools which are integrated into the drill bit or the drill string. These tools generate a strong magnetic field around the surrounding rock, aligning the magnetic moments of hydrogen nuclei within (typically in water or hydrocarbons within the rock pores). After this alignment, a radio frequency pulse is applied to disturb the equilibrium. As the hydrogen nuclei return to their original alignment (a process known as relaxation), they emit a detectable radiofrequency signal \citep{bloch-1946-nmr,kleinberg-1992-nmr}. 

The NMR logging tool measures the relaxation times, which provide valuable information about the rock's porosity, permeability, and fluid content. Changepoints in the mean structure of these responses correspond to changes in the strata of rock \citep{fearnhead-2019-detecting}, making the accurate identification of these changepoints crucial for understanding the underlying geological formations and for optimizing drilling strategies. 

Figure \ref{fig:changepoints_example}  demonstrates how changes in the mean structure of nuclear magnetic resonance measurements can indicate differing strata in the subsurface rock. The figure includes a plot of the example data, with a depiction of the changing mean response. 

However, mean shifts like those found in NMR data represent only one kind of changepoint. Another key changepoint type is a trend shift changepoint, where the rate of change of a time series suddenly shifts. The next subsection deals with changepoints of this type.
\begin{figure}[H]
    \centering
    \includegraphics[width=\textwidth]{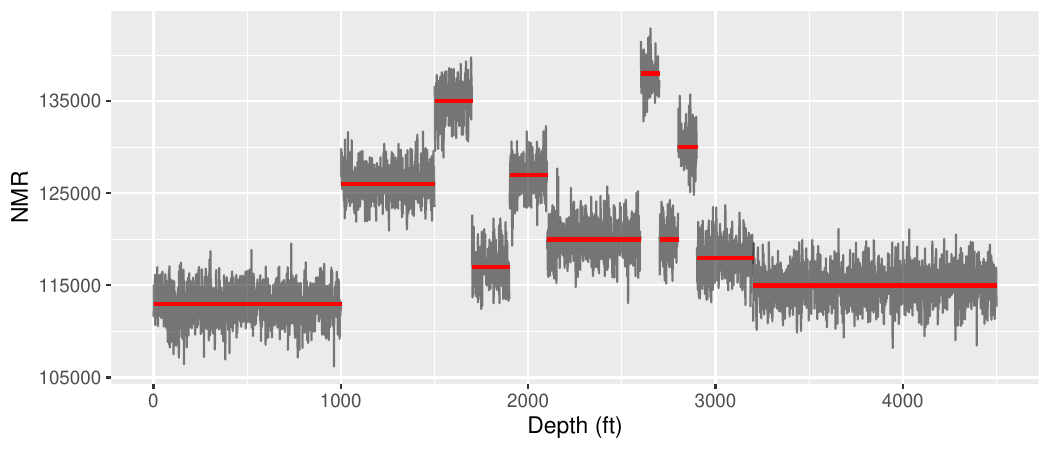}
    \caption[Plot of NMR responses in rock strata]{Example plot illustrating changes in nuclear magnetic resonance and corresponding changepoints in rock strata. In this plot, shifts in NMR responses indicate variations in these properties, corresponding to distinct layers or changes in the rock strata, highlighted by the changing mean NMR response.}
    \label{fig:changepoints_example}
\end{figure}

\subsection{Global Warming Surge} \label{global warming example}

A trend shift changepoint occurs when the rate of change itself undergoes a structural break, often due to environmental factors or other external influences. This is particularly important in climate data, where identifying such points allows for models that distinguish between periods governed by different warming rates, thereby providing insight into when accelerations or slowdowns in the trend occur. Without accounting for these shifts, models that assume a constant trend risk obscuring important features of the data and producing misleading inferences.

To illustrate, global warming time series data from the NASA GISS (Goddard Institute for Space Studies; \cite{nasa_gis}) are analyzed both with and without accounting for a changepoint. In the na\"ive case, a simple linear regression assumes a constant rate of warming, missing temporal shifts in the slope and failing to reveal when the rate of global warming increased or decreased. Figure \ref{fig:GGIS_trend} is included below for reference. The $y-$axis is the deviation from the global mean temperature (taken over both land and water) from the years 1951 through 1980 (referred to as the ``temperature anomaly''), while the $x-$axis is the year corresponding to the anomaly.

\begin{figure}[H]
    \centering
    \textbf{Global Mean Estimates based on Land and Ocean Data from NASA/GISS/GISTEMP v4}
    \includegraphics[width=\textwidth]{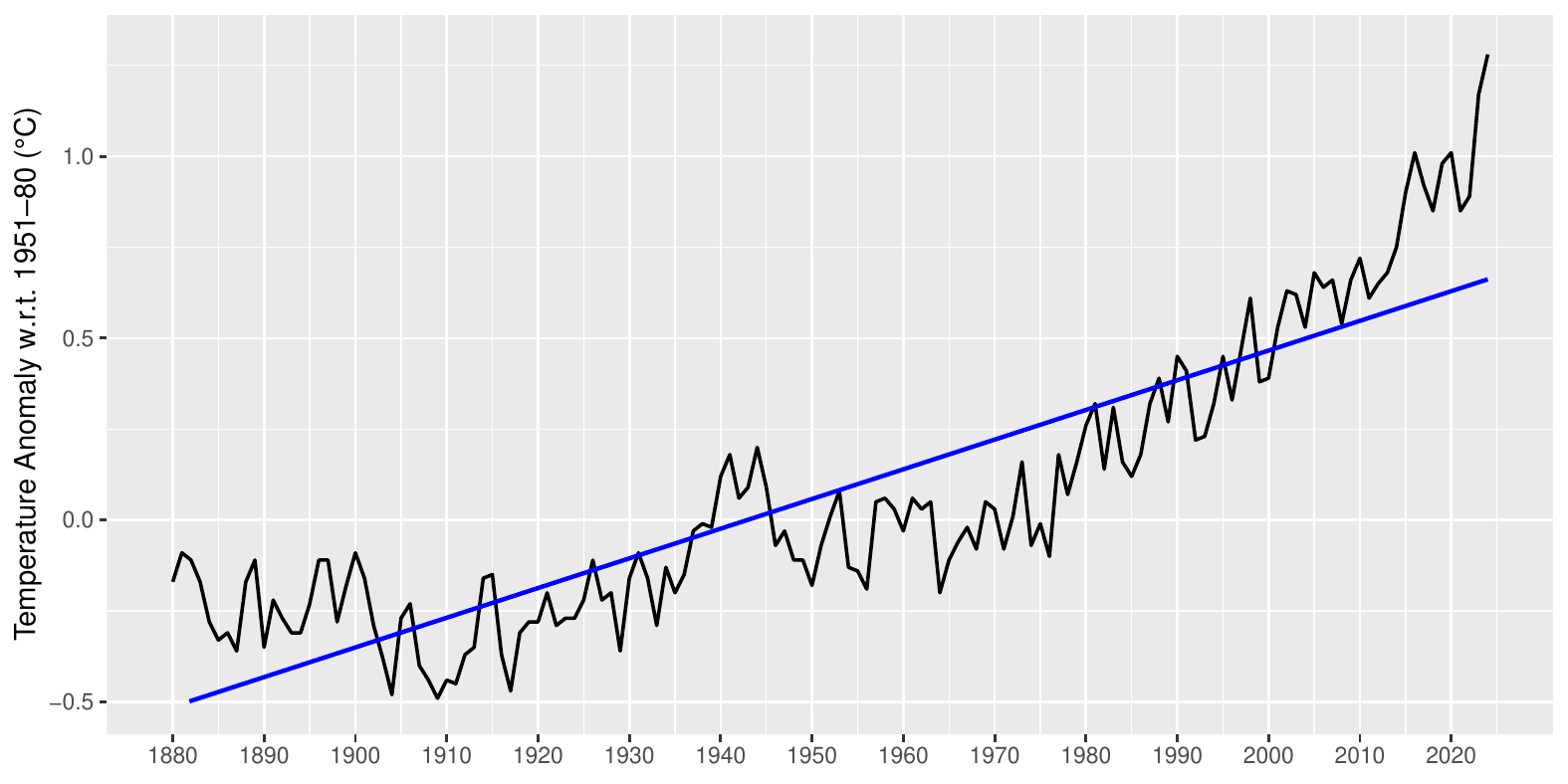}
    \caption[Naive plot of global warming trend]{Example plot of a na\"ive linear regression of global mean land and ocean temperature.}
    \label{fig:GGIS_trend}
\end{figure}
A plot of the residuals clearly reveals that this simple linear regression is missing key features of the data. See Figure \ref{fig:GGIS_residuals} below:

\begin{figure}[H]
    \centering
    \textbf{Global Mean Linear Model Residuals based on Land and Ocean Data from NASA/GISS/GISTEMP v4}
    \includegraphics[width=\textwidth]{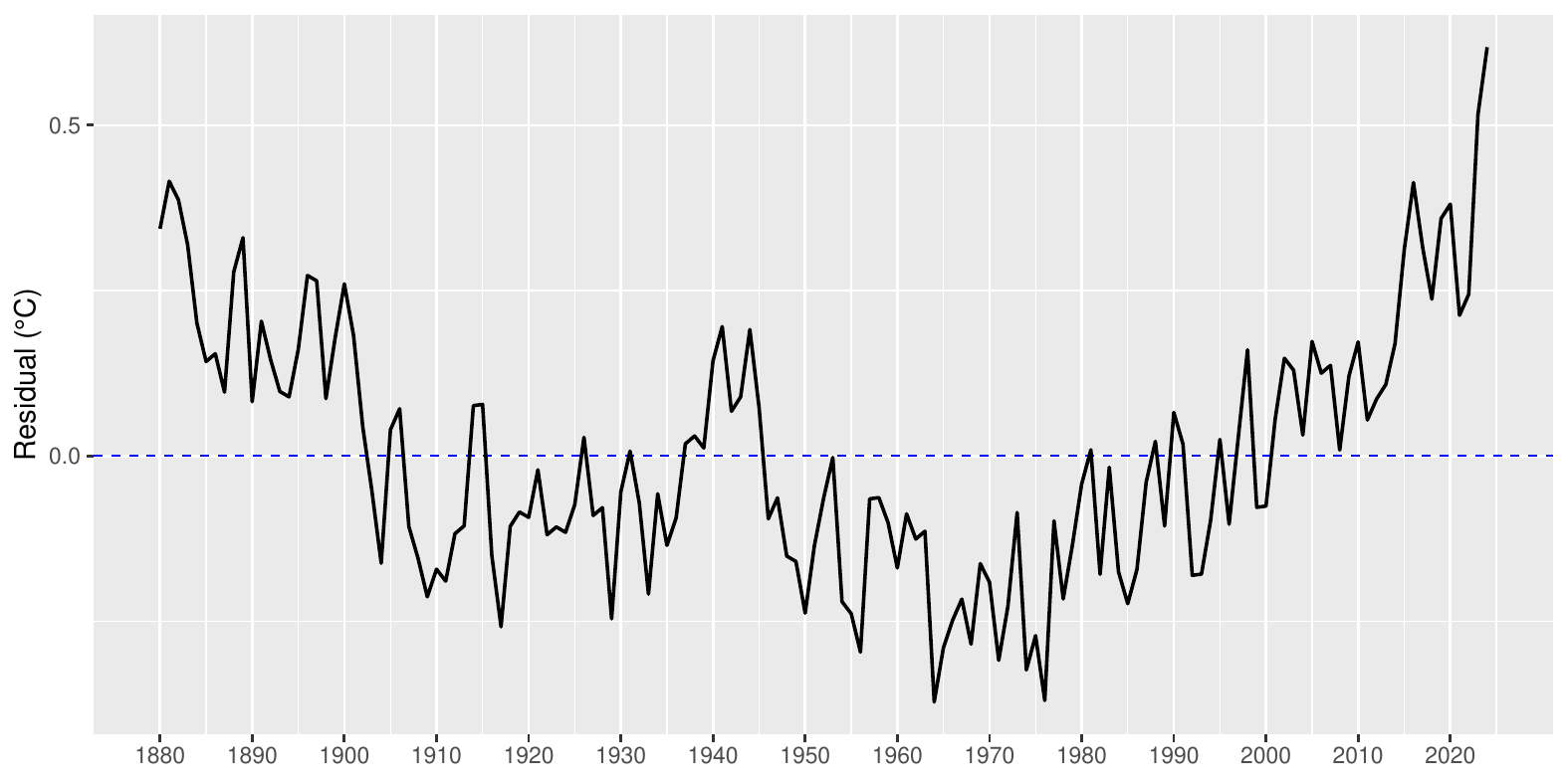}
    \caption[Plot of residuals from linear model of global warming trend]{Plot of residuals from linear model in Figure \ref{fig:GGIS_trend}.}
    \label{fig:GGIS_residuals}
\end{figure}
Clearly, a simple linear regression is insufficient to model the climate data in question. By contrast, including a changepoint in the model reveals valuable insights. Not only does it pinpoint when the trend shift occurred, offering a clearer understanding of the timing of significant shifts in the environment, but it also allows for a more accurate estimation of the effect size associated with that change. In the particular case of this GISS global warming data, a changepoint is detected in the year 1974, with a jump in trend from 0.0037 $^\circ$C/year prior to 1974 to 0.02 $^\circ$C/year after (see Figure \ref{fig:GGIS_changepoint}). Both the location of the changepoint (the year 1974) and the magnitude of the change in trend are useful to climate researchers.

\begin{figure}[H]
    \centering
        \textbf{Global Mean Estimates based on Land and Ocean Data from NASA/GISS/GISTEMP v4}
        \includegraphics[width=\textwidth]{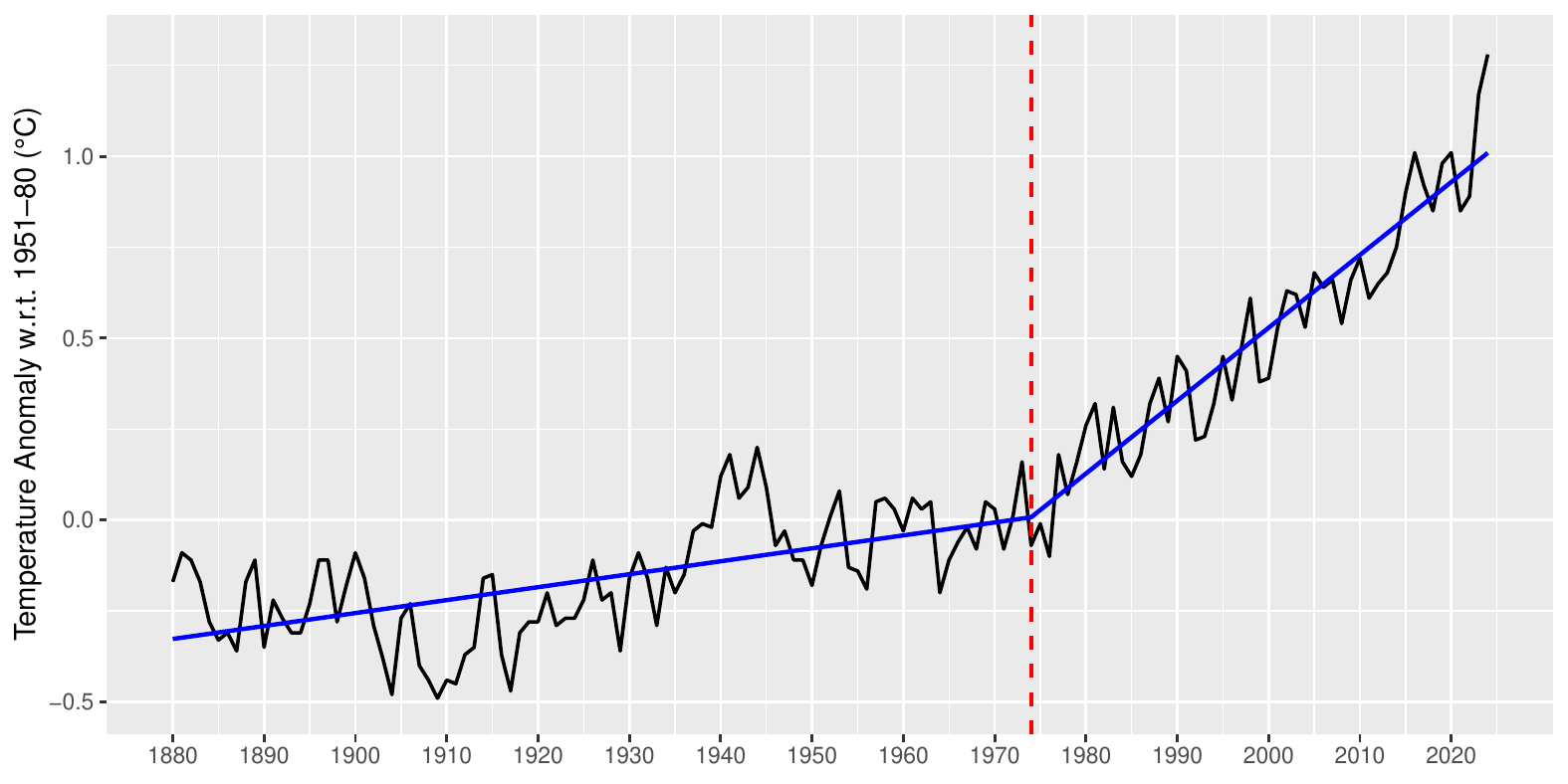}
    \caption[Plot of global warming trend with changepoint]{By including a detected changepoint before running the regression, one can see that the estimate of trend radically changes.}
    \label{fig:GGIS_changepoint}
\end{figure}

\subsection{Global Warming Data with Autocorrelation}\label{autocorrelation example}

While the detection of a trend shift, as in the previous example, is useful for climate scientists analyzing global temperature data, accounting for serial correlation in such a time series can significantly alter both the estimation and interpretation of a changepoint; see \citet{Shi-2022-CET}. Serial correlation (also called autocorrelation) refers to the dependence of an observation on its past values, often captured through correlations at different lags. A simple example is noise generated by an autoregressive process of order one, expressed as
\begin{equation}
\label{autoregressive}
X_t = \mu+\epsilon_t,\quad\epsilon_t=\phi \epsilon_{t-1}+\eta_t, \quad \eta_t\overset{\text{iid}}{\sim}WN(0,\sigma^2),
\end{equation}
where $WN(\cdot,\cdot)$ denotes a white noise process. Here, $X_t$ exhibits autocorrelation that decays with lag, and because each value depends only on the most recent one, it is also Markov.
Accounting for such dependence is crucial in practice: \citet{beaulieu-2024-globalwarmingsurge} report a trend change in the global temperature record around 1974, with additional surges in the 1990s and 2010s. Such findings, if taken at face value, suggest an alarming acceleration in warming that could drive costly shifts in public policy. However, once the serial correlation inherent in the data is taken into account (the maximum likelihood estimate of an AR(1) parameter is about $0.53$), the apparent post-1974 surge becomes statistically insignificant and would need to be at least 55\% larger in order to be detectable. This exact situation---the conflation between ``extra changepoints" and ``larger autocorrelation" is a very nuanced and important issue, but we delay discussion of this problem until subsection  1.5.

For visualization purposes, a graph of the global mean estimate of temperature, with a supposed trend shift fit at the year 2015, is included in Figure \ref{fig:GGIS_extra_changepoint}. Using OLS estimates, the rate of increase before 2015 is approximately $0.015^\circ$C/year, while it appears to be approximately $0.021^\circ$C/year after.

\begin{figure}[H]
    \centering
        \textbf{Global Mean Estimates based on Land and Ocean Data from NASA/GISS/GISTEMP v4}
        \includegraphics[width=\textwidth]{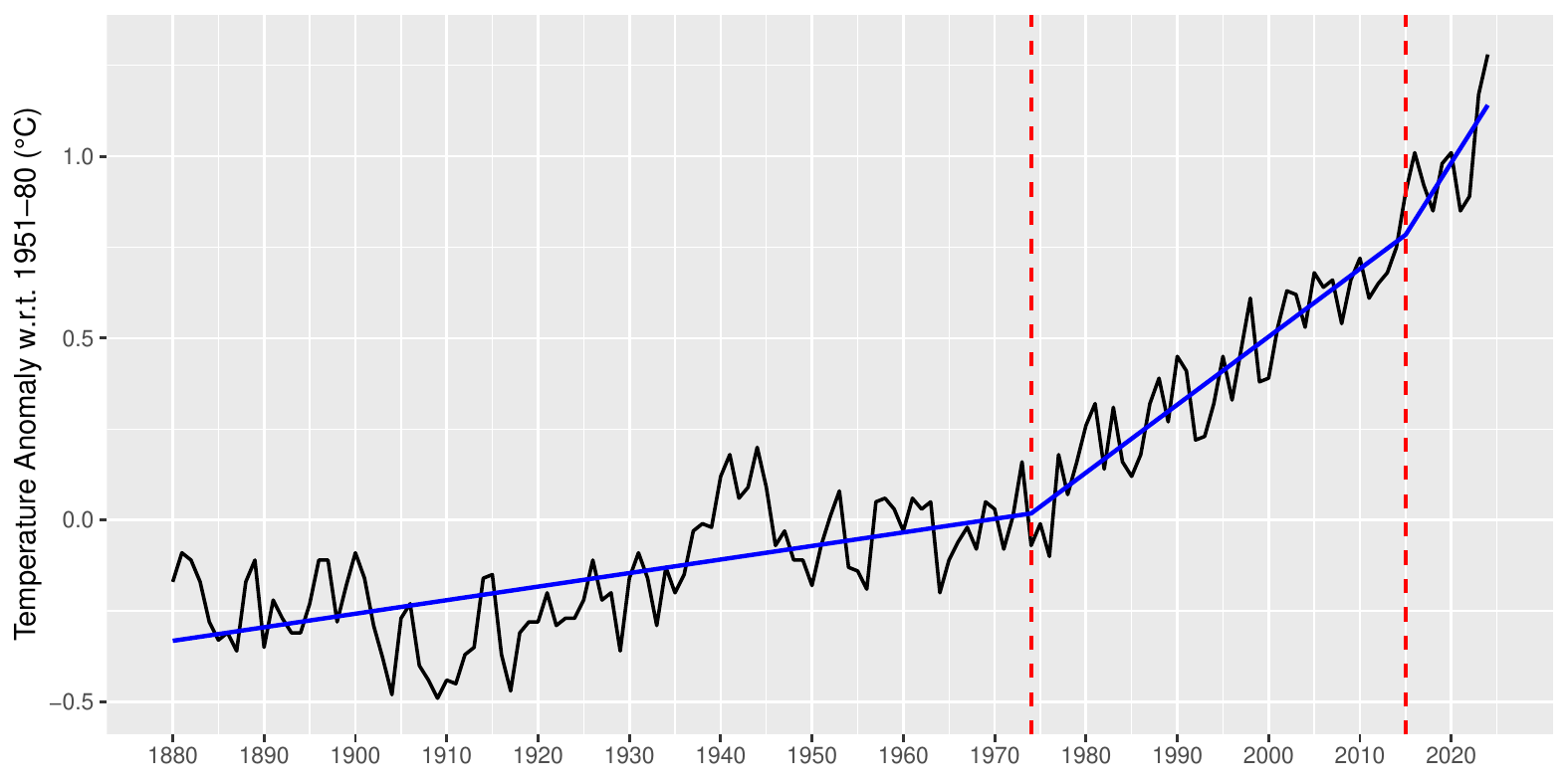}
    \caption[Plot of global warming trend with spurious changepoint]{There is speculation of a ``surge'' in the rate of global warming (a trend shift) at approximately the year 2015.}
    \label{fig:GGIS_extra_changepoint}
\end{figure}

Therefore, due to the presence of serial correlation in the time series, there is not strong enough evidence at this point in time to believe there has been a surge in the global warming rate after 1974. 

\subsection{Liver transplantation}\label{liver example}
Another salient example of changepoint analysis comes from the world of organ procurement for the purpose of transplantation. In \citet{gao-2020-variance}, a severed porcine liver was monitored upon the infusion of perfusion liquid (a liquid intended to imitate the life-sustaining properties of blood) into the organ. The surface temperature of the liver was measured over a dense grid on the organ every ten minutes for 24 hours, and the recorded temperature was reported for one randomly selected location. Initially the variability in surface temperature measurements was high as the organ resisted changes in the ambient temperature. However, after about ten hours the organ lost its viability for transplant---a change which is reflected in the somewhat abrupt loss of variability in the surface temperature. Locating this changepoint in the variability of the time series is key to understanding the point at which the liver loses its viability. A simulation mimicking this data is included in Figure \ref{fig:Liver Plot} below for reference, because the original data was not available.

\begin{figure}[H]
    \centering
        \textbf{Surface Temperature of Severed Porcine Liver over 24 Hours}
        \includegraphics[width=\textwidth]{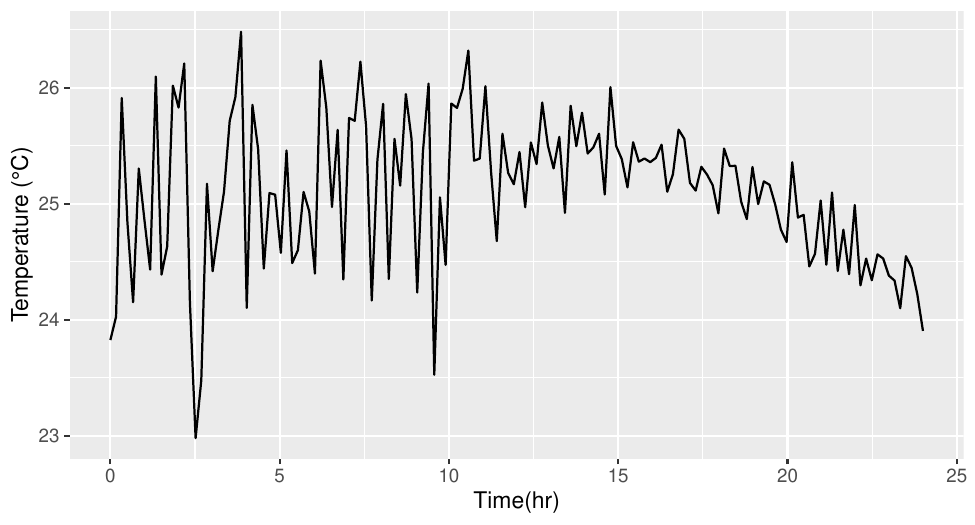}
    \caption[Plot of Porcine Liver Temperature over 24 Hours]{The variability of the surface temperature of the liver decreases at roughly the 10-hour mark, corresponding to the loss of the organ's viability for transplant.}
    \label{fig:Liver Plot}
\end{figure}
A variance changepoint can be represented by allowing the noise variance to differ before and after some unknown time $\tau$. 
Formally, one may write
\[
y_t = \mu_t + \epsilon_t, \qquad 
\epsilon_t \sim 
\begin{cases}
\mathcal{N}(0, \sigma_1^2), & t < \tau,\\[4pt]
\mathcal{N}(0, \sigma_2^2), & t \ge \tau,
\end{cases}
\]
where $\mu_t$ denotes the mean structure (not necessarily constant, and clearly not constant in Figure \ref{fig:Liver Plot}). 
For each candidate $\tau$, the segment variances can be estimated as
\[
\hat{\sigma}_1^2(\tau) = \frac{1}{n_1}\sum_{t < \tau}(y_t - \hat{\mu}_t)^2, 
\qquad 
\hat{\sigma}_2^2(\tau) = \frac{1}{n_2}\sum_{t \ge \tau}(y_t - \hat{\mu}_t)^2,
\]
with $n_1 = \tau-1$ and $n_2 = n - \tau+1$.
The changepoint time $\hat{\tau}$ is then selected as the estimate of $\tau$ that maximizes the log-likelihood
\[
\ell(\tau) = -\frac{n_1}{2}\log \hat{\sigma}_1^2(\tau)
             -\frac{n_2}{2}\log \hat{\sigma}_2^2(\tau),
\]
or equivalently minimizes the total within-segment variance.  
In this example, $\hat{\tau}$ corresponds to the time at which the surface-temperature variability of the liver markedly decreases, indicating loss of viability. A further motivational treatment of model selection is contained in subsection \ref{model selection and objective criteria}.

These real-world examples highlight the importance of accurately detecting structural changes in the mean, trend, and volatility of time series data for subsequent analysis and prediction. 

\subsection{Model Selection and Objective Criteria}\label{model selection and objective criteria}
All of the examples considered so far admit competing plausible explanations (e.g., mean shift versus no mean shift, or trend shift versus autocorrelation effects). This illustrates a central difficulty: in changepoint analysis, as in time series modeling more generally, candidate models often differ not only in the number and type of changepoints they include but also in the number and complexity of other parameters (see, for instance, the competition between autocorrelation and changepoints in subsection 1.3, or the following from later in this dissertation: Section 2’s Wild Contrast Maximization, Section 4’s scenarios 4 and 5, and Section 5’s Mauna Loa discussion). A crucial question is therefore how to decide objectively which model provides the best balance among explanatory power, parsimony, and accuracy. Simply choosing the model with the best fit (for example, the one with the largest maximized likelihood) is inadequate, as models with more parameters tend to overfit. 

One of the most used remedies is the Bayesian Information Criterion (BIC), also known as the Schwarz Criterion \citep{schwarz-1978-bic}. For a fitted model with maximized likelihood $\mathcal{L}(\hat\theta)$, number of estimated parameters $k$, and sample size $n$, its BIC is computed as
\begin{equation}
\label{BIC}
\text{BIC} = -2 \log \mathcal{L}(\hat\theta) + k \log n.
\end{equation}
The first term, $-2 \log \mathcal{L}(\hat\theta)$, rewards models that fit the data well, while the second term, $k \log n$, penalizes models with too many parameters. This balance ensures that the chosen model is not only statistically adequate, but also parsimonious. 

In practice, among all candidate models considered, the preferred model is the one that minimizes the BIC scores. This provides a consistent and objective basis for deciding between alternative changepoint configurations and is the standard approach adopted in much of the changepoint literature. Information criteria such as the AIC and BIC have a long history in model selection, and a detailed treatment of their theoretical foundations and practical use can be found in \citet{burnham-2002-model}.

\subsection{Dissertation Organization}

The importance of changepoint analysis being evident across multiple disciplines, extensive research has focused on changepoint analysis since the pioneering work in \citet{page-1955-test}. For example, notable methods for detecting mean shifts include CUSUM \citep{page-1955-test}, SCUSUM \citep{kirch-2006-SCUSUM}, binary segmentation \citep{scott-1974-cluster, sen-1975-tests}, and optimal partition \citep{jackson-2005-op}. In this dissertation the problem will be approached from an $\ell_0$ model selection perspective which has been largely overlooked in the changepoint literature. Existing manuscripts that employ model selection approaches for changepoint detection are relatively rare, for example, \citet{harchaoui-2010-tv, rojas-2014-fusedlasso, gallagher-2014-adalasso}. 

The principal contribution of this dissertation is the development of \emph{Iteratively Reweighted Fused Lasso (IRFL)}, a unified framework that consolidates many existing changepoint detection methods within a single coherent model selection perspective. The IRFL introduces an iterative reweighting procedure that adaptively strengthens or relaxes regularization across potential changepoints, enabling the method to distinguish genuine structural shifts from spurious fluctuations. In doing so, it offers a practical approximation to the best-subset-selection approaches ($\ell_0$) that would otherwise be computationally infeasible. This unification bridges heuristic, exact, and adaptive segmentation techniques, providing a general framework for estimating changepoint models with improved interpretability and robustness, from one-dimensional time series to two-dimensional images.

We have organized the remainder of this dissertation as follows. Section 2 surveys heuristic and exact segmentation algorithms for estimating changepoint models. Section 3 develops the model selection framework—focusing on the fused lasso and its generalization in the genlasso—that provides the foundation for the methods introduced next. In Section 4 we present the Iteratively Reweighted Fused Lasso (IRFL), an original extension that advances the genlasso framework through repeated iteration inspired by the adalasso---a principle we have called \emph{iterate to isolate}. Its effectiveness is demonstrated through simulation studies in a variety of modeling scenarios, highlighting its adaptability to a broad class of changepoint problems. Finally, we conclude Section 5 with two real-world use-cases of the IRFL, in climatology and in image denoising and segmentation, illustrating the IRFL’s versatility in high-impact arenas.

\section{Segmentation Algorithms: Heuristic and Exact Approaches}

This section presents a range of algorithms for detecting changepoints in time series that segregate into two broad classes: heuristic methods and exact methods. The first half of the section focuses on heuristic approaches, which rely on greedy or randomized search strategies to identify changepoints in a computationally efficient manner. We begin with the simplest setting of a single mean shift, introducing at-most-one-changepoint (AMOC) tests, and then extend these ideas to multiple mean shift changepoint detection through methods such as Binary Segmentation (BS) and Wild Binary Segmentation (WBS). These ideas are further refined to handle complications such as autocorrelation and linear trends in Wild Contrast Maximization (WCM), and Narrowest-Over-Threshold (NOT). A brief treatment of mean shifts in the presence of seasonality is also considered.

In the second half of the section, we turn to exact methods which use dynamic programming to evaluate all possible segmentations and yield globally optimal solutions under a given cost function. While these methods are more computationally intensive, they overcome the inherent limitations of heuristics by guaranteeing optimality. We discuss classical dynamic programming approaches as well as more recent variants designed to improve efficiency, and highlight the trade-offs between computational cost and segmentation accuracy.

\subsection{Heuristic Methods}
\subsubsection*{Binary Segmentation (BS) and CUSUM}

In the simplest case of changepoint detection, the task is to determine whether a single mean shift occurs within a time series. A variety of AMOC tests have been proposed for this setting, including $T_{\text{max}}$, $F_{\text{max}}$, SCUSUM, the standard-normal-homogeneity test (SNHT), and several nonparametric approaches. Each of these is designed to assess the null hypothesis of no change against the alternative of exactly one change in mean. Among these methods, the Cumulative Sum (CUSUM) test has achieved particularly broad usage, both in classical statistical process control and in modern changepoint applications, and will therefore be the focus here.

Formally, let $\{y_t\}_{t=1}^n$ denote a time series, and choose integer $k$ so that $2 < k \leq n$. Assuming there is no underlying trend, the model for a single changepoint is
\begin{equation}
\label{singlechangepoint}
y_t = \begin{cases}
\mu + \epsilon_t, & 1 \leq t < k, \\
\mu + \Delta + \epsilon_t, & k \leq t \leq n,
\end{cases}
\end{equation}
where $\mu$ is an unknown mean, $\Delta$ is the magnitude of the mean shift at time $k$, and $\{\epsilon_t\}_{t=1}^n$ are independent and identically distributed random variables with mean zero and variance $\sigma^2$. We then form the hypothesis test 
\begin{align*}
H_0: \Delta &= 0 \qquad \text{versus} \qquad H_A: \Delta \neq 0,
\end{align*}
for any time $k$.

The corresponding CUSUM statistic is defined as
\begin{equation}
\label{CUSUM}
C_k = \sum_{t=1}^k y_t - \frac{k}{n}\sum_{t=1}^n y_t,
\end{equation}
the difference between the partial sum of the first $k$ observations and $k$ times the overall sample mean $\bar{y}_{1:n}$. Under $H_0$, both quantities have expected value $k\mu$, so their difference should be close to zero. A mean shift induces a systematic deviation, causing $C_k$ to diverge near the true changepoint.

Although \eqref{CUSUM} is a standard form, many authors scale the statistic differently, defining
\begin{equation}
\label{CUSUM-scaled}
\text{CUSUM}(k) = \frac{C_k}{\sigma\sqrt{n}}.
\end{equation}
In practice, $\sigma^2$ is unknown and is often replaced by its null-hypothesis estimate,
\begin{equation}
\label{sigma2-hat}
\hat{\sigma}^2_{H_0} = \frac{\sum_{t=1}^n \left(y_t-\bar{y}_{1:n}\right)^2}{n-1},
\end{equation}
where $\bar{y}_{1:n} = \frac{1}{n}\sum_{t=1}^n y_t$. The test statistic is then defined by taking the \emph{maximum} absolute CUSUM value across all possible changepoint locations,
\begin{equation}
\label{CUSUM-test-statistic}
R(y) = \max_{1<k\leq n}|\text{CUSUM}(k;y)|.
\end{equation}
This definition emphasizes that there are many candidate indices $k$ where deviations might occur, but the procedure selects the one with the largest discrepancy. The maximization is essential: it ensures that $R(y)$ reflects the strongest evidence of a shift anywhere in the sequence, a fact that will connect directly to later extensions. If $R(y)$ is larger than an asymptotic critical value (explained shortly), then a changepoint is determined at time $k$.
As shown in \citet{csorgo-1997-CUSUM}, because $\hat{\sigma}^2_{H_0}$ is a consistent estimator of $\sigma^2$, the scaled CUSUM statistic converges in distribution under $H_0$ to the supremum of a standard Brownian bridge:
\begin{equation}
\label{CUSUM-convergence}
\max_{1<k\leq n}|\text{CUSUM}(k;y)| \overset{\mathcal{D}}{\to} \underset{0\leq t \leq 1}{\text{sup}}|B(t)|,
\end{equation}
where ${B(t)}_{t\in[0,1]}$ denotes the Brownian bridge (a stochastic process that starts and ends at zero but otherwise behaves like a Brownian motion in between). This connection provides the theoretical basis for the test’s critical values, which can be obtained from the distribution of the Brownian bridge supremum. A table of asymptotic critical values is provided in Appendix~\ref{app:crit}.

Despite their effectiveness in detecting a single mean shift, the CUSUM test, along with other AMOC tests such as SCUSUM, $F_\text{max}$, $T_\text{max}$, and SNHT, struggle to identify multiple mean shifts in a time series. To formalize this, consider a time series $\{y_t\}_{t=1}^n$ with $m$ unknown mean shifts indexed by $\boldsymbol{\tau} = \tau_{1:m} = \{\tau_1, \tau_2, \ldots, \tau_m\}$, located at $2 \le \tau_1 < \tau_2 < \cdots < \tau_m \le n$. The model is 
\begin{equation}
\label{totalvariation}
y_t = \begin{cases}
    \mu_1 + \epsilon_t & \text{for } \tau_0 \leq t \leq \tau_1 - 1, \\
    \mu_2 + \epsilon_t & \text{for } \tau_1 \leq t \leq \tau_2 - 1, \\
    \hspace{.6cm}\vdots \\
    \mu_{m+1} + \epsilon_t & \text{for } \tau_m \leq t \leq \tau_{m+1} - 1,
\end{cases}
\end{equation}
where $\{\epsilon_t\}$ are i.i.d.\ errors with mean zero and variance $\sigma^2$, and $\tau_0 = 1$ and $\tau_{m+1} = n + 1$ are boundary conditions. The $m$ changepoints $\tau_1, \tau_2, \ldots, \tau_m$ divide the series $\{y_t\}$ into $m + 1$ segments, where the $i^\text{th}$ segment contains data points $\{y_{\tau_{i-1}}, \ldots, y_{\tau_i - 1}\}$ with mean $\mu_i$. Both the number $m$ and the configuration $\boldsymbol{\tau}$ of changepoints, as well as the segment means $\{\mu_i\}_{i=1}^{m+1}$, are unknown.

Here the goal of changepoint analysis is to identify the configuration  
\(\bm{\tau} = \{\tau_1, \tau_2, \dots, \tau_m\}\) that minimizes a cumulative loss:
\begin{equation}\label{cost}
\sum_{i=1}^{m+1} \mathcal{C}\big(y_{\tau_{i-1} : (\tau_i - 1)}\big)
\;+\;
\beta \cdot f(\tau_{1:m}),
\end{equation}
where \(\mathcal{C}(\cdot)\) is a cost function. Typical examples include the residual sum of squares (RSS) and negative twice the log-likelihood, defined respectively as

\[
\mathcal{C}_{\text{RSS}}\big(y_{\tau_{i-1} : (\tau_i - 1)}\big)
=
\sum_{t=\tau_{i-1}}^{\tau_i - 1} 
\big(y_t - \hat{\mu}_i\big)^2,
\]
and
\[
\mathcal{C}_{\text{NLL}}\big(y_{\tau_{i-1} : (\tau_i - 1)}\big)
=
-2 \sum_{t=\tau_{i-1}}^{\tau_i - 1}
\log p(y_t \mid \hat{\mu}_i),
\]
where \(\hat{\mu}_i\) denotes the estimated mean within segment \([\tau_{i-1}, \tau_i - 1]\), typically the segment mean,
\[
\hat{\mu}_i
=
\frac{1}{\tau_i - \tau_{i-1}}
\sum_{t=\tau_{i-1}}^{\tau_i - 1} y_t
\]
and where \(p(y_t \mid \hat{\mu}_i)\) denotes the fitted likelihood within segment \([\tau_{i-1}, \tau_i - 1]\) under the estimated mean parameter \(\hat{\mu}_i\).

The penalty term \(f(\tau_{1:m})\), a function depending on the changepoint configuration, is a measure of model complexity. This measure is multiplied by the penalty factor \(\beta > 0\), which scales the model complexity relative to the cumulative cost and therefore encourages parsimony. 


The inability of AMOC tests to determine model~\eqref{totalvariation} has motivated the development of procedures designed to handle multiple changepoints. One of the earliest such methods is Binary Segmentation (BS), which extends AMOC tests by embedding them into a recursive partitioning framework. The idea is straightforward: an AMOC test (such as CUSUM) is applied to the full series to detect the single most significant changepoint. If a changepoint is found, the series is split into two subsegments at that location, and the same procedure is applied recursively to each subsegment. This process continues until no further changepoints are detected, based on a chosen stopping criterion such as a significance threshold (i.e., a size-$\alpha$ critical value for the chosen AMOC test) or a minimum segment length.

From an intuitive perspective, BS assumes that even if a segment contains multiple changepoints, the largest mean shift within that segment will dominate the AMOC statistic and be detected first. By recursively applying this strategy, BS builds up a set of changepoints that partitions the series. The appeal of this approach lies in its simplicity and its computational efficiency: by avoiding the need to examine every possible configuration $\tau_{1:m}$, BS drastically reduces computation compared to exhaustive methods.




However, because AMOC tests are applied to segments that may contain more than one changepoint, the test used to detect changepoints does not match the model being estimated in \eqref{totalvariation}. This can cause changepoints to be placed suboptimally, and early errors cannot be corrected later. Consequently, while BS often works well when changepoints are well-separated and have large associated mean shifts, it can perform poorly when changes are subtle or closely spaced \citep{Shi-2022-comparison,fryzlewicz-2014-WBS}.

In summary, BS represents a natural first step toward adapting single-changepoint tests to the multiple-changepoint setting. It builds directly on the same principles that make AMOC tests effective but inherits their limitations when applied outside their intended scope.

\subsubsection*{Wild Binary Segmentation (WBS)}

WBS, introduced by \citet{fryzlewicz-2014-WBS}, is an improvement of BS designed to address its main shortcomings. Because BS partitions the data after each detected changepoint and tests only within those resulting segments, early errors can propagate, and closely spaced or subtle changes may be missed. In principle, this limitation could be overcome by testing every possible sub-interval for changepoints, but that approach is computationally infeasible.  

WBS resolves this by instead applying the same CUSUM-based test over a large collection of randomly drawn subintervals. This randomized design maintains computational efficiency while improving sensitivity: with a sufficient number of subintervals \(M\), each true changepoint is likely to appear in at least one interval where it dominates the local contrast, enabling the detection of smaller or overlapping shifts that BS might overlook.

Formally, denote $F_n^M$ as a set of $M$ randomly chosen sub-intervals $[s_m,e_m]$ for $m=1,\ldots,M$, where $s_m$ and $e_m$ are drawn independently and with replacement from $\{1,\dots,n\}$, subject to $s_m<e_m$, and $e_m-s_m>\delta$ where $\delta>0$ is the user-specificied minimum segment length. A CUSUM-type statistic on an interval $[s,e]$ is defined by
\begin{equation*}
    \label{CUSUM-Frylewicz-WBS}
    \mathcal{C}_{s,e}^{b} = \sqrt{\frac{e-b}{(e-s+1)(b-s+1)}}\sum_{t=s}^{b-1}y_t - \sqrt{\frac{b-s+1}{(e-s+1)(e-b)}}\sum_{t=b}^e y_t,
\end{equation*}
where $s < b < e$. 

For a given segment $[s,e]$, only those random intervals $[s_m,e_m] \in F_n^M$ fully contained in $[s,e]$ are considered. Within each such interval, the AMOC statistic is maximized over $b$ to obtain a candidate changepoint $b_m$ with value $|\mathcal{C}_{s_m,e_m}^{b_m}|$. Intervals exceeding the threshold $\zeta_n$ contribute candidates, and as in BS \eqref{CUSUM-test-statistic} the recursion then selects the single location with the strongest evidence of change as a changepoint:

\begin{equation}
\label{WBS-max}
k = \underset{[s_m,e_m]\subseteq [s,e]}{\text{arg max }}|\mathcal{C}_{s_m,e_m}^{b_m}|.
\end{equation}

If the maximum in \eqref{WBS-max} exceeds $\zeta_n$, the recursion continues on $[s,k]$ and $[k,e]$; otherwise, it terminates. Because a large number of intervals is tested, the probability is high that any changepoint within $[s,e]$ will be isolated in at least one sub-interval, allowing it to be detected even if it is weak relative to others. Indeed, in \citet{fryzlewicz-2014-WBS} it is shown that WBS can detect changepoints with greater accuracy than BS in many settings, without a large increase in runtime. Guidance on the choice of $M$ and $\zeta_n$ is given in \citet{fryzlewicz-2014-WBS} and included with pseudo code of the algorithm in Appendix ~\ref{app:pseudocode}.


While WBS substantially improves upon BS in many scenarios, it is not without limitations. In general, it is an aggressive method that tends to overestimate the number of changepoints \citep{Lund-2020-WBScomments}. Its reliance on random intervals introduces variability into the results: the set of detected changepoints may differ across runs, especially when $M$ is small. Moreover, like BS, WBS does not revisit earlier decisions—once a changepoint is detected, its location is fixed, and no later step can adjust it. This can limit performance when changepoints occur in rapid succession or when the signal is ambiguous. Despite these limitations, WBS remains an efficient and effective refinement of the BS framework.

\subsubsection*{Wild Contrast Maximization (WCM)}

One of the key challenges in changepoint detection is identifying changepoint locations when the data exhibit dependence rather than being independently distributed. A particularly relevant case is when the data follow a white noise process with an AR(1) dependence structure. In this setting, the model can be written as
\begin{equation}\label{AR1 model} 
y_t = 
\begin{cases} 
\mu_1 + \epsilon_t, & t = \tau_0, \\ 
\phi y_{t-1} + \mu_1 + \epsilon_t, & \tau_0 + 1 \leq t \leq \tau_1 - 1, \\ 
\phi y_{t-1} + \mu_2 + \epsilon_t, & \tau_1 \leq t \leq \tau_2 - 1, \\ 
\hspace{1.1cm} \vdots \\ 
\phi y_{t-1} + \mu_m + \epsilon_t, & \tau_m \leq t \leq \tau_{m+1} - 1,
\end{cases} 
\end{equation}
where the errors ${\epsilon_t}$ are i.i.d.\ with zero mean and variance $\sigma^2$, and the AR$(1)$ autocorrelation parameter $\phi$ satisfies $|\phi| < 1$ to ensure stationarity \citep{brockwell-davis-2016-forecasting, hamilton-1994-series}.
This model is commonly referred to as a Level-Shift (LS) model \citep{chen-1993-joint}. In this formulation, the asymptotic mean of the $i^\text{th}$ segment is $\mu_i^\prime := \mu_i/(1-\phi)$ instead of $\mu_i$. We have that for $\tau_{i-1} \leq t \leq \tau_i - 1$,
\[
    \mathbb{E}[y_t] = \phi \mathbb{E}[y_{t-1}] + \mu_i,
\]
and under stationarity $\mathbb{E}[y_t] = \mathbb{E}[y_{t-1}]$, which yields
\begin{equation}\label{expected segment mean} 
\mathbb{E}[y_t] = \frac{\mu_i}{1 - \phi} = \mu_i^\prime , \qquad \tau_{i-1} \leq t \leq \tau_i - 1.
\end{equation}

Estimating changepoint models under an AR dependence structure presents additional challenges, as serial correlation can produce fluctuations that mimic changepoint-like shifts. Positive autocorrelation, in particular, induces clusters of residuals with the same sign, which can resemble a mean shift or structural break when none is present. When this dependence is ignored, changepoint procedures may incorrectly interpret correlated noise as genuine structural change, leading to spurious detections.


One heuristic approach to mitigate this problem is the {Wild Contrast Maximization} (WCM) method of \citet{cho-2024-WCM}. While parameterized slightly differently, WCM is loosely equivalent to \eqref{AR1 model}. Specifically, WCM uses the formulation
\begin{align}\label{WCM equation}
    y_t &= \begin{cases}
        \mu_i^\prime + \eta_t, & t = 1, \\
        \mu_i^\prime + \eta_t,  & \tau_{i-1} \le t \le \tau_i - 1,\ t \ne 1,
    \end{cases}\\
    \eta_t &= \phi \eta_{t-1} + \epsilon_t,\hspace{1cm} \epsilon_t \sim \text{WN}(0,\sigma^2)\label{autocorr}
\end{align}
where $\text{WN}(0,\sigma^2)$ denotes a white noise process with mean 0 and variance $\sigma^2$. Note that the dependence structure induced by $\phi$ enters the model only through the innovations, and for this reason \eqref{WCM equation} is often referred to in the literature as an Innovation Outlier (IO) model \citep{box-1975-intervention, Perron-1991_ERPmemo}.

Under the assumption that changepoints are relatively rare, i.e.\ $\mu_t^\prime = \mu_{t-1}^\prime$ for most $t$, \eqref{WCM equation} admits the relationship
\begin{align}\label{Innovation Outlier}
y_t &= \mu_t^\prime + \eta_t\nonumber \\
    &\overset{(\ref{autocorr})}{=} \mu^\prime_t +\phi\eta_{t-1}+\epsilon_t \nonumber\\
    &\overset{(\ref{WCM equation})}{=}  \mu_t^\prime + \phi(y_{t-1} - \mu_{t-1}^\prime) + \epsilon_t \nonumber\\
    &= \mu_t^\prime + \phi(y_{t-1} - \mu_t^\prime) + \epsilon_t \qquad\text{(most of the time)}\nonumber\\
    &= \phi y_{t-1} + (1 - \phi)\mu_t^\prime + \epsilon_t \nonumber\\
    &\overset{\eqref{expected segment mean}}{=}\phi y_{t-1} + \mu_t + \epsilon_t.
\end{align}
The approximate equality between \eqref{AR1 model} and \eqref{Innovation Outlier} is immediate once the parameters are expressed on the same scale, showing that the two handle serial dependence in much the same way. In fact, the only difference is that the mean shift modeled by \eqref{AR1 model} is not ``abrupt," but instead the mean gradually (with geometric decay rate) adjusts across changepoints. A more complete explanation and discussion of the differences between the LS and IO models is contained in Appendix~\ref{IO LS}.

WCM proceeds in two phases. In the first phase, starting from the full dataset $y_{1:t}$, the method searches for the subinterval $(s,e)$ with interior point $k \in (s,e)$ that maximizes an absolute CUSUM-type statistic:
\begin{align}
(s_0,k_0,e_0) &= \underset{s \le \ell < k < r \le e}{\mathrm{arg\,max}} \ \big| \mathcal{C}^k_{\ell,r} \big|, \quad \text{where} \label{WCM-max} \\
\nonumber\\
\mathcal{C}^k_{\ell,r} &= \sqrt{\frac{(k-\ell)(r-k)}{r-\ell}} \left( \frac{1}{k-\ell} \sum_{t=\ell}^{k-1} y_t - \frac{1}{r-k} \sum_{t=k}^r y_t \right).
\label{max cusum wcm}
\end{align}
The emphasis in \eqref{WCM-max} is that, like BS and WBS, the procedure advances by selecting the single triplet \((\ell,k,r)\) that yields the largest absolute contrast value $|\mathcal{C}^k_{\ell,r}|$ over all admissible subintervals. The location $k_0$ corresponding to this maximum is declared a candidate changepoint, and the data are split at $k_0$. The process is then applied recursively to each subsegment until $n-1$ candidate changepoints are found, each with an associated CUSUM score.

These changepoints, taken in decreasing order of the maximum statistic $\big|\mathcal{C}^k_{\ell,r}\big|$, define a nested sequence of models $\mathcal{M}_0 \subseteq \mathcal{M}_1 \subseteq \cdots \subseteq \mathcal{M}_{n-1}$, where $\mathcal{M}_0$ is the null (no changepoints), $\mathcal{M}_1$ corresponds to the changepoint maximizing $|\mathcal{C}^k_{\ell,r}|$, and $\mathcal{M}_{n-1}$ is the full $n-1$ changepoint model. These models are subsequently ranked by the Schwarz Criterion or BIC in \eqref{BIC}.

Because serial dependence can produce changepoint-like patterns in the statistic, models with falsely detected changepoints often have BIC scores of similar magnitude, while models with genuine changepoints tend to be separated by larger BIC scores. This results in what \citet{cho-2024-WCM} describe as a `gappy' sequence of BIC scores. The second phase of WCM exploits this by selecting the models corresponding to the $M$ largest gaps in the sorted BIC score sequence, where $M$ is the user-specified maximum number of changepoints.

In this second phase, the lag and estimate of autocorrelation are obtained via least squares, and a backward-selection procedure is applied using the BIC criterion to refine the segmentation. This involves comparing the BIC of segments with and without newly proposed changepoints, with the AR parameter $\phi$ estimated \emph{from the model that includes the greater number of changepoints}. This estimation choice is deliberate, as simulations show that estimating $\phi$ with fewer changepoints present can lead to a BIC which favors the omission of real changepoints in favor of explaining the structure through stronger autocorrelation. 

The underlying tension is that serial correlation and mean shifts can explain the same patterns in the data: increasing $\phi$ can smooth over small structural breaks, while adding changepoints can absorb what would otherwise appear as persistence. In the final analysis, one must decide which to prioritize—a model that attributes most variation to dependence, or one that attributes it to structural change. WCM lands firmly and explicitly on the latter of the two options, though for emphasis this is a choice made in \citet{cho-2024-WCM} which is explicitly acknowledged not to minimize the BIC. Pseudo code for WCM is provided in Appendix~\ref{app:pseudocode}.

The WCM approach can be viewed as an extension of WBS to settings with serial dependence, but with a distinct and more elaborate model selection stage. While WCM performs well in simulations (see Section 4), it inherits from BS and WBS the limitation that each segment is tested under an AMOC assumption, which may be inappropriate when multiple changepoints exist in the segment. This can lead to reduced accuracy when changepoints are subtle or closely spaced. 

\subsubsection*{Narrowest-Over-Threshold (NOT)}

The {Narrowest-Over-Threshold} (NOT) algorithm \citep{baranowski-2019-narrowest} modifies the identification criterion at each recursive step to address this limitation. Like WCM, NOT is closely related to the WBS framework in that both rely on local detection combined with global aggregation to produce the final segmentation. However, NOT introduces an important refinement: it explicitly seeks to ensure that the segments being tested for changepoints contain at most one changepoint. This design allows for more accurate localization of changepoints, as the fitted model on each segment better matches the statistical assumptions of the test being applied---namely, that the data within a segment are homogeneous, with no additional mean shifts confounding the contrast statistic.

The NOT algorithm begins by generating $M$ subsequences of the data $\{y_{s:e} : 1 \le s < e \le n\}$, with the restriction that $e - s \ge 2d$, where $d$ is the minimum number of observations required to fit the chosen model. For each subsequence $y_{s:e}$ and each candidate changepoint location $b \in \{s + d, \dots, e - d\}$, a generalized likelihood ratio (GLR) statistic is computed:
\begin{equation}
\label{GLR}
\mathcal{R}^b_{[s,e]}(y) = 2 \log \left[ \frac{\underset{\Theta_1, \Theta_2}{\sup} \ L(y_{s:(b-1)}; \Theta_1) \ L(y_{b:e}; \Theta_2)}{\underset{\Theta}{\sup} \ L(y_{s:e}; \Theta)} \right],
\end{equation}
where $L(y_{s:e}; \Theta)$ denotes the likelihood of the data $y_{s:e}$ under parameter vector $\Theta$. 

For example, under a Gaussian mean-shift model with independent errors,
\[
L(y_{s:e}; \Theta) 
= (2\pi\sigma^2)^{-\frac{e-s+1}{2}}
\exp\!\left\{-\frac{1}{2\sigma^2}\sum_{t=s}^{e}(y_t - \mu)^2\right\},
\]
where $\Theta = (\mu, \sigma^2)$. 

The index $b$ that maximizes \eqref{GLR} is the best candidate changepoint for that interval:
\begin{equation}
\label{NOT-best-changepoint-subsample}
\mathcal{R}_{[s,e]}(y) = \underset{b \in \{s + d, \dots, e - d\}}{\max} \ \mathcal{R}^b_{[s,e]}(y).
\end{equation}

In the case of Gaussian noise, the GLR statistic can be expressed as a {contrast function}—the inner product between the observed data and an appropriate weight vector. For computational convenience, \citet{baranowski-2019-narrowest} define the contrast $\mathcal{C}^b_{s,e}$ such that
\[
    \underset{b}{\mathrm{arg\,max}} \ \mathcal{C}^b_{s,e} = \underset{b}{\mathrm{arg\,max}} \ \mathcal{R}^b_{[s,e]}(y),
\]
and
\[
    \mathcal{C}_{s,e} = \underset{b}{\max} \ \mathcal{C}^b_{s,e}.
\]
Here, $\mathcal{C}^b_{s,e}$ denotes the contrast value at a specific candidate changepoint $b$ within interval $[s,e]$, while $\mathcal{C}_{s,e}$ is the maximum contrast attained within that interval. In other words, $\mathcal{C}^b_{s,e}$ measures the local evidence for a changepoint at position $b$, and $\mathcal{C}_{s,e}$ summarizes the strongest such evidence over all possible $b \in (s,e)$.

As in BS, WBS, and WCM, each interval produces a maximizer $\hat{b}$ where the contrast is largest, and the corresponding value $\mathcal{C}_{s,e}$ summarizes the evidence for a change. In this sense, the local detection mechanism aligns with the maximization principle used by the other methods: NOT also relies on locating the point of maximal local contrast within each interval and evaluating its strength against a common threshold.

Where NOT diverges is in how it chooses among  candidates exceeding the common threshold. Once all $\mathcal{C}_{s_m,e_m}$ have been computed for the $M$ generated intervals, they are compared to a user-specified threshold $\zeta_n$. Instead of simply taking the global maximum, NOT selects the interval of minimal length whose contrast exceeds the threshold. The corresponding maximizer $b_m$ within that narrowest qualifying interval is declared the changepoint. This “narrowest-over-threshold” rule gives the algorithm its name and ensures, with high probability, that the selected segment contains at most one changepoint, so that $b_m$ is a close approximation to the maximum-likelihood estimate of its location.

Once a changepoint is selected, the data are split into two segments at that location, and the procedure is applied recursively to each segment. Because NOT prioritizes intervals that are both short and highly significant, it reduces the risk of splitting on segments that contain multiple changepoints—an error that can propagate in methods like BS and WBS. Pseudo code for the NOT procedure is provided in Appendix~\ref{app:pseudocode}.

The GLR framework in \eqref{GLR} is flexible because it separates the data-generating assumptions from the detection procedure: once the likelihood $L(y_{s:e};\Theta)$ is specified, the same maximization principle applies regardless of the underlying model. This allows the test statistic to adapt naturally to different types of structural changes. There are four specific cases described in \citet{baranowski-2019-narrowest}:
\begin{itemize}
    \item Constant variance, piecewise constant mean.
    \item Constant variance, continuous and piecewise linear mean.
    \item Constant variance, piecewise linear (not necessarily continuous) mean.
    \item Piecewise constant variance, piecewise constant mean.
\end{itemize}
In summary, NOT retains the efficiency and local focus of WBS-style methods but enhances localization accuracy by implicitly enforcing a one-changepoint-per-segment condition. This design makes it particularly effective in scenarios where changepoints are closely spaced or where accurate localization is critical. However, like other segmentation algorithms in this family, NOT assumes that residual variability is explained by random noise, dependence, or trend.

\subsubsection*{Mean Shift Models with Seasonality}

In many real-world applications—particularly in environmental, economic, and biomedical time series—seasonality constitutes a dominant source of structured variability. When unaccounted for, seasonal patterns can obscure or mimic mean shifts, leading to false detections or missed changepoints. To address this, changepoint models can be extended to incorporate periodic components, yielding a mean shift formulation with seasonality. 

In this setting, the mean structure is augmented to include a seasonal component $s_t$, producing the model
\begin{equation}
\label{full model}
y_t =
\begin{cases}
    \mu_1 + s_t + \epsilon_t, & 1 \le t \le p - 1, \\
    \mu_1 + s_t + \epsilon_t, & \tau_0 \le t \le \tau_1 - 1, \\
    \mu_2 + s_t + \epsilon_t, & \tau_1 \le t \le \tau_2 - 1, \\
    \hspace{1cm} \vdots & \\
    \mu_m + s_t + \epsilon_t, & \tau_m \le t \le \tau_{m+1} - 1,
\end{cases}
\end{equation}
where the errors $\{\epsilon_t\}$ are i.i.d.\ with mean zero and variance $\sigma^2$. To ensure the number of available observations exceeds the number of free parameters, the constraints $\tau_0 = p$ and $\tau_{m+1} = n+1$ are also imposed as boundary conditions. This guarantees that the seasonal effects can be uniquely determined. The seasonal components are constrained to satisfy
\[
    s_t = s_{t+p} \quad \text{for} \quad 1 \le t \le n-p,
\]
and 
\[
    \sum_{t=1}^p s_t = 0.
\]
These conditions mean that there are sufficient observations to estimate a model uniquely with a changepoint placed at every value of $t$ for $t=p$, $p+1,\ldots,n-1,n$ (though a full model like this would be undesirable for analytic reasons).

Detecting changepoints in the presence of seasonality is considerably more 
challenging than in the pure mean-shift setting. 
The regular oscillation of a seasonal pattern introduces structured variability 
that can easily be mistaken for a change in mean—
a phenomenon analogous to the effect of positive autocorrelation discussed 
in the Wild Contrast Maximization subsection of Section~2, 
where clusters of residuals with the same sign may mimic a mean shift 
or structural break when none is present. 
What appears to be a sudden rise or fall in the data may, in fact, 
be nothing more than the expected crest or trough of the seasonal cycle. 
Conversely, a genuine mean shift can be obscured if it occurs near such an extremum, 
as the seasonal oscillation tends to dominate the observed variation.

This dual problem—spurious detections caused by regular oscillations and missed detections caused by masking—means that standard changepoint methods that ignore seasonality are unreliable. To distinguish true shifts from predictable variation, one must explicitly account for the seasonal structure, ensuring that the test statistic reflects departures from the expected oscillation rather than the oscillation itself.

While there is no extensive body of literature devoted to the problem of multiple mean shifts under seasonality, \citet{Lund-2007-seasonality} propose an $F_{\text{max}}$ test for detecting at most one undocumented mean shift when both autocorrelation and seasonality are present. Their test statistic is defined as
\begin{equation}
\label{fmax}
F_{\text{max}} = \underset{1 < \tau \le n}{\max} \; F_\tau,
\end{equation}
where
\begin{equation}
\label{ftau}
F_\tau = \frac{\mathrm{SSE}_0 - \mathrm{SSE}_A(\tau)}{\mathrm{SSE}_A(\tau) / (\mathrm{Error \ df})}.
\end{equation}
Here, $\mathrm{SSE}_0$ is the residual sum of squares under the null hypothesis of no changepoint, and $\mathrm{SSE}_A(\tau)$ is the residual sum of squares under the alternative hypothesis of a changepoint at time $\tau$. The ``Error df'' term denotes the residual degrees of freedom after accounting for all estimated parameters. Under $H_0$, $F_\tau \sim F_{1, \ \mathrm{Error \ df}}$. If $F_\tau$ exceeds the chosen critical value for some $\tau$, $H_0$ is rejected, and the changepoint location is taken to be the $\tau$ that maximizes $F_\tau$.

While $F_{\text{max}}$ provides a principled way to test for a single changepoint in seasonal data, it is, like other AMOC procedures, inherently limited to the at-most-one-changepoint case.  Moreover, no published work appears to have implemented an extension to the multiple-mean-shift case.
\subsubsection*{Summary}
Mean-shift models illustrate how quickly changepoint problems grow in complexity once additional structure—such as seasonality or autocorrelation—is introduced. Standard AMOC-based procedures serve as useful building blocks but can detect only one change at a time, requiring recursive application to handle multiple changepoints. Consequently, heuristic methods such as BS, WBS, WCM, and NOT are valued for their speed and adaptability but remain approximate, since early decisions in the recursion cannot be revised. 

\subsection{Exact Methods for Segmentation}
Exact segmentation algorithms approach changepoint detection as a global optimization problem. Rather than identifying changepoints sequentially, they evaluate all possible segmentations and select the configuration that minimizes a penalized cost function, guaranteeing optimality with respect to the chosen model and penalty. Because their solutions are constructed recursively from optimal sub-solutions, these algorithms fall under the broader framework of {dynamic programming}. In a computer science setting, ``dynamic'' refers to the recursive decomposition of the segmentation problem: the optimal segmentation of the first $t$ observations is built from the optimal segmentation of a shorter initial subsequence, plus the cost of the final segment.

\subsubsection*{Optimal Partitioning (OP)}
We begin with the simplest dynamic programming formulation, {Optimal Partitioning (OP)}, which seeks to solve the penalized minimization problem already introduced in~(\ref{cost}):
\[
\min_{\tau_{1:m}} \ \sum_{i=1}^{m+1} \mathcal{C}\big(y_{\tau_{i-1}:(\tau_i-1)}\big) + \beta \cdot f(\tau_{1:m}),
\]
where, as before, $\tau_0 = 1$, $\tau_{m+1} = n+1$, and $\beta > 0$ is a penalty factor controlling the number of changepoints. Unlike binary segmentation which does not have a global optimum guarantee, optimal partitioning guarantees identification of the best set of changepoints by the clever use of a dynamic programming technique.

The model assumed here is the multiple mean shift model~(\ref{totalvariation}):
\[
y_t =
\begin{cases}
\mu_1 + \epsilon_t, & \tau_0 \leq t \leq \tau_1 - 1, \\
\mu_2 + \epsilon_t, & \tau_1 \leq t \leq \tau_2 - 1, \\
\hspace{.6cm}\vdots \\
\mu_m + \epsilon_t, & \tau_m \leq t \leq \tau_{m+1} - 1,
\end{cases}
\]
where $\{\epsilon_t\}$ are i.i.d. zero-mean random variables with variance $\sigma^2$ and the $\mu_i$ are segment means. The necessary recursion is given in Proposition \ref{prop:op-recursion}.

\begin{prop}[Optimal partition recursion]\label{prop:op-recursion}
Let $\mathcal{T}_s=\{\tau:1=\tau_0<\tau_1<\cdots<\tau_m<\tau_{m+1}=s\}$ 
denote the set of changepoint vectors for $y_{1:(s-1)}$, and define
\[
F(s) = \min_{\tau \in \mathcal{T}_s} \sum_{i=1}^{m+1}
    \left[\mathcal{C}\bigl(y_{\tau_{i-1}:(\tau_i-1)}\bigr)+\beta\right],
    \qquad F(1)=-\beta.
\]
Then for any $s\ge 2$,
\begin{equation}\label{eq:op-recursion}
F(s) \;=\; \min_{1 \le r < s}\Bigl\{F(r)+\mathcal{C}\bigl(y_{r:(s-1)}\bigr)+\beta\Bigr\}.
\end{equation}
\end{prop}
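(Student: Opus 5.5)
The plan is a standard dynamic-programming argument: show two inequalities between $F(s)$ and the right-hand side of \eqref{eq:op-recursion}, splitting each segmentation at its last changepoint. Fix $s\ge 2$ and write $G(s)$ for the right-hand side of \eqref{eq:op-recursion}. For any $\tau\in\mathcal{T}_s$, the last interior boundary $r:=\tau_m$ satisfies $1\le r<s$. When $m=0$ we take $r=\tau_0=1$, and the convention $F(1)=-\beta$ is exactly what makes this case fit: the objective of the empty segmentation is $\mathcal{C}(y_{1:(s-1)})+\beta = F(1)+\mathcal{C}(y_{1:(s-1)})+\beta$.

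For $F(s)\ge G(s)$, take an arbitrary $\tau\in\mathcal{T}_s$ with $m\ge1$. Its truncation $\tau'=(\tau_0,\dots,\tau_{m-1},\tau_m)$, with $\tau_m=r$ now playing the role of the right endpoint, belongs to $\mathcal{T}_r$. The cost is additive over segments, so the objective of $\tau$ equals (objective of $\tau'$) $+\,\mathcal{C}(y_{r:(s-1)})+\beta$. This is at least $F(r)+\mathcal{C}(y_{r:(s-1)})+\beta\ge G(s)$. Minimizing over $\tau$ gives the inequality.

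For $F(s)\le G(s)$, fix $r$ with $1\le r<s$. Since $\mathcal{T}_r$ is finite, we can pick a minimizer $\tau^*$ of the problem defining $F(r)$; for $r=1$ use the degenerate empty configuration with value $-\beta$. Append $s$ as a new right endpoint, so that $r$ becomes an interior changepoint (or, when $r=1$, obtain the single-segment configuration). The result lies in $\mathcal{T}_s$ and has objective $F(r)+\mathcal{C}(y_{r:(s-1)})+\beta$. Hence $F(s)$ is at most this value for every $r$, and therefore $F(s)\le G(s)$.

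The only delicate point is the bookkeeping at the boundary. I need to check that the truncation and concatenation maps are well defined bijections between $\{\tau\in\mathcal{T}_s: \tau_m=r\}$ and $\mathcal{T}_r$ for $r\ge2$, and that the $r=1$ case is handled correctly by $F(1)=-\beta$, where $\mathcal{T}_1$ contains no valid configuration and is interpreted by this convention. Beyond that, the argument only uses that the objective is a sum of per-segment terms, each of the form $\mathcal{C}$ of that segment plus $\beta$.
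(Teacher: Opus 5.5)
\textbf{There is a genuine gap, and it is in the statement itself.} Your split at the last changepoint is the same decomposition the paper uses, and the $r\ge 2$ part of your argument is correct: truncation and concatenation are mutually inverse maps between $\{\tau\in\mathcal{T}_s:\tau_m=r\}$ and $\mathcal{T}_r$, and additivity does the rest. The gap is exactly the boundary check you deferred, and that check fails. Your identity $\mathcal{C}(y_{1:(s-1)})+\beta=F(1)+\mathcal{C}(y_{1:(s-1)})+\beta$ is false. With $F(1)=-\beta$ the right-hand side equals $\mathcal{C}(y_{1:(s-1)})$. Equivalently, your ``degenerate empty configuration'' is an empty sum, worth $0$, not $-\beta$. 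So in your ``$\le$'' direction, the single-segment configuration has objective $\mathcal{C}(y_{1:(s-1)})+\beta$. That exceeds the $r=1$ term of the right-hand side by $\beta$, so it does not bound $F(s)$ by that term. Only the ``$\ge$'' direction survives.

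No rearrangement of your argument repairs this, because the statement as written is false. For $s=2$ we have $\mathcal{T}_2=\{(1,2)\}$, so $F(2)=\mathcal{C}(y_{1:1})+\beta$. The right side of \eqref{eq:op-recursion}, however, is $F(1)+\mathcal{C}(y_{1:1})+\beta=\mathcal{C}(y_{1:1})$. In general, let $A=\mathcal{C}(y_{1:(s-1)})+\beta$ be the no-change objective and $B$ the best objective among configurations with $m\ge 1$. Then $F(s)=\min\{A,B\}$, but the right-hand side equals $\min\{A-\beta,B\}$. The paper's one-line proof contains the same slip. It is hidden in identifying the $r=1$ inner term $\min_{\tau\in\mathcal{T}_1}\sum_{i=1}^{m}[\cdots]$ (an empty sum) with $F(1)=-\beta$.

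\textbf{How to fix it.} The convention $F(1)=-\beta$ belongs to the normalization that charges $\beta$ per changepoint rather than per segment:
\[
F(s)=\min_{\tau\in\mathcal{T}_s}\sum_{i=1}^{m+1}\bigl[\mathcal{C}(y_{\tau_{i-1}:(\tau_i-1)})+\beta\bigr]-\beta .
\]
This is \eqref{cost} with $f(\tau_{1:m})=m$, the form used in \citet{killick-2012-pelt}. It is also exactly what the OP pseudocode, initialized at $F(1)=-\beta$, actually computes. Alternatively, keep the per-segment definition and set $F(1)=0$. With either correction your two inequalities go through verbatim, because the $r=1$ term then coincides with the single-segment objective. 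The two corrected versions differ by the constant $\beta$ at every $s$, so they have the same minimizing configurations.
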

In words: the optimal partition of $y_{1:(s-1)}$ must end with an optimal last changepoint at $r<s$.
\begin{proof}
\begin{align}\label{op-recursion} F(s) &= \min_{\tau \in \mathcal{T}_s} \left\{\sum_{i=1}^{m+1}[\mathcal{C}(y_{\tau_{i-1}:(\tau_i-1)})+\beta]\right\}\nonumber\\ &=\min_{1\le r<s} \left\{\min_{\tau \in \mathcal{T}_r}\sum_{i=1}^{m}[\mathcal{C}(y_{\tau_{i-1}:(\tau_i-1)})+\beta] + \mathcal{C}(y_{r:(s-1)})+\beta\right\}\nonumber\\ &=\min_{1\le r<s} \left\{F(r)+\mathcal{C}(y_{r:(s-1)})+\beta\right\}. \end{align}
\end{proof}
To solve this minimization therefore, the algorithm iteratively computes the minimum cost for segmenting the first 
$s-1$ points in the series $F(s)$ for $s = 2,3,\ldots,n+1$, using previously computed costs to build the optimal solution. At each step of the recursion, the optimal cost $F(s)$ is recorded along with the set of optimal changepoints for the first $s-1$ points, $\hat\tau(s)$. After completing the entire recursion, the set of optimal changepoints prior to the $n+1$ data point, $\hat\tau(n+1)$, is returned.

\subsubsection*{Pruned Exact Linear Time (PELT)}
PELT~\citep{killick-2012-pelt} builds directly on the OP recursion \eqref{eq:op-recursion}, still assuming the multiple mean shift model~\eqref{totalvariation} or other segment models compatible with~\eqref{cost}. The model is identical to that in OP; the difference lies in the efficiency of the search. This is accommodated by the use of a pruning step which often in practice reduces the running time of OP dramatically.

In \eqref{op-recursion}, the optimal $r<s$ at which to place (or not place) a changepoint is not known a priori, and so for each $1\le s\le n+1$, OP checks to see which $r<s$ is the optimal location for a changepoint prior to $s$. The number of checks grows linearly with $s$ and therefore quadratically with $n$. PELT implements a pruning step so that many, or even most, of the $r<s$ do not need to be checked at every recursive step because they are provably not able to be included in any optimal changepoint configuration. This fact is formalized in Proposition \ref{prop:pelt}.

\begin{prop}[PELT Pruning Criterion~\citep{killick-2012-pelt}]\label{prop:pelt}
Consider the recursion in \eqref{op-recursion}.  
Suppose there exists a constant $K$ such that for all $r<s<t\le n+1$,
\begin{equation}\label{eq:pelt-assumption}
    \mathcal{C}\bigl(y_{r:(s-1)}\bigr)
    + \mathcal{C}\bigl(y_{s:(t-1)}\bigr)
    + K
    \;\le\;
    \mathcal{C}\bigl(y_{r:(t-1)}\bigr).
\end{equation}
This assumption states that splitting any segment $y_{r:(t-1)}$ into two shorter segments at an intermediate index $s$ reduces the cost by at least $K$ (before accounting for penalties). Then, for any $t>s$, if
\begin{equation}\label{criterion}
    F(r)\;+\;\mathcal{C}\bigl(y_{r:(s-1)}\bigr)\;+\;K \;\ge\; F(s),
\end{equation}
the index $r$ can never be the optimal last changepoint prior to $t$.  
\end{prop}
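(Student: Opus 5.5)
The plan is the standard PELT argument: show that whenever \eqref{criterion} holds at time $s$, the candidate $r$ is weakly beaten by $s$ as a last changepoint for every later $t>s$. The proof is a single chain of inequalities combining the splitting assumption \eqref{eq:pelt-assumption} with the recursion of Proposition~\ref{prop:op-recursion}. No induction over $t$ is needed, because the comparison at each $t$ uses only $F(r)$ and $F(s)$, which are fixed once $s$ is reached.

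Fix $r<s<t\le n+1$ and suppose \eqref{criterion} holds. The quantity to control is the objective value of choosing $r$ as the last changepoint before $t$, namely $F(r)+\mathcal{C}(y_{r:(t-1)})+\beta$.

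\begin{enumerate}
\item Apply \eqref{eq:pelt-assumption} with the split at $s$ to bound this from below:
\[
F(r)+\mathcal{C}\bigl(y_{r:(t-1)}\bigr)+\beta \;\ge\; F(r)+\mathcal{C}\bigl(y_{r:(s-1)}\bigr)+K+\mathcal{C}\bigl(y_{s:(t-1)}\bigr)+\beta .
\]
\item Use \eqref{criterion} to replace $F(r)+\mathcal{C}(y_{r:(s-1)})+K$ by the smaller quantity $F(s)$. This gives
\[
F(r)+\mathcal{C}\bigl(y_{r:(t-1)}\bigr)+\beta \;\ge\; F(s)+\mathcal{C}\bigl(y_{s:(t-1)}\bigr)+\beta .
\]
\item The right-hand side is exactly the term indexed by $s$ in the minimization \eqref{eq:op-recursion} defining $F(t)$. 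So choosing $s$ as the last changepoint is at least as good as choosing $r$, for every $t>s$.
\end{enumerate}

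The one delicate point is the meaning of ``can never be the optimal last changepoint'' when the inequality is not strict. With $\ge$ in \eqref{criterion}, the chain only shows $r$ is weakly dominated. I would handle this by stating that $r$ can always be discarded without changing the value $F(t)$: some minimizer of \eqref{eq:op-recursion} survives pruning, namely $s$ or an index that itself dominates $s$. If the strict version is wanted, I would note that strict inequality in \eqref{criterion} makes the chain strict.

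Finally, I would remark on the scope of the pruning rule. The argument uses only that $s$ remains an admissible candidate at every later $t$. Because pruning removes only dominated indices, the minimum in \eqref{eq:op-recursion} is preserved, so PELT returns the exact OP solution.
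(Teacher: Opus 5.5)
Your proposal is correct and follows the paper's proof essentially step for step. You lower-bound the cost of using $r$ via the splitting assumption, substitute \eqref{criterion}, and recognize the result as the $s$-term of the OP recursion, with the same caveat that the non-strict inequality gives only weak domination (the paper's phrasing is that $r$ ``never gives a strictly smaller value of $F(t)$ than $s$''), and your extra remark that a dominating candidate survives even if $s$ is later pruned is sound and slightly more explicit than the paper.
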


\begin{proof}
Fix $r<s<t\le n+1$. By \eqref{op-recursion},
\begin{equation}\label{eq:pelt-upper}
F(t)\;\le\;F(s)+\mathcal{C}\bigl(y_{s:(t-1)}\bigr)+\beta.
\end{equation}
If $r$ is used as the last changepoint before $t$, the resulting cost is
\begin{equation}\label{eq:pelt-via-r}
F(r)+\mathcal{C}\bigl(y_{r:(t-1)}\bigr)+\beta.
\end{equation}
By \eqref{eq:pelt-assumption},
\begin{equation}\label{eq:pelt-split}
\mathcal{C}\bigl(y_{r:(t-1)}\bigr)
\;\ge\;
\mathcal{C}\bigl(y_{r:(s-1)}\bigr)
+
\mathcal{C}\bigl(y_{s:(t-1)}\bigr)
+
K.
\end{equation}
Combining \eqref{eq:pelt-via-r} and \eqref{eq:pelt-split} yields
\[
F(r)+\mathcal{C}\bigl(y_{r:(t-1)}\bigr)+\beta
\;\ge\;
F(r)+\mathcal{C}\bigl(y_{r:(s-1)}\bigr)+K+\mathcal{C}\bigl(y_{s:(t-1)}\bigr)+\beta.
\]
If \eqref{criterion} holds, then
\[
F(r)+\mathcal{C}\bigl(y_{r:(s-1)}\bigr)+K
\;\ge\;
F(s),
\]
and therefore
\[
F(r)+\mathcal{C}\bigl(y_{r:(t-1)}\bigr)+\beta
\;\ge\;
F(s)+\mathcal{C}\bigl(y_{s:(t-1)}\bigr)+\beta.
\]

Thus, for this $t$, using $r$ as the last changepoint yields a cost \emph{no smaller} than using $s$ as the last changepoint. Since $t>s$ was arbitrary, the same holds for all future $t$: $r$ never gives a strictly smaller value of $F(t)$ than $s$. Hence, $r$ can be removed from the candidate set without loss of optimality.
\end{proof}

\begin{remark}
If a time index $r$ satisfies the pruning condition in \eqref{criterion} for any value of $t$, it can be permanently removed from consideration as a potential last changepoint in all subsequent iterations~\citep{killick-2012-pelt}. 
\end{remark}

\subsubsection*{Conjugate Pruned Optimal Partitioning (CPOP)}

While PELT is highly effective for detecting changes in mean, many real-world processes are better modeled by changes in slope rather than abrupt mean shifts. Conjugate Pruned Optimal Partitioning (CPOP), introduced in \citet{fearnhead-2019-cpop}, addresses this case by fitting a continuous piecewise linear function to the data, with changepoints corresponding to slope changes. Continuity is enforced at each changepoint, so the fitted value at the end of one segment equals the fitted value at the start of the next. This makes CPOP particularly useful for processes where the rate of change is more informative than absolute levels. We adopt the notation of \citet{fearnhead-2019-cpop} below.

Formally, let $\{y_t\}_{t=1}^n$ be a time series with $m$ changepoints $2\le\tau_1<\tau_2<\cdots<\tau_m\le n$, and set $\tau_0=1$ and $\tau_{m+1}=n+1$ as boundary conditions. These changepoints divide the data into $m+1$ contiguous segments. Each changepoint $\tau_i$ is associated with a fitted value $\phi_{\tau_i}$, so that the fitted function is determined by the ordered pairs $\{(\tau_i,\phi_{\tau_i})\}_{i=0}^{m+1}$. In plain words, these ordered pairs are points in $(t,y_t)$-space corresponding in the first index to the temporal location of the changepoint, and in the second index to the $y$-value of the estimated signal at that changepoint. Collecting these values gives $\phi=(\phi_{\tau_0},\dots,\phi_{\tau_{m+1}})$, and for $j\le k$, the subset $\phi_{j:k}=(\phi_{\tau_j},\dots,\phi_{\tau_k})$ corresponds to the fitted values at changepoints $\tau_j$ through $\tau_k$.

On each segment, the fitted function interpolates linearly between adjacent fitted values. The model is therefore
\[
y_t =
\begin{cases}
\phi_{\tau_0} + \dfrac{\phi_{\tau_1}-\phi_{\tau_0}}{\tau_1-\tau_0}(t-\tau_0) + \epsilon_t, & \tau_0 \le t \le \tau_1-1, \\
\phi_{\tau_1} + \dfrac{\phi_{\tau_2}-\phi_{\tau_1}}{\tau_2-\tau_1}(t-\tau_1) + \epsilon_t, & \tau_1 \le t \le \tau_2-1, \\
\qquad\vdots & \\
\phi_{\tau_m} + \dfrac{\phi_{\tau_{m+1}}-\phi_{\tau_m}}{\tau_{m+1}-\tau_m}(t-\tau_m) + \epsilon_t, & \tau_m \le t \le \tau_{m+1}-1,
\end{cases}
\]
where the errors $\{\epsilon_t\}$ are i.i.d.\ with mean zero and variance $\sigma^2$. Thus, each segment is a straight line constrained to connect 
$(\tau_i, \phi_{\tau_i})$ and $(\tau_{i+1}, \phi_{\tau_{i+1}})$, 
thereby ensuring global continuity.

The penalized objective is
\[
\sum_{i=0}^m\left[
\frac{1}{\sigma^2}\sum_{t=\tau_i+1}^{\tau_{i+1}}
\left(y_t-\phi_{\tau_i}-\frac{\phi_{\tau_{i+1}}-\phi_{\tau_i}}{\tau_{i+1}-\tau_i}(t-\tau_i)\right)^2
+ h(\tau_{i+1}-\tau_i)
\right] + \beta m,
\]
where $h(\cdot)$ penalizes segment length and $\beta>0$ penalizes the number of changepoints. For a segment $y_{(s+1):t}$ with fitted value $\phi'$ at $s$ and $\phi$ at $t$, the segment cost is
\[
\mathcal{C}(y_{(s+1):t},\phi',\phi)
= \frac{1}{\sigma^2}\sum_{i=s+1}^t\left(y_i-\phi'-\frac{\phi-\phi'}{t-s}(i-s)\right)^2.
\]

To derive the recursion used in CPOP, let $f^t(\phi)$ denote the minimum penalized cost of segmenting $y_{1:t}$ subject to the constraint that the fitted value at time $t$ equals $\phi$. The recursion used in CPOP is formalized in Proposition \ref{prop:cpop}.
\begin{prop}[CPOP recursion \citep{fearnhead-2019-cpop}]\label{prop:cpop}
For $t\ge 1$ and fitted value $\phi$ at time $t$,
\[
f^t(\phi) \;=\; \min_{s<t,\;\phi'} \Bigl\{ f^s(\phi') + \mathcal{C}\bigl(y_{(s+1):t},\phi',\phi\bigr) + h(t-s) + \beta \Bigr\}.
\]
\end{prop}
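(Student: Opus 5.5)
The argument follows the proof of Proposition~\ref{prop:op-recursion}: split the minimization over all admissible fits of $y_{1:t}$ according to the location and fitted value of the \emph{last} changepoint before $t$. I will first make the definition of $f^t(\phi)$ explicit. For a changepoint configuration $0=\tau_0<\tau_1<\cdots<\tau_k<\tau_{k+1}=t$ and fitted values $\phi_{\tau_0},\dots,\phi_{\tau_{k+1}}$ with $\phi_{\tau_{k+1}}=\phi$, define the penalized cost
\[
\sum_{i=0}^{k}\Bigl[\mathcal{C}\bigl(y_{(\tau_i+1):\tau_{i+1}},\phi_{\tau_i},\phi_{\tau_{i+1}}\bigr)+h(\tau_{i+1}-\tau_i)+\beta\Bigr].
\]
Then $f^t(\phi)$ is the infimum of this cost over $k$, over the interior changepoints $\tau_1,\dots,\tau_k$, and over the free fitted values $\phi_{\tau_0},\dots,\phi_{\tau_k}$. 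To make the recursion hold with the $\beta$ attached to every segment, I will take the convention $f^0(\phi')=-\beta$ for all $\phi'$, as in $F(1)=-\beta$ of Proposition~\ref{prop:op-recursion}. With this convention the single-segment fit $s=0$ carries net penalty $0$, and each additional changepoint carries penalty $\beta$, matching $\beta m$ in the objective.

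The key step is the decomposition. Fix $t$ and $\phi$. Every admissible configuration has a last changepoint $s=\tau_k<t$, and a fitted value $\phi'=\phi_{\tau_k}$ at that point. The cost splits into two parts. The first is the cost of the configuration restricted to $y_{1:s}$, with endpoint value $\phi'$. The second is the final-segment term $\mathcal{C}(y_{(s+1):t},\phi',\phi)+h(t-s)+\beta$, which depends on the earlier configuration only through $(s,\phi')$.

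This separability is what continuity buys. The last segment's line is pinned at $(s,\phi')$ and $(t,\phi)$, so the earlier segments interact with it only through the shared value $\phi'$.

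Hence the infimum over all configurations equals an iterated infimum: outer over $(s,\phi')$, inner over configurations of $y_{1:s}$ ending at value $\phi'$. The inner infimum is $f^s(\phi')$ by definition. This gives
\[
f^t(\phi)=\min_{s<t,\;\phi'}\bigl\{f^s(\phi')+\mathcal{C}(y_{(s+1):t},\phi',\phi)+h(t-s)+\beta\bigr\}.
\]
The justification is to prove both inequalities. For "$\ge$": any configuration's cost is bounded below by the right-hand side evaluated at its own $(s,\phi')$. For "$\le$": concatenating a (near-)optimal configuration for $f^s(\phi')$ with the final segment produces an admissible configuration for $f^t(\phi)$.

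The main obstacle is not the algebra but the well-posedness of "$\min$" over the continuous variable $\phi'$. I will handle it by induction on $t$, showing that each $f^t$ is a pointwise minimum of finitely many convex quadratics in $\phi$. Then the joint map $(\phi',\phi)\mapsto f^s(\phi')+\mathcal{C}$ is a minimum of quadratics. Each such quadratic is jointly convex and coercive in $\phi'$ because the segment cost is a least-squares cost with a positive coefficient on $\phi'$, since $t-s\ge1$. Minimizing it over $\phi'$ is therefore attained and again yields a quadratic in $\phi$. This closes the induction and shows the infima are attained minima. Failing that, stating the result with $\inf$ makes the argument above go through verbatim.
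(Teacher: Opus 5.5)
Your decomposition (split on the last changepoint $s$ and its fitted value $\phi'$, then identify the inner minimum with $f^s(\phi')$) is the same as the paper's proof. Your two-sided inequality and attainment discussion go beyond what the paper writes.

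\textbf{The base case is inconsistent.} You define the cost with one $\beta$ per segment, so $\beta(k+1)$ in total, and then impose $f^0\equiv-\beta$.

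At $s=0$ the part of a configuration before its last changepoint is empty and contributes $0$, not $-\beta$. So ``the inner infimum is $f^s(\phi')$ by definition'' fails there. Your ``$\le$'' direction also fails on the single-segment branch. Concatenation produces a configuration of cost $\mathcal{C}(y_{1:t},\phi',\phi)+h(t)+\beta$, while the matching right-hand-side term is only $\mathcal{C}(y_{1:t},\phi',\phi)+h(t)$.

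Concretely, at $t=1$ your definition gives $f^1(\phi)=\sigma^{-2}(y_1-\phi)^2+h(1)+\beta$, but the right-hand side equals $\sigma^{-2}(y_1-\phi)^2+h(1)$. In general, the recursion seeded with $f^0\equiv-\beta$ generates your $f^t$ minus $\beta$. Your remark that the single-segment fit then carries net penalty $0$, matching $\beta m$, describes that shifted function, not the one you defined.

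Pick one convention:
\begin{itemize}
\item Define the cost as $\sum_{i=0}^{k}[\mathcal{C}(\cdot)+h(\tau_{i+1}-\tau_i)]+\beta k$ and keep $f^0\equiv-\beta$. This matches $\beta m$ in the objective and the CPOP pseudocode in the appendix, where the first segment carries no $\beta$.
\item Keep $\beta(k+1)$, as the paper's proof does, and set $f^0\equiv 0$.
\end{itemize}
The same mismatch is latent in Proposition~\ref{prop:op-recursion} as written: its $F(s)$ carries $\beta(m+1)$, yet $F(1)=-\beta$. The convention you borrowed from there carries the same defect.

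\textbf{The coercivity claim is false for $t-s=1$.} You say the segment cost has a positive coefficient on $\phi'$ whenever $t-s\ge1$. In fact, the weight on $\phi'$ in the $i$-th residual of $\mathcal{C}(y_{(s+1):t},\phi',\phi)$ is $(t-i)/(t-s)$, which vanishes at $i=t$. So for $t-s=1$ the cost $\mathcal{C}(y_{t:t},\phi',\phi)=\sigma^{-2}(y_t-\phi)^2$ does not involve $\phi'$ at all; the $t=1$ example above is such a case.

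Attainment still holds, for a different reason:
\begin{itemize}
\item For $s\ge1$, every configuration defining $f^s(\phi')$ contains the term $\sigma^{-2}(y_s-\phi')^2$ (the $i=s$ residual of its last segment), so $f^s$ is coercive in $\phi'$.
\item For $s=0$, $t=1$, the objective is constant in $\phi'$.
\end{itemize}
Alternatively, note that a quadratic bounded below attains its infimum, and every piece here is a sum of squares plus a constant.

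With these two repairs the argument is complete. It is also more careful than the paper's, which leaves both the base case and the attainment of the minimum over $\phi'$ implicit.
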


\begin{proof}
By definition of $f^t(\phi)$, 
\begin{align}
f^t(\phi)
&= \min_{\tau,\,k,\,\phi_{0:k}} \Bigg\{
\sum_{i=0}^{k-1}\Big[\mathcal{C}\bigl(y_{(\tau_i+1):\tau_{i+1}},\phi_{\tau_i},\phi_{\tau_{i+1}}\bigr) + h(\tau_{i+1}-\tau_i)\Big] \nonumber\\
&\qquad\qquad\qquad\qquad\quad + \Big[\mathcal{C}\bigl(y_{(\tau_k+1):t},\phi_{\tau_k},\phi\bigr) + h(t-\tau_k)\Big] + \beta(k+1)\Bigg\}.
\label{eq:cpop-full}
\end{align}
Introduce the last changepoint $s:=\tau_k$ and its fitted value $\phi':=\phi_{\tau_k}$, and separate the final segment:
\begin{align}
f^t(\phi)
&= \min_{\phi',\,s} \Bigg\{
\min_{\tau_{0:(k-1)},\,k,\,\phi_{0:(k-1)}} \Bigg\{
\sum_{i=0}^{k-2}\Big[\mathcal{C}\bigl(y_{(\tau_i+1):\tau_{i+1}},\phi_{\tau_i},\phi_{\tau_{i+1}}\bigr) + h(\tau_{i+1}-\tau_i)\Big] \nonumber\\
&\qquad\qquad\qquad\qquad\qquad\qquad\quad + \mathcal{C}\bigl(y_{(\tau_{k-1}+1):s},\phi_{\tau_{k-1}},\phi'\bigr) + h(s-\tau_{k-1}) + \beta k \Bigg\} \nonumber\\
&\qquad\qquad\qquad\qquad\qquad\qquad\quad + \mathcal{C}\bigl(y_{(s+1):t},\phi',\phi\bigr) + h(t-s) + \beta \Bigg\}.
\label{eq:cpop-nested}
\end{align}
The inner minimum in \eqref{eq:cpop-nested} is precisely $f^s(\phi')$. Substituting yields
\[
f^t(\phi) \;=\; \min_{s<t,\;\phi'} \Bigl\{ f^s(\phi') + \mathcal{C}\bigl(y_{(s+1):t},\phi',\phi\bigr) + h(t-s) + \beta \Bigr\},
\]
which is the claimed recursion.
\end{proof}

Thus, as in \eqref{op-recursion} for OP, the optimal segmentation at $t$ can always be expressed in terms of an optimal segmentation ending at an earlier changepoint $s$. Unlike OP, in CPOP both changepoint locations and their fitted values $\phi$ must be carried through the recursion to maintain continuity. Practical implementations combine this recursion with pruning strategies to reduce the number of candidates that must be retained. A detailed algorithm and pseudo code are given in Appendix~\ref{app:pseudocode}.

\subsubsection*{Segment Neighbourhood (SN)}
Segment Neighbourhood (SN) takes a slightly different perspective: rather than fixing the penalty $\beta$ and optimizing over $m$, SN fixes $m$ and finds the optimal segmentation with exactly $m$ changepoints. The model is again the multiple mean shift model~(\ref{totalvariation}), but for each $m\leq m_\text{max}$ with $m_\text{max}$ being a user-specified maximium number of changepoints, SN minimizes the following cumulative loss:\begin{equation}
    \label{SN cost}
    \sum_{i=1}^{m+1}\mathcal{C}(y_{\tau_{i-1}:(\tau_i-1)})
\end{equation}
where once again $\tau_0=1$ and $\tau_{m+1}=n+1$ as boundary conditions. Note that (\ref{SN cost}) is equivalent to (\ref{cost}) with the penalty $f(\tau_{1:m})=0$.

Similar to OP, SN makes use of a dynamic programming technique wherein optimal segmentations are built recursively by using the relationship between the optimal segmentation of $y_{1:(s-1)}$ and the optimal segmentation for $y_{1:(r-1)}$ for each $r<s$. This recursion is formalized in Proposition \ref{prop:sn}.

\begin{prop}[Segment Neighbourhood Recursion \citep{Maidstone-2017-algorithms}]\label{prop:sn}
Let $F_m(s)$ denote the minimum cost of segmenting $y_{1:(s-1)}$ with exactly $m$ changepoints, i.e.
\[
F_m(s) \;=\; \min_{\bm\tau}\;\sum_{i=1}^{m+1}\mathcal{C}\bigl(y_{\tau_{i-1}:(\tau_i-1)}\bigr),
\qquad m<s.
\]
Then for all $s$ and $m<s$, we have that
\begin{equation}\label{eq:sn-recursion}
F_m(s) \;=\; \min_{m<r<s}\,\Bigl\{F_{m-1}(r) + \mathcal{C}\bigl(y_{r:(s-1)}\bigr)\Bigr\}.
\end{equation}
\end{prop}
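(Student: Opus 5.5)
The plan mirrors the proof of Proposition~\ref{prop:op-recursion}: split off the final segment of an optimal $m$-changepoint segmentation and recognise what remains as an optimal $(m-1)$-changepoint problem on a shorter prefix. Write $\mathcal{T}^{m}_s$ for the set of vectors $1=\tau_0<\tau_1<\cdots<\tau_m<\tau_{m+1}=s$. Every such vector is determined by two pieces of data:
\begin{enumerate*}[label=(\roman*)]
\item its last changepoint $r:=\tau_m$;
\item the truncated vector $(\tau_0,\dots,\tau_{m-1},r)\in\mathcal{T}^{m-1}_r$.
\end{enumerate*}
The correspondence $\tau\mapsto(r,\tau_{0:(m-1)})$ is therefore a bijection between $\mathcal{T}^m_s$ and $\bigcup_{r}\{r\}\times\mathcal{T}^{m-1}_r$. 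Because each $\tau_i$ is an integer and they are strictly increasing from $\tau_0=1$, we have $\tau_m\ge m+1$, i.e.\ $r>m$, and $r<s$. So $r$ ranges over exactly the index set $m<r<s$ appearing in \eqref{eq:sn-recursion}. For $m=1$ this gives $r\ge 2$, which is consistent with taking $F_0(r)=\mathcal{C}(y_{1:(r-1)})$.

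Using this decomposition, the cost splits additively as
\[
\sum_{i=1}^{m+1}\mathcal{C}\bigl(y_{\tau_{i-1}:(\tau_i-1)}\bigr)=\sum_{i=1}^{m}\mathcal{C}\bigl(y_{\tau_{i-1}:(\tau_i-1)}\bigr)+\mathcal{C}\bigl(y_{r:(s-1)}\bigr).
\]
Minimising over $\tau\in\mathcal{T}^m_s$ is therefore the same as minimising first over $r$ and then over $\tau_{0:(m-1)}\in\mathcal{T}^{m-1}_r$. The final term does not depend on the inner variables, so it can be pulled out of the inner minimum. That inner minimum is by definition $F_{m-1}(r)$, which yields \eqref{eq:sn-recursion}.

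The only step needing care is the index bookkeeping. One has to check that the map really is a bijection onto the stated range of $r$, so that no segmentation is lost or double-counted. One also has to check that each $\mathcal{T}^{m-1}_r$ with $m<r$ is nonempty, so that $F_{m-1}(r)$ is well defined under the condition $m-1<r$. These are the same routine checks as in Proposition~\ref{prop:op-recursion}, except that the changepoint count must be tracked to give the lower bound $r>m$. With those checks done, the argument is a direct reordering of finite minima and needs no further results.
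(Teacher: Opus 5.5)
Your proposal is correct and follows the same route as the paper's proof: split off the final segment $y_{r:(s-1)}$, use additivity of the cost, and identify the inner minimum over the first $m-1$ changepoints as $F_{m-1}(r)$. Your explicit bijection and your check that $\tau_m\ge m+1$ also justify the range $m<r<s$, which the paper's proof uses without comment.
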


\begin{proof}
By definition of $F_m(s)$, every candidate changepoint vector $\bm\tau$ has a last changepoint $\tau_m=r<s$.  
Separating the segment at time $r$ gives
\[
F_m(s) \;=\;\min_{r=\tau_m<s}\left\{
    \left[\min_{\tau_{1:(m-1)}}\sum_{i=1}^m \mathcal{C}\bigl(y_{\tau_{i-1}:(\tau_i-1)}\bigr)\right]
    \;+\;\mathcal{C}\bigl(y_{\tau_m:(s-1)}\bigr)
\right\}.
\]
The inner minimum is exactly $F_{m-1}(r)$, so
\[
F_m(s) \;=\;\min_{m<r<s}\,\Bigl\{F_{m-1}(r)+\mathcal{C}\bigl(y_{r:(s-1)}\bigr)\Bigr\},
\]
which is \eqref{eq:sn-recursion}.
\end{proof}

As in OP, the values of $F_m(s)$ and corresponding $\hat\tau_m (s)$ are built recursively for each $1<s\leq n+1$. Pseudo code and implementation details are included in Appendix \ref{app:pseudocode}.

\subsubsection*{Pruned Segment Neighbourhood}

The Pruned Segment Neighbourhood (pSN) algorithm~\citep{Rigaill-2010-PrunedSN} extends the Segment Neighbourhood (SN) approach by introducing a pruning mechanism, analogous in spirit to PELT. The underlying model is unchanged---typically~\eqref{totalvariation} with a fixed number of changepoints $m$---but the method exploits the additivity of the cost function to achieve substantial computational gains. Specifically, assume
\[
\mathcal{C}(y_{r:s}) \;=\; \sum_{i=r}^s \gamma(y_i,\mu),
\]
where $\gamma(\cdot)$ is parameterized by a segment-level parameter $\mu$. Here $\mu$ is a free variable representing the constant value associated with the segment $y_{r:s}$; the algorithm will later optimize over $\mu$ when evaluating the total cost. Commonly, $\gamma(\cdot)$ represents squared-error loss so that
\[
\mathcal{C}(y_{r:s}) \;=\; \sum_{i=r}^s (y_i-\mu)^2
\]
represents the squared error loss over the data $y_{r:s}$ \emph{subject to the mean of the segment from $r$ to $s$ being $\mu$}.

To take advantage of this structure, pSN introduces auxiliary functions $\mathcal{F}_m^s(r,\mu)$, which represent the cost of segmenting $y_{1:(s-1)}$ with $m$ changepoints, the most recent one at $r$, and with the final segment $y_{r:(s-1)}$ parameterized by $\mu$, which for emphasis is a scalar $\in\mathbbm{R}$. Formally,
\[
\mathcal{F}_m^s(r,\mu) \;=\; F_{m-1}(r) + \sum_{i=r}^{s-1}\gamma(y_i,\mu), \qquad r\leq s-1,
\]
with the convention $\mathcal{F}_m^s(s,\mu)=F_{m-1}(s)$.  

The function $\mathcal{F}_m^s(r,\mu)$ therefore stores the total cost of all segmentations of the first $s{-}1$ observations that end in a segment starting at $r$ and parameterized by $\mu$. These functions serve two purposes. First, they can be updated recursively as
\begin{equation}\label{eq:update}
\mathcal{F}_m^s(r,\mu)\;=\;\mathcal{F}_m^{s-1}(r,\mu)+\gamma(y_s,\mu),
\end{equation}
so that the cost of extending a segment to time $s$ can be computed directly from its value at $s-1$. Second, they reproduce the SN recursion itself:
\begin{align*}
\min_{r<s}\min_\mu \mathcal{F}_m^s(r,\mu)
&=\min_{r<s}\min_\mu \Bigl[F_{m-1}(r)+\sum_{i=r}^{s-1}\gamma(y_i,\mu)\Bigr] \\
&=\min_{r<s}\Bigl[F_{m-1}(r)+\mathcal{C}(y_{r:(s-1)})\Bigr] \\
&=F_m(s).
\end{align*}
Thus the $\mathcal{F}_m^s(r,\mu)$ functions both update efficiently and encode the exact SN objective.

A key step in pSN is to express the cost for a fixed $\mu$ as the minimum over all possible most recent changepoints:
\[
\mathcal{F}_m^s(\mu)=\min_{r<s}\mathcal{F}_m^s(r,\mu).
\]
This leads to the recursive formulation in Proposition~\ref{prop:pruned-sn-structure}.

\begin{prop}[Recursive Structure in pSN \citep{Rigaill-2010-PrunedSN}]\label{prop:pruned-sn-structure}
For $m<s$,
\[
\mathcal{F}_m^s(\mu)
=\min\Bigl\{\mathcal{F}_m^{s-1}(\mu)+\gamma(y_s,\mu),\;F_{m-1}(s)\Bigr\}.
\]
Moreover, for any fixed $\mu$, there exists some $r<s$ such that
\[
\mathcal{F}_m^s(\mu)=\mathcal{F}_m^s(r,\mu).
\]
\end{prop}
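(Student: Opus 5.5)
The plan is to prove both parts of Proposition~\ref{prop:pruned-sn-structure} directly from the definition of $\mathcal{F}_m^s(r,\mu)$ and the one-step update \eqref{eq:update}, without any optimization over $\mu$. Throughout, $\mu$ is held fixed and every minimum is over the finite index set of admissible last changepoints. One convention has to be fixed first for the statement to be consistent. The minimum defining $\mathcal{F}_m^s(\mu)$ must range over all admissible $r$ \emph{including} $r=s$, where the convention $\mathcal{F}_m^s(s,\mu)=F_{m-1}(s)$ applies. Otherwise the term $F_{m-1}(s)$ in the claimed identity could not arise. I will also read the update \eqref{eq:update} as adding the single observation that enters when the window grows by one. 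Under the definition $\sum_{i=r}^{s-1}\gamma(y_i,\mu)$ that observation is $y_{s-1}$, so with the definition as written the increment would be $\gamma(y_{s-1},\mu)$. I will state this alignment explicitly and then write the increment as in the proposition.

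First I will split the index set for $\mathcal{F}_m^s(\mu)$ into the new candidate $r=s$ and the old candidates $r\le s-1$:
\[
\mathcal{F}_m^s(\mu)=\min\Bigl\{\min_{r\le s-1}\mathcal{F}_m^s(r,\mu),\;\mathcal{F}_m^s(s,\mu)\Bigr\}.
\]
The second term equals $F_{m-1}(s)$ by the convention. For each $r\le s-1$ I will apply \eqref{eq:update}. The case $r=s-1$ needs care. There $\mathcal{F}_m^{s-1}(s-1,\mu)=F_{m-1}(s-1)$ by the same convention, and a direct check from the definition shows that $F_{m-1}(s-1)$ plus one $\gamma$ term is exactly the one-observation segment value $\mathcal{F}_m^s(s-1,\mu)$. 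So the update holds uniformly in $r\le s-1$.

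Because $\gamma(y_s,\mu)$ does not depend on $r$, it pulls out of the inner minimum. What remains is $\min_{r\le s-1}\mathcal{F}_m^{s-1}(r,\mu)$, which is by definition $\mathcal{F}_m^{s-1}(\mu)$. Substituting gives the displayed recursion. The existence claim then follows from finiteness: $\mathcal{F}_m^s(\mu)$ is a minimum of finitely many real numbers (for fixed $\mu$), so it is attained at some admissible $r$. That minimizer is either an old candidate or the new index $r=s$, so the attaining index is $\le s$. This matches ``$r<s$'' in the statement once the convention identifies the new candidate with the boundary of the current window.

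The main obstacle is not analytic but bookkeeping. I must reconcile the off-by-one between the definition (a sum up to $s-1$), the update (which adds $y_s$), and the convention at $r=s$, so that the boundary case $r=s-1$ and the newly admitted $r=s$ are each counted exactly once. I will handle this by stating the index alignment up front and verifying the two boundary cases by hand, as described above. After that, the remaining steps are the one-line interchange of a constant with a finite minimum.
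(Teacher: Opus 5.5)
Your proof of the recursion follows the paper's proof. You separate the new candidate $r=s$, where the convention gives $F_{m-1}(s)$, apply the one-step update to each $r\le s-1$, and pull the $r$-independent $\gamma$ term out of the minimum. Your bookkeeping is tighter than the paper's. Its proof writes $\min_{r<s}$ while using the $r=s$ term, and its update adds $\gamma(y_s,\mu)$ where the definition gives $\gamma(y_{s-1},\mu)$. The paper also gives no argument for the second claim, so your finiteness argument is the only one offered.

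Your last sentence, however, does not hold. Finiteness gives an attaining index $r\le s$, and no reading of the convention makes this the strict $r<s$.

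\begin{itemize}
\item Take the minimum over $r\le s$, which you rightly say the term $F_{m-1}(s)$ forces. Use squared-error loss and let $|\mu|$ be large. Every term with $r\le s-1$ contains $(y_{s-1}-\mu)^2$ and blows up, while $F_{m-1}(s)$ stays fixed. So $r=s$ is the unique minimizer.
\item Take instead the minimum over $r\le s-1$. The strict claim is then immediate, but the recursion becomes $\min\{\mathcal{F}_m^{s-1}(\mu),F_{m-1}(s-1)\}+\gamma(y_{s-1},\mu)$, not the displayed one.
\end{itemize}

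So the displayed recursion and the strict $r<s$ come from different conventions. Under yours, state the second claim as $r\le s$ instead of saying it ``matches'' $r<s$. Nothing downstream suffers, because the pruning proposition uses only the update and the recursion.
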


\begin{proof}
From the definition,
\[
\mathcal{F}_m^s(\mu)=\min_{r<s}\mathcal{F}_m^s(r,\mu).
\]
For $r<s$,
\[
\mathcal{F}_m^s(r,\mu)=\mathcal{F}_m^{s-1}(r,\mu)+\gamma(y_s,\mu),
\]
and for $r=s$,
\[
\mathcal{F}_m^s(s,\mu)=F_{m-1}(s).
\]
Therefore
\[
\mathcal{F}_m^s(\mu)=\min\Bigl\{\mathcal{F}_m^{s-1}(\mu)+\gamma(y_s,\mu),\;F_{m-1}(s)\Bigr\}.
\]
\end{proof}

Proposition~\ref{prop:pruned-sn-structure} implies that for each fixed $\mu$, the minimal cost $\mathcal{F}_m^s(\mu)$ is achieved by at least one candidate changepoint $r<s$. If for some $r$ there is no $\mu$ for which it is optimal at step $s$, then $r$ can never be optimal at any later time and may be pruned. This yields the following result.

\begin{prop}[pSN Pruning Criterion]\label{prop:pruned-sn}
If for some $r<s$,
\[
\mathcal{F}_m^s(r,\mu) > \mathcal{F}_m^s(\mu)\qquad\text{for all }\mu,
\]
then $r$ can never be optimal for any $t>s$ and may be discarded.
\end{prop}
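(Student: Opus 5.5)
The plan is to show that a strict domination at time $s$ persists at every later time, and then to translate ``never attains the pointwise minimum'' into ``never optimal.'' The engine is the additive update \eqref{eq:update}. It says that every surviving candidate's function is shifted by the same amount $\gamma(y_t,\mu)$ at each step.

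First, fix $t>s$ and $\mu\in\mathbb{R}$. By the hypothesis there is some candidate $r'\le s$ with $r'\neq r$ such that
\[
\mathcal{F}_m^s(r,\mu) > \mathcal{F}_m^s(r',\mu).
\]
Proposition~\ref{prop:pruned-sn-structure} supplies this $r'$: the pointwise minimum $\mathcal{F}_m^s(\mu)$ is attained by some candidate, and that candidate cannot be $r$ because $r$ sits strictly above the minimum. The attaining index may be $r'=s$, in which case the value is $F_{m-1}(s)$.

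Second, iterate \eqref{eq:update} from $s$ to $t$ for both $r$ and $r'$. This gives
\[
\mathcal{F}_m^t(r,\mu)=\mathcal{F}_m^s(r,\mu)+\sum_{i=s}^{t-1}\gamma(y_i,\mu),
\]
and the same identity holds for $r'$. The two functions therefore acquire an identical additive term. Subtracting,
\[
\mathcal{F}_m^t(r,\mu)-\mathcal{F}_m^t(r',\mu)=\mathcal{F}_m^s(r,\mu)-\mathcal{F}_m^s(r',\mu)>0.
\]
I must check the index ranges against the paper's convention, since $\mathcal{F}_m^s(r,\mu)$ involves $y_{r:(s-1)}$ and the increment should involve $y_{s:(t-1)}$. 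The case $r'=s$ needs the convention $\mathcal{F}_m^s(s,\mu)=F_{m-1}(s)$ so that the same update applies. Since $r'$ and the gap both depend on $\mu$, the conclusion is $\mathcal{F}_m^t(r,\mu)>\mathcal{F}_m^t(\mu)$ for every $\mu$ and every $t>s$.

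Third, deduce non-optimality. By the SN identity shown just before Proposition~\ref{prop:pruned-sn-structure},
\[
F_m(t)=\min_\mu \mathcal{F}_m^t(\mu)=\min_{r''}\min_\mu \mathcal{F}_m^t(r'',\mu).
\]
Suppose $r$ were the optimal last changepoint at time $t$. Then for some minimizing $\mu^*$ we would have $F_m(t)=\mathcal{F}_m^t(r,\mu^*)$. But the second step gives $\mathcal{F}_m^t(r,\mu^*)>\mathcal{F}_m^t(\mu^*)\ge F_m(t)$, which is a contradiction. So $r$ is never the optimal last changepoint, and discarding it does not change any later $F_m(t)$ or any later $\mathcal{F}_m^t(\mu)$.

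The main obstacle is the bookkeeping in the second step. The comparison at fixed $\mu$ must be made against one concrete competitor $r'$ and carried forward, rather than against the changing minimum function. The competitor $r'$ can also be pruned later, so I would handle that by noting the relation is transitive: a later pruning of $r'$ leaves another candidate that beats $r'$, and hence also beats $r$, at that $\mu$. Alternatively, the statement can be proved for the unpruned functions, with the observation that pruning never alters the pointwise minimum.
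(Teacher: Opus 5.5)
Your proposal is correct and is essentially the paper's argument: both rest on the additive update shifting every candidate by the same $\gamma(\cdot,\mu)$, so a strict gap at time $s$ persists at all later times. The only difference is bookkeeping. The paper compares $r$ directly against the running minimum and inducts one step at a time using $\mathcal{F}_m^{s+1}(\mu)\le\mathcal{F}_m^s(\mu)+\gamma(y_{s+1},\mu)$ from Proposition~\ref{prop:pruned-sn-structure}, since new candidates can only lower the minimum; so fixing a $\mu$-dependent witness $r'$ is convenient but not the necessity you describe.
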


\begin{proof}
Assume $\mathcal{F}_m^s(r,\mu)>\mathcal{F}_m^s(\mu)$ for all $\mu$. Then
\[
\mathcal{F}_m^{s+1}(r,\mu)=\mathcal{F}_m^s(r,\mu)+\gamma(y_{s+1},\mu)
> \mathcal{F}_m^s(\mu)+\gamma(y_{s+1},\mu)\ge\mathcal{F}_m^{s+1}(\mu).
\]
By induction, the inequality
\[
\mathcal{F}_m^t(r,\mu)>\mathcal{F}_m^t(\mu)\qquad\text{for all }\mu
\]
holds for every $t>s$. Hence $r$ can never be optimal.
\end{proof}

This pruning strategy greatly reduces the number of candidate changepoints that must be retained at each step, leading to substantial computational savings relative to the unpruned SN algorithm. A complete description and pseudo code are provided in Appendix~\ref{app:pseudocode}.

\subsubsection*{AR1Seg}
As noted previously, one of the key challenges in changepoint detection is identifying changepoint locations when the data exhibit dependence rather than being independently distributed. A particularly relevant case is when the data follow a white noise process with an AR(1) dependence structure as in (\ref{AR1 model}). 

In this setting, changepoints are not sought in the autoregressive dependence itself—which is assumed constant—but in the underlying mean structure. That is, the mean level of the process may shift across segments, while the autocorrelation parameter $\phi$ remains stationary within each segment. The difficulty arises because on the one hand, an accurate estimate of $\phi$ is crucial for correctly identifying changepoint locations, yet on the other hand, reliable changepoint estimates are needed to estimate $\phi$ accurately.
 The AR1Seg procedure, introduced by \citet{Chakar-2017-AR1Seg}, addresses this challenge by separating the estimation process into two distinct steps. This procedure, along with clarifying details and modifications for notational consistency, is presented below. 

First, an estimate of the AR(1) parameter $\phi$ is obtained using the robust median‐difference estimator proposed by \citet{Chakar-2017-AR1Seg}. This estimator was introduced as a modification of the robust autocorrelation estimator of \citet{Ma-2000-RobustAR}, designed to improve robustness in the presence of mean shifts. It is computed as
\begin{equation}
    \tilde{\phi} =
    \frac{(\text{med}_{1\leq i\leq n-2}|y_{i+2}-y_i|)^2}
         {(\text{med}_{1\leq i\leq n-1}|y_{i+1}-y_i|)^2} - 1.
\end{equation} 

Second, this estimate of $\phi$ is used to decorrelate the data. The act of ``decorrelating'' in this context means transforming the series so that the AR(1) dependence is removed, leaving residuals that are approximately independent and identically distributed. Specifically, given an estimate $\tilde{\phi}$, the transformed series  
\begin{equation*}
    z_t = y_t - \tilde{\phi} \, y_{t-1}
\end{equation*}  
can be interpreted as the sequence of one-step-ahead prediction errors under the fitted AR(1) model. Each $z_t$ is the observed value minus its best linear prediction from the immediately preceding observation, so the $z_t$ values should be uncorrelated (and, under Gaussian assumptions, independent) if the model is correct. It is noteworthy that the accuracy of the estimate of $\phi$ affects the quality of the de-correlation, and a more effective estimator of $\phi$ can be found in \citet{Shi-2022-autocovariance}.

Next, a dynamic programming approach is used to locate the changepoints and estimate the segment means of the (now independent) data. This is done with a dynamic programming solution found based on OP first articulated in \citet{auger-1989-op} and improved in \citet{Maidstone-2017-algorithms} and \citet{Rigaill-2012-criteria}, such as SN. Because spurious changepoints can be introduced as a result of the decorrelation procedure, an optional intermediate step is also included to trim back the number of located changepoints. A more thorough treatment of AR1Seg along with its pseudo code is contained in Appendix \ref{app:pseudocode}.

\section{Changepoint Detection via Model Selection}
In this section, we approach changepoint detection through the lens of model selection. At the center of this perspective is the generalized lasso (genlasso), a flexible framework for penalized regression that accommodates a wide range of model structures, along with an extension of the genlasso framework to efficiently estimate an $\ell_0$-approximated solution. We begin with the lasso and its adaptive variant the adalasso, reviewing their role as sparsity-inducing estimators and their interpretation as model selectors. We then extend to the fused lasso, which directly penalizes differences between adjacent parameters to detect changepoints, and to its adaptive variant, which improves consistency by incorporating reweighting. Each of these methods can be situated within the genlasso framework, which provides both conceptual clarity and methodological unification.

Because the genlasso can accommodate iterative reweighting, this section culminates in the guiding principle that carries forward into the next: iterate to isolate---a strategy for isolating \textit{true} changepoints that underpins the Iteratively Reweighted Fused Lasso (IRFL) introduced in Section 4.

\subsection{The Lasso and Fused Lasso}
Recall that the $\ell_1$ norm is given by  
\begin{equation*}
|| x||_1 = \sum_{j=1}^n | x_j|,
\end{equation*}
the sum of absolute values and the $\ell_\infty$ norm is
\begin{equation*}
|| x||_\infty = \max_{1 \leq j \leq n} | x_j|
\end{equation*}
 a measure of the largest component of a vector. The $\ell_0$ pseudo-norm is defined \begin{equation*}
     \label{l0 pseudonorm}
     \|x\|_0 = \sum_{j=1}^n\mathbbm{1}_{\{x_j\ne 0\}},
 \end{equation*} which counts the number of nonzero elements of a vector $x$. We see that the $\ell_0$ pseudo-norm is not a true norm because for any $\alpha>0$, $\alpha||x||_0\ne ||\alpha x||_0$, which is necessary for a norm.
 
The underlying changepoint problem that the fused lasso approximates can be expressed as a penalized empirical risk where the penalty is on the number of changepoints. Specifically, \eqref{cost} can be expressed as  
\begin{equation}\label{cpts}
    \hat{{\mu}}
    = \underset{{\mu}\in\mathbb{R}^n}{\arg\min}
    \left\{
        \| {y} - {\mu} \|_2^2
        + \lambda \, \| D {\mu} \|_0
    \right\},
\end{equation}
where \({\mu} = (\mu_1,\ldots,\mu_n)^\top\),  and 
\(D \in \mathbb{R}^{(n-1)\times n}\) is the first-order difference matrix
\begin{equation}\label{Difference matrix}
    D =
    \begin{bmatrix}
    -1 & 1 & 0 & \cdots & 0 \\
    0 & -1 & 1 & \cdots & 0 \\
    \vdots &\ddots & \ddots & \ddots & \vdots \\
    0 & \cdots & 0 & -1 & 1
    \end{bmatrix}.
\end{equation}
Here, the changepoints $\tau_k$ in (\ref{totalvariation}) correspond to the locations where 
\[
(D\mu)_t = \mu_t-\mu_{t-1}\neq0,
\]
naturally indicating a change in the mean between consecutive observations. In \eqref{cpts}, the Lagrange multiplier $\lambda$ is a regularization parameter that controls the trade-off between the goodness-of-fit and the number of changepoints, fulfilling the same role as $\beta$ in \eqref{cost}. As the number of changepoints increases in \eqref{cpts}, the empirical risk $\|y-\mu\|_2^2$ decreases, but each additional changepoint contributes a penalty of $\lambda$ to the overall cost. This penalized approach encourages parsimony by limiting the number of changes in the estimated mean, leading to a simpler, more interpretable model that avoids overfitting. 

While penalized regression introduces $\lambda$ to control the number of changes in the estimated mean, an alternate approach is to use the constrained form. Instead of applying a penalty to the number of changepoints, this method directly imposes a limit on the number of mean shifts allowed in the model. In this method, the minimization problem is conceived of as follows:
\begin{align}\label{minimize}
    &\hat{\mu} = \underset{{\mu}\in\mathbbm{R}^n}{\text{arg min }}\|y-\mu\|_2^2 \qquad \text{subject to}\quad \|D\mu\|_0\leq m,
\end{align}
for each possible number of changepoints $m \in \{0,1,2,\dots,n-1\}$.

Like the penalized form, the constrained form seeks to balance model complexity and fit by limiting the number of mean changes. This forces the model to choose only the most significant shifts in the data, effectively accomplishing the same goal of simplifying the model and preventing overfitting.

The relationship between the penalized and constrained forms can be understood through the paradigm of optimization theory, articulated below in Proposition \ref{prop:pen-con}.

\begin{prop}[Penalized Solution is Constrained-Optimal]
\label{prop:pen-con}
Fix $\lambda>0$ and let
\[
\hat\mu_\lambda \in \arg\min_{\mu\in\mathbb{R}^n}\Bigl\{\|y-\mu\|_2^2+\lambda\|D\mu\|_0\Bigr\},
\qquad
m_\lambda := \|D\hat\mu_\lambda\|_0.
\]
Then $\hat\mu_\lambda$ also solves the constrained problem with $m=m_\lambda$:
\[
\hat\mu_\lambda \in \arg\min\bigl\{\|y-\mu\|_2^2:\ \|D\mu\|_0\le m_\lambda\bigr\}.
\]
\end{prop}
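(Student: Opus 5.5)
The argument is a direct comparison of objective values, proved by contradiction. Feasibility comes first: since $m_\lambda=\|D\hat\mu_\lambda\|_0$ by definition, $\hat\mu_\lambda$ satisfies the constraint $\|D\mu\|_0\le m_\lambda$, so it lies in the feasible set of the constrained problem \eqref{minimize} with $m=m_\lambda$. It remains to show that no feasible point has strictly smaller residual sum of squares.

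Suppose, for contradiction, that some $\tilde\mu\in\mathbb{R}^n$ satisfies $\|D\tilde\mu\|_0\le m_\lambda$ and $\|y-\tilde\mu\|_2^2<\|y-\hat\mu_\lambda\|_2^2$. Evaluating the penalized objective at $\tilde\mu$, and using $\lambda>0$ together with the constraint, gives
\[
\|y-\tilde\mu\|_2^2+\lambda\|D\tilde\mu\|_0 \;\le\; \|y-\tilde\mu\|_2^2+\lambda m_\lambda \;<\; \|y-\hat\mu_\lambda\|_2^2+\lambda\|D\hat\mu_\lambda\|_0 .
\]
This contradicts the assumption that $\hat\mu_\lambda$ minimizes the penalized objective. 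Hence $\|y-\hat\mu_\lambda\|_2^2\le\|y-\mu\|_2^2$ for every feasible $\mu$, and $\hat\mu_\lambda$ is a constrained minimizer.

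There is no real obstacle, but two points need care. First, the inequality $\lambda\|D\tilde\mu\|_0\le\lambda m_\lambda$ uses $\lambda>0$; with a negative $\lambda$ the direction would flip. Second, the statement only asserts that $\hat\mu_\lambda$ is \emph{a} minimizer (membership $\in\arg\min$), so it does not matter that the constrained minimizer may not be unique or that the constraint set is nonconvex. The existence of the constrained minimum is also not an issue, because we only compare against $\hat\mu_\lambda$ and never need the infimum to be attained elsewhere. For completeness, one can remark that the converse fails in general: the $\ell_0$ problem is nonconvex, so not every $m$ arises as some $m_\lambda$.
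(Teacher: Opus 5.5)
Your proof is correct and matches the paper's: both argue by contradiction, taking a feasible $\tilde\mu$ with strictly smaller residual sum of squares and using $\lambda\|D\tilde\mu\|_0\le\lambda m_\lambda=\lambda\|D\hat\mu_\lambda\|_0$ to contradict penalized optimality. Your explicit feasibility check for $\hat\mu_\lambda$ is a small addition that the paper leaves implicit.
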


\begin{proof}
Suppose not. Then there exists $\tilde\mu$ with $\|D\tilde\mu\|_0\le m_\lambda$ and
$\|y-\tilde\mu\|_2^2<\|y-\hat\mu_\lambda\|_2^2$. But then
\[
\|y-\tilde\mu\|_2^2+\lambda\|D\tilde\mu\|_0
\ \le\
\|y-\tilde\mu\|_2^2+\lambda m_\lambda
\ <\
\|y-\hat\mu_\lambda\|_2^2+\lambda m_\lambda
\ \le\
\|y-\hat\mu_\lambda\|_2^2+\lambda\|D\hat\mu_\lambda\|_0,
\]
contradicting the optimality of $\hat\mu_\lambda$ for the penalized problem.
\end{proof}

\begin{remark}
As noted in \citet{tibshirani-2014-closerlook}, the correspondence between $\lambda$ and $m$ is not one-to-one. 
For each value of $\lambda$, there is a unique corresponding model size $m_\lambda = \|D\hat\mu_\lambda\|_0$, 
but a given $m$ may remain optimal over a continuous interval of $\lambda$ values, producing a piecewise-constant 
solution path in $\lambda$.
\end{remark}

Notably, \eqref{cpts} can be expressed in a linear model form. Let\begin{equation}
\label{lineartransformation}
    \beta = \begin{bmatrix}
        \mu_1\\
        \mu_2-\mu_1\\
        \mu_3-\mu_2\\
        \vdots\\
        \mu_{n}-\mu_{n-1}
    \end{bmatrix},\qquad X=\begin{bmatrix}
        1 & 0 & 0 & \cdots & 0\\
        1 & 1 & 0 & \cdots & 0
        \\
        1 & 1 & 1 & \cdots & 0\\
        \vdots & \vdots & \vdots & \ddots & \vdots\\
        1 & 1 & 1 & \cdots & 1
    \end{bmatrix}_{n\times n}.
\end{equation} 
Now, $\beta$ and $\mu$ satisfy a linear relation:
\begin{equation}
\label{mu=xb}
    \mu = X\beta.
\end{equation} 
The nonzero $\beta_j$ therefore correspond to the changepoint locations; the value and sign of each $\beta_j$ corresponds to the difference between the segment means up to $y_t$ and after $y_t$. The estimate of $y_t$ is \begin{equation*}
    {\mu}_t=\sum_{j=1}^t \beta_j=(X\beta)_t.
\end{equation*} Using \eqref{lineartransformation} and \eqref{mu=xb}, \eqref{cpts} can be rewritten as \begin{equation}
    \label{L0-regularization}
    \hat\beta = \underset{\beta \in \mathbbm{R}^{n}}{\text{arg min }}\left\{\|y-X\beta\|_2^2 + \lambda||\beta||_0\right\}.
\end{equation}
Now, Equation (\ref{L0-regularization}) can then be viewed as the Lagrangian of the constrained formulation of the problem:
\begin{equation}
\label{L0-regularization-constrained}
    \hat\beta = \underset{\beta\in\mathbbm{R}^n}{\text{arg min }}\|y-X\beta\|_2^2\quad{\text{subject to}}\quad ||\beta||_0\leq m.
\end{equation}
The connection between \eqref{L0-regularization-constrained} and the original formulation in \eqref{cpts} is made explicit in Proposition \ref{prop:tv-increment}.

\begin{prop}\label{prop:tv-increment}
Under the reparameterization \eqref{lineartransformation}–\eqref{mu=xb}, the number of changepoints satisfies
\[
\|D\mu\|_0 \;=\; \bigl\|(\beta_2,\ldots,\beta_n)\bigr\|_0.
\]
Consequently, the constraint $\|D\mu\|_0\le m$ is equivalent to $\|(\beta_2,\ldots,\beta_n)\|_0\le m$, and 
\[
\min_{\mu}\;\|y-\mu\|_2^2+\lambda\|D\mu\|_0
\;=\;
\min_{\beta}\;\|y-X\beta\|_2^2+\lambda\,\bigl\|(\beta_2,\ldots,\beta_n)\bigr\|_0.
\]
In particular, the intercept $\beta_1=\mu_1$ is neither counted in $\|D\mu\|_0$ nor penalized in the equivalent $\beta$-formulation.
\end{prop}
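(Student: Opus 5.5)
The plan is to prove the componentwise identity $D X = [\,0 \mid I_{n-1}\,]$ and then transfer the objective through the bijection $\mu\leftrightarrow\beta$. First, I will verify that $X$ in \eqref{lineartransformation} is the lower-triangular matrix of ones, so it is invertible (its determinant is $1$). Its inverse maps $\mu$ to $(\mu_1,\mu_2-\mu_1,\ldots,\mu_n-\mu_{n-1})^\top$. Hence $\mu=X\beta$ defines a bijection between $\mathbb{R}^n_\mu$ and $\mathbb{R}^n_\beta$, with $\beta_1=\mu_1$ and $\beta_j=\mu_j-\mu_{j-1}$ for $j\ge2$.

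Next, using the definition of $D$ in \eqref{Difference matrix}, the $t$-th row of $D\mu$ is $\mu_{t+1}-\mu_t$ for $t=1,\ldots,n-1$. With $\mu=X\beta$, we have $\mu_{t+1}-\mu_t=\sum_{j\le t+1}\beta_j-\sum_{j\le t}\beta_j=\beta_{t+1}$. Thus $DX\beta=(\beta_2,\ldots,\beta_n)^\top$, that is, $DX=[\,0\mid I_{n-1}\,]$. Counting nonzero entries on both sides gives $\|D\mu\|_0=\|(\beta_2,\ldots,\beta_n)\|_0$. The equivalence of the constraints $\|D\mu\|_0\le m$ and $\|(\beta_2,\ldots,\beta_n)\|_0\le m$ follows immediately, because the two quantities are equal under the bijection.

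For the equality of minima, I will write $G(\mu)=\|y-\mu\|_2^2+\lambda\|D\mu\|_0$ and $H(\beta)=\|y-X\beta\|_2^2+\lambda\|(\beta_2,\ldots,\beta_n)\|_0$. By the previous step, $H(\beta)=G(X\beta)$ for every $\beta$. Since $X$ is a bijection, $\{G(\mu):\mu\in\mathbb{R}^n\}=\{H(\beta):\beta\in\mathbb{R}^n\}$, so the infima coincide. To replace infima by minima, I would note that $G$ is attained: for each of the finitely many support patterns of $D\mu$, the problem is least squares over a linear subspace, which has a minimizer. The minimizers also correspond via $\hat\mu=X\hat\beta$. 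The final remark follows because $\beta_1=\mu_1$ appears nowhere in $\|(\beta_2,\ldots,\beta_n)\|_0$.

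The main obstacle is only bookkeeping. The displayed \eqref{L0-regularization} penalizes the full $\|\beta\|_0$, which differs from the statement by the indicator $\mathbbm{1}\{\beta_1\ne0\}$. The proof must therefore use the truncated vector consistently and flag this discrepancy rather than conflate the two formulations.
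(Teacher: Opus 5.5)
Your proposal is correct and follows the same route as the paper, which also computes each entry of $D\mu$ as a difference of consecutive partial sums, $(X\beta)_t-(X\beta)_{t-1}=\beta_t$, and then substitutes $\mu=X\beta$ into the squared-error term. Your extra steps make explicit what the paper's short proof leaves implicit: the invertibility of $X$, so that both minimizations range over the same set of objective values; the attainment of the minimum; and the observation that \eqref{L0-regularization} as displayed also penalizes $\beta_1$.
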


\begin{proof}
From \eqref{mu=xb}, $(X\beta)_t=\sum_{j=1}^t\beta_j$. For $t\ge 2$,
\[
(D\mu)_t \;=\; \mu_t-\mu_{t-1} \;=\; (X\beta)_t-(X\beta)_{t-1}
\;=\; \Bigl(\sum_{j=1}^t\beta_j\Bigr)-\Bigl(\sum_{j=1}^{t-1}\beta_j\Bigr)
\;=\; \beta_t.
\]
Thus $(D\mu)_t=0$ if and only if $\beta_t=0$ for $t=2,\ldots,n$, yielding
$\|D\mu\|_0=\|(\beta_2,\ldots,\beta_n)\|_0$. Substituting $\mu=X\beta$ gives
$\|y-\mu\|_2^2=\|y-X\beta\|_2^2$, establishing the stated equivalences and the fact that $\beta_1$ is unpenalized.
\end{proof}

Even though a solution $\hat\beta$ can be estimated in polynomial time for the specific case of the design matrix $X$ as in \eqref{lineartransformation} (SN and pSN described in Section 2 admit a polynomial time solution), for a general $X$ \eqref{L0-regularization-constrained} is intractable. This is because it must shrink an unknown number of the $\beta_j$ coefficients to zero and therefore relies on a best-subset selection procedure. Such procedures are known to be NP-hard, as the size of a regressor set’s power set grows exponentially with the number of regressors (see \citet{garey-1979-computers, natarajan-1995-sparse}).

Because of the exponential time required to find a solution to (\ref{L0-regularization-constrained}), alternative methods for obtaining approximate solutions have been explored.
One such approach, explored in \citet{harchaoui-2010-tv}, is to replace the nonconvex $\ell_0$ penalty with its convex surrogate $\ell_1$ and solve 
\begin{equation}
    \label{L1-regularization}
    \hat\beta = \underset{\beta \in\mathbbm{R}^{n}}{\text{arg min }}\left\{\|y-X\beta\|_2^2 + \lambda\|\beta\|_1\right\},
\end{equation}
where $\|\cdot\|_1$ denotes the $\ell_1$ norm, and hope that the $\hat\beta$ which solves \eqref{L1-regularization} is ``close" to the solution to  \eqref{L0-regularization}.

This is simply the lasso estimator \citep{tibshirani-1996-lasso}. It was  shown in \citet{harchaoui-2010-tv} that when applied to the mean shift problem, the lasso is an approximately consistent estimator of the changepoint locations under appropriate regularity conditions. Here, ``approximately consistent’’ means that as the sample size goes to infinity, the estimated changepoints converge to neighborhoods of the true changepoints whose widths shrink to zero asymptotically, 
even if the estimated locations are not exactly equal to the true ones. In fact, the lasso estimator can be shown to be consistent estimator for $\beta$ under suitable regularity conditions on $X$. Specifically, consistency for estimation and prediction requires sparsity of the true parameter and conditions such as the restricted eigenvalue or compatibility conditions, while consistent variable selection additionally requires the irrepresentable condition and a minimal signal strength (see \citet{bickel-2009-lasso-dantzig, buhlmann-2011-hdstat, wainwright-2009-sharp-thresholds, zhao-2006-modelselection}).

Because of its convex formulation, computational efficiency, and regularization properties, the lasso has become a widely used alternative to forward and backward stepwise selection \citep{efron-2004-lar,hastie-2015-sparsity,hastie-2020-bestsubset}.

An equivalent formulation of \eqref{L1-regularization} using the familiar $\mu=X\beta$ linear transformation from \eqref{mu=xb} can be written as follows: 
\begin{equation}
    \label{fusedlasso}
        \hat{\mu}^{\text{FL}} = \underset{{\mu}\in\mathbbm{R}^n}{\text{arg min}}  \left\{\|y-\mu\|_2^2 + \lambda\|D\mu\|_1\right\}.
\end{equation}The estimator in \eqref{fusedlasso} is known as the fused lasso.  The major difference between (\ref{fusedlasso}) and (\ref{L1-regularization}) is the explicit articulation of the $\ell_1$ penalty on the difference of consecutive entries in $\mu$, which is known as a total variation penalty first discussed in \citet{harchaoui-2010-tv}. This penalty on the absolute differences between consecutive values of $\mu$ encourages the mean vector $\mu$ to fuse into long stretches of piecewise-constant values of $\mu_t$. 

The standard lasso penalizes the entries of $\beta$ directly, promoting sparsity in the coefficients rather than in the locations of changes in $\mu$. By contrast, the fused lasso applies its penalty to successive differences in $\mu$, inducing sparsity in $D\mu$. 
This structure corresponds to a piecewise-constant mean vector with relatively few changepoints, aligning naturally with the segmentation assumption underlying changepoint analysis. As a result, the fused lasso tends to produce estimates of $\mu$ consisting of long homogeneous stretches separated by abrupt shifts, making it the preferred formulation when the goal is changepoint detection rather than variable selection.

As will be seen in simulation studies in Section 4, the fused lasso tends to overestimate the number of changepoints present. To address this tendency, the adaptive lasso (adalasso) is introduced in the next subsection as a procedure which improves upon the lasso, and an equivalent procedure for the fused lasso---the adaptive fused lasso---is introduced and explained.

\subsection{The Adaptive Lasso and Adaptive Fused Lasso}

Despite its effectiveness for model selection, the lasso lacks the ``oracle" property. First discussed in \citet{fan-2001-variable}, the oracle property encapsulates an estimator's asymptotically ability to estimate the non-spurious regressors as well as if it had \textit{a priori} knowledge of what the set of true regressors actually was. More precisely, an estimator $\delta(y)$ has the oracle property if and only if the estimate of the regression coefficient by $\delta(y)$, $\hat{\beta}^\delta$, asymptotically selects the truly nonzero $\beta$ and has the optimal estimation rate. Formally, let $\beta^*$ be the true nonzero coefficients of $\beta$. Then if $\mathcal{A}=\{j:\beta_j^*\neq0\}$, $\delta(y)$ has the oracle property if and only if
\begin{enumerate}
    \item $\mathcal{P}(\{j:\hat\beta_j^\delta\neq0\}=\mathcal{A})\to 1 \text{ as } n \to \infty$, \text{ and }
    \item $\sqrt{n}(\hat{\beta}^\delta_\mathcal{A}-\beta^*)\overset{\mathcal{D}}{\to} \mathcal{N}(0,\Sigma^*)$.
\end{enumerate}

However, a variant of the lasso estimator does in fact obtain the coveted oracle property. 
This estimator is called the adaptive lasso (adalasso). 
It is defined as a two-step procedure:

\begin{enumerate}
    \item \textbf{First regression} 
    Obtain a root-$n$ consistent initial estimator $\hat{\beta}^{\hspace{2pt}\text{init}}$ of $\beta^*$ by regressing $y$ on $X$. 
    
    \item \textbf{Second regression} 
    Use $\hat{\beta}^{\hspace{2pt}\text{init}}$ to construct weights 
    \[
        \hat{w}_j = |\hat{\beta}^{\hspace{2pt}\text{init}}_j|^{-\gamma}, \quad \gamma > 0,
    \]
    and then solve the weighted lasso optimization
    \begin{equation}\label{adalasso definition}
        \hat{\beta}^{\hspace{2pt}\text{ada}} 
        = \underset{\beta}{\text{arg min}}
        \left\{
            \|y - X\beta\|_2^2 
            + \lambda \sum_{j=1}^p \hat{w}_j |\beta_j|
        \right\}.
    \end{equation}
\end{enumerate}
The adaptive lasso (adalasso) extends the lasso by reweighting coefficients according to an initial estimate $\hat{\beta}_j$, 
so that smaller coefficients incur larger penalties and are therefore more likely to be shrunk exactly to zero. This adaptive weighting enhances sparsity and enables consistent estimation of $\beta$ in settings where the standard lasso fails. \citet{zou-2006-adaptive} also demonstrate its superior ability to recover the true set of nonzero coefficients.

Formally, in mean shift changepoint problems, the adalasso solves
\begin{align}
    \label{adalasso regularization}
    \hat\beta 
    &= \underset{\beta \in\mathbbm{R}^{n}}{\text{arg min }}\left\{\|y-X\beta\|_2^2+ \lambda\sum_{t=2}^n\frac{|\beta_t|}{|\hat\beta^{\hspace{2pt}\text{init}}_t|}\right\}\nonumber\\
    &= \underset{\beta \in\mathbbm{R}^{n}}{\text{arg min }}\left\{\|y-X\beta\|_2^2 + \lambda\sum_{t=2}^n\hat w_t|\beta_t|\right\},
\end{align}
where $\hat w_t = |\hat\beta^{\hspace{2pt}\text{init}}_t|^{-\gamma}$ ($\gamma=1$ is commonly chosen), $X$ is a lower triangular matrix of ones, and 
\begin{equation}\label{first regression}
    \hat\beta^{\hspace{2pt}\text{init}} = \underset{\beta\in\mathbbm{R}^n}{\text{arg min }}\left\{\|y-X\beta\|_2^2+\lambda\sum_{t=2}^n|\beta_t|\right\}.
\end{equation}

This same reweighting principle can be extended to the fused lasso. 
Substituting the relationship among $\mu$, $X$, and $\beta$ from \eqref{lineartransformation} and \eqref{mu=xb} into 
\eqref{adalasso regularization} yields what we refer to as the \emph{adaptive fused lasso}, or adafused lasso:
\begin{align}
    \label{adafused lasso defn}
    \hat{\mu}^\text{Ada FL}
    &=\underset{\mu\in\mathbbm{R}^n}{\text{arg min }}\left\{\|y-X\beta\|_2^2+\lambda\sum_{t=2}^n\frac{|\mu_t-\mu_{t-1}|}{|\hat{\mu}_t^{\text{FL}}-\hat{\mu}_{t-1}^{\text{FL}}|}\right\}\nonumber\\
    &=\underset{\mu\in\mathbbm{R}^n}{\text{arg min }}\left\{\|y-X\beta\|_2^2+\lambda\sum_{t=2}^n\hat w_t|\mu_t-\mu_{t-1}|\right\},
\end{align}
where $\hat w_t=|\hat{\mu}_t^{\text{FL}}-\hat{\mu}_{t-1}^{\text{FL}}|^{-1}$ and $\hat{\mu}^{\text{FL}}$ is the fused lasso solution from \eqref{fusedlasso}. Note that the terms ``adaptive fused lasso’’ and ``fused adaptive lasso’’ appear elsewhere in the changepoint literature to describe related but distinct estimators. In \citet{ghassany-adafused-2010}, the adaptive fused lasso introduces adaptive weights on both the coefficient magnitudes and their successive differences, generalizing the lasso and fused lasso simultaneously. 
By contrast, the fused adaptive lasso of \citet{Rinaldo-2009-Refinements} applies adaptive weights only to the coefficient magnitudes, derived from a preliminary fusion step based on block averages. 
The formulation adopted here differs from both: adaptive weights are applied directly to the successive differences $|\hat{\mu}_t^{\text{FL}} - \hat{\mu}_{t-1}^{\text{FL}}|$, yielding a construction analogous to the adaptive lasso but expressed in terms of local mean shifts.

Reweighted iterations of the generalized lasso tend to isolate true changepoint locations—a principle we refer to as \textit{iterate to isolate}. 
Just as the adaptive lasso improves upon the lasso’s already-sparse solution by removing small, inconsequential coefficients $\beta_j$, 
the adaptive fused lasso refines the fused lasso by eliminating small, insignificant mean shifts $\hat{\mu}_t^{\text{FL}} - \hat{\mu}_{t-1}^{\text{FL}}$, 
yielding a more parsimonious segmentation. 
Simulation results in Section~4 will show that models obtained via the adaptive fused lasso achieve BIC minima that are never greater than those from the standard fused lasso, 
supporting the \emph{iterate to isolate} principle.
In summary, the adaptive fused lasso parallels the adaptive lasso in much the same way that the fused lasso parallels the standard lasso. 
The fused lasso solution corresponds to that of the lasso with a lower-triangular design matrix $X$ of ones under the transformation $\mu = X\beta$. Under this transformation, the adaptive fused lasso corresponds to the adaptive lasso, with weights $|\hat{\beta}^{\,\text{init}}_j|^{-1}$ mapping to $|\hat{\mu}^{\text{FL}}_t - \hat{\mu}^{\text{FL}}_{t-1}|^{-1}$. 
This equivalence is not coincidental: all four estimators—the lasso, adaptive lasso, fused lasso, and adaptive fused lasso—arise as special cases of the generalized lasso framework introduced in the next subsection.

\subsection{Generalized Lasso (Genlasso)}

The Generalized Lasso (genlasso) was first discussed in \citet{tibshirani-2011-solutiongenlasso} and solves the following:
\begin{equation}
    \label{genlasso-definition}
    \hat\beta = \underset{\beta}{\text{arg min }}\|y-X\beta\|_2^2 + \lambda||D\beta||_1
\end{equation} where $X$ is a general design matrix governing empirical risk and $D$ is a ``structure matrix" that identifies which linear combinations of the $\beta_j$'s are penalized by the $\ell_1$ norm.

Equation~(\ref{genlasso-definition}) is a general framework that encompasses several well-known estimators as special cases. 
By appropriate choices of the design matrix $X$ and structure matrix $D$, it reduces to the lasso, adaptive lasso, fused lasso, and adaptive fused lasso. Specifically:

\begin{enumerate}[label=(\roman*)]
    \item \textbf{Lasso.}  
    Setting $D = I_{n\times n}$ in (\ref{genlasso-definition}) yields the standard lasso estimator.

    \item \textbf{Adaptive Lasso.}  
    Let $\hat{\beta}^{\,\text{init}}$ be the solution to the lasso problem~(\ref{L1-regularization}).  
    Keeping $X$ fixed and choosing $D$ as a diagonal matrix with entries $D_{jj} = |\hat{\beta}^{\,\text{init}}_j|^{-1}$ produces the adaptive lasso estimator.

    \item \textbf{Fused Lasso.}  
    With $X = I_{n\times n}$, $\mu = \beta$, and $D = D_{1d}$ equal to the first-order difference matrix
    \begin{equation}
        \label{D Matrix}
        D_{1d}=\begin{bmatrix}
            -1 & 1 & 0 & \cdots & 0& 0\\
            0 & -1 & 1 & \cdots & 0&0\\
            \vdots & \vdots & \vdots & \ddots & \vdots&\vdots\\
            0 & 0 & 0 & \cdots & -1 & 1
        \end{bmatrix}_{(n-1)\times n},
    \end{equation}
    (\ref{genlasso-definition}) reduces to the fused lasso.

    \item \textbf{Adaptive Fused Lasso.}  
    If $\hat{\mu}^{\text{FL}}$ is the fused lasso solution~(\ref{fusedlasso}), then setting $X = I_{n\times n}$, $\beta = \mu$, and 
    \[
        D = 
        \begin{bmatrix}
            -\frac{1}{|\hat{\mu}^{\text{FL}}_2 - \hat{\mu}^{\text{FL}}_1|} &  \frac{1}{|\hat{\mu}^{\text{FL}}_2 - \hat{\mu}^{\text{FL}}_1|} & 0 & \ldots & 0\\
            0 & -\frac{1}{|\hat{\mu}^{\text{FL}}_3 - \hat{\mu}^{\text{FL}}_2|} & \frac{1}{|\hat{\mu}^{\text{FL}}_3 - \hat{\mu}^{\text{FL}}_2|} & \ldots & 0\\
            \vdots & \vdots & \vdots & \ddots & \vdots\\
            0 & 0 & 0 & \ldots & \frac{1}{|\hat{\mu}^{\text{FL}}_n - \hat{\mu}^{\text{FL}}_{n-1}|}
        \end{bmatrix}
    \]
    recovers the adaptive fused lasso.
\end{enumerate}
Thus, the flexibility of $(X, D)$ in (\ref{genlasso-definition}) unifies both regression-based and changepoint-based estimators within a single generalized lasso framework.

\subsubsection*{The Genlasso Solution Path}

An important property of the genlasso is that, for the given design matrix $X$ and structure matrix $D$, the estimator has a well-defined \textbf{solution path} as the tuning parameter $\lambda$ decreases from $+\infty$ to $0$. This path is piecewise linear in $\lambda$ \citep{tibshirani-2011-solutiongenlasso}, meaning that the set of nonzero entries in $D\hat\beta$ change only at a finite number of critical values of $\lambda$, called nodes. At each node, new structure is introduced into the model (for example, a new changepoint in the fused lasso), and between nodes the solution evolves linearly. 

This characterization of the solution path has two important consequences for changepoint detection. First, it allows efficient algorithms to trace the entire path of solutions without solving the optimization problem separately for each $\lambda$. Second, it provides a natural collection of candidate models from which a final estimate can be selected using information criteria such as BIC, AIC, mBIC and MDL---every node on the solution path corresponds to a unique model.

\subsection{Motivation for IRFL}
While the generalized lasso solution path provides a principled sequence of candidate changepoint models, selection using BIC often leads to oversegmentation. This occurs because the $\ell_1$ penalty produces gradual transitions rather than sharp, discrete ones, ``smearing" a changepoint across several indices, rather than locating it to a single discrete index. This limitation motivates refinements that better separate genuine from spurious changepoints.

The adaptive fused lasso \eqref{adafused lasso defn} offers one such refinement. By reweighting the penalty by the magnitude of the estimated changes, it penalizes small shifts (gradual transitions) more heavily while preserving large, genuine ones. This extra step prunes away many of the spurious changepoints left behind by the fused lasso. A single reweighting iteration can suppress noise and highlight the true structure more effectively, isolating the true changepoint locations. It is an enhancement of the $\ell_1$ relaxation, but it still does not provide an efficient approximation to the underlying $\ell_0$ changepoint problem.

Even this improvement does not go far enough. The adafused lasso solution often retains a few unnecessary changepoints, suggesting that one round of reweighting is not always sufficient. This observation naturally leads to the idea of iterating the adaptive step: if one reweighting improves the model, multiple reweightings may refine it further. The following examples illustrate this progression, showing how the fused lasso is good but not great, how the adafused lasso improves on it, and how further reweighting motivates the Iteratively Reweighted Fused Lasso (IRFL). 

\subsection{Iterate to Isolate}

To test the effectiveness of the fused lasso and demonstrate the need for an iterative reweighting scheme, we generated data as follows. A vector of length $1000$ was generated with the first 250 datapoints having a mean of 0. From time 251 to 500, the mean shifts up to 2. The mean then shifts back to 0 from index 501 to 750 and then returns to 2 from index 751 to 1000. The fused lasso solution was found using the $\texttt{genlasso}$ package in $\texttt{R}$. The results are found in Figure \ref{fig:fused lasso}. 
\begin{figure}[H]
    \centering
    \includegraphics[width=1\textwidth]{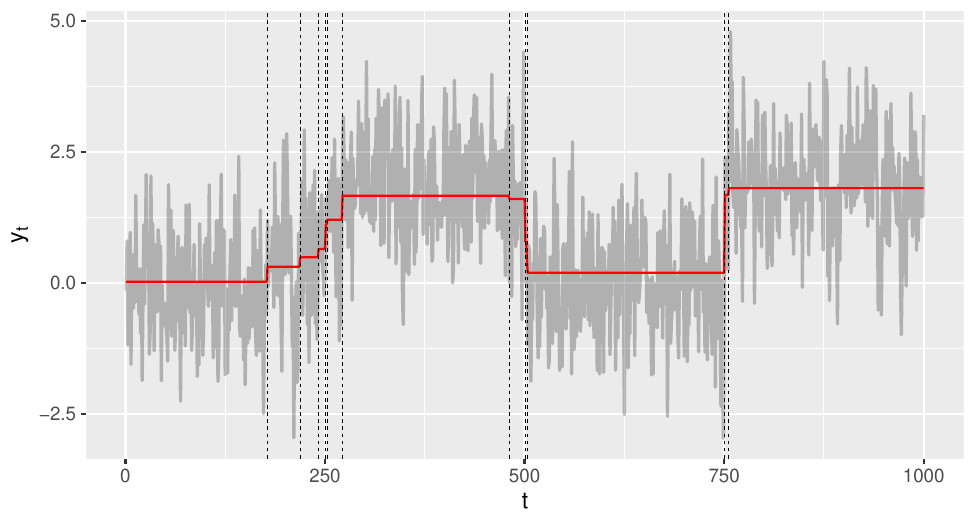}
    \caption[Spurious changepoints detected by fused lasso]{Changepoints detected by fused lasso. While the method identifies the correct changepoints, it also detects several spurious changepoints near the true changepoint locations.}
    \label{fig:fused lasso}
\end{figure}
As is evident from Figure \ref{fig:fused lasso}, the fused lasso solution does actually find the correct changepoints. However, it includes several spurious changepoints as well.

We also found the adafused lasso solution for the same dataset, demonstrating its usefulness in discarding spurious changepoints. The results are in Figure \ref{fig:adafused lasso}:
\begin{figure}[H]
    \centering
    \includegraphics[width=1\textwidth]{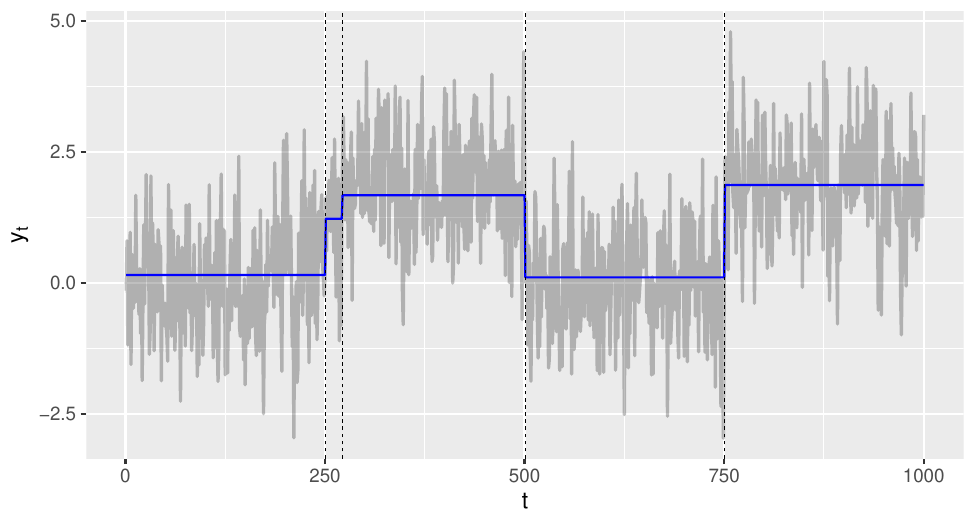}
    \caption[Adafused lasso improves upon fused lasso]{Changepoints detected by adafused lasso. The method correctly discards some of the spurious changepoints, but not all.}
    \label{fig:adafused lasso}
\end{figure}
We see in the example that the adaptive fused lasso significantly improves upon the fused lasso solution. It does so by eliminating most of the spurious changepoints, although one spurious changepoint still remains.

Upon examining the previous two figures, one might hope that another adaptive reweighting step could discard the extra changepoint. That is, if\begin{align*}
    \hat{\mu}^{\hspace{2pt}\text{Ada FL}}&=\underset{\mu\in\mathbbm{R}^n}{\text{arg min }}\left\{\sum_{t=1}^n(y_t-\mu_t)^2+\lambda\sum_{t=2}^n\hat w_t^{(2)}|\mu_t-\mu_{t-1}|\right\}
\end{align*} where $\hat w_t^{(2)}=|\hat{\mu}_t^{\text{FL}}-\hat{\mu}_{t-1}^{\text{FL}}|^{-1}$ and $\hat{\mu}^{\text{FL}}$ is a solution to the fused lasso problem in (\ref{fusedlasso}), then updating the weights $\hat w_t^{(3)}=|\hat{\mu}_t^{\text{Ada FL}}-\hat{\mu}_{t-1}^{\text{Ada FL}}|^{-1}$ and solving \begin{align*}
    \hat{\mu}^{{(3)}}&=\underset{\mu\in\mathbbm{R}^n}{\text{arg min }}\left\{\sum_{t=1}^n(y_t-\mu_t)^2+\lambda\sum_{t=2}^n\hat w_t^{(3)}|\mu_t-\mu_{t-1}|\right\}
\end{align*} 
would reduce or eliminate spurious changepoints. If intuition holds, the adaptive weights using the new estimated values from adafused lasso will improve on the adafused lasso estimate. As Figure \ref{fig:irfl} below demonstrates, this is indeed what happens.

\begin{figure}[H]
    \centering
    \includegraphics[width=1\textwidth]{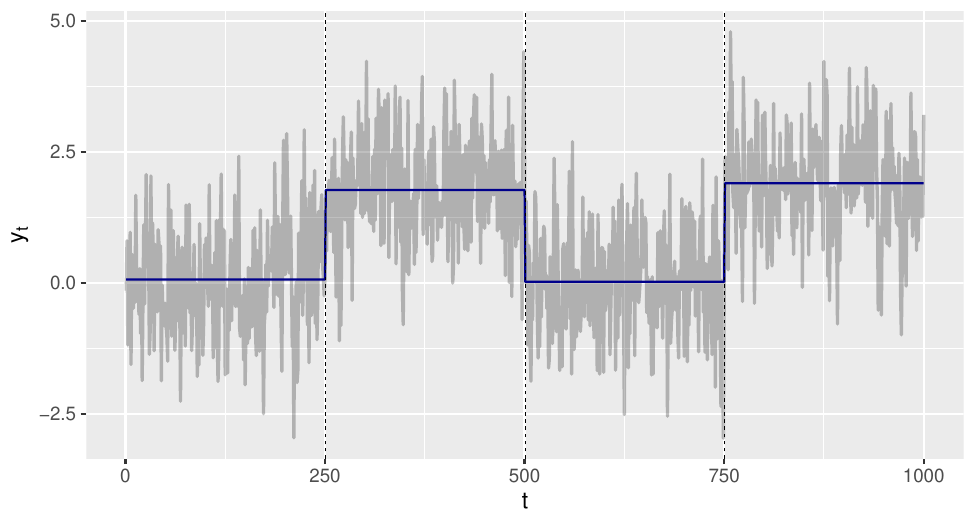}
    \caption[Additionally reweighting after adafused lasso improves results]{Using the adafused lasso solution as weights, a further iteration improves upon the adafused lasso solution by discarding the spurious changepoint.}
    \label{fig:irfl}
\end{figure}
The third regression successfully eliminated the spurious changepoint at ~$t=275$ while preserving the other three correct ones. The resulting model is more parsimonious and reflects the true model.

Motivated by this observation, the next section introduces the Iteratively Reweighted Fused Lasso (IRFL). At each iteration, the candidate changepoint model is re-estimated with updated adaptive weights and evaluated by its BIC score. The process continues only while successive iterations yield a significant improvement in model fit. 
This iterative refinement addresses the limitations of both the fused and adaptive fused lasso methods and establishes a flexible framework applicable to any model expressible within the generalized lasso formulation. 
The following section develops the IRFL algorithm in detail and demonstrates its effectiveness across a range of changepoint problems.

\section{Iteratively Reweighted Fused Lasso}

\subsection{IRFL}

The preceding examples showed that both the fused lasso and its adaptive extension, while able to locate the true changepoints, also introduce spurious ones near the boundaries between consecutive segments. Iterative reweighting offered a natural remedy—each round of reweighting progressively reduced false detections and improved model parsimony. 

Motivated by this observation, we now develop the Iteratively Reweighted Fused Lasso (IRFL) as a unifying approach to $\ell_0$ changepoint detection. The central idea is to place all of the problem types discussed in Section 2—mean shifts, autoregressive dependence, global trends, slope changes, and seasonal effects—under a single generalized lasso framework, with IRFL providing the common estimation scheme through iterative reweighting---the \emph{iterate to isolate} principle.

We will argue that IRFL can also be understood as a ``superior'' surrogate $\ell_0$ method: successive adaptive updates within the generalized lasso framework drive the solution toward the ideal, but computationally intractable, $\ell_0$-penalized model (see Section 3.1). The ideal $\ell_0$ estimator requires combinatorial search over all possible inclusion patterns of the $p$ parameters, corresponding to $2^p$ distinct models; exhaustive minimization over this discrete set is NP-hard and therefore infeasible except for trivially small $p$. This perspective is summarized heuristically in Equation \ref{0-1 convergence}.

We begin with a description of the IRFL method including the algorithm's pseudo code. We then introduce a distance metric used to evaluate changepoint recovery. Finally, we demonstrate IRFL’s performance across thirteen simulated scenarios, showing that it consistently improves upon existing methods while remaining broadly applicable across diverse changepoint settings.

\subsection{IRFL Algorithm Description}

Because the fused lasso is a special case of the generalized lasso, the method is presented in a $D$-matrix representation so that it applies broadly to changepoint problems. In the genlasso formulation, the pair $(X, D)$ together encode the design of the model being estimated. The columns of $X$ determine what structural patterns are included in the fit---for example, constant levels, linear/quadratic trends, or seasonality ---while the rows of $D$ specify which contrasts are penalized and therefore what kinds of changepoints can be detected. In effect, $X$ governs the functional form of the signal within each segment, and $D$ governs the rules by which segments can change. Readers are encouraged to see Section \ref{empirical results} for precise examples. 

This unified representation is a central contribution: it provides a single framework capable of estimating changepoint models of many different kinds under one banner. By varying only the specification of $(X, D)$, the method seamlessly adapts to a broad spectrum of model structures, from simple mean shifts to complex trend and seasonal structures. The algorithm proceeds as follows:

\vspace{.5cm}
\noindent \textbf{Step 0:} Initialize $i \gets 1$ and set $D^{(i)} \gets D_0$, 
where $D_0$ denotes the initial first‐order difference matrix.
\vspace{.5cm}

\noindent \textbf{Step 1:} Generate a full solution path for the generalized lasso problem:
\begin{equation}
    \label{IRFL-1}
    \hat\beta_\lambda 
    = \underset{\beta \in \mathbb{R}^n}{\mathrm{arg\,min}}
    \;\frac{1}{2}\|y - X\beta\|_2^2 
    + \lambda \|D^{(i)}\beta\|_1.
\end{equation}

\noindent \textbf{Step 2:} For each candidate along the solution path, compute the BIC score:
\begin{equation}
    \label{BIC-IRFL}
    \mathrm{BIC}(\hat\beta_\lambda) 
    = n\log\!\left(\frac{\mathrm{SSE}(\hat\beta_\lambda)}{n}\right) 
    + m_{\lambda}\log(n),
\end{equation}
where 
\[
m_\lambda = \big|\{ j : [D_0\hat\beta_\lambda]_j \neq 0 \}\big|
\]
is the number of detected changepoints. 
At iteration $i$, select the model
\begin{equation}
    \label{first beta}
    \hat{\beta}^{(i)} = \hat{\beta}_{\lambda^{(i)}}, 
    \qquad 
    \lambda^{(i)} = \arg\min_{\lambda} \mathrm{BIC}(\hat{\beta}_\lambda).
\end{equation}
Note: The arg min is unique except on a set of measure zero.

\noindent \textbf{Step 3:} Construct a diagonal weight matrix
\[
\hat W^{(i)} = \mathrm{diag}(\hat w_1^{(i)}, \ldots, \hat w_p^{(i)}),
\qquad
\hat w_j^{(i)} = \frac{1}{|(D_0\hat\beta^{(i)})_j| + \varepsilon},
\]
where $p = \mathrm{rank}(D_0)$ and $\varepsilon > 0$ is a small constant.

\noindent \textbf{Step 4:} Update the penalty matrix via
\[
D^{(i+1)} = \hat W^{(i)} D_0,
\]
and repeat Steps~1–4 until the BIC score no longer improves.

At the $i^\text{th}$ iteration for $i>1$, IRFL solves the weighted problem
\begin{equation}
    \label{IRFL-2}
    \hat\beta_\lambda^{(i+1)} 
    = \underset{\beta \in \mathbb{R}^n}{\text{arg min }} 
    \frac{1}{2}\|y - X\beta\|_2^2 
    + \lambda \sum_{j=1}^p 
    \frac{|(D_0\beta)_j|}{|(D_0\hat\beta^{(i)})_j| + \varepsilon}\,.
\end{equation}
Each iteration of IRFL can be viewed as a refinement of the previous one, driven by the adaptive weighting scheme. 

Small values of $D_0\beta$ are assigned large weights, which increases their penalty in the next iteration. This mechanism effectively “punishes” spurious or inconsequential changepoints, encouraging the corresponding coefficients to remain fused where they might otherwise separate.  

A concise formulation of the algorithm is included below.

\begin{algorithm}[H]
\caption{Iteratively Reweighted Fused Lasso (IRFL)}
\begin{algorithmic}[1]
\State \textbf{Input:} Data $y \in \mathbb{R}^{n}$; design matrix $X \in \mathbb{R}^{n\times n}$; initial difference matrix $D_0 \in \mathbb{R}^{p\times n}$; offset $\varepsilon>0$; convergence tolerance \textbf{tol}, maximum number of iterations $K$.
\vspace{4pt}

\State \textbf{Initialize:} 
$i \gets 1$; \; $D^{(1)} \gets D_0$; \; $\mathrm{BIC}_{\mathrm{prev}} \gets +\infty$
\vspace{4pt}

\Repeat
    \State Generate the full generalized lasso path:
    \begin{equation*}
        \hat{\beta}_\lambda^{(i)} 
        = \arg\min_{\beta \in \mathbb{R}^n}
        \frac{1}{2}\|y - X\beta\|_2^2 
        + \lambda\,\|D^{(i)}\beta\|_1,
        \qquad \lambda>0.
    \end{equation*}
    \For{each candidate $\beta$ on the path}
        \State Compute the BIC:
        \begin{equation*}
            \mathrm{BIC}(\beta) 
            = n\log\!\Big(\tfrac{\mathrm{SSE}(\beta)}{n}\Big)
            + \|D_0\beta\|_0 \log n,
        \end{equation*}
        where $\|D_0\beta\|_0 = |\{ j : (D_0\beta)_j \neq 0 \}|$.
    \EndFor
    \State \textbf{Select:} 
    \[
        \hat{\beta}^{(i)} = \arg\min_\beta \mathrm{BIC}(\beta),
        \qquad \mathrm{BIC}_{\mathrm{curr}} = \mathrm{BIC}(\hat{\beta}^{(i)}).
    \]
    \State \textbf{Reweight:} For $j=1,\ldots,p$ set
    \[
        \hat w^{(i+1)}_j = \frac{1}{|(D_0\hat{\beta}^{(i)})_j| + \varepsilon},
        \qquad 
        \hat W^{(i+1)} = \mathrm{diag}(\hat w^{(i+1)}_1,\ldots,\hat w^{(i+1)}_p).
    \]
    \State \textbf{Update:} 
    $D^{(i+1)} \gets \hat W^{(i+1)} D_0.$
    \State \textbf{Check convergence:} 
    stop if $|\mathrm{BIC}_{\mathrm{curr}} - \mathrm{BIC}_{\mathrm{prev}}| \le \textbf{tol}$.
    \State Update $\mathrm{BIC}_{\mathrm{prev}} \gets \mathrm{BIC}_{\mathrm{curr}}$, \; $i \gets i+1$.
    \State stop if $i>K$
\Until{converged}
\vspace{4pt}

\State \textbf{Output:} Final estimate $\hat{\beta} = \hat{\beta}^{(i)}$; changepoint set $\{ j : (D_0\hat{\beta})_j \neq 0 \}$.
\end{algorithmic}
\end{algorithm}

\subsection[A Surrogate $\ell_0$ Solution]{A Surrogate $\ell_0$ Solution}

The mechanism underlying IRFL can be described as the \textit{iterate to isolate} principle. With each reweighting, smaller changes are increasingly penalized, while genuine ones persist. This iterative reweighting has a tendency to isolate true changepoints, revealing a theoretical connection: repeated reweighting approximates an $\ell_0$-regularized solution.

Intuitively, as the iterations proceed, the adaptive penalty in \eqref{IRFL-2} begins to behave like an indicator, distinguishing whether a jump is truly present.  
For each fixed $\varepsilon>0$, taking the limit in iterations $i\to\infty$ gives
\[
\lim_{i\to\infty}
\frac{|(D\hat\beta^{(i+1)})_j|}{|(D\hat\beta^{(i)})_j|+\varepsilon}
=
\begin{cases}
0, & (D\hat\beta^{(i)})_j=0,\\[6pt]
1 - \dfrac{\varepsilon}{|(D\hat\beta^{(i)})_j|+\varepsilon}, & (D\hat\beta^{(i)})_j\neq0.
\end{cases}
\]
If $\varepsilon=\varepsilon_n$ is allowed to decrease with the sample size $n$, then as $n\to\infty$, $\varepsilon_n\downarrow0$ and asymptotically
\[
\lim_{n\to\infty}
\left(
1 - \frac{\varepsilon_n}{|(D\hat\beta^{(i)})_j|+\varepsilon_n}
\right)
= 1.
\]
Putting these pieces together, we obtain
\[
\lim_{n\to\infty}\left\{\lim_{i\to\infty}\sum_{j} \frac{|(D\hat\beta^{(i+1)})_j|}{|(D\hat\beta^{(i)})_j| + \varepsilon_n}\right\}
=
\lim_{n\to\infty}\sum_{j:(D\beta)_j \neq 0}
\left( 1 - \frac{\varepsilon_n}{|(D\hat\beta^{(i)}_j| + \varepsilon_n} \right)
=
\|D\beta\|_0
\] because at stationarity $\hat\beta^{(i+1)}=\hat\beta^{(i)}$.
Thus, as $n$ increases and $\varepsilon_n$ shrinks, the adaptive penalty becomes effectively binary, and the IRFL penalty behaves like a binary (0/1) on inactive / active differences, respectively, and thus approximates an $\ell_0$-type penalty on $D\beta$. In particular, letting
\[
\hat\beta_{\lambda}^{(\infty)}(\varepsilon_n)
:= \lim_{i\to\infty} \hat\beta_{\lambda}^{(i)}(\varepsilon_n),
\]
we have the approximation
\begin{equation}
    \label{0-1 convergence}
    \lim_{n\to\infty}\hat\beta_{\lambda}^{(\infty)}(\varepsilon_n)
\;\approx\;
\underset{\beta \in \mathbb{R}^n}{\arg\min}
\left\{
\frac{1}{2}\|y - X\beta\|_2^2
+ \lambda \|D\beta\|_0
\right\}.
\end{equation}

\noindent Thus, IRFL can be interpreted not merely as a reweighted lasso, but as an iterative surrogate for $\ell_0$-regularization in changepoint selection, achieving discrete model selection behavior through continuous reweighting.

\begin{remark}\label{rem:IRFL_reweighted_l1}
Iterative reweighting procedures have appeared previously in the variable–selection
literature, most notably in the reweighted $\ell_1$ scheme of \citet{candes-2008-reweighted}. This method updates
weights of the form $w_j = (|\beta_j| + \varepsilon)^{-1}$, where the parameter
$\varepsilon$ is chosen intentionally not to be too small, thereby preventing
coefficients shrunk to zero from becoming permanently inactive in subsequent iterations (as their corresponding weights do not ``blow up" to infinity). In that context,
$\varepsilon$ plays a conceptual role in ensuring that small coefficients which were killed off in a given iteration $i$ may be
``resurrected'' on iteration $i+1$, for example, and the resulting estimator behaves as a smooth approximation to an $\ell_0$ penalty.

The present approach differs in two key respects. First, IRFL applies the
reweighting principle not to regression coefficients, but to the structured differences $(D\beta)_j$, so that sparsity is imposed directly on
level shifts (or higher order differences) and therefore on the changepoint structure of the signal.
Second, the parameter $\varepsilon$ used in IRFL serves purely as a numerical
stabilizer to ensure a well–defined reweighting path and to avoid division by
zero; it does \emph{not} intentionally fulfill the resurrection role it occupies in \citet{candes-2008-reweighted}. In this sense, IRFL is
philosophically related to earlier iterative reweighting methods, but represents a
distinct extension of this idea to structured-difference penalties and the changepoint
setting, with a different interpretation of both the weights and the stabilizing
parameter~$\varepsilon$.
\end{remark}

\subsection{A Distance Metric}

Shortly, a series of simulation results comparing IRFL to its competitors will be given. To evaluate how closely the estimated changepoints of IRFL recover the true configuration, we employ a distance metric based on linear assignment \citep{kuhn-1955-hungarian}.  
This metric penalizes differences in both the number and the locations of detected changepoints, assigning small penalties for positional mismatches and unit penalties for missing or extra changepoints.

\subsubsection*{Definition}

Let $\bm{\tau} = \{\tau_1, \ldots, \tau_m\}$ denote the true changepoints and 
$\hat{\bm{\tau}} = \{\hat{\tau}_1, \ldots, \hat{\tau}_{\hat{m}}\}$ the estimated changepoints, both subsets of $\{1,\ldots,n\}$.  
A \emph{mismatch} occurs when an estimated changepoint $\hat{\tau}_i$ does not exactly coincide with any element of $\bm{\tau}$.  
A \emph{missing} changepoint is an element of $\bm{\tau}$ with no corresponding estimate,  
and an \emph{extra} changepoint is one that appears in $\hat{\bm{\tau}}$ but not in $\bm{\tau}$.

Formally, the total distance between $\hat{\bm{\tau}}$ and $\bm{\tau}$ is defined as
\begin{equation}
\label{eq:distance}
d(\hat{\bm{\tau}}, \bm{\tau})
=
\min_{\pi \in \Pi}
\left\{
   \sum_{i=1}^{k} \frac{1}{n}\, \big| \hat{\tau}_i - \tau_{\pi(i)} \big|
   + (m - k) + (\hat{m} - k)
\right\},
\end{equation}
where $\Pi$ represents all one-to-one matchings between subsets of $\hat{\bm{\tau}}$ and $\bm{\tau}$.  
Intuitively, $\Pi$ enumerates every possible way to pair up estimated and true changepoints,  
and the distance $d(\hat{\bm{\tau}}, \bm{\tau})$ corresponds to the \emph{best possible fit}—that is, the minimal total penalty over all such pairings.  
The first term measures how far matched changepoints are apart, while the second and third terms penalize missing and extra changepoints, respectively.  
This minimal cost is computed efficiently using the Hungarian algorithm \citep{kuhn-1955-hungarian}.

To make this definition concrete, the following examples illustrate how penalties accrue in typical cases.

\subsubsection*{Examples}

\paragraph{Example 1 (Undercount).}
Suppose the true changepoints are $\bm{\tau} = \{251, 501, 751\}$ and the estimated changepoints are $\hat{\bm{\tau}} = \{248, 745\}$ with $n=1000$.  
The best matching pairs $(248,251)$ and $(745,751)$ yield
\[
 \frac{1}{1000}\big(|248-251| + |745-751|\big) = 0.009.
\]
Because one true changepoint ($501$) is missing, a penalty of $1$ is added:
\[
d(\hat{\bm{\tau}}, \bm{\tau}) = 1.009.
\]

\paragraph{Example 2 (Overcount).}
Suppose the true changepoints are $\bm{\tau} = \{251, 501, 751\}$ and the estimated changepoints are $\hat{\bm{\tau}} = \{249, 498, 760, 905\}$ with $n=1000$.  
The best matching pairs $(249,251)$, $(498,501)$, and $(760,751)$ yield
\[
\frac{1}{1000}\big(|249-251| + |498-501| + |760-751|\big) = 0.013.
\]
Since one estimated changepoint ($905$) is extra, an additional penalty of $1$ is added:
\[
d(\hat{\bm{\tau}}, \bm{\tau}) = 1.013.
\]

\subsubsection*{Interpretation}

This distance combines location accuracy and changepoint count into a single measure.  
A perfect match gives $d(\hat{\bm{\tau}}, \bm{\tau}) = 0$, while small positional errors yield small fractional penalties, and missing or extra changepoints incur full unit penalties.  
In the simulation results that follow, this metric serves as a key measure of recovery accuracy.

\subsection{Empirical Results}  \label{empirical results}
To analyze the performance of the changepoint detection techniques in Section 2 compared with IRFL under various signal structures, we simulated data under fourteen scenarios. In the first ten cases, 1000 datasets were generated with either $n=1000$ or $n=1200$ observations (depending on the model), incorporating different changepoint configurations and global parameters. In all cases, the design and structured difference matrices used for estimation of the fused lasso and subsequent reweighted models required for the \texttt{genlasso} package have been included for reference.  Competing methods (FL, Ada FL, IRFL, BS, WBS, WCM, NOT, CPOP, and AR1Seg) are evaluated on their ability to avoid false detection under global dependence and to recover and localize true changepoints accurately, using both changepoint counts and distance to the true changepoint vector as evaluative metrics.

The results are divided into four groupings. Group 1 (scenarios 1-3) contains scenarios in which the goal is to detect mean shifts when there are no global (possibly confounding) variables present. Group 2 (scenarios 4-7) contain scenarios in which the goal is to detect mean shifts when there are global parameters present (trend and autoregressive structure). Group 3 (scenarios 8-10) contain scenarios in which the goal is to detect changes in slope, rather than in mean, and Group 4 (scenarios 11-14) contain scenarios in which the goal is to estimate mean shifts in the presence of seasonality and a consistent trend.

In each scenario, results are presented through a common set of plots. First, a representative simulation of the underlying data-generating process is displayed to illustrate the structure of the signal. In all cases, the subsequent boxplots show the sampling distribution of the number of changepoints estimated by each method relative to the true value (denoted $\hat m$ and $m$, respectively). For models containing one or more true changepoints, an additional boxplot of the distance metric (defined above in Section 4.4) is reported, which quantifies the discrepancy between the estimated changepoint vector $\hat{\bm\tau}$ and the true changepoint vector $\bm\tau$. Finally, boxplots are included displaying the sampling distributions of any global parameters or other quantities of interest by IRFL and any other methods appropriate for estimating the model in question (e.g., BIC, slope or AR(1) parameters, seasonality coefficients, etc).  

When interpreting the boxplots, readers should focus on both the spread and the central tendency of each distribution. Narrower boxes and medians close to the true parameter values indicate methods that are both stable and accurate. Across nearly all scenarios, IRFL exhibits these qualities to an exceptional degree, showing markedly tighter distributions and substantially lower error relative to competing approaches. The plots therefore demonstrate that IRFL not only avoids systematic bias but also achieves consistently stronger recovery of changepoint locations and parameters.

To clarify the range of simulation settings under which these results were obtained, Table~\ref{tab:scenario-summary} summarizes all fourteen experimental scenarios, organized by structural features and changepoint type. These results together highlight IRFL’s superiority as a practical and robust method.

\begin{table}[H]
\centering
\footnotesize
\setlength{\tabcolsep}{4pt}
\renewcommand{\arraystretch}{1.15}

\arrayrulecolor{gray!35}
\begin{tabularx}{\textwidth}{@{}c X c c c c@{}}
\toprule
\textbf{Scenario} & \textbf{Description} & \textbf{Seas.} & \textbf{Trend} & \textbf{AR(1)} & \textbf{CP Type} \\
\midrule[1.1pt]

\rowcolor{gray!10}
\multicolumn{6}{@{}l}{\textbf{Group 1: Mean shifts, no global params}}\\
1 & Mean-shift; no changepoints & -- & -- & -- & None \\
\cmidrule(lr){1-6}
2 & Three Mean Shifts  & -- & -- & -- & Mean \\
\cmidrule(lr){1-6}
3 & Mean shifts with uneven segment lengths & -- & -- & -- & Mean \\
\midrule[1.1pt]

\rowcolor{gray!10}
\multicolumn{6}{@{}l}{\textbf{Group 2: Mean shifts with global params}}\\
4 & AR(1); no changepoints & -- & -- & \y & None \\
\cmidrule(lr){1-6}
5 & AR(1) + three mean shifts  & -- & -- & \y & Mean \\
\cmidrule(lr){1-6}
6 & Constant linear trend; no changepoints & -- & \y & -- & None \\
\cmidrule(lr){1-6}
7 & Constant linear trend + three mean shifts  & -- & \y & -- & Mean \\
\midrule[1.1pt]

\rowcolor{gray!10}
\multicolumn{6}{@{}l}{\textbf{Group 3: Trend shifts}}\\
8 & Continuous trend model; no slope changes & -- & -- & -- & Trend \\
\cmidrule(lr){1-6}
9 & Three trend shifts  & -- & -- & -- & Trend \\
\cmidrule(lr){1-6}
10 & Trend Shifts with random segment lengths & -- & -- & -- & Trend \\
\cmidrule(lr){1-6}

\rowcolor{gray!10}
\multicolumn{6}{@{}l}{\textbf{Group 4: Mean shifts with seasonality and trend}}\\
11 & Seasonal pattern only; no changepoints & \y & -- & -- & None \\
\cmidrule(lr){1-6}
12 & Seasonality + three mean shifts & \y & -- & -- & Mean \\
\cmidrule(lr){1-6}
13 & Seasonality + constant trend; no changepoints & \y & \y & -- & None \\
\cmidrule(lr){1-6}
14 & Seasonality + trend + three mean shifts & \y & \y & -- & Mean \\
\bottomrule
\end{tabularx}

\arrayrulecolor{black}
\caption{Table of Simulation results by structural features and changepoint type.}
\label{tab:scenario-summary}
\end{table}
\begin{remark}\label{rem:randomization-scope}
In principle, the most comprehensive evaluation would randomize the number, locations, and magnitudes of changepoints in every scenario. In this section, full randomization was implemented only in the \emph{mean–shift + constant–trend} settings (with the randomized extension following Scenario~7) and in the \emph{trend–shift} settings (Scenario 10). This choice reflects the judgment that these cases adequately stress variability in segment length, jump size, and changepoint type. The fixed designs used elsewhere are not expected to bias the comparisons when reasonable assumptions regarding signal-to-noise ratio are met.
\end{remark}

\subsection{Group 1: Mean Shifts without Global Parameters}
\noindent This group provides a baseline for performance under the simplest conditions: pure mean-shift models with independent, identically distributed (IID) noise and no global structure such as trend or seasonality. Each model is formulated within the generalized lasso framework, using a first-difference penalty to enforce sparsity in level changes. The experiments examine two key behaviors: whether methods detect changepoints when none are present, and how accurately they recover changepoints when true shifts exist, including cases with uneven segment lengths. The methods compared are the fused lasso, adaptive fused lasso (adafused lasso), IRFL, binary segmentation (BS), wild binary segmentation (WBS), and PELT. Performance is summarized by the number of detected changepoints and by a distance metric quantifying the discrepancy between estimated and true changepoint locations.
\subsubsection*{Scenario 1: Mean Shift with No Changepoints}In scenario 1, we have generated data from a null model with no underlying changepoints and no trend or other global parameter. Here, the true mean remains constant across all time points. Any detected changepoints are false.

The model is
\begin{equation}\label{scenario1 equation}
    y_t = \mu_t + \epsilon_t,\qquad \epsilon_t \overset{IID}{\sim} WN(0,\sigma^2).
\end{equation} The design and structured difference matrices and parameter vector are

\begin{equation*}
\resizebox{\textwidth}{!}{$
X = 
\begin{bmatrix}
1 & 0 & 0 & \dots & 0 \\
0 & 1 & 0 & \dots & 0 \\
0 & 0 & 1 & \dots & 0 \\
\vdots & \vdots & \vdots & \ddots & \vdots \\
0 & 0 & 0 & \dots & 1
\end{bmatrix}_{n \times n}, \quad 
{\beta} = \begin{bmatrix}
    \mu_1\\
    \mu_2\\
    \vdots\\
    \mu_n
\end{bmatrix}_{n \times 1},\quad
D = 
\begin{bmatrix}
-1 & 1  & 0  & 0  & \dots & 0 & 0 \\
0  & -1 & 1  & 0  & \dots & 0 & 0 \\
0  & 0  & -1 & 1  & \dots & 0 & 0 \\
\vdots & \vdots & \vdots & \vdots & \ddots & \vdots & \vdots \\
0  & 0  & 0  & 0  & \dots & -1 & 1
\end{bmatrix}_{(n-1) \times n}.
$}
\end{equation*}

\bigskip
\noindent The results of the simulations in the bottom box of Figure \ref{fig:MSFL-Null-Group} reveal that IRFL consistently finds no changepoints, and  has superior false detection rate compared with WBS and PELT.
\begin{figure}[H]
    \centering
    \textbf{White Noise Simulation with No Changepoints}\\ 
    \vspace{0.5em}
    \includegraphics[width=1\textwidth]{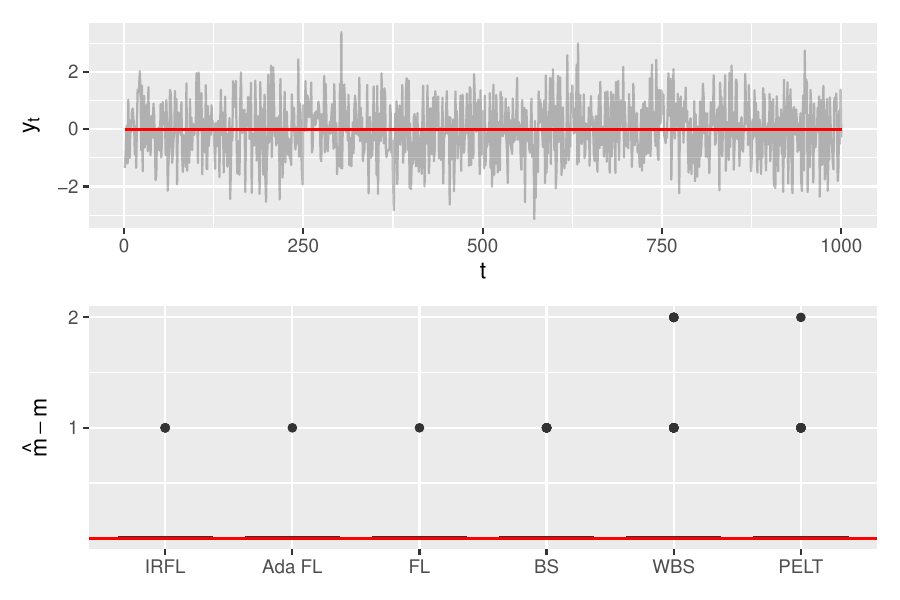}
    \caption[Comparison of mean shift detection methods with no changepoints]{Top: Representative white noise model. Bottom: Estimated number of changepoints detected by various methods for a null model with no changepoints and IID $ WN(0, \sigma^2) $ errors.}

    \label{fig:MSFL-Null-Group}
\end{figure}

The bottom plot is actually a boxplot; the single dots represent outliers. IRFL, Ada FL, and FL all have superior performance to the other methods, as they never falsely detect more than one changepoint---and even this is only done very infrequently.

\subsubsection*{Scenario 2: Mean Shift with Three Changepoints}
The second scenario has three changepoints at 
$t=251$, $t=501$, and $t=751$, but with no overall trend or other global parameter. The true mean alternates between 0 and 2 at the changepoints, creating distinct segments within the data. The model is in \eqref{scenario1 equation} given for Scenario 1, and the design, structured difference matrix, and parameter vector are also the same as in Scenario 1. 

Figure \ref{fig:MSFL-Group} shows that IRFL compares favorably to WBS and PELT both in the number of detected mean shifts as well as the distance between the estimated changepoint vector $\hat\tau$ and the true changepoint vector $\tau=[251,501,751]^\top $.

\begin{figure}[H]
    \centering
    \textbf{White Noise with Three Changepoints}\\ 
    \vspace{0.5em}
    \includegraphics[width=1\textwidth]{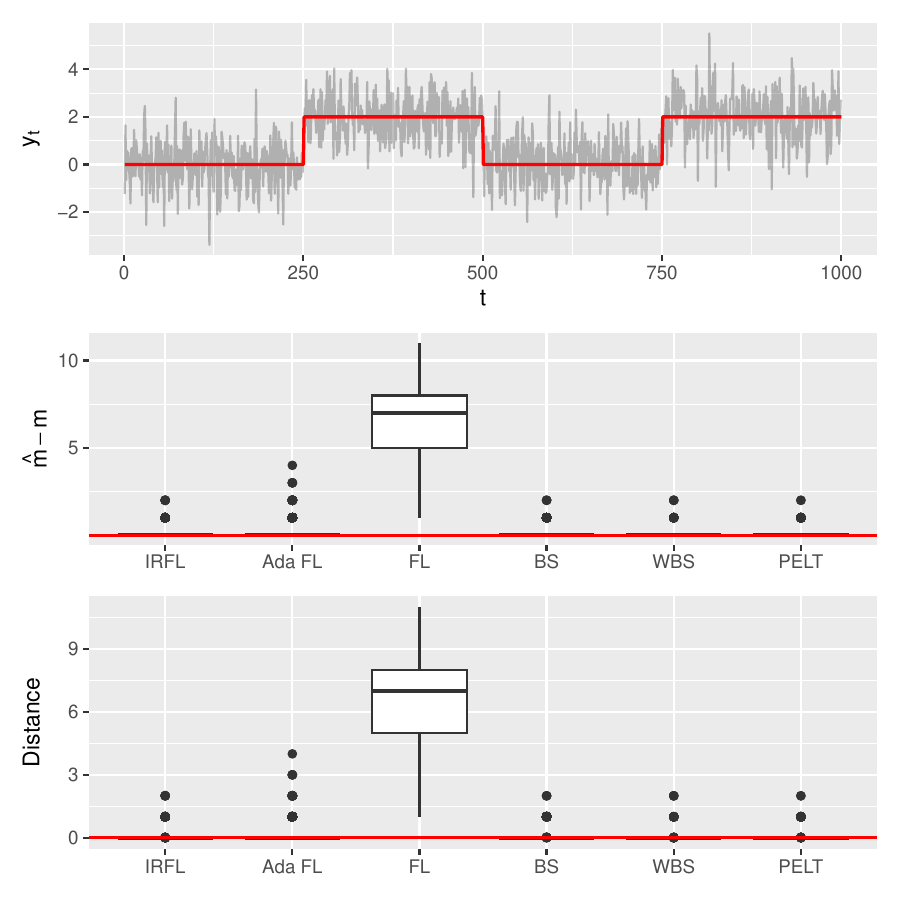}
    \caption[Comparison of mean shift detection methods with three changepoints]{Top: Representative mean shift model with three changepoints at $t = 251, 501, $ and $ 751 $.  Middle: Boxplot of sampling distribution of number of changepoints found using the competing methods. Bottom: Boxplot of sampling distribution of the distance between the estimated changepoint vector $\bm{\hat\tau}$ and the true vector $\bm\tau$ by method.}
    \label{fig:MSFL-Group}
\end{figure}

\subsubsection*{Scenario 3: Mean Shift with Uneven Segment Lengths}
In the third scenario, we place five changepoints so that the resulting segment lengths are unequal. We designed the setup deliberately to demonstrate that the strong performance of IRFL does not depend on segments being evenly spaced. The underlying model is the same as in Scenarios 1 and 2 given in \eqref{scenario1 equation}, and the corresponding design and structured difference matrices, as well as the parameter vector, remain unchanged.  

\noindent
The true signal consists of six contiguous segments with mean levels alternating between zero, positive, and negative shifts. Specifically, the mean is
\[
\mu =
\begin{cases}
0, & 1 \le i \le 150, \\
2, & 151 \le i \le 200, \\
0, & 201 \le i \le 500, \\
-2, & 501 \le i \le 750, \\
0, & 751 \le i \le 800, \\
2, & 801 \le i \le 1000
\end{cases}
\]
corresponding to six segments of lengths $(150, 50, 300, 250, 50, 200)$ with mean levels $(0, 2, 0, -2, 0, 2)$. This configuration more accurately represents many real-life applications of mean shift detection. By recovering the changepoints corresponding to these segments accurately, we show that IRFL adapts effectively to irregular segmentation structures while maintaining stability in estimation. 

The results in Figure \ref{fig:MS Uneven Lengths} reveal that IRFL remains competitive with BS, WBS, and PELT, exhibiting both accurate detection of the true number of changepoints and lower distance to the true changepoint vector $\bm\tau$, even when segment lengths vary considerably. 

\begin{figure}[H]
    \centering
    \textbf{White Noise with Five Uneven Changepoints}\\ 
    \vspace{0.5em}
    \includegraphics[width=1\textwidth]{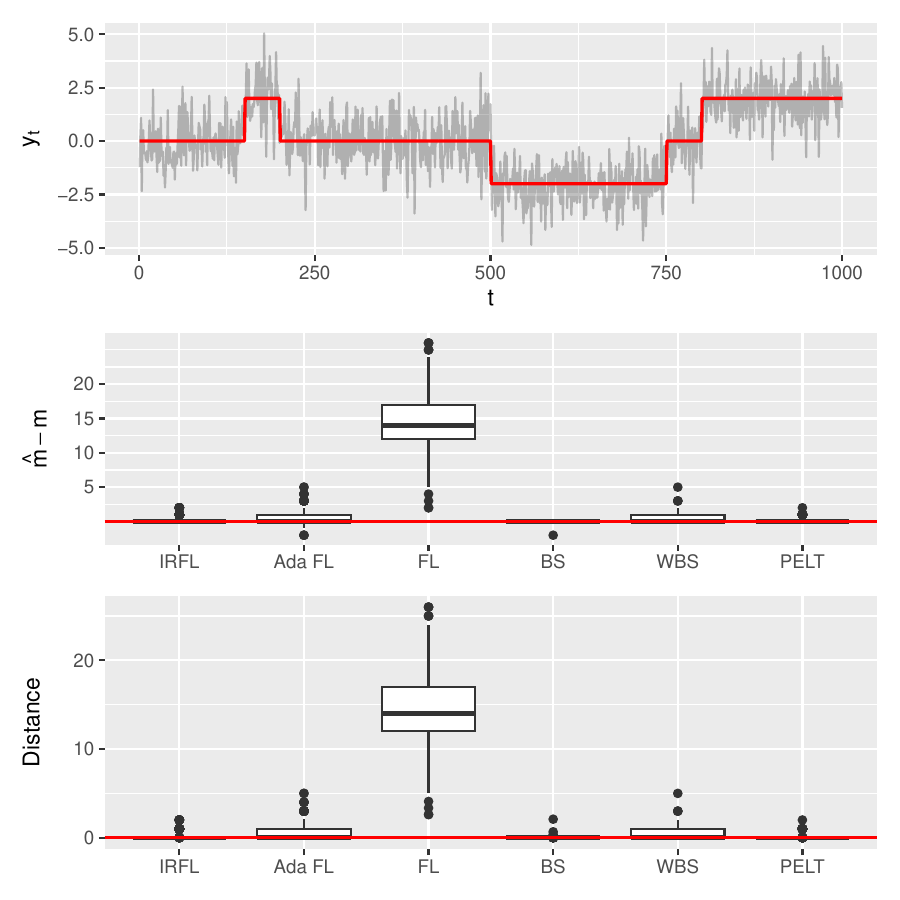}
    \caption[Comparison of mean shift detection methods with five changepoints at uneven intervals]{Top: Representative data set with five uneven changepoints with added white noise. Middle: Boxplot of sampling distribution of number of changepoints found using the competing methods. Bottom: Boxplot of sampling distribution of the distance between the estimated changepoint vector $\bm{\hat\tau}$ and the true vector $\bm\tau$ by method.}
    \label{fig:MS Uneven Lengths}
\end{figure}

\subsubsection*{Summary}
At the conclusion of Group 1, the results demonstrate that IRFL exhibits consistently strong performance in mean-shift settings without global parameters. In the null model with no changepoints, IRFL avoided detecting changepoints where none were present, outperforming both WBS and PELT in this regard. When true changepoints were present, IRFL not only detected the correct number of shifts with high accuracy but also achieved lower distance to the true changepoint vector than competing methods, reflecting greater precision in localization. Importantly, this advantage persisted even under unevenly spaced changepoints, indicating that the method’s stability is not dependent on regular segment lengths. Taken together, these findings establish IRFL as both parsimonious and reliable in detecting mean shifts, providing a robust foundation before extending analysis to more complex models with global parameters.

\subsection{Group 2: Mean Shifts with Global Parameters}
\noindent This group investigates the impact of global dependence structures—specifically, autoregressive correlation and linear trend—on changepoint detection. Such features commonly confound mean-shift procedures by mimicking local level changes. Autoregressive correlation introduces runs of similar residuals that can resemble stepwise jumps in the mean, while linear trend produces gradual level changes that can be falsely attributed to mean shifts.

Within the generalized lasso design, we extend the mean-shift framework by including a global trend regressor and modeling AR(1) dependence (that is, a model in which $y_t$ is dependent on $y_{t-1}$). Scenarios include (i) trend-only and AR(1) null models with no changepoints, and (ii) those same models with genuine mean shifts superimposed on these global structures. 

Throughout this work, the dependence structure of the noise is modeled using an AR(1) process rather than a more general ARMA$(p,q)$ or more complicated ARIMA, SARIMA, or even state-space specification. The AR(1) model serves as a parsimonious catch-all representation of short-range serial correlation while remaining tractable within changepoint frameworks, where higher-order dependence models are difficult to estimate reliably. Indeed, there are few published methods for estimating models more complicated than AR$(p)$.

\subsubsection*{Scenario 4: Mean Shift with AR(1) and No Changepoints}
In the fourth scenario, an AR(1) dependence structure is present across the entire series, with no changepoints. The model for such a scenario is 
\begin{equation}\label{ar1 model}
    y_t = \begin{cases}
        \mu_1 + \epsilon_t & \qquad t=1\\
        \mu_{t-1} + \phi y_{t-1}  + \epsilon_t & \qquad 2\leq t\leq n\\
    \end{cases}
    \end{equation}
where $\epsilon_t$ is a white-noise process with mean 0 and variance $\sigma^2$, and the accompanying design, parameter, and difference matrices are:\begin{equation*}
X  = 
\begin{bmatrix}
0   & 1 & 0 & 0 & \dots & 0 \\
y_1 & 1 & 0 & 0 & \dots & 0 \\
y_2 & 0 & 1 & 0 & \dots & 0 \\
y_3 & 0 & 0 & 1 & \ldots & 0\\
\vdots & \vdots & \vdots & \vdots & \ddots & \vdots \\
y_{n-1}& 0 & 0 & 0 & \dots & 1
\end{bmatrix}_{n \times n},\quad
{\beta}  = \begin{bmatrix}
    \phi\\
    \mu_1\\
    \mu_2\\
    \vdots\\
    \mu_{n-1}
\end{bmatrix}_{n \times 1}\end{equation*}
\begin{equation*}
D = 
\begin{bmatrix}
0 & -1 & 1  & 0  & 0  & \dots & 0 & 0 \\
0 & 0  & -1 & 1  & 0  & \dots & 0 & 0 \\
0 & 0  & 0  & -1 & 1  & \dots & 0 & 0 \\
\vdots & \vdots & \vdots & \vdots & \vdots & \ddots & \vdots & \vdots \\
0 & 0  & 0  & 0  & 0  & \dots & -1 & 1
\end{bmatrix}_{(n-1) \times n}.
\end{equation*}
To test the effectiveness of the IRFL procedure for this scenario, 1000 datasets were generated for each of $\phi=-0.7$, $-0.3$, $0.0$, $0.3$, and $0.7$ for a null model with zero changepoints. The effectiveness of the procedure was measured by BIC, number of (falsely) detected changepoints, and estimate of $\phi$, and comparisons are made to the other methods for estimating changepoints in this scenario (WCM and AR1Seg, see their respective subsections in Section 2). In Figure~\ref{fig:AR1 Group Null BIC}, boxplots of the sampling distributions of the BIC for the methods capable of estimating \eqref{ar1 model} demonstrate that IRFL targets the true model effectively, beating both WCM and AR1Seg, especially at the higher values of $\phi$. The vertical axis shows $\mathrm{BIC}(\widehat{\text{Model}}) - \mathrm{BIC}(\text{True Model})$, so values below zero indicate that the estimated model achieves a lower (and therefore better) BIC score than the true model, while positive values indicate worse fit relative to the truth.

\begin{figure}[H]
    \centering
        \textbf{BIC of AR(1) Model with No Changepoints}\\ 
    \vspace{0.5em}
    \includegraphics[width=.7\textwidth]{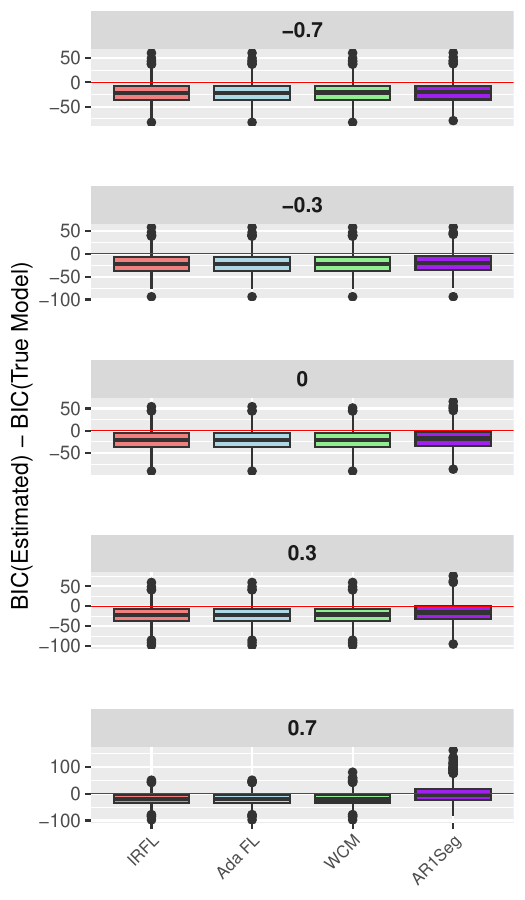}
        \caption[Comparison of BIC score for mean shift detection methods with no changepoints and various AR(1) parameter values]{Comparison of BIC score for mean shift detection methods with no changepoints and AR(1) parameter $\phi=-0.7$, $-0.3$, $0.0$, $0.3$, and $0.7$}
    \label{fig:AR1 Group Null BIC}
\end{figure}

In Figure \ref{fig:AR1 Group Null nknot}, the boxplots of the sampling distribution of the detected number of changepoints $\hat m$ demonstrate that IRFL has stronger false-detection properties than both WCM and AR1Seg, irrespective of the value of $\phi$. Note that while truly boxplots, because the overwhelming bulk of results are ``0'', all outliers appear as single dots.

\begin{figure}[H]
    \centering
    \textbf{Number of Changepoints Detected in AR(1) Model with No Changepoints}\\
    \vspace{0.5em}
    \includegraphics[width=.7\textwidth]{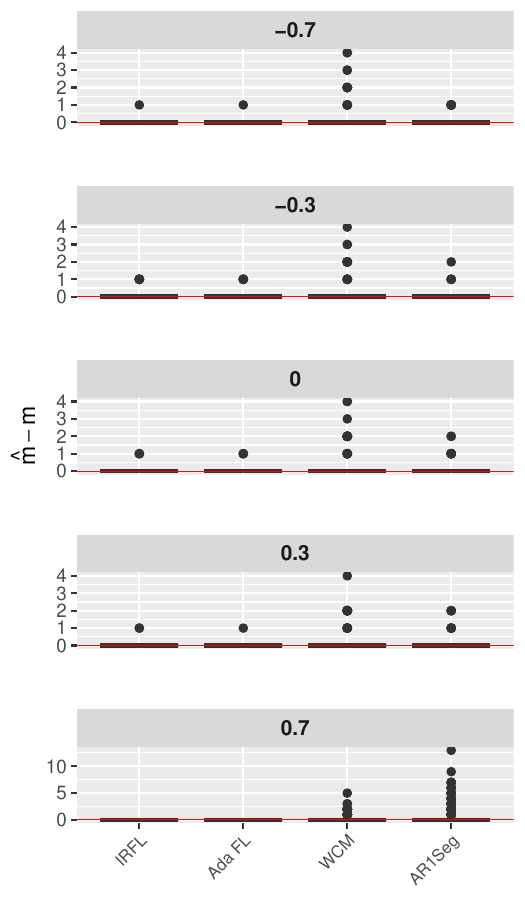}
        \caption[Comparison of number of changepoints found for mean shift detection methods with no changepoints and various AR(1) parameter values]{Comparison of number of changepoints found for mean shift detection methods with no changepoints and AR(1) parameter $\phi=-0.7$, $-0.3$, $0.0$, $0.3$, and $0.7$}
    \label{fig:AR1 Group Null nknot}
\end{figure}

It's also important to be able to effectively estimate the autocorrelation parameter $\phi$. In Figure \ref{fig:AR1 Group Null Phi}, boxplots displaying the sampling distribution of $\phi$ for all five of the $\phi$ values demonstrate that IRFL performs about the same as WCM, while having much lower deviation than AR1Seg. Formally, each boxplot represents the sampling distribution of 
$\hat{\phi} - \phi$, where 
$\hat{\phi}$ is the estimated AR(1) autocorrelation parameter 
and $\phi$ is its true value.

\begin{figure}[H]
    \centering
    \textbf{Estimation of AR(1) Parameter in Model with No Changepoints}\\
    \vspace{0.5em}
    \includegraphics[width=.7\textwidth]{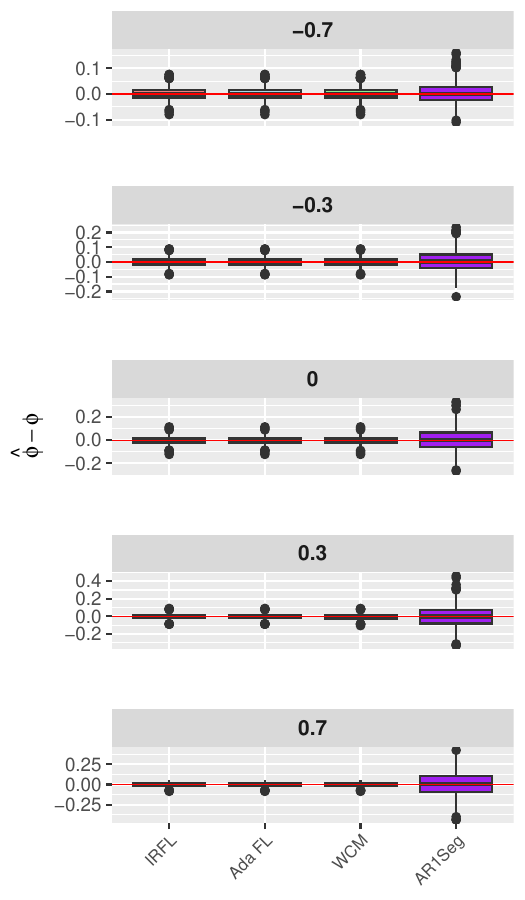}
        \caption[Comparison of $\phi$ estimate for mean shift detection methods with no changepoints and various AR(1) parameter values]{Comparison of $\phi$ estimate for mean shift detection methods with no changepoints and AR(1) parameter $\phi=-0.7$, $-0.3$, $0.0$, $0.3$, and $0.7$}
    \label{fig:AR1 Group Null Phi}
\end{figure}
From the plots, it is evident that adafused lasso and IRFL have superior false detection properties compared with WCM and AR1Seg for mean shifts, and they perform markedly better than AR1Seg in the estimation of $\phi$, for which an unbiased estimate is found.

\subsubsection*{Scenario 5: Mean Shift with AR(1) and Three Changepoints}

The fifth simulation considers an AR(1) model with changepoints at $t=251, 501,$ and $751$, where the mean alternates between $0$ and $2$. The model estimated is the same as in Scenario 4 \eqref{ar1 model}, and likewise the design and structured difference matrices, and parameter vector remain unchanged. 

Figure \ref{fig:AR1 Group BIC} highlights the primary distinction between IRFL and WCM. At $\phi = 0.7$, IRFL (and Ada FL) attain substantially lower BIC values than WCM and AR1Seg, indicating a more favorable balance between model fit and complexity. This outcome reflects IRFL’s selection principle: it minimizes BIC directly, even when doing so leads to a simpler segmentation that underestimates the true number of changepoints.

\begin{figure}[H]
    \centering
    \textbf{BIC of AR(1) Model with Three Changepoints}\\ 
    \vspace{0.5em}
    \includegraphics[width=.7\textwidth]{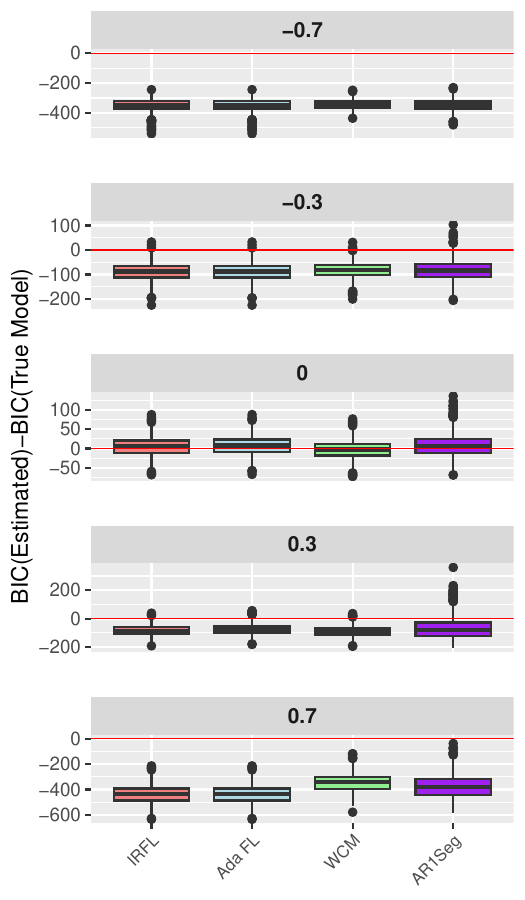}
    \caption[Comparison of BIC score for mean shift detection methods with three changepoints and various AR(1) parameter values]{Comparison of BIC scores for mean shift detection methods with three changepoints and AR(1) parameter $\phi = -0.7, -0.3, 0.0, 0.3,$ and $0.7$.}
    \label{fig:AR1 Group BIC}
\end{figure}

As shown in Figure~\ref{fig:AR1 Group nknot}, both IRFL and Ada FL collapse to zero changepoints when $\phi = 0.7$, whereas WCM continues to favor models that recover the correct number of changepoints. This contrast illustrates the difference between a criterion-driven approach that optimizes predictive performance (IRFL) and one designed for structural accuracy (WCM).

\begin{figure}[H]
    \centering
    \textbf{Number of Changepoints Detected in AR(1) Model with Three Changepoints}\\
    \vspace{0.5em}
    \includegraphics[width=.7\textwidth]{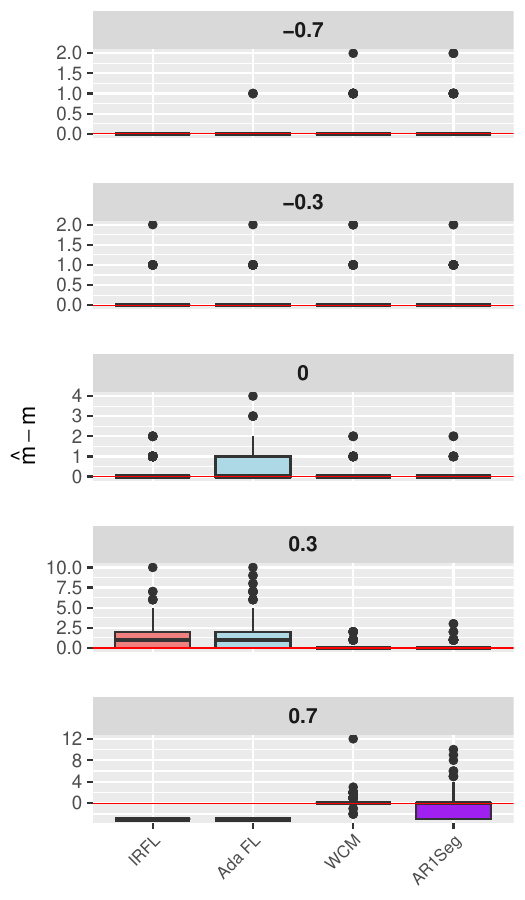}
    \caption[Comparison of number of changepoints found for mean shift detection methods with three changepoints and various AR(1) parameter values]{Comparison of number of changepoints found for mean shift detection methods with three changepoints and AR(1) parameter $\phi=-0.7$, $-0.3$, $0.0$, $0.3$, and $0.7$}
    \label{fig:AR1 Group nknot}
\end{figure}

The mechanism behind this choice is illustrated in Figure \ref{fig:AR1 Group Phi}. IRFL raises its estimate of $\phi$ above $0.7$, attributing more persistence to the series and thereby explaining the structure without changepoints. This substitution of autoregressive persistence for structural breaks avoids the need to estimate additional segment means and changepoint locations, yielding a lower BIC.

\begin{figure}[H]
    \centering
    \textbf{Estimation of AR(1) Parameter in Model with Three Changepoints}\\
    \vspace{0.5em}
    \includegraphics[width=.7\textwidth]{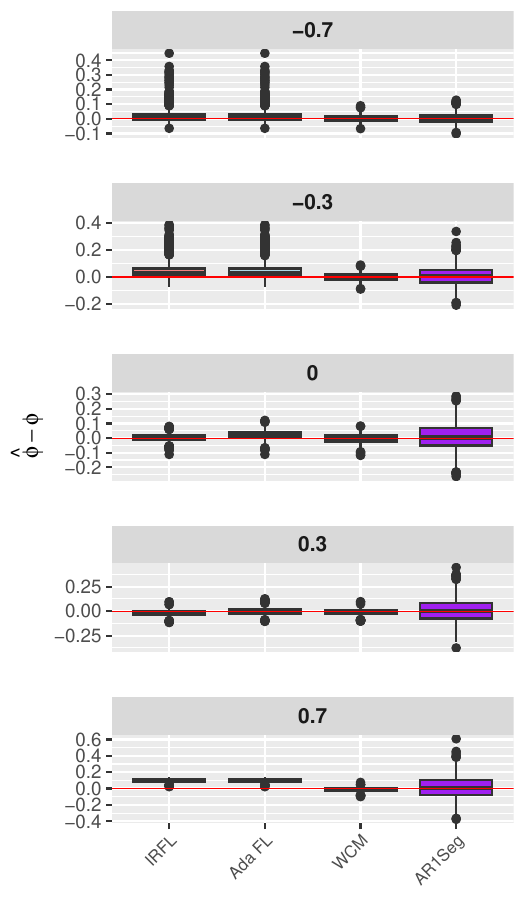}
    \caption[Comparison of $\phi$ estimate for mean shift detection methods with three changepoints and various AR(1) parameter values]{Comparison of $\phi$ estimate for mean shift detection methods with three changepoints and AR(1) parameter $\phi=-0.7$, $-0.3$, $0.0$, $0.3$, and $0.7$}
    \label{fig:AR1 Group Phi}
\end{figure}

To get an idea of how this can happen, Figure~\ref{fig:AR1 BIC vs AR1} presents a simple example in which data are generated from a stationary AR(1) process ($\phi=0.7$) with three equally spaced mean shifts, represented in blue. Despite the presence of these changepoints, an AR(1) mean-shift model attains a slightly higher BIC than the incorrectly specified AR(1) model with no changepoints, effectively level shifts for serial correlation.

\begin{figure}[H]
    \centering
    \textbf{Example: Misattribution of Mean Shifts as Autocorrelation}
    \begin{subfigure}{1\textwidth}
        \includegraphics[width=\linewidth]{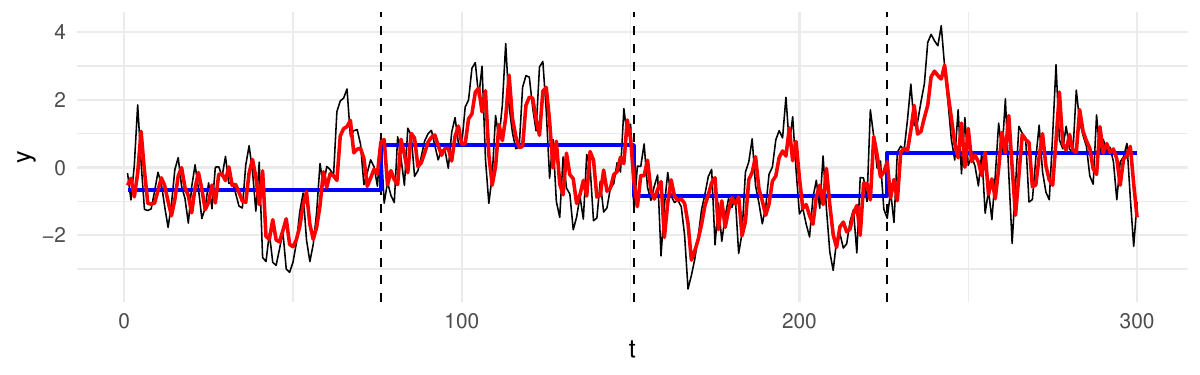}
        \caption*{(a) Three Mean-shift interpretation (BIC = 849.362, $\hat\phi=0.689$)}
    \end{subfigure}
    \vspace{0.75em}
    \begin{subfigure}{1\textwidth}
        \includegraphics[width=\linewidth]{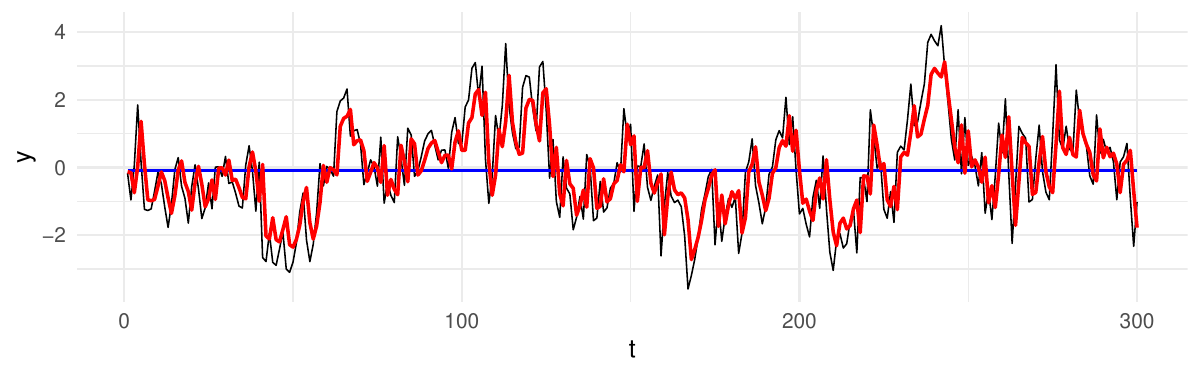}
        \caption*{(b) AR(1) interpretation (BIC=847.971, $\hat\phi=0.751$)}
    \end{subfigure}
    \caption[Missing Mean Shifts under AR(1) dependence]{
        Comparison of two interpretations of the same AR(1) data.
        The top panel shows the true mean in blue with AR(1) noise ($\phi^\text{true}=0.7$), with estimated AR(1) parameter $\hat\phi=0.689$. The bottom panel shows that despite not modeling true changepoints, the model achieves a slightly lower BIC by inflating the estimate of the AR(1) parameter ($\hat\phi=0.751$),
        illustrating how autoregressive persistence can mimic structural breaks.}
    \label{fig:AR1 BIC vs AR1}
\end{figure}

The consequence of choosing a higher degree of autocorrelation over the presence of a mean shift is evident in Figure \ref{fig:AR1 Group Distance}: IRFL minimizes BIC but fails to recover the true changepoint structure, whereas WCM---though less parsimonious---remains closer to the ground truth. Readers are encouraged to cross-reference this issue with the discussion in Section 1.5, the WCM description in Section 2, and the real-world application in Section 5.

\begin{figure}[H]
    \centering
    \textbf{Distance to True $\bm\tau$ in AR(1) Model with Three Changepoints}\\
    \vspace{0.5em}
    \includegraphics[width=.7\textwidth]{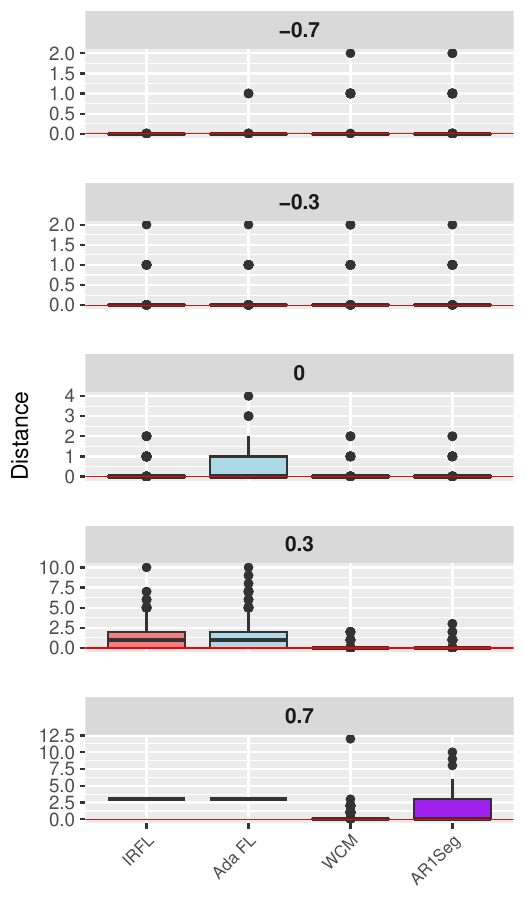}
    \caption[Comparison of distance to true changepoint locations found for mean shift detection methods with three changepoints and various AR(1) parameter values]{Comparison of distance to true changepoint locations found for mean shift detection methods with three changepoints and AR(1) parameter $\phi=-0.7$, $-0.3$, $0.0$, $0.3$, and $0.7$}
    \label{fig:AR1 Group Distance}
\end{figure}

In Figure \ref{fig:AR1 Group Distance}, IRFL and Ada FL do quite well in estimating the true vector of changepoints $\bm\tau$ when $\phi \in \{-0.7,-0.3,0.0\}$. However, when $\phi\in\{0.3,0.7\}$, they increase their estimate of $\phi$ (Figure \ref{fig:AR1 Group Phi}) and decrease their estimate of $m$ (Figure \ref{fig:AR1 Group nknot}), resulting in a lower BIC (Figure \ref{fig:AR1 Group BIC}), but at the cost of a greater distance between the estimated changepoint vector $\hat{\bm\tau}$ and the true vector $\bm\tau$.

In summary, while IRFL performs similarly to WCM and AR1Seg when $\phi$ is small, at higher persistence ($\phi=0.7$) it explicitly prioritizes model simplicity and BIC reduction, even at the expense of correctly identifying changepoints. WCM takes the opposite stance, emphasizing recovery of the true segmentation.

\subsubsection*{Scenario 6: Constant Trend with No Mean Shifts}
In the sixth scenario, a fixed linear trend is present across the entire series, with no changepoints. This configuration tests how each method performs in estimating the trend without detecting spurious changepoints. Here, the true mean is a linear function of time $t$, and accurate trend estimation is required without mistakenly identifying changepoints. The model for such a scenario is \begin{equation}\label{fixed trend model}
    y_t  = \begin{cases}
        \alpha + \mu_1 + \epsilon_t & \qquad t=1\\
        \alpha t + \mu_{t-1} + \epsilon_t & \qquad 2\leq t \leq n,
    \end{cases}
\end{equation}
where $\epsilon_t$ is a white noise process and the accompanying design, parameter, and difference matrices are as follows:
\begin{align*}
    X = 
\begin{bmatrix}
1 & 1 & 0 & 0 & \dots & 0 \\
2 & 1 & 0 & 0 & \dots & 0 \\
3 & 0 & 1 & 0 & \dots & 0 \\
4 & 0 & 0 & 1 & \dots & 0 \\
\vdots & \vdots & \vdots & \vdots & \ddots & \vdots \\
n & 0 & 0 & 0 & \dots & 1
\end{bmatrix}_{n \times n}, \quad
{\beta} = \begin{bmatrix}
    \alpha\\
    \mu_1\\
    \mu_2\\
    \vdots\\
    \mu_{n-1}
\end{bmatrix}_{n \times 1}\\
\\
D = 
\begin{bmatrix}
0 & -1 & 1  & 0  & 0  & \dots & 0 & 0 \\
0 & 0  & -1 & 1  & 0  & \dots & 0 & 0 \\
0 & 0  & 0  & -1 & 1  & \dots & 0 & 0 \\
\vdots & \vdots & \vdots & \vdots & \vdots & \ddots & \vdots & \vdots \\
0 & 0  & 0  & 0  & 0  & \dots & -1 & 1
\end{bmatrix}_{(n-1) \times n}
\end{align*}
This model cannot be estimated with BS, WBS, nor PELT. These methods are designed to find mean shifts without the presence of a trend, as the bottom panel of Figure \ref{fig:FTWI-Null-Group} indicates. The bottom panel of Figure \ref{fig:FTWI-Null-Group} also demonstrates that IRFL has an outstanding rate of false changepoint detection, while BS, WBS and PELT fail to estimate the model correctly, instead fitting a step function (not shown). The middle panel of  Figure \ref{fig:FTWI-Null-Group} demonstrates that IRFL is unbiased for $\alpha$.

\begin{figure}[H]
    \centering
    \textbf{Fixed Trend Simulation with No Changepoints}\\ 
    \vspace{0.5em}
    \includegraphics[width=1\textwidth]{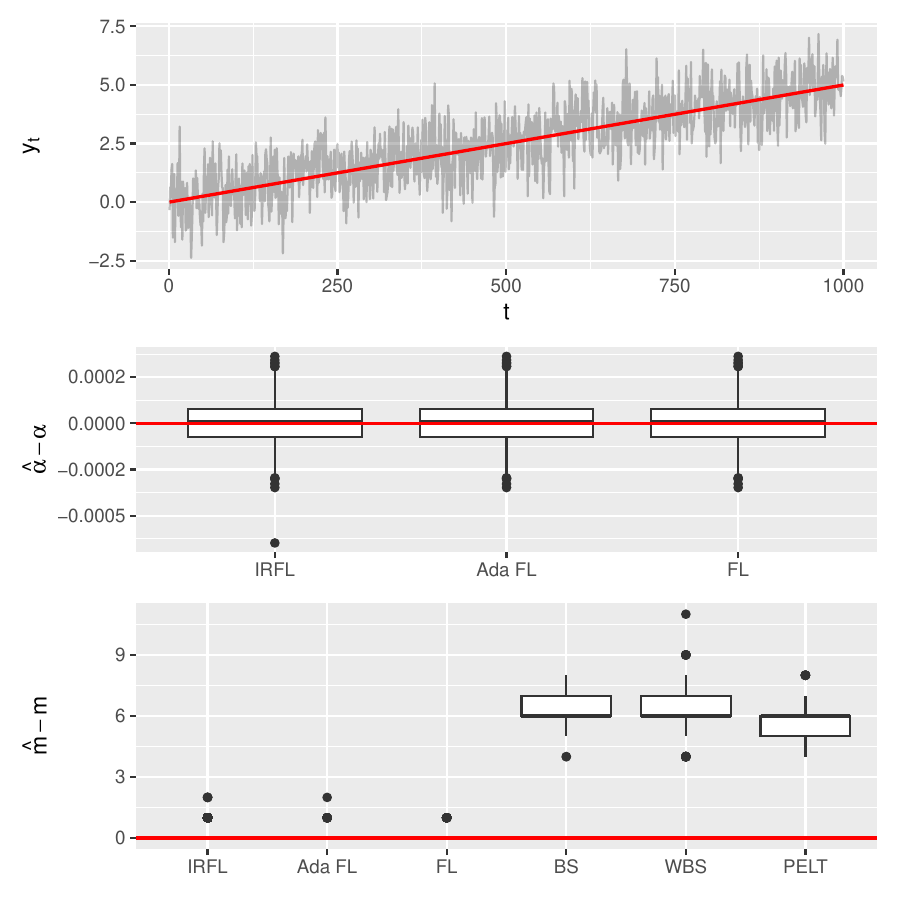}
    \caption[Comparison of mean shift detection methods with no changepoints and constant trend]{Top: Representative dataset of IID white noise model with fixed trend and no changepoints. Middle: Boxplot of sampling distribution of trend parameter $\alpha$. There is no entry for BS, WBS, or PELT because they do not have the ability to estimate trend.
    Bottom: Boxplot of sampling distribution of number of changepoints found using the competing methods.}
    \label{fig:FTWI-Null-Group}
\end{figure}

\subsubsection*{Scenario 7: Constant Trend with Three Mean Shifts}
The seventh scenario combines a fixed trend with three changepoints at the same locations as in scenarios two and five ($t=251$, $t=501$, and $t=751$). This setup introduces complexity by requiring simultaneous detection of changepoints and accurate trend estimation. The model being estimated is the same as in Scenario 6 \eqref{fixed trend model}, as are the design and structured difference matrices as well as parameter vector.

The middle panel of Figure \ref{fig:FTWI-Group}, shows that IRFL is an unbiased estimator of the trend parameter $\alpha$. It also does a superb job estimating the true number of changepoints and the proximity of the estimated changepoint vector $\hat{\bm\tau}$ to the true vector $\bm\tau$, as indicated by the boxplots in the second from bottom and bottom panels in Figure \ref{fig:FTWI-Group}, respectively.

\begin{figure}[H]
    \centering
    \textbf{Fixed Trend Simulation with Three Changepoints}\\ 
    \vspace{0.5em}
    \includegraphics[width=.8\textwidth]{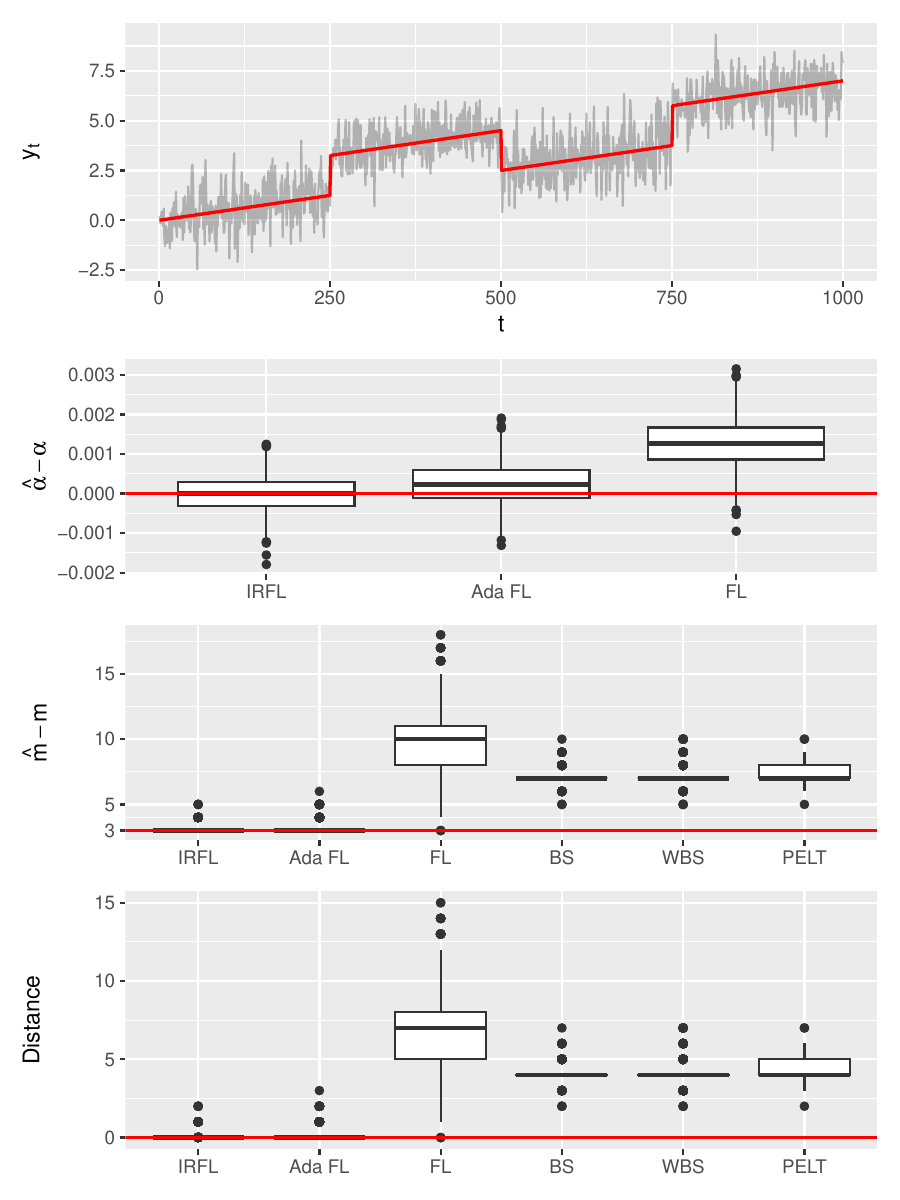}
    \caption[Comparison of mean shift detection methods with three changepoints and constant trend]{Top: Representative simulation of a fixed trend model with three changepoints at $t = 251$, $501, $ and $ 751 $. Second from top: Boxplot of centered sampling distribution of trend estimate $\hat\alpha$. Third from top: Boxplot of centered sampling distribution of the number of estimated changepoints, $\hat m$. Bottom: Boxplot of sampling distribution of the distance from the estimated changepoint vector $\hat{\bm\tau}$ to the true vector $\bm\tau$.}
    \label{fig:FTWI-Group}
\end{figure}

The fixed–trend scenarios above highlight two important points: first, that IRFL does not spuriously detect changepoints when none are present (Scenario 6), and second, that it can recover both the trend and changepoint structure when these coexist (Scenario 7). However, both of these settings impose a regular, deterministic structure on the changepoint vector. To fully evaluate the stability of IRFL, it is necessary to consider models in which the number, locations, and magnitudes of changepoints are not fixed in advance but vary randomly across datasets. 

This randomization serves two purposes. It avoids overfitting conclusions to a single carefully chosen segmentation pattern, and it provides evidence that IRFL continues to perform reliably under more realistic and irregular conditions. In practice, changepoints rarely occur at evenly spaced intervals with uniform jump sizes; instead, their frequency, placement, and magnitude are stochastic. By simulating under such random configurations, we can assess whether IRFL still recovers both the global trend and the local mean shifts effectively, and how its performance compares to other fused-lasso methods under these less structured circumstances.

\subsubsubsection*{Exploring Random Changepoint Locations}
To evaluate performance under more irregular conditions, 1000 datasets were generated with a random number and size of intercept shifts, combined with varying values of $\alpha$. The data-generating process proceeded as follows. First, the number of changepoints $m$ was drawn uniformly from $\{1,\ldots,7\}$. Next, an ordered vector of changepoints $\bm{\tau}=\{\tau_i\}_{i=1}^m$ was selected at random, subject to the spacing constraint
\begin{equation*}
    \min_{1\leq i\leq m+1}(\tau_i-\tau_{i-1}) \geq 100,
\end{equation*}
ensuring that adjacent changepoints were at least 100 indices apart. For each changepoint, a jump size was drawn independently from $\mathcal{U}(1.5,2.5)$, and its sign ($+$ or $-$) was assigned with equal probability, forming the mean vector $\mu$. The noise was taken to be white noise $WN(0,1)$, so these jump magnitudes are moderate relative to the noise level. This places the problem in a nontrivial signal-to-noise regime in which changepoints are detectable but not immediately obvious, allowing differences among estimation procedures to be meaningfully expressed. Next, one of four trend parameters $\alpha \in {-0.01, -0.005, 0.005, 0.1}$ was added to the mean trajectory, completing the data-generating process. An example simulation for the four trend values is shown in Figure \ref{fig:FTWI-random-results}.

\begin{figure}[H]
    \centering
    \textbf{Representative Simulation Dataset}\\ 
    \vspace{0.5em}
    \includegraphics[width=.8\textwidth]{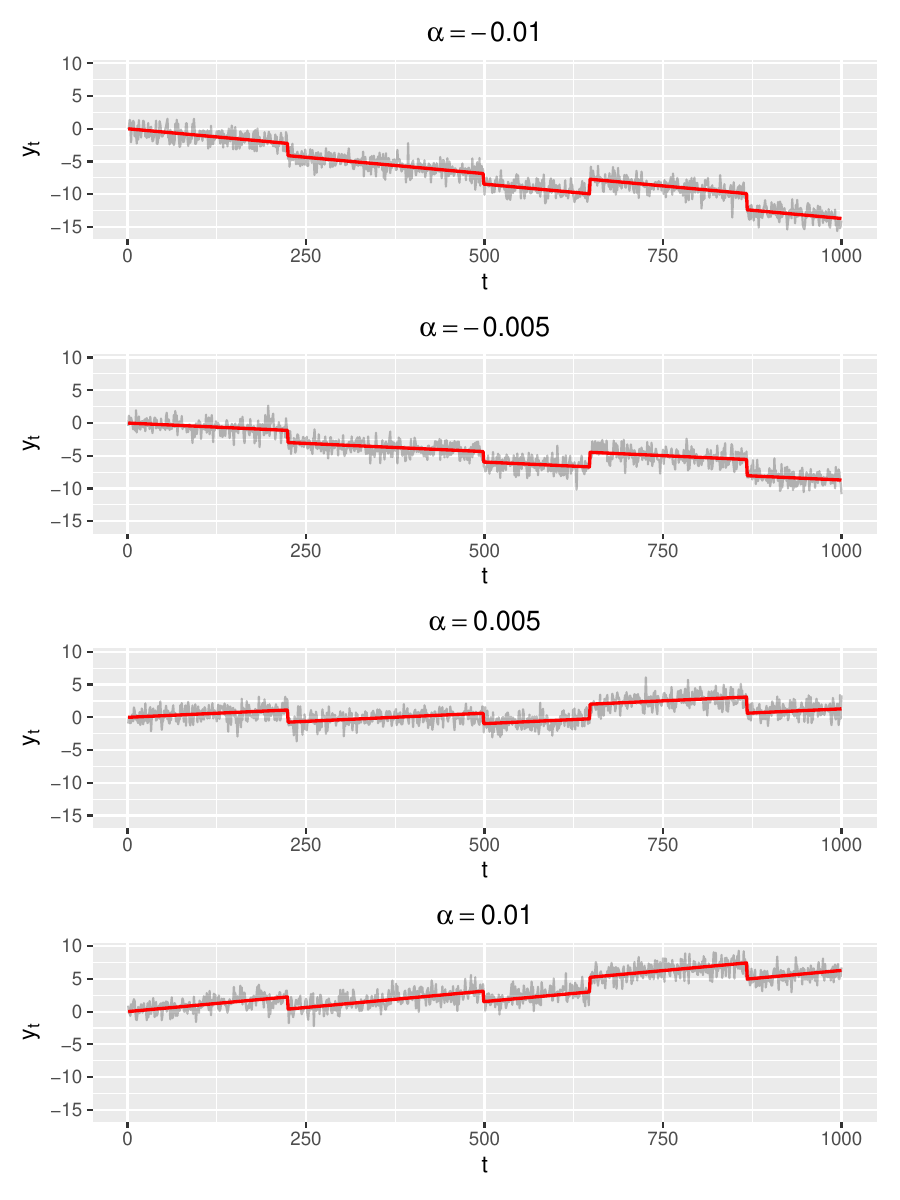}
    \caption[Simulation of fixed trend model with various trend parameter $\alpha$]{Simulation of a fixed trend model with random changepoints spaced at least 100 apart, jump sizes sampled from a $\mathcal{U}(1.5,2.5)$ distribution, and jump directions chosen randomly with equal probability. The model includes a constant trend and IID white noise errors. The panels illustrate the effect of different values of $\alpha$.}
    \label{fig:FTWI-random-results}
\end{figure}

Because only the fused lasso, adafused lasso, and IRFL can accommodate changepoint detection in the presence of a constant trend, these methods were used for comparison (none of the other methods explored in Section 2 have the ability to estimate mean shift models with a global trend). In addition, since the second-from-top panel in Figure \ref{fig:FTWI-Group} demonstrated bias in the direct estimation of the slope parameter $\alpha$, OLS estimates of $\alpha$ were obtained using the changepoints detected by adafused lasso and IRFL.

The top panel in Figure \ref{fig:FTWI-random-simulations} shows that the OLS estimates of $\alpha$ are nearly identical for the two methods---and both are unbiased as seen from the boxplot of the centered sampling distribution being contained within a small neighborhood of $0$. 

In the second panel of Figure \ref{fig:FTWI-random-simulations}, we see that adafused lasso frequently introduces spurious changepoints, as its iterative reweighting procedure has not yet eliminated small jumps after a single iteration, whereas IRFL achieves more parsimonious segmentation. This demonstrates that the lone reweighting step in adafused lasso is insufficient to isolate the true changepoint number, and some extra number of reweighting steps done in IRFL is necessary (\emph{iterate to isolate!}).

\begin{figure}[H]
    \centering
    \textbf{Results of Fixed Trend with Random Changepoints}\\ 
    \vspace{0.5em}
    \includegraphics[width=.8\textwidth]{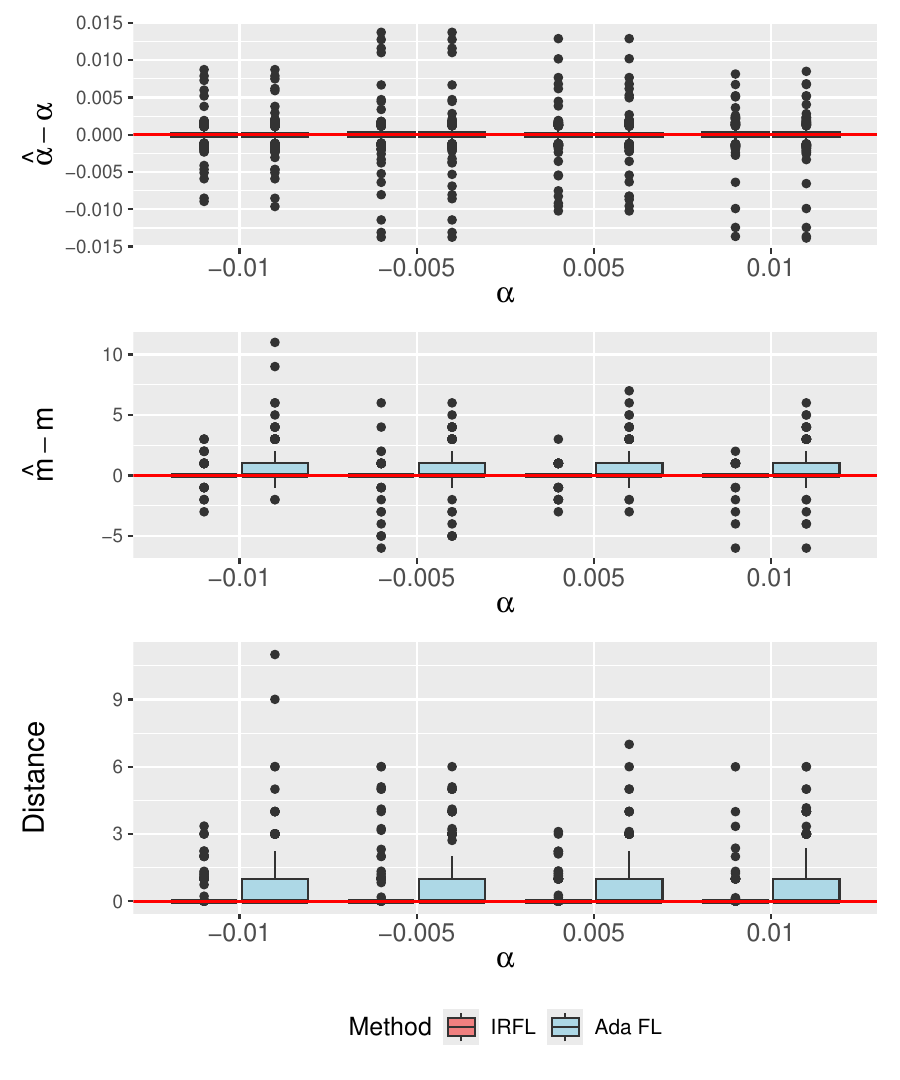}
    \caption[Comparison of random changepoint and fixed trend results for various slope $\alpha$]{Top: Boxplot of centered sampling distribution of trend estimate $\hat\alpha$. Middle: Boxplot of centered sampling distribution of the number of estimated changepoints, $\hat m$. Bottom: Boxplot of sampling distribution of the distance from the estimated changepoint vector $\hat{\bm\tau}$ to the true vector $\bm\tau$.}
    \label{fig:FTWI-random-simulations}
\end{figure}

Additional simulations below further clarify why this behavior arises. Because $\alpha$ is a global parameter, the inclusion of an extra (spurious) changepoint has little effect on its estimation. By contrast, omitting a true changepoint substantially biases the estimate of $\alpha$.

To illustrate the effect of changepoint specification on the estimation of global trend parameters, consider the model
\begin{equation*}
    y_t = 
    \begin{cases}
        \alpha t + \epsilon_t, & t \in \{1,\ldots,250\} \cup \{501,\ldots,750\},\\[4pt]
        \alpha t + 2 + \epsilon_t, & t \in \{251,\ldots,500\} \cup \{751,\ldots,1000\},
    \end{cases}
\end{equation*}
where $\epsilon_t \overset{\text{IID}}{\sim} N(0,1)$ and $\alpha = 0.005$. The true changepoint vector is $\tau = (251, 501, 751)$. For each of 1000 simulated datasets, $\alpha$ was first estimated via ordinary least squares (OLS) under the correct changepoint specification. Two perturbations were then applied. First, an additional changepoint was inserted at a uniformly random location, and $\alpha$ was re-estimated. Second, one of the true changepoints was removed, and $\alpha$ was again re-estimated.

This experiment highlights a general principle. Introducing an extra changepoint subdivides an existing segment, but the underlying mean function within the segment remains unchanged. As a result, the OLS estimate of $\alpha$ changes very little when a spurious changepoint is added. In contrast, removing a true changepoint forces two segments with different mean levels to be treated as a single segment. The resulting misspecification induces a systematic distortion in the fitted linear trend, producing a biased estimate of $\alpha$.

Figure~\ref{fig:OLS Fixed Trend Comparison} demonstrates this contrast. The distribution of estimated slopes remains concentrated when a spurious changepoint is added, whereas removing a true changepoint leads to substantial variability and bias. This explains why both the adaptive fused lasso and IRFL recover the global trend parameter with similar accuracy, even though IRFL provides more reliable recovery of the underlying changepoint configuration.

\begin{figure}[H]
    \centering
    \textbf{OLS Estimation of $\alpha$ with True and Faulty $\tau$}\\ 
    \vspace{0.5em}
    \includegraphics[width=1\textwidth]{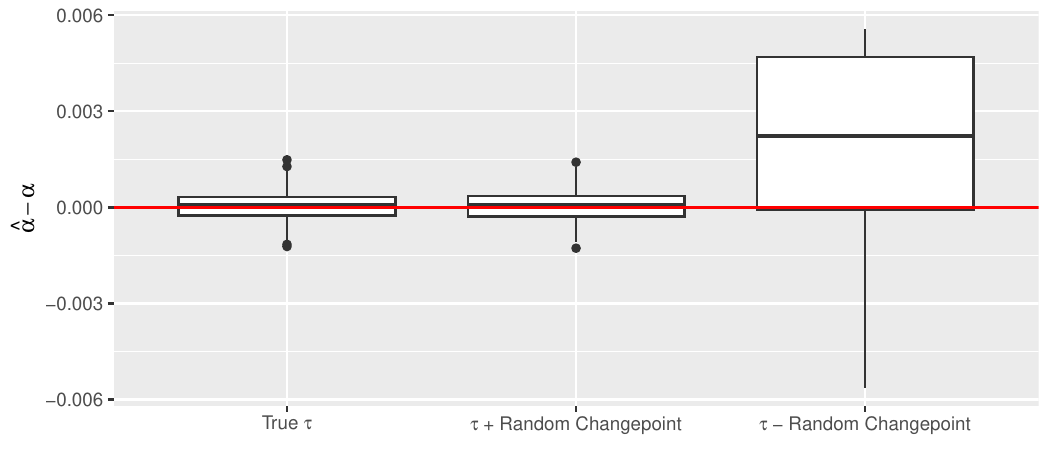}
    \caption[Comparison of ``too many'' versus ``too few'' changepoints for trend estimation]{A group box plot demonstrating the accuracy of the estimate of $\alpha$ is relatively unaffected by spurious changepoints, but is badly altered by missing changepoints.}
    \label{fig:OLS Fixed Trend Comparison}
\end{figure}
In plain words: If your goal is to estimate the slope $\hat\alpha$, it is far better to incorrectly include an extra spurious changepoint in your model than to incorrectly neglect a true changepoint.

\subsubsection*{Summary}
At the conclusion of Group 2, the findings show that IRFL continues to outperform competing methods when global linear trends are present. In the null case with no changepoints, IRFL again avoided detecting changepoints where none were present, performing substantially better than WBS and PELT in this regard. When true changepoints were introduced, IRFL accurately identified both their number and locations, demonstrating resilience to the confounding effect of an underlying trend. Notably, even when segment lengths were deliberately varied to create heterogeneous structures, IRFL retained its accuracy and stability, achieving closer alignment to the true changepoint vector than the alternative approaches. These results underscore that the strong performance observed in pure mean-shift models extends naturally to scenarios with linear trends, confirming IRFL’s robustness across both stationary and trend-influenced regimes.

\subsection{Group 3: Trend Shifts}
\noindent This group focuses on piecewise-linear signals with continuity at changepoints, where the underlying structure is defined by slope changes rather than level jumps. The generalized lasso is specified with a second-difference penalty to enforce long, continuous linear segments separated by sparse slope changes. Scenarios include a trend-only null case (no slope changes) as well as settings containing a small number of true slope shifts. Methods capable of estimating trend-shift models—fused lasso, adafused, IRFL, and CPOP—are compared on their ability to avoid detecting slope changes when none are present, on the number of slope changes they recover when they do occur, and on the accuracy with which estimated slope-change locations align to the true changepoint vector.

\subsubsection*{Scenario 8: Trend Shift Model with No Changepoints}

In the eighth scenario, a trend is present in the model. However, the trend can shift over time, while continuity between the piecewise linear segments is enforced. The model used for such a regression is therefore \begin{equation}\label{trend shift model}
    y_t = \mu_t + \epsilon_t,\qquad \epsilon_t \overset{IID}{\sim} WN(0,\sigma^2).
\end{equation} 
This is perhaps unexpected as it resembles the mean shift model in \eqref{scenario1 equation}. However, the structured difference matrix enforces the piecewise linear structure. Accordingly, the accompanying design, parameter, and difference matrices are as follows: \begin{equation*}
X  = 
\begin{bmatrix}
1  & 0 & 0 & \dots & 0 \\
0 & 1 & 0 & \dots & 0 \\
0 & 0 & 1 & \dots & 0 \\
\vdots & \vdots & \vdots & \ddots & \vdots \\
0 & 0 & 0 & \dots & 1
\end{bmatrix}_{n \times n},\quad
{\beta}  = \begin{bmatrix}
    \mu_1\\
    \mu_2\\
    \mu_3\\
    \vdots\\
    \mu_{n}
\end{bmatrix}_{n \times 1}\end{equation*}
\begin{equation*}
D = 
\begin{bmatrix}
1 & -2 & 1  & 0  & 0  & \dots &0& 0 & 0 \\
0 & 1  & -2 & 1  & 0  & \dots &0& 0 & 0 \\
0 & 0  & 1 & -2 & 1  & \dots &0& 0 & 0 \\
\vdots & \vdots & \vdots & \vdots&\vdots & \ddots & \vdots & \vdots & \vdots \\
0 & 0  & 0 &0 & 0  & \dots &1& -2 & 1
\end{bmatrix}_{(n-2) \times n}
\end{equation*}
Here, $D$ is the second–difference matrix. Each row of $D$ applies the operator 
\[
(\Delta^2 \mu)_t = (\mu_{t+1} - \mu_t )-(\mu_t- \mu_{t-1}),
\]
which measures the change in the first differences of $\mu_t$—that is, the change in slope between consecutive time points. In contrast to the first–difference matrix used in Scenario 1, which penalized deviations in level (enforcing long stretches of constant mean), the second–difference matrix penalizes deviations in slope. 

When applied within the generalized lasso, we will see that this penalty drives most second differences to zero. The result is that the estimated signal $\mu_t$ consists of long continuous segments with constant slope, separated by a small number of slope changes (trend shifts). In other words, $D$ enforces piecewise linearity with continuity at the changepoints: the fitted function does not jump in level, but only changes direction where the penalty allows a nonzero second difference.

Figure~\ref{fig:TSFL-Group-Null} shows that all four methods perform equivalently in the null case: the estimated number of changepoints is almost always zero, with only occasional isolated instances in which a single changepoint is detected. These occurrences reflect rare sampling fluctuations rather than systematic differences between the procedures.

\begin{figure}[H]
    \centering
    \textbf{Trend Shift Results with No Changepoints}\\ 
    \vspace{0.5em}
    \includegraphics[width=.8\textwidth]{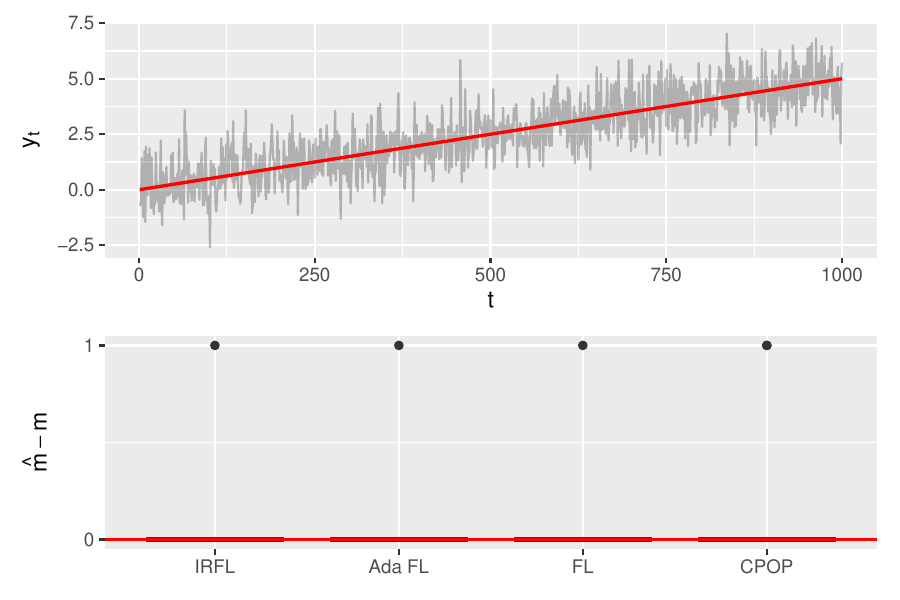}
    \caption[Comparison of trend shift detection methods with no changepoints]{Top: Representative simulation of a trend shift model with no changepoints. Bottom: Boxplot of centered sampling distribution of the number of estimated changepoints, $\hat m$.}
    \label{fig:TSFL-Group-Null}
\end{figure}

The bottom panel of Figure \ref{fig:TSFL-Group-Null} is actually a boxplot, with the entirety of the box contained at $\hat m-m=0$; the dots at $\hat m-m=1$ are outliers, showing that the estimated number of changepoints $\hat m$ is almost always equal to $m$--in this case, 0.

\subsubsection*{Scenario 9: Trend Shift Model with Three Changepoints}

The ninth scenario builds on the previous case by introducing changepoints into the trend shift model. Here, the underlying mean is piecewise linear, with slope shifts occurring at $t=252$, $t=502$, and $t=752$. Between changepoints, the slope remains constant, and continuity at the changepoint locations is preserved by construction. This setting is more demanding than the pure trend model of Scenario 8, as it requires methods not only to avoid spurious detections but also to identify the correct locations of genuine slope changes. The model is the same as in Scenario 8 \eqref{trend shift model}, as are the design and structured difference matrices, as well as the parameter vector.

The comparison in Figure \ref{fig:TSFL-Group} highlights several key points. First, the middle panel shows that IRFL consistently identifies the correct number of changepoints across all simulations, showing both robustness and parsimony. Second, the bottom panel indicates that the estimated changepoint vectors $\hat\tau$ from IRFL lie very close to the true vector $\bm\tau = (252,502,752)^\top$. The distance from $\hat{\bm\tau}$ to $\bm\tau$ is very close to zero, reflecting accurate localization. By contrast,  fused lasso overestimates the number of changepoints. IRFL, Ada FL and CPOP have the strongest overall performance in terms of the estimation of the number and location of changepoints.

\begin{figure}[H]
    \centering
    \textbf{Trend Shift Results with Three Changepoints}\\ 
    \vspace{0.5em}
    \includegraphics[width=.8\textwidth]{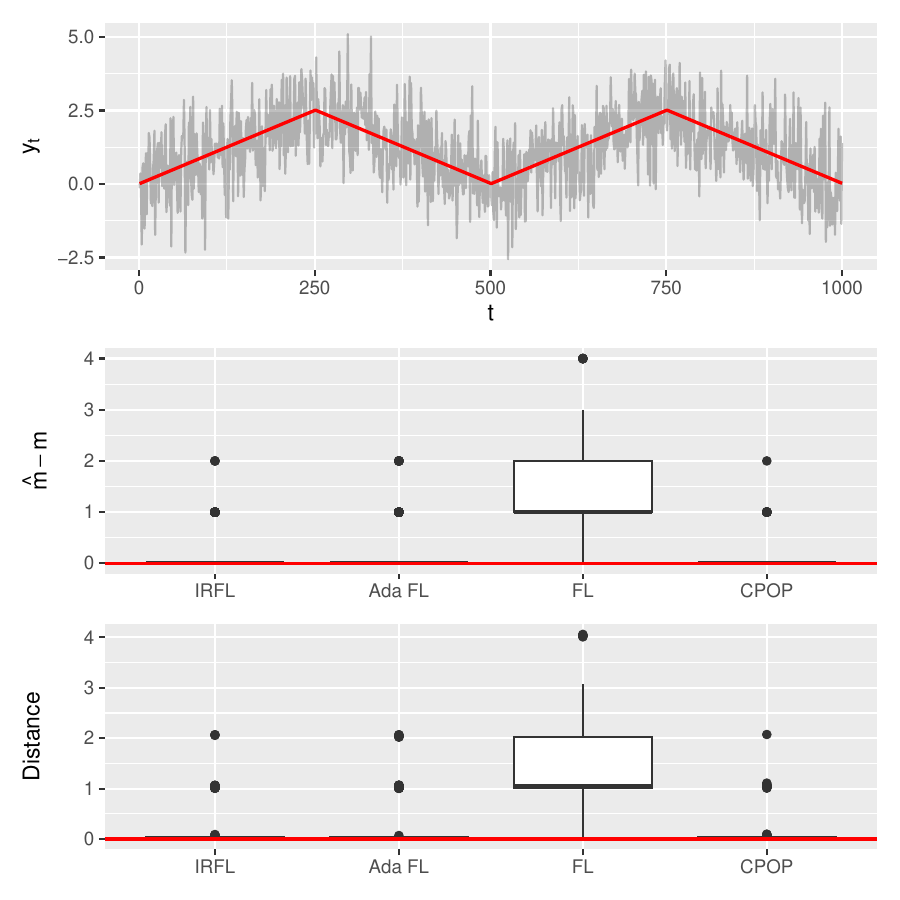}
    \caption[Comparison of trend shift detection methods with three changepoints]{Top: Representative simulation of trend shift model with changepoints at $t=252$, $502$, and $752$. Middle: Boxplot of centered sampling distribution of the number of estimated changepoints, $\hat m$. Bottom: Boxplot of sampling distribution of the distance from the estimated changepoint vector $\hat{\bm\tau}$ to the true vector $\bm\tau$.}
    \label{fig:TSFL-Group}
\end{figure}

Together, these results demonstrate that IRFL achieves the strongest balance: it avoids falsely detecting changepoints while maintaining excellent accuracy in locating true slope changes. This complements the findings of Scenario 8 by showing that IRFL not only resists spurious detections when no changepoints exist but also excels when genuine changepoints are present.

\subsubsection*{Scenario 10: Trend Shift Model with Random Changepoints}

The tenth scenario extends the trend shift model by introducing randomly located changepoints rather than fixing them in advance. In each simulation, the number of changepoints is drawn uniformly between one and seven, and their locations are sampled with a minimum spacing of 100 observations to ensure adequate segment length. The resulting mean function is piecewise linear, with alternating slope direction and continuous joins at each changepoint, so that both the magnitude and location of slope changes vary across replicates. The sign alternates in at changepoints, and in all cases the slope was $\pm0.01$. This setup reflects a substantially more challenging regime than the deterministic three‐changepoint design of Scenario 9, as estimation accuracy must now hold uniformly over diverse configurations of changepoint counts and positions.

The comparison in Figure \ref{fig:TSFL-Group-Random} highlights the resulting performance differences across methods. IRFL remains the most consistent overall, accurately recovering both the number and approximate locations of the true changepoints while exhibiting low dispersion in its sampling distribution as seen in the middle panel of \ref{fig:TSFL-Group-Random}. Adafused lasso shows comparable performance but with slightly greater variability, whereas fused lasso tends to oversegment, producing an inflated number of changepoints. CPOP performs not quite as well as IRFL well in localization as seen in the bottom panel, but also tends to undersegment as seen by the middle panel.

\begin{figure}[H]
\centering
\textbf{Trend Shift Results with Random Changepoints}\
\vspace{0.5em}
\includegraphics[width=.9\textwidth]{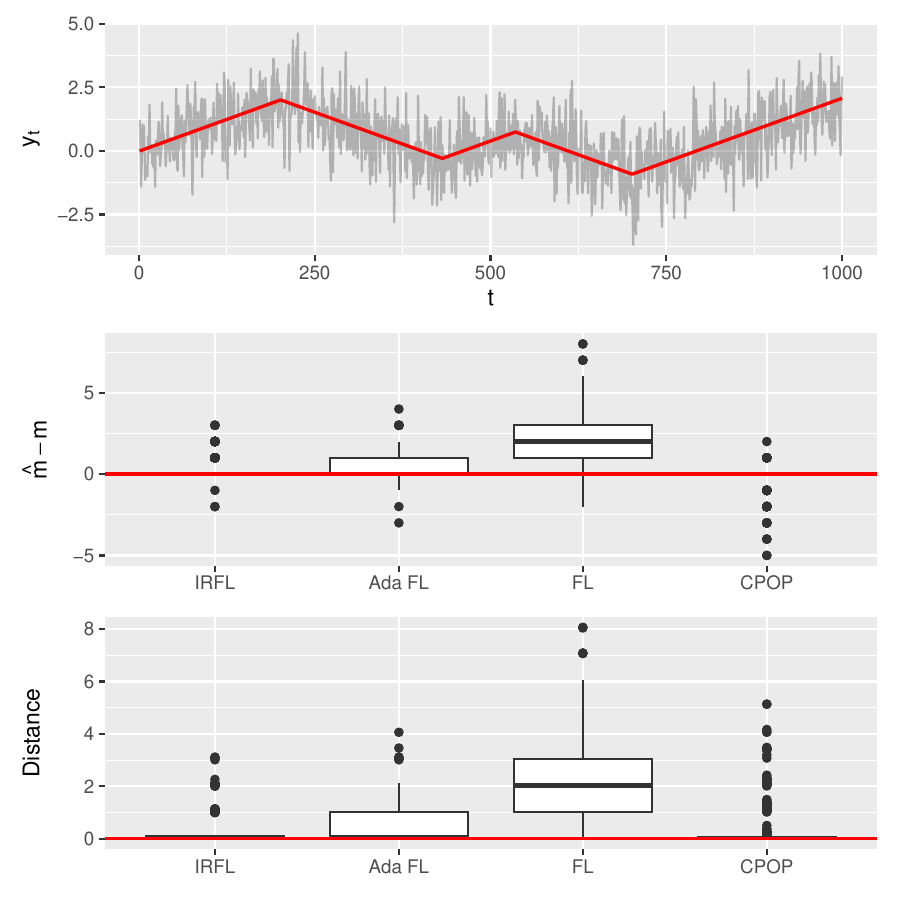}
\caption[Comparison of trend shift detection methods with random changepoints]{Top: Representative simulation of a trend‐shift model with randomly generated changepoint locations. Middle: Boxplot of centered sampling distribution of the number of estimated changepoints, $\hat m$. Bottom: Boxplot of sampling distribution of the distance between the estimated changepoint vector $\hat{\bm\tau}$ and the true vector $\bm\tau$ by method.}
\label{fig:TSFL-Group-Random}
\end{figure}

Together, these results demonstrate that IRFL provides the best overall balance—resisting false detections while accurately localizing genuine slope changes—even under heterogeneous and randomly varying changepoint structures. This generalizes the conclusions from Scenario 9, showing that IRFL’s robustness extends from smooth, globally linear trends to complex, piecewise linear environments with stochastic changepoint configurations.

\subsubsection*{Summary}
At the conclusion of Group 3, the results show that all methods designed for trend–shift models perform similarly in the null case: the estimated number of changepoints is almost always zero, with only occasional isolated detections arising from sampling variation. When true slope changes are present, adaptive fused lasso, IRFL, and CPOP each recover the changepoint structure with comparable accuracy, with no method showing systematic over- or under-segmentation. Small differences arise primarily in the precision of localized changepoint placement, but these remain modest across all procedures. Overall, the results indicate that IRFL's strong performance observed in mean-shift settings extends naturally to piecewise-linear trend settings, and that the methods considered here demonstrate broadly similar behavior in detecting and localizing slope changes.

\subsection{Group 4: Mean Shifts with Seasonality and Trend}

\noindent This group evaluates simultaneous estimation of seasonal structure and mean shifts, with an additional set of scenarios that include a global linear trend. Seasonality poses a particular challenge because recurring fluctuations can obscure changes in the underlying mean. In particular, seasonal peaks and troughs may either mimic a shift in level or drown out a genuine shift, thereby reducing detection power \citep{liu-2018-change}. As with any structured component, the difficulty of the problem depends on its scale relative to the noise: when the seasonal pattern is large compared with the noise variance, it can be estimated reliably, whereas when the noise is of comparable size, the seasonal effect becomes harder to distinguish, and errors in seasonal estimation can propagate directly into changepoint recovery. To address this, the design matrices explicitly encode the seasonal pattern (period $p$) and, when present, a global trend term, while the structured difference matrix penalizes only differences among segment means. This construction preserves unbiased estimation of seasonal coefficients while still allowing changepoints in the underlying mean.

The included design matrices explicitly encode the seasonal pattern (period $p$) and, when present, a global trend term; the structured difference matrix penalizes only differences among segment means so that seasonal coefficients remain unbiased. Scenarios cover seasonal nulls (no changepoints), seasonal models with true mean shifts, and seasonal-plus-trend models with and without changepoints. Because there are no published methods capable of handling multiple mean shifts with the presence of seasonality only fused lasso, adafused lasso, and IRFL are compared on (i) rate of changepoint detection, (ii) unbiasedness of seasonal parameters $\hat s_i$ (and trend, when applicable) estimates, and (iii) changepoint counting and localization via the established distance metric.

\subsubsection*{Scenario 11: Mean Shift Model with Seasonality and No Changepoints}
In the 11th scenario, seasonality is present in the data but there are no mean shifts. The model used in such a regression is 
\begin{equation}\label{seasonality model}
y_t = \begin{cases}
    \mu_1 + s_t +\epsilon_t & 1\le t\le p\\
    \mu_{t-p+1} + s_t + \epsilon_t &  p+1\le t\le n
    \end{cases}
\end{equation}
where $\sum_{i=1}^{p-1}s_i=0$ and $s_t=s_{t-p}$ for $p+1 \le t \le n$. This parametrization means that a mean shift cannot occur within the first $p$ indices, so that the number of estimated parameters is maximally $n$ (by placing a changepoint at every index $t$ between $p+1$ and $n$ inclusive) and the model is estimated uniquely. 

\begin{remark}
    For clarity and emphasis, the first $p$ indices are ``pre-fused" in this parameterization so that the design matrix $X$ has full rank and the model can be uniquely estimated. Under this parameterization, a changepoint can then be located anywhere from index $p+1$ to $n$---importantly, \emph{not} only at the end of a season.
\end{remark}

The design, coefficient, and difference matrices will be given shortly. However, as they can be difficult to describe for large dimensions, an example will be provided for $p=7$ and $n=21$ to build intuition. This might correspond to three weeks' worth of data, where the seasonal component (period $p=7$) is one week long. In such a scenario, the design matrix $X$ is given as follows:
\begin{figure}[H]
\centering
\vspace{.1cm}
\[
\scalebox{0.8}{$
X=\left[
\begin{array}{ccccccccccccccccccccc}
1 & \cdot & \cdot & \cdot & \cdot & \cdot & 1 & \cdot & \cdot & \cdot & \cdot & \cdot & \cdot & \cdot & \cdot & \cdot & \cdot & \cdot & \cdot & \cdot & \cdot \\
\cdot & 1 & \cdot & \cdot & \cdot & \cdot & 1 & \cdot & \cdot & \cdot & \cdot & \cdot & \cdot & \cdot & \cdot & \cdot & \cdot & \cdot & \cdot & \cdot & \cdot \\
\cdot & \cdot & 1 & \cdot & \cdot & \cdot & 1 & \cdot & \cdot & \cdot & \cdot & \cdot & \cdot & \cdot & \cdot & \cdot & \cdot & \cdot & \cdot & \cdot & \cdot \\
\cdot & \cdot & \cdot & 1 & \cdot & \cdot & 1 & \cdot & \cdot & \cdot & \cdot & \cdot & \cdot & \cdot & \cdot & \cdot & \cdot & \cdot & \cdot & \cdot & \cdot \\
\cdot & \cdot & \cdot & \cdot & 1 & \cdot & 1 & \cdot & \cdot & \cdot & \cdot & \cdot & \cdot & \cdot & \cdot & \cdot & \cdot & \cdot & \cdot & \cdot & \cdot \\
\cdot & \cdot & \cdot & \cdot & \cdot & 1 & 1 & \cdot & \cdot & \cdot & \cdot & \cdot & \cdot & \cdot & \cdot & \cdot & \cdot & \cdot & \cdot & \cdot & \cdot \\
-1 & -1 & -1 & -1 & -1 & -1 & 1 & \cdot & \cdot & \cdot & \cdot & \cdot & \cdot & \cdot & \cdot & \cdot & \cdot & \cdot & \cdot & \cdot & \cdot \\
1 & \cdot & \cdot & \cdot & \cdot & \cdot & \cdot & 1 & \cdot & \cdot & \cdot & \cdot & \cdot & \cdot & \cdot & \cdot & \cdot & \cdot & \cdot & \cdot & \cdot \\
\cdot & 1 & \cdot & \cdot & \cdot & \cdot & \cdot & \cdot & 1 & \cdot & \cdot & \cdot & \cdot & \cdot & \cdot & \cdot & \cdot & \cdot & \cdot & \cdot & \cdot \\
\cdot & \cdot & 1 & \cdot & \cdot & \cdot & \cdot & \cdot & \cdot & 1 & \cdot & \cdot & \cdot & \cdot & \cdot & \cdot & \cdot & \cdot & \cdot & \cdot & \cdot \\
\cdot & \cdot & \cdot & 1 & \cdot & \cdot & \cdot & \cdot & \cdot & \cdot & 1 & \cdot & \cdot & \cdot & \cdot & \cdot & \cdot & \cdot & \cdot & \cdot & \cdot \\
\cdot & \cdot & \cdot & \cdot & 1 & \cdot & \cdot & \cdot & \cdot & \cdot & \cdot & 1 & \cdot & \cdot & \cdot & \cdot & \cdot & \cdot & \cdot & \cdot & \cdot \\
\cdot & \cdot & \cdot & \cdot & \cdot & 1 & \cdot & \cdot & \cdot & \cdot & \cdot & \cdot & 1 & \cdot & \cdot & \cdot & \cdot & \cdot & \cdot & \cdot & \cdot \\
-1 & -1 & -1 & -1 & -1 & -1 & \cdot & \cdot & \cdot & \cdot & \cdot & \cdot & \cdot & 1 & \cdot & \cdot & \cdot & \cdot & \cdot & \cdot & \cdot \\
1 & \cdot & \cdot & \cdot & \cdot & \cdot & \cdot & \cdot & \cdot & \cdot & \cdot & \cdot & \cdot & \cdot & 1 & \cdot & \cdot & \cdot & \cdot & \cdot & \cdot \\
\cdot & 1 & \cdot & \cdot & \cdot & \cdot & \cdot & \cdot & \cdot & \cdot & \cdot & \cdot & \cdot & \cdot & \cdot & 1 & \cdot & \cdot & \cdot & \cdot & \cdot \\
\cdot & \cdot & 1 & \cdot & \cdot & \cdot & \cdot & \cdot & \cdot & \cdot & \cdot & \cdot & \cdot & \cdot & \cdot & \cdot & 1 & \cdot & \cdot & \cdot & \cdot \\
\cdot & \cdot & \cdot & 1 & \cdot & \cdot & \cdot & \cdot & \cdot & \cdot & \cdot & \cdot & \cdot & \cdot & \cdot & \cdot & \cdot & 1 & \cdot & \cdot & \cdot \\
\cdot & \cdot & \cdot & \cdot & 1 & \cdot & \cdot & \cdot & \cdot & \cdot & \cdot & \cdot & \cdot & \cdot & \cdot & \cdot & \cdot & \cdot & 1 & \cdot & \cdot \\
\cdot & \cdot & \cdot & \cdot & \cdot & 1 & \cdot & \cdot & \cdot & \cdot & \cdot & \cdot & \cdot & \cdot & \cdot & \cdot & \cdot & \cdot & \cdot & 1 & \cdot \\
-1 & -1 & -1 & -1 & -1 & -1 & \cdot & \cdot & \cdot & \cdot & \cdot & \cdot & \cdot & \cdot & \cdot & \cdot & \cdot & \cdot & \cdot & \cdot & 1 \\
\end{array}
\right]
$}
\]
\vspace{.1cm}
\end{figure}
The first 6 columns correspond to the first $p-1$ independently estimated seasonal components $s_1,s_2,\ldots,s_6$. Recall that $s_7=-\sum_{i=1}^6s_i$, and so every 7th row, the first six entries are all $-1$. As the seasonal component repeats every $7$ observations, this pattern of rows repeat three times making $n=21$ rows. The coefficient vector $\beta$ is given as:
\begin{equation*}
    \beta=\begin{bmatrix}
        s_1 & s_2 & \ldots & s_6 & \mu_1 & \mu_2 & \ldots & \mu_{15}
    \end{bmatrix}^\top.
\end{equation*}
Note that though the seasonality has period $p=7$, only $p-1=6$ parameters are required to estimate this seasonality---leaving $n-(p-1)=21-(7-1)=15$ mean parameters.

Inspection will reveal that one can recover the full model with a changepoint at every time index $t\ge p$ in \eqref{seasonality model} by writing the matrix product $y=X\beta+\epsilon$. The desire when using the genlasso framework is to penalize differences between segment means, and so in the $D$ matrix, the first 6 columns are omitted from differences so that their estimation is unbiased. The $D$ matrix is therefore:
\begin{figure}[H]
\centering
\vspace{.1cm}
\[
\scalebox{0.8}{$
D =
\left[\begin{array}{ccccccccccccccccccccc}
\cdot & \cdot & \cdot & \cdot & \cdot & \cdot & -1 & 1 & \cdot & \cdot & \cdot & \cdot & \cdot & \cdot & \cdot & \cdot & \cdot & \cdot & \cdot & \cdot & \cdot \\
\cdot & \cdot & \cdot & \cdot & \cdot & \cdot & \cdot & -1 & 1 & \cdot & \cdot & \cdot & \cdot & \cdot & \cdot & \cdot & \cdot & \cdot & \cdot & \cdot & \cdot \\
\cdot & \cdot & \cdot & \cdot & \cdot & \cdot & \cdot & \cdot & -1 & 1 & \cdot & \cdot & \cdot & \cdot & \cdot & \cdot & \cdot & \cdot & \cdot & \cdot & \cdot \\
\cdot & \cdot & \cdot & \cdot & \cdot & \cdot & \cdot & \cdot & \cdot & -1 & 1 & \cdot & \cdot & \cdot & \cdot & \cdot & \cdot & \cdot & \cdot & \cdot & \cdot \\
\cdot & \cdot & \cdot & \cdot & \cdot & \cdot & \cdot & \cdot & \cdot & \cdot & -1 & 1 & \cdot & \cdot & \cdot & \cdot & \cdot & \cdot & \cdot & \cdot & \cdot \\
\cdot & \cdot & \cdot & \cdot & \cdot & \cdot & \cdot & \cdot & \cdot & \cdot & \cdot & -1 & 1 & \cdot & \cdot & \cdot & \cdot & \cdot & \cdot & \cdot & \cdot \\
\cdot & \cdot & \cdot & \cdot & \cdot & \cdot & \cdot & \cdot & \cdot & \cdot & \cdot & \cdot & -1 & 1 & \cdot & \cdot & \cdot & \cdot & \cdot & \cdot & \cdot \\
\cdot & \cdot & \cdot & \cdot & \cdot & \cdot & \cdot & \cdot & \cdot & \cdot & \cdot & \cdot & \cdot & -1 & 1 & \cdot & \cdot & \cdot & \cdot & \cdot & \cdot \\
\cdot & \cdot & \cdot & \cdot & \cdot & \cdot & \cdot & \cdot & \cdot & \cdot & \cdot & \cdot & \cdot & \cdot & -1 & 1 & \cdot & \cdot & \cdot & \cdot & \cdot \\
\cdot & \cdot & \cdot & \cdot & \cdot & \cdot & \cdot & \cdot & \cdot & \cdot & \cdot & \cdot & \cdot & \cdot & \cdot & -1 & 1 & \cdot & \cdot & \cdot & \cdot \\
\cdot & \cdot & \cdot & \cdot & \cdot & \cdot & \cdot & \cdot & \cdot & \cdot & \cdot & \cdot & \cdot & \cdot & \cdot & \cdot & -1 & 1 & \cdot & \cdot & \cdot \\
\cdot & \cdot & \cdot & \cdot & \cdot & \cdot & \cdot & \cdot & \cdot & \cdot & \cdot & \cdot & \cdot & \cdot & \cdot & \cdot & \cdot & -1 & 1 & \cdot & \cdot \\
\cdot & \cdot & \cdot & \cdot & \cdot & \cdot & \cdot & \cdot & \cdot & \cdot & \cdot & \cdot & \cdot & \cdot & \cdot & \cdot & \cdot & \cdot & -1 & 1 & \cdot \\
\cdot & \cdot & \cdot & \cdot & \cdot & \cdot & \cdot & \cdot & \cdot & \cdot & \cdot & \cdot & \cdot & \cdot & \cdot & \cdot & \cdot & \cdot & \cdot & -1 & 1
\end{array}\right].
$}
\]
\vspace{.1cm}
\end{figure}
\noindent The last $n-p+1=15$ columns correspond to differences in consecutive $\mu_t$ so that sparsity in changepoints is enforced.

It is also useful to write these matrices in block form for general $p$ and $n$ for ease of construction. Accordingly, let the matrix $P$ be defined as
\[
P = 
\begin{bmatrix}
1 & 0 & 0 & \ldots & 0 \\
0 & 1 & 0 & \ldots & 0 \\
0 & 0 & 1 & \ldots & 0 \\
\vdots & \vdots & \vdots & \ddots & \vdots \\
0 & 0 & 0 & \ldots & 1 \\
-1 & -1 & -1 & \ldots & -1
\end{bmatrix}_{p \times (p - 1)}
\]
so that $P$ captures one full period. Then define $\mathcal{P} \in \mathbb{R}^{n \times (p - 1)}$ be the matrix formed by repeating $P$ vertically and truncating to $n$ rows. In this way, the columns of $X$ corresponding to the seasonal component of $y_t$ are constructed. Then, define the first-order difference matrix over $n - p + 1$ variables as 
\[
\Delta =
\begin{bmatrix}
-1 & 1 & 0 & \cdots & 0 \\
0 & -1 & 1 & \cdots & 0 \\
\vdots & \ddots & \ddots & \ddots & \vdots \\
0 & \cdots & 0 & -1 & 1
\end{bmatrix}_{(n - p) \times (n - p + 1)}.
\]
Further define $\mathbf{1}^{[1:(p-1)]}_{n\times 1}$ to be a vector whose first $p-1$ entries are 1, and 0 thereafter. Then the design matrix $X$, coefficient vector ${\beta}$, and structured difference matrix $D$ are given by
\begin{equation*}
\begin{array}{c}
X = 
\begin{bmatrix}
\mathcal{P}_{n \times (p - 1)} \,\, \bigg| \,\,\mathbf{1}^{[1:(p-1)]}_{n\times 1}\,\, \bigg| \,\,
\begin{array}{c}
\mathbf{0}_{p\times (n - p )} \\
\hline
I_{n - p }
\end{array}
\end{bmatrix}_{n \times n},\\
\\
\\
D = 
\left[
\begin{array}{c|c}
\mathbf{0}_{(n - p) \times (p - 1)} & \Delta
\end{array}
\right]_{(n - p) \times n},
\end{array}
\qquad 
{\beta} = 
\begin{bmatrix}
s_1 \\
s_2 \\
\vdots \\
s_{p - 1} \\
\mu_1 \\
\mu_2 \\
\vdots \\
\mu_{n - p + 1}
\end{bmatrix}_{n \times 1}.
\end{equation*}

With this setup in place, we can now examine how the fused lasso, adafused lasso, and IRFL perform when applied to purely seasonal data without any mean shifts. In the following simulation, 1000 datasets were generated with \(n=1200\), all with seasonality of period 12 and no mean shifts.

The third-from-top panel in Figure~\ref{fig:Season-Group-Null} shows that fused lasso, adafused lasso, and IRFL almost never detect changepoints when none are present. The bottom panel further shows that all three methods yield unbiased estimates of the seasonal components in the absence of mean shifts. In other words---all three methods are approximately equally effective.

\begin{figure}[H]
    \centering
    \textbf{Mean Shift With No Changepoints and Seasonality Results}\\ 
    \vspace{0.5em}
    \includegraphics[width=.9\textwidth]{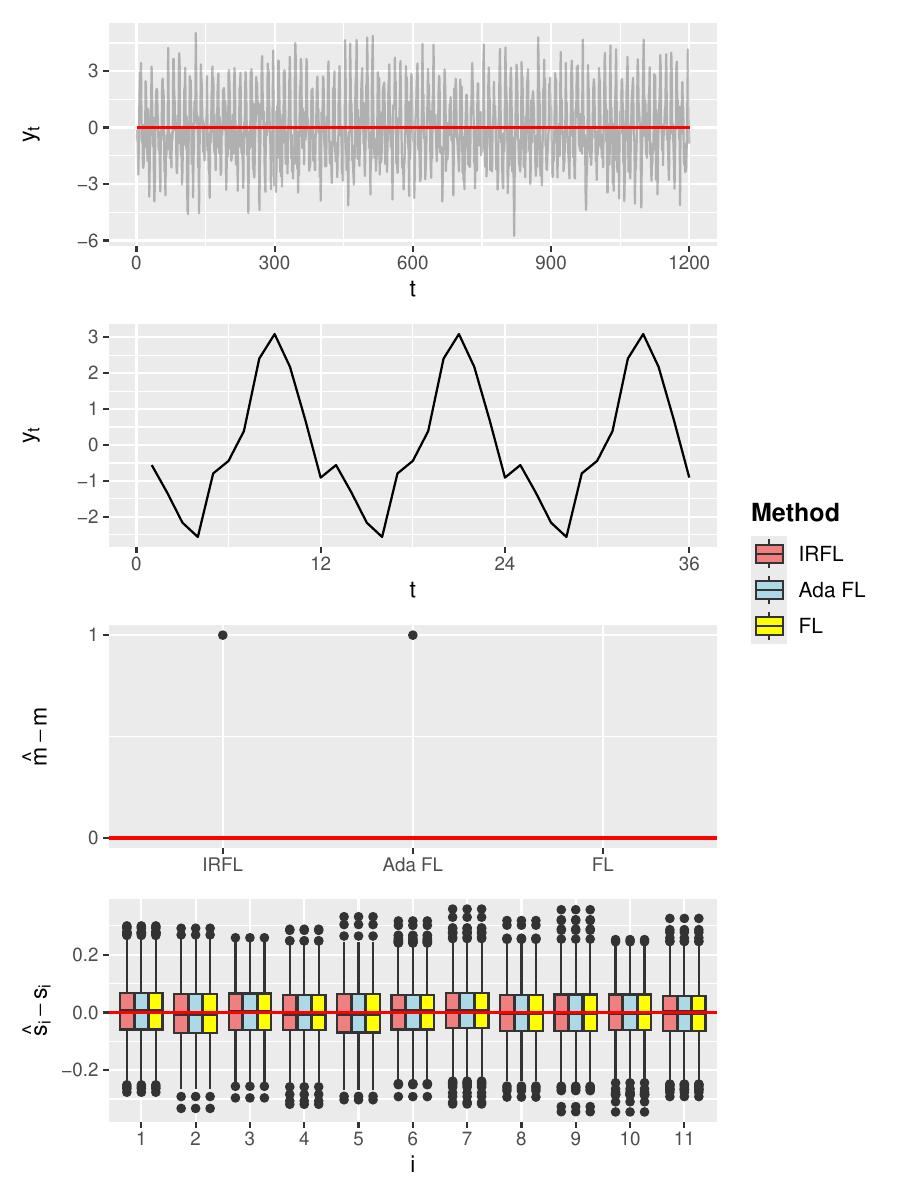}
    \caption[Simulation and results of IRFL for no mean shifts + seasonality estimation]{Top: Simulation of time series with seasonality and no changepoints. Second from Top: Three periods of an example season. Third from Top: Boxplot of centered sampling distribution of the number of estimated changepoints, $\hat m$. Bottom: Boxplot of centered sampling distribution of the seasonality components $\hat s_i$ for the three methods.}
    \label{fig:Season-Group-Null}
\end{figure}

\subsubsection*{Scenario 12: Mean Shift Model with Seasonality and Three Changepoints}
In the 12th scenario, there is once again seasonality, and the model is identical to the 11th scenario \eqref{seasonality model}. Likewise, the design and structured difference matrices and parameter vector are identical. However, mean shifts of size $2$ with alternating signs are present at $t=301$, $t=601$, and $t=901$. A representative dataset is given in Figure \ref{fig:Season-Sim}.

\begin{figure}[H]
    \centering
    \textbf{Mean Shift With Three Changepoints and Seasonality Simulation}\\ 
    \vspace{0.5em}
    \includegraphics[width=1\textwidth]{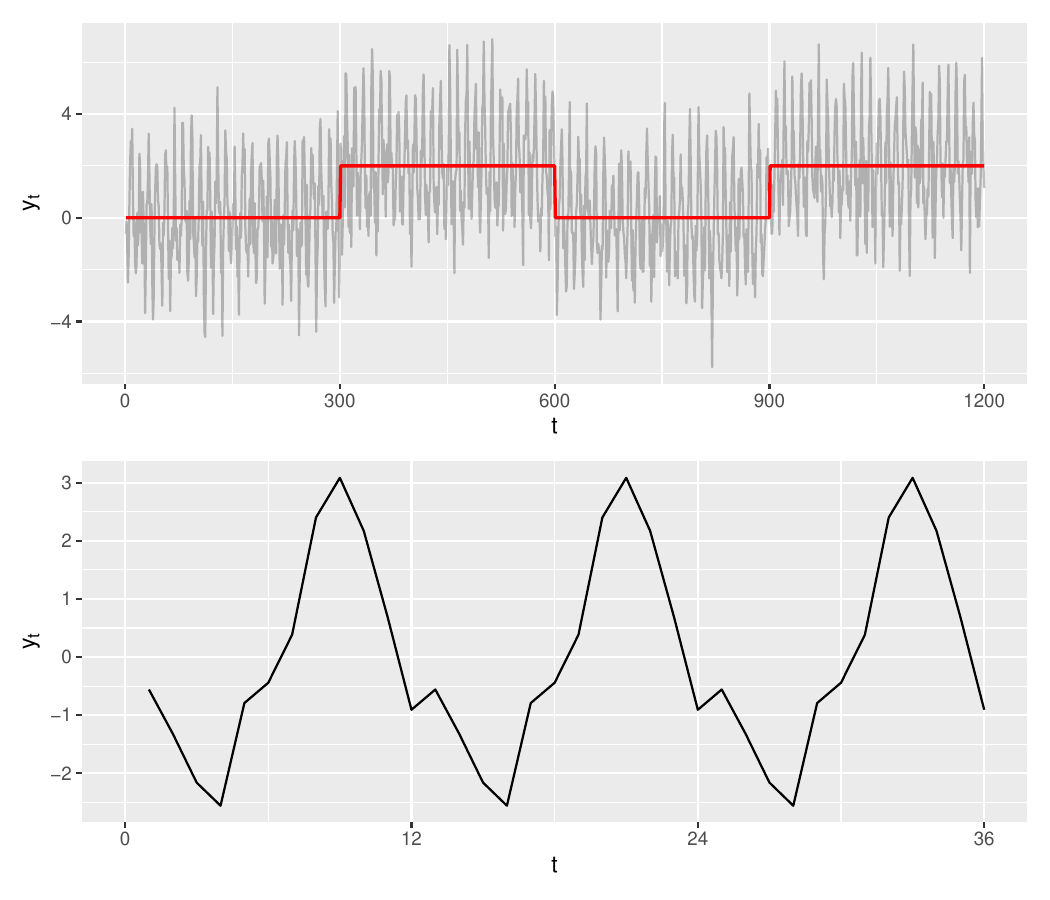}
    \caption[Simulation of dataset with three mean shifts + seasonality]{Top: Representative simulation of a mean shift model with seasonality and three changepoints at $t=301$, $t=601$, and $t=901$. Bottom: Three periods of an example season.}
    \label{fig:Season-Sim}
\end{figure}

As demonstrated in Figure \ref{fig:Season-Group-Plots} and in keeping with previous modeled scenarios, the fused lasso finds far too many changepoints in comparison to adafused lasso and IRFL, which both tend to choose the correct number and placement of changepoints. Despite this, fused lasso is competitive with both adafused lasso and IRFL for the unbiased estimation of the seasonal components. This is a wonderful result---to the best of our knowledge at the time of this writing, no existing penalized-likelihood or optimization-based method recovers both seasonal structure and multiple changepoints in a unified framework.

\begin{figure}[H]
    \centering
    \textbf{Mean Shift With Three Changepoints and Seasonality Results}\\ 
    \vspace{0.5em}
    \includegraphics[width=1\textwidth]{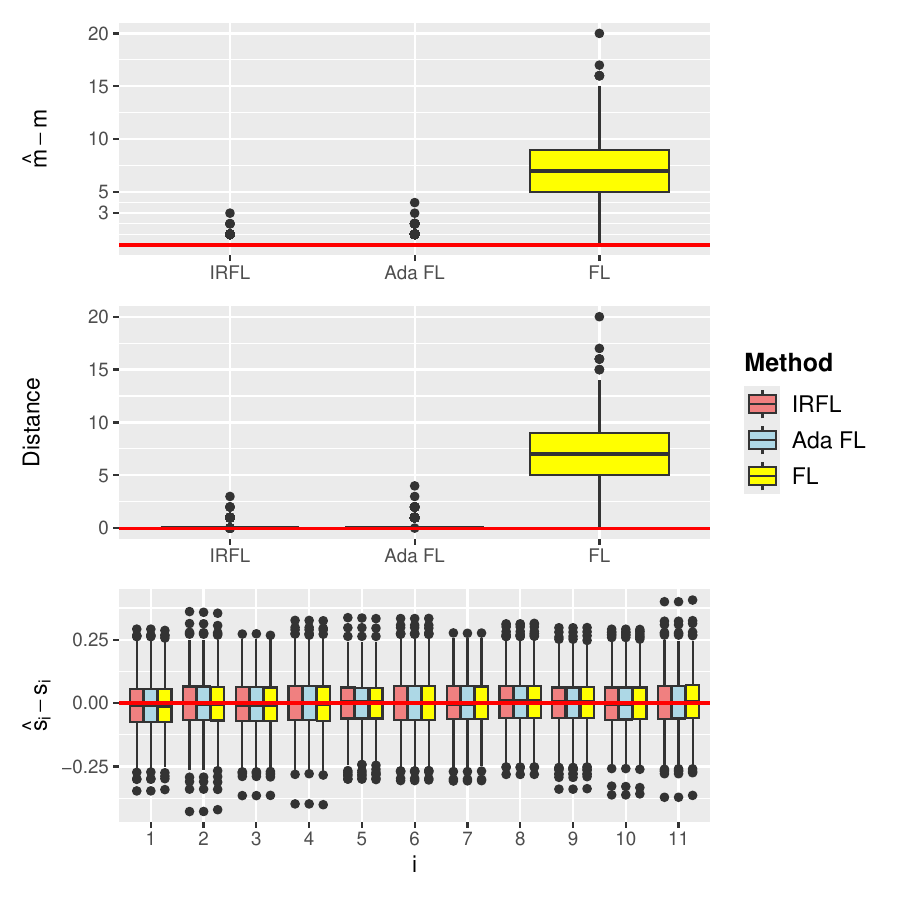}
    \caption[Results of IRFL for three mean shifts + seasonality estimation]{Top: Boxplot of centered sampling distribution of the number of estimated changepoints, $\hat m$. Second from Top: Boxplot of sampling distribution of the distance from the estimated changepoint vector $\hat{\bm\tau}$ to the true vector $\bm\tau$. Bottom: Boxplot of centered sampling distribution of the seasonality components $\hat s_i$ for the three methods.}
    \label{fig:Season-Group-Plots}
\end{figure}

\subsubsection*{Scenario 13: Mean Shift Model with Seasonality, Trend and No Changepoints}
In the 13th scenario, seasonality is once again present in the data, but there is also a constant trend. The model for such a scenario is 
\begin{equation}\label{full model with trend}
y_t = \begin{cases}
    \alpha t+ s_t +\mu_1 + \epsilon_t &  1\le t\le p+1\\
    \alpha t + s_t +\mu_{t-p} + \epsilon_t & p+2 \leq t \leq n
    \end{cases}
\end{equation}
where once again $\sum_{i=1}^{p-1}s_i=0$ and $s_t=s_{t-p}$ for $p+1\le t\le n$. This parametrization enforces that the maximal number of parameters to be $n$. Thus, even full models (with the maximal number of changepoints) are identifiable. Following the convention established in the 10th scenario, an example design, coefficient, and structured difference matrix is given for $n=21$ and $p=7$ here:
\begin{figure}[H]
\centering
\vspace{.1cm}
\[
\scalebox{0.8}{$
X=\left[
\begin{array}{ccccccccccccccccccccc}
1 & 1 & \cdot & \cdot & \cdot & \cdot & \cdot  & 1 & \cdot & \cdot & \cdot & \cdot & \cdot & \cdot & \cdot & \cdot & \cdot & \cdot & \cdot & \cdot & \cdot \\
2 & \cdot & 1 & \cdot & \cdot & \cdot & \cdot  & 1 & \cdot & \cdot & \cdot & \cdot & \cdot & \cdot & \cdot & \cdot & \cdot & \cdot & \cdot & \cdot & \cdot \\
3 & \cdot & \cdot & 1 & \cdot & \cdot & \cdot  & 1 & \cdot & \cdot & \cdot & \cdot & \cdot & \cdot & \cdot & \cdot & \cdot & \cdot & \cdot & \cdot & \cdot \\
4 & \cdot & \cdot & \cdot & 1 & \cdot & \cdot  & 1 & \cdot & \cdot & \cdot & \cdot & \cdot & \cdot & \cdot & \cdot & \cdot & \cdot & \cdot & \cdot & \cdot \\
5 & \cdot & \cdot & \cdot & \cdot & 1 & \cdot  & 1 & \cdot & \cdot & \cdot & \cdot & \cdot & \cdot & \cdot & \cdot & \cdot & \cdot & \cdot & \cdot & \cdot \\
6 & \cdot & \cdot & \cdot & \cdot & \cdot & 1  & 1 & \cdot & \cdot & \cdot & \cdot & \cdot & \cdot & \cdot & \cdot & \cdot & \cdot & \cdot & \cdot & \cdot \\
7 & -1 & -1 & -1 & -1 & -1 & -1 & 1 & \cdot & \cdot & \cdot & \cdot & \cdot & \cdot & \cdot & \cdot & \cdot & \cdot & \cdot & \cdot & \cdot \\
8 & 1 & \cdot & \cdot & \cdot & \cdot  & \cdot & 1 & \cdot & \cdot & \cdot & \cdot & \cdot & \cdot & \cdot & \cdot & \cdot & \cdot & \cdot & \cdot & \cdot \\
9 & \cdot & 1 & \cdot & \cdot & \cdot  & \cdot & \cdot & 1 & \cdot & \cdot & \cdot & \cdot & \cdot & \cdot & \cdot & \cdot & \cdot & \cdot & \cdot & \cdot \\
10 & \cdot & \cdot & 1 & \cdot & \cdot & \cdot & \cdot & \cdot & 1 & \cdot & \cdot & \cdot & \cdot & \cdot & \cdot & \cdot & \cdot & \cdot & \cdot & \cdot \\
11 & \cdot & \cdot & \cdot & 1 & \cdot & \cdot & \cdot & \cdot & \cdot & 1 & \cdot & \cdot & \cdot & \cdot & \cdot & \cdot & \cdot & \cdot & \cdot & \cdot \\
12 & \cdot & \cdot & \cdot & \cdot & 1 & \cdot & \cdot & \cdot & \cdot & \cdot & 1 & \cdot & \cdot & \cdot & \cdot & \cdot & \cdot & \cdot & \cdot & \cdot \\
13 & \cdot & \cdot & \cdot & \cdot & \cdot & 1  & \cdot & \cdot & \cdot & \cdot & \cdot & 1 & \cdot & \cdot & \cdot & \cdot & \cdot & \cdot & \cdot & \cdot \\
14 & -1 & -1 & -1 & -1 & -1 & -1 & \cdot & \cdot & \cdot & \cdot & \cdot & \cdot & 1 & \cdot & \cdot & \cdot & \cdot & \cdot & \cdot & \cdot \\
15 & 1 & \cdot & \cdot & \cdot & \cdot & \cdot & \cdot & \cdot & \cdot & \cdot & \cdot & \cdot & \cdot & 1 & \cdot & \cdot & \cdot & \cdot & \cdot & \cdot \\
16 & \cdot & 1 & \cdot & \cdot & \cdot & \cdot & \cdot & \cdot & \cdot & \cdot & \cdot & \cdot & \cdot & \cdot & 1 & \cdot & \cdot & \cdot & \cdot & \cdot \\
17 & \cdot & \cdot & 1 & \cdot & \cdot & \cdot & \cdot & \cdot & \cdot & \cdot & \cdot & \cdot & \cdot & \cdot & \cdot & 1 & \cdot & \cdot & \cdot & \cdot \\
18 & \cdot & \cdot & \cdot & 1 & \cdot & \cdot & \cdot & \cdot & \cdot & \cdot & \cdot & \cdot & \cdot & \cdot & \cdot & \cdot & 1 & \cdot & \cdot & \cdot \\
19 & \cdot & \cdot & \cdot & \cdot & 1 & \cdot & \cdot & \cdot & \cdot & \cdot & \cdot & \cdot & \cdot & \cdot & \cdot & \cdot & \cdot & 1 & \cdot & \cdot \\
20 & \cdot & \cdot & \cdot & \cdot & \cdot & 1 & \cdot & \cdot & \cdot & \cdot & \cdot & \cdot & \cdot & \cdot & \cdot & \cdot & \cdot & \cdot & 1 & \cdot \\
21 & -1 & -1 & -1 & -1 & -1 & -1 & \cdot & \cdot & \cdot & \cdot & \cdot & \cdot & \cdot & \cdot & \cdot & \cdot & \cdot & \cdot & \cdot & 1 \\
\end{array}
\right]
$}
\]
\vspace{.1cm}
\end{figure}
The coefficient vector $\beta$ is given as:
\begin{equation*}
    \beta=\begin{bmatrix}
        \alpha & s_1 & s_2 & \ldots & s_6 & \mu_1 & \mu_2 & \ldots & \mu_{14}
    \end{bmatrix}^\top.
\end{equation*}
and finally the structured difference matrix $D$ is 
\begin{figure}[H]
\centering
\vspace{.1cm}
\[
\scalebox{0.8}{$
D =
\left[\begin{array}{ccccccccccccccccccccc}
\cdot &\cdot & \cdot & \cdot & \cdot & \cdot & \cdot & -1 & 1 & \cdot & \cdot & \cdot & \cdot & \cdot & \cdot & \cdot & \cdot & \cdot & \cdot & \cdot & \cdot  \\
\cdot &\cdot & \cdot & \cdot & \cdot & \cdot & \cdot & \cdot & -1 & 1 & \cdot & \cdot & \cdot & \cdot & \cdot & \cdot & \cdot & \cdot & \cdot & \cdot & \cdot  \\
\cdot &\cdot & \cdot & \cdot & \cdot & \cdot & \cdot & \cdot & \cdot & -1 & 1 & \cdot & \cdot & \cdot & \cdot & \cdot & \cdot & \cdot & \cdot & \cdot & \cdot  \\
\cdot &\cdot & \cdot & \cdot & \cdot & \cdot & \cdot & \cdot & \cdot & \cdot & -1 & 1 & \cdot & \cdot & \cdot & \cdot & \cdot & \cdot & \cdot & \cdot & \cdot  \\
\cdot &\cdot & \cdot & \cdot & \cdot & \cdot & \cdot & \cdot & \cdot & \cdot & \cdot & -1 & 1 & \cdot & \cdot & \cdot & \cdot & \cdot & \cdot & \cdot & \cdot  \\
\cdot &\cdot & \cdot & \cdot & \cdot & \cdot & \cdot & \cdot & \cdot & \cdot & \cdot & \cdot & -1 & 1 & \cdot & \cdot & \cdot & \cdot & \cdot & \cdot & \cdot  \\
\cdot &\cdot & \cdot & \cdot & \cdot & \cdot & \cdot & \cdot & \cdot & \cdot & \cdot & \cdot & \cdot & -1 & 1 & \cdot & \cdot & \cdot & \cdot & \cdot & \cdot  \\
\cdot &\cdot & \cdot & \cdot & \cdot & \cdot & \cdot & \cdot & \cdot & \cdot & \cdot & \cdot & \cdot & \cdot & -1 & 1 & \cdot & \cdot & \cdot & \cdot & \cdot  \\
\cdot &\cdot & \cdot & \cdot & \cdot & \cdot & \cdot & \cdot & \cdot & \cdot & \cdot & \cdot & \cdot & \cdot & \cdot & -1 & 1 & \cdot & \cdot & \cdot & \cdot  \\
\cdot &\cdot & \cdot & \cdot & \cdot & \cdot & \cdot & \cdot & \cdot & \cdot & \cdot & \cdot & \cdot & \cdot & \cdot & \cdot & -1 & 1 & \cdot & \cdot & \cdot  \\
\cdot &\cdot & \cdot & \cdot & \cdot & \cdot & \cdot & \cdot & \cdot & \cdot & \cdot & \cdot & \cdot & \cdot & \cdot & \cdot & \cdot & -1 & 1 & \cdot & \cdot  \\
\cdot &\cdot & \cdot & \cdot & \cdot & \cdot & \cdot & \cdot & \cdot & \cdot & \cdot & \cdot & \cdot & \cdot & \cdot & \cdot & \cdot & \cdot & -1 & 1 & \cdot  \\
\cdot &\cdot & \cdot & \cdot & \cdot & \cdot & \cdot & \cdot & \cdot & \cdot & \cdot & \cdot & \cdot & \cdot & \cdot & \cdot & \cdot & \cdot & \cdot & -1 & 1 
\end{array}\right].
$}
\]
\vspace{.1cm}
\end{figure}
These matrices can also be established in blocks for generalization to higher dimensions. Let $
    \textbf{t}=\left(
        1, 2, \ldots, n
    \right)^\top
$ and $\mathbf{1}^{[1:p]}_{n\times 1}$ be a vector of length $n$ whose first $p$ elements are 1 and 0 thereafter. Then the design matrix $X$, coefficient vector ${\beta}$, and structured difference matrix $D$ are given by
\[
\begin{array}{c}
X = 
\begin{bmatrix}\textbf{t}_{n\times 1}\ \bigg| \ 
\mathcal{P}_{n \times (p - 1)} \,\, \bigg| \,\,\mathbf{1}^{[1:p]}_{n\times 1}\,\, \bigg| \,\,
\begin{array}{c}
\mathbf{0}_{(p+1)\times (n - p -1)} \\
\hline
 I_{n - p -1}
\end{array}
\end{bmatrix}_{n \times n},
\\
\\
\\D = 
\left[
\begin{array}{c|c}
\mathbf{0}_{(n - p-1) \times p } & \Delta
\end{array}
\right]_{(n - p-1) \times n}\hspace{1pt},
\end{array}
\qquad 
{\beta} = 
\begin{bmatrix}
\alpha\\
s_1 \\
s_2 \\
\vdots \\
s_{p - 1} \\
\mu_1 \\
\mu_2 \\
\vdots \\
\mu_{n - p}
\end{bmatrix}_{n \times 1}
\] where here
\[
\Delta =
\begin{bmatrix}
-1 & 1 & 0 & \cdots & 0 \\
0 & -1 & 1 & \cdots & 0 \\
\vdots & \ddots & \ddots & \ddots & \vdots \\
0 & \cdots & 0 & -1 & 1
\end{bmatrix}_{(n - p-1) \times (n - p )}.
\]
In the following simulation 1000 datasets were generated with $n=1200$, all with seasonality of period 12 and no mean shifts, and a constant trend of $\alpha=0.005$. A representative dataset with three full representative seasons is given in Figure \ref{fig:Season-Group-Trend-Null-Simulation}.

\begin{figure}[H]
    \centering
        \textbf{Mean Shift With Trend, Seasonality, and No Changepoints}\\ \includegraphics[width=1\linewidth]{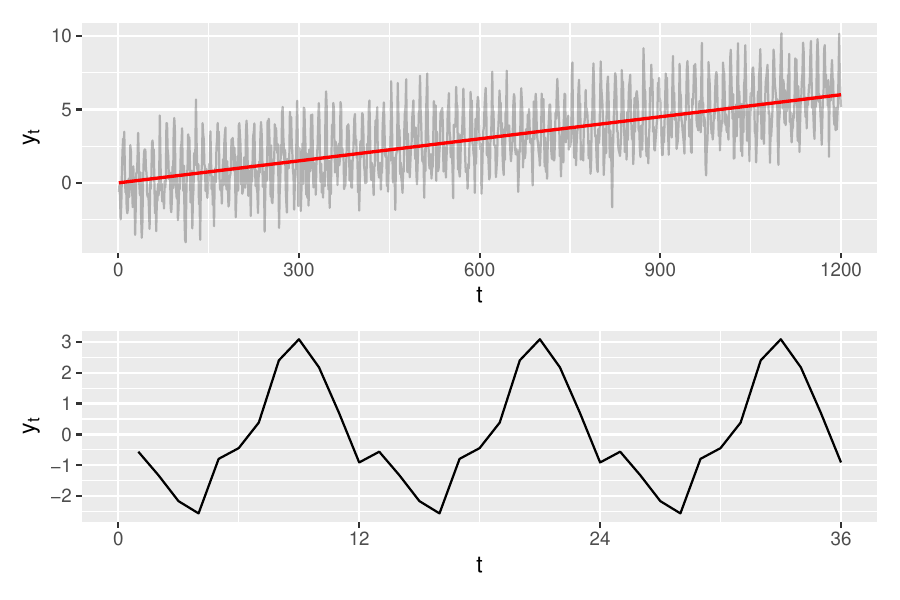}
    \caption[Simulation of IRFL for global trend + no mean shifts + seasonality estimation]{Top: Representative simulation of a mean shift model with global trend, seasonality and no changepoints. Bottom: Three periods of an example season.}
    \label{fig:Season-Group-Trend-Null-Simulation}
\end{figure}

As seen in the top panel of Figure~\ref{fig:Season-Group-Trend-Null}, fused lasso, adaptive fused lasso, and IRFL rarely introduce changepoints when none are present. The middle and bottom panels show that each method produces unbiased estimates of both the seasonal and trend components in the absence of changepoints.

\begin{figure}[H]
    \centering
        \textbf{Mean Shift With Trend, Seasonality, and No Changepoints}\\ \includegraphics[width=1\linewidth]{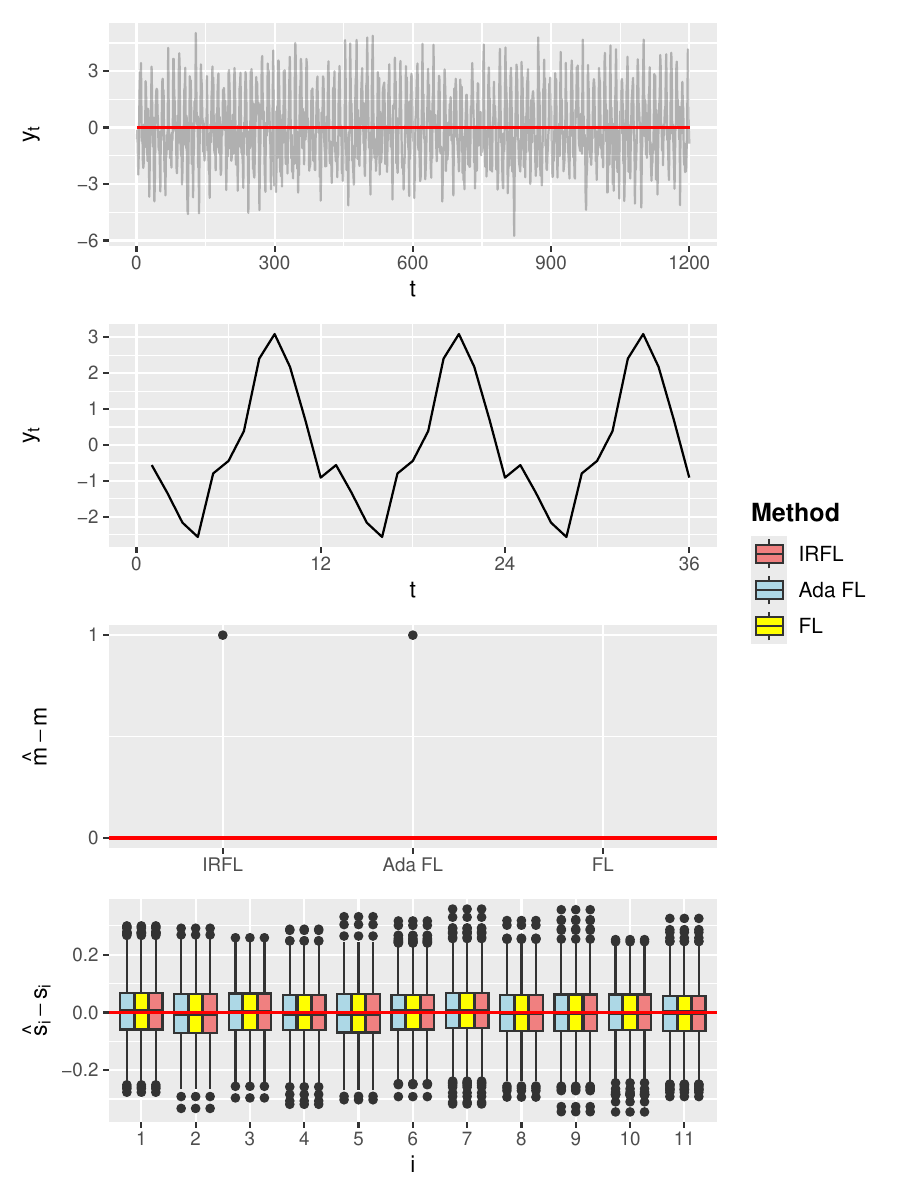}
    \caption[Results of IRFL for global trend + no mean shifts + seasonality estimation]{Top: Boxplot of centered sampling distribution of the number of estimated changepoints, $\hat m$. Middle: Boxplot of centered sampling distribution of the seasonality components $\hat s_i$ for the three methods. Bottom:  Boxplot of centered sampling distribution of trend estimate $\hat\alpha$ for the three methods.}
    \label{fig:Season-Group-Trend-Null}
\end{figure}

\subsubsection*{Scenario 14: Mean Shift Model with Seasonality, Trend and Three Changepoints}
In keeping with the established pattern, the model for the 14th scenario is the same as the 13th in \eqref{full model with trend}, but there are mean shifts present at $t=301, 601, 901$ with magnitude $2$ and alternating sign. The design and structured difference matrices and parameter vector are also unchanged from Scenario 13. 

A representative dataset along with three full periods of a representative seasonality component is given in Figure \ref{fig:Season-Group-Trend-Simulation}.

\begin{figure}[H]
    \centering
            \textbf{Mean Shift With Trend, Seasonality, and Three Changepoints}\\ \includegraphics[width=1\linewidth]{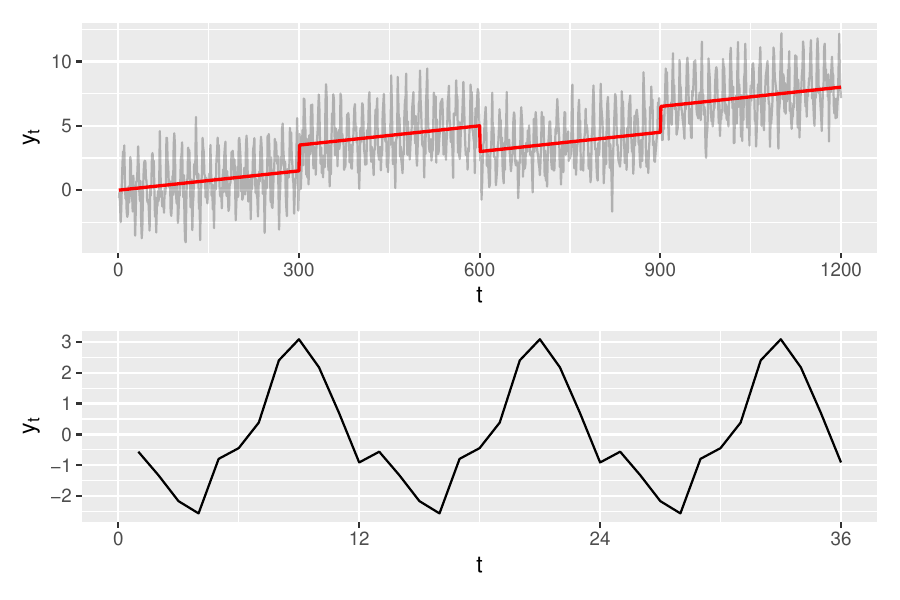}
    \caption[Simulation of IRFL for global trend + three mean shifts + seasonality estimation]{Top: Representative simulation of a mean shift model with global trend, seasonality and three changepoints. Bottom: Three periods of an example season.}
    \label{fig:Season-Group-Trend-Simulation}
\end{figure}

The results in Figure~\ref{fig:Season-Group-Trend-Plots} show that fused lasso, adaptive fused lasso, and IRFL remain unbiased estimators of the seasonal components, while IRFL provides the strongest performance for estimating changepoint locations. In addition, only IRFL yields an unbiased estimate of the global trend parameter. 

The top panel shows that the number of detected changepoints converges toward the true value as the IRFL iterations proceed, reducing the over-segmentation exhibited by the fused lasso at the initial iteration. The second-from-bottom panel confirms that all three methods recover the seasonal coefficients accurately. The bottom panel highlights the key distinction: the fused and adaptive fused lasso produce biased estimates of the trend when mean shifts and seasonality are present, whereas the reweighting steps in IRFL progressively remove this bias. This demonstrates that the iterative refinement in IRFL not only improves changepoint recovery but also yields empirically more accurate for larger $n$ estimation of global trend components.

\begin{figure}[H]
    \centering
            \textbf{Mean Shift With Trend, Seasonality, and Three Changepoints}\\ \includegraphics[width=.9\linewidth]{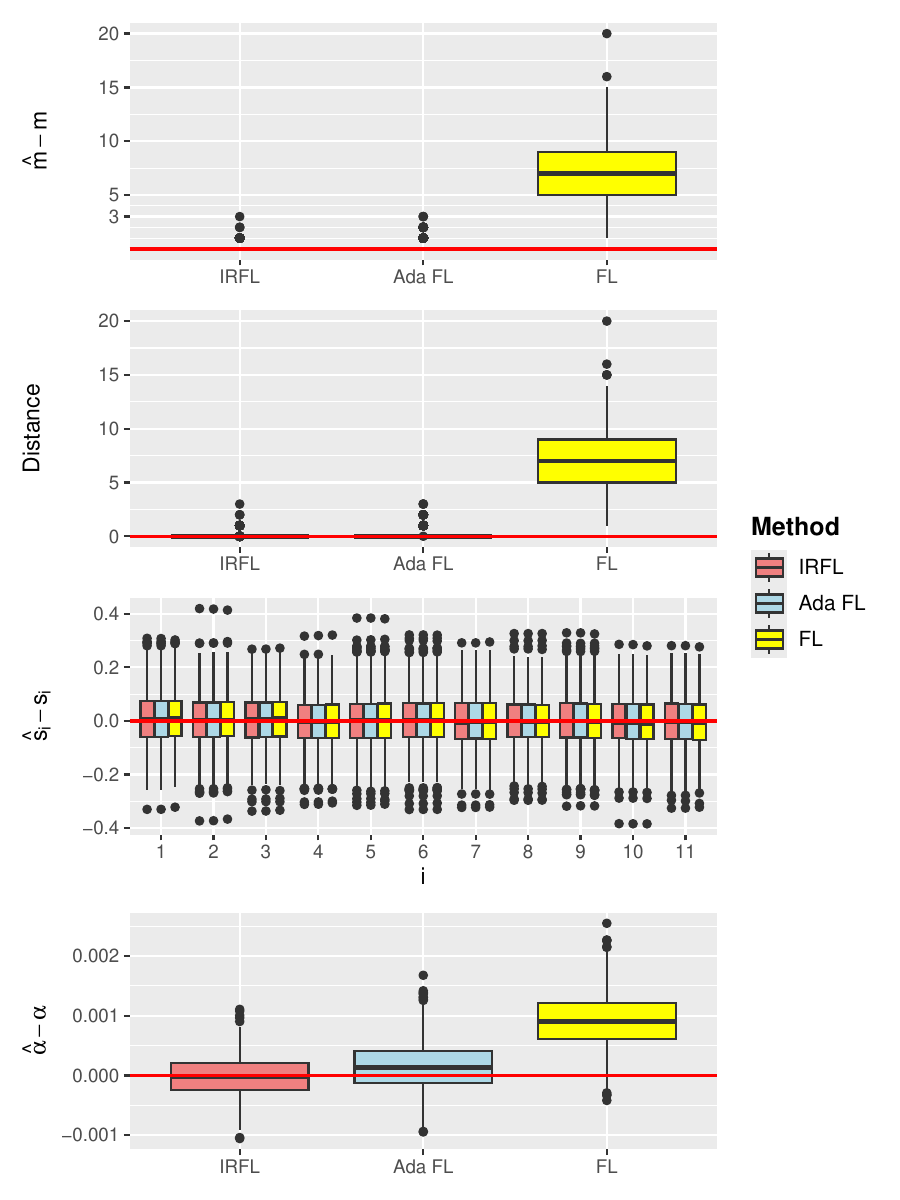}
    \caption[Results of IRFL for global trend + three mean shifts + seasonality estimation]{Top: Boxplot of centered sampling distribution of the number of estimated changepoints, $\hat m$. Second from Top: Boxplot of sampling distribution of the distance from the estimated changepoint vector $\hat{\bm\tau}$ to the true vector $\bm\tau$. Third from Top: Centered sampling distribution of the seasonality components $\hat s_i$ for the three methods. Bottom: Boxplot of centered sampling distribution of trend estimate, $\hat\alpha$.}
    \label{fig:Season-Group-Trend-Plots}
\end{figure}

\subsubsection*{Summary}
At the conclusion of Group 4, the evidence indicates that IRFL provides the most reliable overall performance when seasonality and (optionally) a global linear trend are present alongside mean shifts. In the seasonal–null setting (no mean shifts), fused lasso, adaptive fused lasso, and IRFL all exhibit excellent false‐detection control and yield unbiased estimates of the seasonal components; with an added global trend, all three also estimate the trend accurately in the absence of changepoints. When mean shifts are introduced, fused lasso tends to oversegment, whereas both adaptive fused lasso and IRFL typically recover the correct number and locations of changepoints, with IRFL achieving the closest alignment to the true changepoint vector (\emph{iterate to isolate!}). Moreover, under the joint presence of trend, seasonality, and changepoints, IRFL uniquely remains unbiased for the trend while retaining accurate seasonal estimation and superior localization of changepoints. Collectively, these results demonstrate that IRFL balances parsimony and accuracy under realistic seasonal and trending structures, and it is especially effective at the simultaneous estimation of seasonal patterns, trend, and mean‐shift locations. These findings inform the methodological conclusions that follow and set the stage for their deployment on real-world datasets, an example of which follows immediately in Section 5.

\subsection{Conclusions}
In this section, the Iteratively Reweighted Fused Lasso (IRFL) was introduced as a flexible and unifying framework for addressing changepoint location and estimation problems. By iteratively refining weights within a generalized lasso structure, IRFL applies to a wide range of signal types and underlying structures, offering improved performance in detecting both the number and location of changepoints over its unadapted counterparts. The method accommodates various assumptions about dependence structure and signal complexity, positioning it as a general-purpose tool within the growing changepoint analysis corpus.

Extensive empirical comparisons were conducted against leading methods, including dynamic programming approaches, penalized likelihood criteria, and other heuristic methods. Across these benchmarks, IRFL consistently demonstrated competitive or superior performance in terms of both the number and placement of changepoint locations as well as the estimation of global parameters. Taken together, these results support the value of IRFL as a unifying and effective approach to changepoint problems, with potential for further extension and integration into existing methods. In the remainder of this dissertation, these principles are instantiated on real data: first a long climate record with trend, seasonality, autocorrelation, and regime shifts; then two-dimensional imaging problems where “changepoints” appear as edges.

\section{Applications to Real-world Data}
The empirical results of the previous section motivate two complementary applications. 
First, a univariate time series with known seasonality and suspected structural breaks (Mauna Loa CO$_2$) illustrates how IRFL, combined with prewhitening, jointly estimates trend, curvature, and mean-shift locations under autocorrelation. 

Second, two-dimensional image denoising and segmentation demonstrate that the same regularization logic carries over when changepoints become edges on a grid. 
Across both settings, the emphasis is on faithful structure recovery with transparent model selection.

\subsection{\texorpdfstring{Analysis of Mauna Loa CO\textsubscript{2} Concentration Data}{Analysis of Mauna Loa CO2 Concentration Data}}

The Mauna Loa Observatory in Hawaii provides the longest continuous record of directly measured atmospheric carbon dioxide \citep{noaa_maunaloa_co2_trends}. Initiated in 1958 by Charles David Keeling, this dataset is foundational in climate science and can be found on the National Oceanic and Atmospheric Administration (NOAA) website. It contains monthly CO\textsubscript{2} data for a total of $n=768$ data points over 64 years. A plot of the CO\textsubscript{2} in parts per million (ppm) over time is given in Figure \ref{fig: Mauna Loa Data} below:
\begin{figure}[H]
    \centering
    \textbf{Mauna Loa CO$_2$ Dataset}
    \vspace{.2cm}
    \includegraphics[width=1\textwidth]{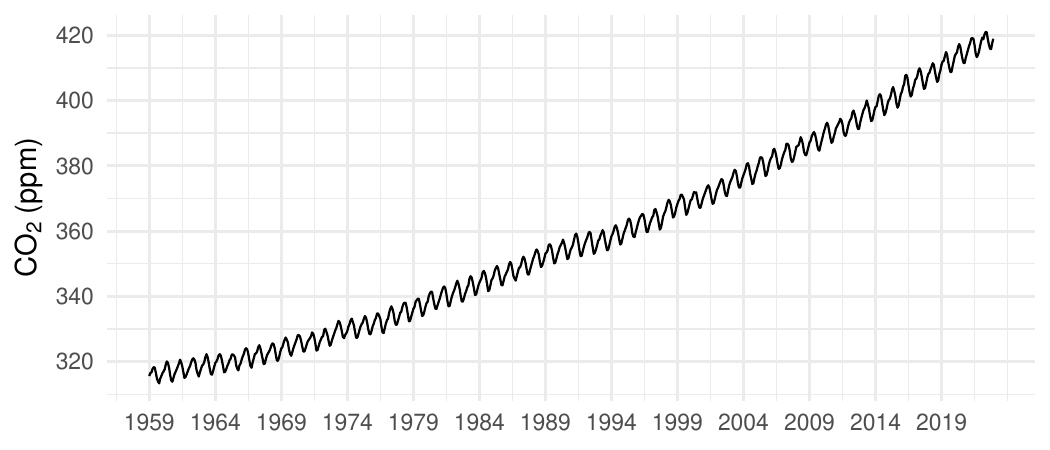}
    \caption[Plot of Mauna Loa Dataset]{Plot of Mauna Loa Dataset}
    \label{fig: Mauna Loa Data}
\end{figure}
Standard analyses, including the NOAA's own analysis \citep{noaa_crvfit}, commonly apply ordinary least squares (OLS) regression, often incorporating a quadratic term or higher to assess acceleration (curvature) and, in higher-frequency settings, seasonal terms \citep{beaulieu-2012-changepoint_CO2, robbins-2020-flexible_changepoint}. That is, earlier analyses fit models resembling
\begin{equation}
    \label{naive model}
    y_t = \mu + \beta_1 t + \beta_2 t^2 + s_t + \epsilon_t \qquad \epsilon_t\sim WN(0,\sigma^2)
\end{equation}
where $\mu$ represents the intercept, $\beta_1$ and $\beta_2$ represent the linear and quadratic trend, and $s_t$ represents the seasonality component with the usual sum-to-zero and periodicity constraints; $\sum_{t=1}^{12}s_t=0$ and $s_t=s_{t+p}$ for $1\le t\le n-12$ where $n$ is the number of observations. However, such modeling efforts do not account for serial correlation or structural breaks—both of which are known to bias trend estimation, and both of which are known to be common in climate data \citep{lund-1995-climatological}.

The climate statistics literature has long cautioned against ignoring serial correlation and structural breaks. It was shown in \citet{lund-2023-practices} that failing to model changepoints in autocorrelated data can result in incorrect inference such as inaccurate trend estimates. Nevertheless, many influential climate trend analyses still lean on naive OLS—downplaying internal variability, serial dependence, and model uncertainty---which can yield misleading inferences \citep{poppick-2016-trends, schotz-2025-rethinking}.

IRFL has the ability to estimate global parameters in the presence of mean shifts (see Section 4), and so it is better qualified to analyze this dataset than more simplistic methods.
Accordingly, we begin our analysis using IRFL under the naive assumption of no serial correlation, and then refine the analysis by incorporating an autoregressive (AR(1)) error structure to capture the temporal dependence explicitly. The model is

\begin{equation}\label{mauna loa model no correlation}
y_t = \begin{cases}
    \mu_1 + \beta_1 t + \beta_2 t^2 + s_t +\epsilon_t & \text{for } 1\le t\le 13\\
    \mu_1 + \beta_1 t + \beta_2 t^2 + s_t +\epsilon_t & \text{for } \tau_0 \leq t \leq \tau_1 - 1 \\
    \mu_2 + \beta_1 t + \beta_2 t^2 + s_t +\epsilon_t& \text{for } \tau_1 \leq t \leq \tau_2 - 1 \\
    \hspace{2cm}\vdots \\
    \mu_m + \beta_1 t + \beta_2 t^2 + s_t +\epsilon_t & \text{for } \tau_m \leq t \leq \tau_{m+1} - 1
    \end{cases}
\end{equation}
where the errors ${\epsilon_t}$ are independent and identically distributed with zero mean and variance $\sigma^2$. To ensure that the number of parameter estimated never exceeds $n=768$, the first 14 data points are ``fused" and cannot admit a changepoint. This ensures that all parameters can be uniquely estimated or ``identified". This is enforced by forcing the first 14 datapoints to share the same $\mu$, $\mu_1$, so that $\tau_0=14$. Likewise, to ensure the model ends at the $n^\text{th}$ data point, $\tau_{m+1}=n+1$ is another boundary condition. To model the seasonality with period $12$, it is required that $s_t=s_{t+12}$ for $1\leq t \leq 768-12$, and for a mean-zero seasonal effect the condition $\sum_{t=1}^{12}s_t=0$ is required. The design matrix $X$, difference matrix $D$ and parameter vector $\mu$ for IRFL are given below:

\begin{figure}[H]
\centering
\vspace{.2cm}
\[
\scalebox{0.85}{$
X = \left[
\begin{array}{ccccccccccccccccc}
1  & 1   & 1 & \cdot & \cdot & \cdot & \cdot & \cdot & \cdot & \cdot & \cdot & \cdot & \cdot & 1 & \cdot & \ldots & \cdot\\
2  & 4   & \cdot & 1 & \cdot & \cdot & \cdot & \cdot & \cdot & \cdot & \cdot & \cdot & \cdot & 1 & \cdot & \ldots & \cdot\\
3  & 9   & \cdot & \cdot & 1 & \cdot & \cdot & \cdot & \cdot & \cdot & \cdot & \cdot & \cdot & 1 & \cdot & \ldots & \cdot\\
4  & 16  & \cdot & \cdot & \cdot & 1 & \cdot & \cdot & \cdot & \cdot & \cdot & \cdot & \cdot & 1 & \cdot & \ldots & \cdot\\
5  & 25  & \cdot & \cdot & \cdot & \cdot & 1 & \cdot & \cdot & \cdot & \cdot & \cdot & \cdot & 1 & \cdot & \ldots & \cdot\\
6  & 36  & \cdot & \cdot & \cdot & \cdot & \cdot & 1 & \cdot & \cdot & \cdot & \cdot & \cdot & 1 & \cdot & \ldots & \cdot\\
7  & 49  & \cdot & \cdot & \cdot & \cdot & \cdot & \cdot & 1 & \cdot & \cdot & \cdot & \cdot & 1 & \cdot & \ldots & \cdot\\
8  & 64  & \cdot & \cdot & \cdot & \cdot & \cdot & \cdot & \cdot & 1 & \cdot & \cdot & \cdot & 1 & \cdot & \ldots & \cdot\\
9  & 81  & \cdot & \cdot & \cdot & \cdot & \cdot & \cdot & \cdot & \cdot & 1 & \cdot & \cdot & 1 & \cdot & \ldots & \cdot\\
10 & 100 & \cdot & \cdot & \cdot & \cdot & \cdot & \cdot & \cdot & \cdot & \cdot & 1 & \cdot & 1 & \cdot & \ldots & \cdot\\
11 & 121 & \cdot & \cdot & \cdot & \cdot & \cdot & \cdot & \cdot & \cdot & \cdot & \cdot & 1 & 1 & \cdot & \ldots & \cdot\\
12 & 144 & -1 & -1 & -1 & -1 & -1 & -1 & -1 & -1 & -1 & -1 & -1 & 1 & \cdot & \ldots & \cdot\\
13 & 169 & 1  & \cdot & \cdot & \cdot & \cdot & \cdot & \cdot & \cdot & \cdot & \cdot & \cdot & 1 & \cdot & \ldots & \cdot\\
14 & 196 & \cdot & 1  & \cdot & \cdot & \cdot & \cdot & \cdot & \cdot & \cdot & \cdot & \cdot & 1 & \cdot & \ldots & \cdot\\
15 & 225 & \cdot & \cdot & 1 & \cdot & \cdot & \cdot & \cdot & \cdot & \cdot & \cdot & \cdot & \cdot & 1 & \ldots & \cdot \\
\vdots & \vdots & \vdots & \vdots & \vdots & \vdots & \vdots & \vdots & \vdots & \vdots & \vdots & \vdots & \vdots & \vdots & \vdots & \ddots & \vdots\\
768 & 768^2 & -1 & -1 & -1 & -1 & -1 & -1 & -1 & -1 & -1 & -1 & -1 & \cdot  & \cdot & \ldots & 1 \\
\end{array}
\right]_{768\times768}
$}
\]
\vspace{.2cm}
\end{figure}

\begin{figure}[H]
\centering
\vspace{.1cm}
\[
\scalebox{0.9}{$
D =
\left[\begin{array}{ccccccccccccc|cccccccc}
\cdot & \cdot & \cdot & \cdot & \cdot & \cdot & \cdot & \cdot & \cdot & \cdot & \cdot & \cdot & \cdot & -1 & 1 & \cdot & \cdot & \cdot & \cdot & \cdot & \cdot \\
\cdot & \cdot & \cdot & \cdot & \cdot & \cdot & \cdot & \cdot & \cdot & \cdot & \cdot & \cdot & \cdot & \cdot & -1 & 1 & \cdot & \cdot & \cdot & \cdot & \cdot \\
\cdot & \cdot & \cdot & \cdot & \cdot & \cdot & \cdot & \cdot & \cdot & \cdot & \cdot & \cdot & \cdot & \cdot & \cdot & -1 & 1 & \cdot & \cdot & \cdot & \cdot \\
   &        &        &        &        &        &    \vdots    &        &        &        &        &        &        &        &        & \ddots & \ddots & \ddots &        &        &        \\
\cdot & \cdot & \cdot & \cdot & \cdot & \cdot & \cdot & \cdot & \cdot & \cdot & \cdot & \cdot & \cdot & \cdot & \cdot & \cdot & \cdot & -1 & 1 & \cdot & \cdot \\
\cdot & \cdot & \cdot & \cdot & \cdot & \cdot & \cdot & \cdot & \cdot & \cdot & \cdot & \cdot & \cdot & \cdot & \cdot & \cdot & \cdot & \cdot & -1 & 1 & \cdot \\
\cdot & \cdot & \cdot & \cdot & \cdot & \cdot & \cdot & \cdot & \cdot & \cdot & \cdot & \cdot & \cdot & \cdot & \cdot & \cdot & \cdot & \cdot & \cdot & -1 & 1 \\
\end{array}\right]_{754\times768}
$}
\]
\end{figure}
\[
\mu =
\left[
\beta_1 \;\; \beta_2 \;\; s_1 \;\; s_2 \;\; \cdots \;\; s_{11} \;\; \mu_1 \;\; \mu_2 \;\; \cdots \;\; \mu_{755}
\right]^\top.
\]
IRFL was run on the above model, and a parameter vector was found under the naive assumption that there was no autocorrelation. However, a test for the significance of an AR(1) parameter was highly significant, indicating the presence of serial dependence in the data. 

Accordingly, the model to be estimated is not \eqref{mauna loa model no correlation} but the following:

\begin{equation}\label{mauna loa model with correlation}
y_t = \begin{cases}
\mu_1 + \beta_1 t + \beta_2 t^2 + s_t +\eta_t & \text{for } t=1\\
    \mu_1 + \beta_1 t + \beta_2 t^2 + s_t + \eta_t, & \text{for } 2\le t\le 13\\
    \mu_1 + \beta_1 t + \beta_2 t^2 + s_t +\eta_t, & \text{for } \tau_0 \leq t \leq \tau_1 - 1 \\
    \mu_2 + \beta_1 t + \beta_2 t^2 + s_t +\eta_t, & \text{for } \tau_1 \leq t \leq \tau_2 - 1 \\
    \hspace{2.25cm}\vdots \\
    \mu_m + \beta_1 t + \beta_2 t^2 + s_t +\eta_t, & \text{for } \tau_m \leq t \leq \tau_{m+1} - 1
    \end{cases}
\end{equation}
where $\eta_t=\phi\eta_{t-1}+\epsilon_t$ for $t\ge2$, $\eta_1=\epsilon_1$ and the $\epsilon_t$ are IID white noise. Because IRFL does not natively accommodate autoregressive error structures outside of the basic mean-shift case, we adopt a prewhitening approach (though it should be said, treating the above as an LS model rather than an IO model similar to Scenarios 4 and 5 from Section 2 in which the AR(1) error structure is estimated by incorporating a $\phi y_{t-1}$ term, would be a valid approach). For each $\phi$ in a grid over $(0,1)$, the data are transformed via a Cholesky factorization:  
\begin{equation*}
    \tilde y_{\phi} = L_\phi y,
\end{equation*}
where $L_\phi$ is the unique lower-triangular Cholesky factor satisfying
\begin{equation*}
    L_\phi L_\phi^{\top} = R_\phi^{-1}.
\end{equation*}
Here, $R_\phi$ denotes the AR(1) correlation matrix with entries
\begin{equation}
\label{AR1_corr}
    [R_\phi]_{ij} = \phi^{|i-j|}, \qquad 1 \leq i,j \leq n.
\end{equation}
This transformation decorrelates the error structure so that subsequent estimation can proceed under approximate independence.

Using the true value of $\phi$, this transformation ``whitens'' the autocorrelated errors, rendering them approximately independent and identically distributed. As a result, changepoint detection can be performed using IRFL on the transformed data \( \tilde y_{\phi} \), under the assumption of uncorrelated residuals. As a technical note, the prewhitening transformation inevitably alters the apparent magnitude and timing of changepoints, since it introduces dependence among adjacent observations. A changepoint corresponds to a discontinuity in the mean, \( \mu_t - \mu_{t-1} \neq 0 \). Under prewhitening, the transformed process satisfies \( \tilde{y}_t = y_t - \phi y_{t-1} \), so a mean shift of size \( \Delta_t = \mu_t - \mu_{t-1} \) becomes \( \tilde{\Delta}_t = \Delta_t - \phi \Delta_{t-1} \) in the whitened series. This transformation scales the jump magnitude by roughly \( 1 - \phi \) and slightly diffuses its effect across neighboring time points, which can blur the boundaries between adjacent regimes. Nevertheless, the discontinuities remain centered near the same indices \( t = \tau_k \), so the changepoints estimated from \( \tilde{y}_{\phi} \) typically occur in close proximity to those in the original data \( y \). In practice, a brief local search around each detected changepoint is sufficient to recover the configuration that minimizes \(\mathrm{BIC}_{\phi}\).

The criterion \(\mathrm{BIC}_{\phi}\) arises from the Gaussian log-likelihood of the model with AR(1) errors. Specifically, when the error covariance is given by \( \sigma^2 R_{\phi} \), the log-likelihood includes an additional term \(\log \det(R_{\phi})\) that penalizes the correlation structure. After prewhitening, the usual sum of squared errors term, \(\text{SSE}_{\phi}\), corresponds to the residual sum of squares computed on the decorrelated data, and \(k\) denotes the effective degrees of freedom associated with the fitted parameters (seasonal, linear, quadratic, and segmental). Combining these components yields

\begin{equation}
\label{BIC AR1}
\text{BIC}_\phi = n \log\left(\frac{\text{SSE}_\phi}{n}\right) + k \log(n) + \log \det(R_\phi),
\end{equation}
where \( R_\phi \) is the AR(1) correlation matrix. The full derivation of \eqref{BIC AR1}, including its connection to the generalized least squares likelihood, is provided in Appendix~\ref{app:bic}.

Using $\tilde y_{\phi}$ in place of $y_t$, we fit the model \eqref{mauna loa model no correlation} via IRFL, with the hope that the whitened series $\{\tilde y_\phi\}$ will yield approximately the same changepoints as the original $\{y_t\}$. For each value of $\phi$ in the grid, IRFL was run across iterations to generate candidate models. Among these, the model with the lowest $\text{BIC}_\phi$ was retained, and the overall optimal model was chosen as the one attaining the lowest $\text{BIC}_\phi$ across all $\phi$. Because this procedure involves a grid search, the effective standard error of $\text{BIC}_\phi$ is inflated, so the results should be regarded as suggestive rather than definitive. Figure \ref{fig: Mauna Loa Regression} gives a plot of the data, de-seasoned trend (the ``signal" in red), and estimated changepoints from the selected model as dashed vertical lines corresponding to the index where a changepoint occurs.

\begin{figure}[H]
    \centering
    \textbf{Mauna Loa CO$_2$ dataset esimated by IRFL}
    \vspace{.2cm}
    \includegraphics[width=1\textwidth]{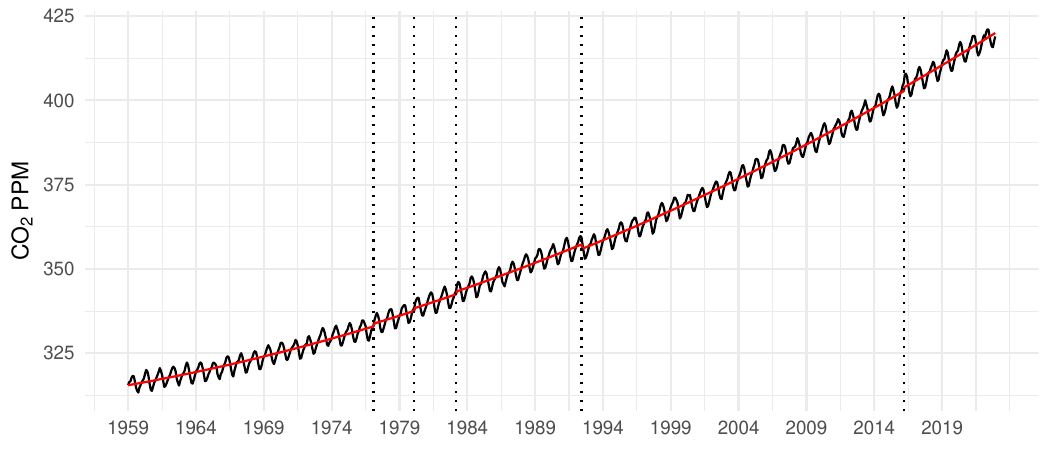}
    \caption[Plot of Mauna Loa Dataset with changepoints]{Plot of Mauna Loa dataset with changepoints and deseasoned regression line in red}
    \label{fig: Mauna Loa Regression}
\end{figure}
\noindent There are five estimated changepoints in the above plot, and all five of them are well-supported in the climatology literature, most notably through the two-way link between warming and CO\textsubscript{2} levels \citep{IPCC-2001-Chapter3}:
\begin{itemize}
    \item \textbf{February 1977} – A heating event which coincides with the widely studied 1976/77 Pacific climate regime shift, likely driven by El Ni\~no conditions \citep{hare-2000-empirical,miller-1994-climate}.
    \item \textbf{February 1980} – A heating event associated with Autumn 1979 to Winter 1979/1980 El Ni\~no \citep{li-2019-elNino1979_80}.
    \item \textbf{March 1983} – A heating event following one of the strongest El Ni\~no events in recorded history \citep{hoerling-1997-ENSO}.
    \item \textbf{June 1992} – A cooling event which occurs one year after the June 1991 eruption of Mount Pinatubo; 1992–1994 is widely regarded as a volcanic disturbance interval \citep{hansen-1992-pinatubo}.
    \item \textbf{March 2016} – A heating event aligned with a major El Ni\~no event, among the strongest on record \citep{lheureux-2017-ENSO}.  
\end{itemize}
To ensure that the reported configuration corresponds to a true local minimum of \(\mathrm{BIC}_{\phi}\), a grid search was conducted in the neighborhood of each estimated changepoint. Specifically, the indices of all detected changepoints were permuted by \(\pm 12\) months in all possible combinations, and the model was refitted for each configuration. None of these perturbations yielded a lower \(\mathrm{BIC}_{\phi}\), confirming that the identified changepoints represent the locally optimal solution.

Having identified the optimal changepoint configuration and corresponding AR(1) model, the next step is to evaluate the statistical significance and robustness of the estimated trend components. In particular, we seek to determine whether the observed acceleration in CO\textsubscript{2} concentration is statistically significant and to quantify the uncertainty in the estimated rate of change at both the beginning and end of the record. To this end, the following subsection details the inferential procedures used to test model parameters and construct confidence intervals under the fitted AR(1) error structure.

\subsubsection*{Methods}

The purpose of this subsection is to detail the inferential procedures used to quantify uncertainty in the estimated model components. In particular, we assess the statistical significance of the linear and quadratic trend terms and derive confidence intervals for the instantaneous rate of change at both the beginning and end of the record. These derivations and their implementation under an AR(1) error model are presented below, followed by the corresponding parameter estimates and inference results in the next subsection.

Both the estimation of changepoint locations in the above grid search and subsequent parameter estimation and inference were performed using the \texttt{gls()} function from the \texttt{nlme} package \citep{pinheiro-bates-nlme} in \texttt{R}, which implements generalized least squares under a user-specified error correlation structure. In this case, the residual correlation was modeled as AR(1) with parameter $\phi$, estimated jointly with $\beta$ and $\sigma^2$ by maximizing the Gaussian likelihood. For each fixed-effect coefficient, \texttt{gls()} reports a Wald $t$-statistic
\[
t_j = \frac{\hat{\beta}_j}{\mathrm{SE}(\hat{\beta}_j)},
\]
where $\mathrm{SE}(\hat{\beta}_j)$ is obtained from the estimated covariance matrix
\[
\widehat{\mathrm{Var}}(\hat{\beta}^{\mathrm{GLS}}) = \hat{\sigma}^2 (X^\top \hat{R}_\phi^{-1} X)^{-1}.
\]
For the slope ($j=1$) and quadratic acceleration ($j=2$) terms, the null and alternative hypotheses tested are
\[
H_0: \beta_j = 0, \qquad H_A: \beta_j \neq 0.
\]
The AR(1) parameter \(\phi\) was first selected by minimizing \(\mathrm{BIC}_{\phi}\) across a grid of candidate values, as described above. Conditional on the selected value, \(\phi\) was then re-estimated jointly with the regression coefficients \(\beta\) and variance \(\sigma^2\) by maximizing the Gaussian likelihood within the \texttt{gls()} framework, holding the changepoint configuration fixed at that obtained under the selected \(\phi\). The standard error of \(\hat{\phi}\) was computed from the inverse observed information matrix (the negative inverse Hessian evaluated at the joint MLE) for all model parameters.

 The test for serial correlation used
\[
H_0: \phi = 0, \qquad H_A: \phi \neq 0.
\]
The $95\%$ confidence interval for $\phi$ is then $\hat{\phi} \pm 1.96\,\mathrm{SE}(\hat{\phi})$ under the assumption of asymptotic normality of the MLE. Because the time index $t$ in \eqref{naive model} is in months, the deseasoned trend is $m(t) = \mu + \beta_1 t + \beta_2 t^2$, giving an instantaneous rate of change with respect to months $m'(t) = \beta_1 + 2\beta_2 t$. The yearly rate is obtained by multiplying by 12:
\[
r(t) = 12\big(\beta_1 + 2\beta_2 t\big).
\]
Let $\hat b = (\hat{\beta}_1,\hat{\beta}_2)^\top$ and let $\widehat{\mathrm{Var}}(\hat b)$ be the $2\times2$ submatrix of $\widehat{\mathrm{Var}}(\hat{\beta}^{\mathrm{GLS}})$ corresponding to $\beta_1,\beta_2$. The gradient vector of $r(t)$ with respect to $(\beta_1,\beta_2)$ is $g(t) = [\,12,\,24t\,]^\top$, and by the variance of a linear combination of random variables the standard error of $\hat r(t)$ is
\[
\mathrm{SE}\{\hat r(t)\} = \sqrt{\,g(t)^\top\,\widehat{\mathrm{Var}}(\hat b)\,g(t)\,} = 12\,\sqrt{\widehat{\mathrm{Var}}(\hat{\beta}_1) + 4t^2\,\widehat{\mathrm{Var}}(\hat{\beta}_2) + 4t\,\widehat{\mathrm{Cov}}(\hat{\beta}_1,\hat{\beta}_2)}.
\]
A $95\%$ confidence interval for the yearly rate is then $\hat r(t) \pm 1.96\,\mathrm{SE}\{\hat r(t)\}$.

\subsubsection*{Results}
The estimated AR(1) parameter was highly significant, $\hat{\phi} = 0.765 \pm 0.054$ (95\% CI). The estimated linear trend was $0.710$ ppm CO$_2$/year, and the estimated quadratic coefficient was $0.0134$ ppm/year$^2$, both with $p \ll 10^{-16}$. This corresponds to a yearly rate of $0.7101 \pm 0.0353$ ppm/year at the start of the dataset and $2.5432 \pm 0.0319$ ppm/year in 2025, closely matching NOAA’s decadal averages of approximately 0.7 ppm/year in the 1960s and 2.4 ppm/year from 2010–2020 \citep{noaa_maunaloa_co2_trends}.

\subsubsection*{Conclusions}

The changepoints detected by IRFL coincide with well-documented climatic events such as El Ni\~no episodes and volcanic disturbances. By explicitly modeling these structural breaks, the analysis avoids misattributing abrupt shifts in the mean level to long-run acceleration, thereby ensuring that the estimated curvature in the trend reflects genuine acceleration rather than regime artifacts.  

Three broader conclusions follow. First, naive OLS regression—even when augmented with seasonal and quadratic terms—cannot adequately capture the statistical realities of climate data, which typically exhibit both autocorrelation and structural breaks. Second, the IRFL framework, extended with prewhitening to account for AR(1) dependence, provides a principled means of jointly estimating seasonality, piecewise means, and long-run trend. Third, the resulting acceleration estimates are not only statistically sound but also consistent with NOAA’s established benchmarks, strengthening confidence in their interpretation.  

In sum, this segmented and autocorrelated modeling approach delivers both improved interpretability and more faithful inference. It demonstrates that rigorous treatment of climate time series structure can reconcile trend and acceleration estimates with consensus findings while offering a more transparent account of the mechanisms driving observed changes in atmospheric CO$_2$.  

The Mauna Loa analysis shows IRFL’s capacity to separate long-run trend from regime shifts in the presence of seasonality and autocorrelation. An analogous separation problem arises in two dimensional image analysis: edges (discontinuities) must be preserved while within-region noise is suppressed. The next subsection recasts one-dimensional changepoints as edges on a lattice and applies the same reweighting logic to total-variation regularization for images.

\subsection{IRFL on Image Denoising and Segmentation}

While previous results have focused on estimating changepoint models in univariate time series, the same underlying principle—that structure can be recovered by penalizing differences across neighboring values—extends naturally to two-dimensional images and even higher-dimensional datasets. In this context, the goal is to recover a piecewise smooth (or constant) image while preserving sharp transitions that correspond to edges. In this paradigm, the observed image is first vectorized. Suppose the image is an \( R \times C \) grayscale matrix ($R$ rows and $C$ columns), where each entry corresponds to the intensity of a pixel. It is represented as a vector \( y \in \mathbb{R}^{RC} \) by stacking the columns of the image matrix on top of one another. This is known as column-major order, where the first \( R \) entries of \( y \) correspond to the first column of the image, the next \( R \) entries to the second column, and so on. Formally, if the original image is denoted \( Y = [Y_{rc}] \) with \( r = 1, \dots, R \) and \( c = 1, \dots, C \), then the vectorized form \( y \) is defined by
\begin{equation}\label{column major order}
y_{(c-1)R + r} = Y_{rc}.
\end{equation}
Because $c$ is the column index and each column has $R$ rows, $y_{(c-1)R}$ always corresponds to the last pixel of the $c^\text{th}$ column of the image. Therefore, to proceed to the $r^\text{th}$ row of the $(c+1)^\text{st}$ column, one must progress $r$ more indices, giving \eqref{column major order}.

Representing the image in column-major vectorized form allows us to express the total variation penalty. In this case, it is an \( \ell_1 \) norm applied to the differences between adjacent pixels in the generalized lasso framework. This is achieved by introducing a structured difference matrix \( D  \), where each row corresponds to either a horizontal or vertical difference between neighboring pixels. Because there are $R(C-1)$ horizontal differences and $C(R-1)$ vertical differences, and because $y$ is an $RC\times1$ vector, $D\in \mathbb{R}^{(R(C-1) + C(R-1)) \times RC}$. When multiplied by the image vector \( \hat\mu \), the product \( D\hat\mu \) yields a vector of all first-order horizontal and vertical differences in the image. 

It is perhaps easiest to get a sense of this $D$ matrix using an example. Let $Y$ be a $4$ pixel $\times 4$ pixel image, and let $y$ be the vector resulting from vectorizing $Y$ in column-major order. There are $4 \times 3 = 12$ vertical differences and $4 \times 3 = 12$ horizontal differences, for a total of 24 rows and 16 columns in $D$. The $D$ matrix can be written as
\[
D = \left[
\begin{array}{c}
D_{{\text{vert}}} \\
\hline
D_{{\text{horiz}}}
\end{array}
\right]
\]
where $D_{\text{vert}}$ and $D_{\text{horiz}}$ refer to sub-matrices whose rows correspond to vertical and horizontal differences of $y$, respectively. Below is the complete $D$ matrix composed of both vertical and horizontal difference blocks:
\begin{equation}\label{Dmatrix}
D = \left[
\begin{array}{cccccccccccccccc}
-1 &  1 &  \cdot & \cdot & \cdot & \cdot & \cdot & \cdot & \cdot & \cdot & \cdot & \cdot & \cdot & \cdot & \cdot & \cdot \\
\cdot & -1 &  1 &  \cdot & \cdot & \cdot & \cdot & \cdot & \cdot & \cdot & \cdot & \cdot & \cdot & \cdot & \cdot & \cdot \\
\cdot & \cdot & -1 &  1 &  \cdot & \cdot & \cdot & \cdot & \cdot & \cdot & \cdot & \cdot & \cdot & \cdot & \cdot & \cdot \\
\cdot & \cdot & \cdot & \cdot & -1 &  1 &  \cdot & \cdot & \cdot & \cdot & \cdot & \cdot & \cdot & \cdot & \cdot & \cdot \\
\cdot & \cdot & \cdot & \cdot & \cdot & -1 &  1 &  \cdot & \cdot & \cdot & \cdot & \cdot & \cdot & \cdot & \cdot & \cdot \\
\cdot & \cdot & \cdot & \cdot & \cdot & \cdot & -1 &  1 &  \cdot & \cdot & \cdot & \cdot & \cdot & \cdot & \cdot & \cdot \\
\cdot & \cdot & \cdot & \cdot & \cdot & \cdot & \cdot & \cdot & -1 &  1 &  \cdot & \cdot & \cdot & \cdot & \cdot & \cdot \\
\cdot & \cdot & \cdot & \cdot & \cdot & \cdot & \cdot & \cdot & \cdot & -1 &  1 &  \cdot & \cdot & \cdot & \cdot & \cdot \\
\cdot & \cdot & \cdot & \cdot & \cdot & \cdot & \cdot & \cdot & \cdot & \cdot & -1 &  1 &  \cdot & \cdot & \cdot & \cdot \\
\cdot & \cdot & \cdot & \cdot & \cdot & \cdot & \cdot & \cdot & \cdot & \cdot & \cdot & \cdot & -1 &  1 &  \cdot & \cdot \\
\cdot & \cdot & \cdot & \cdot & \cdot & \cdot & \cdot & \cdot & \cdot & \cdot & \cdot & \cdot & \cdot & -1 &  1 & \cdot \\
\cdot & \cdot & \cdot & \cdot & \cdot & \cdot & \cdot & \cdot & \cdot & \cdot & \cdot & \cdot & \cdot & \cdot & -1 & 1 \\
\hline
-1 & \cdot & \cdot & \cdot & 1 & \cdot & \cdot & \cdot & \cdot & \cdot & \cdot & \cdot & \cdot & \cdot & \cdot & \cdot \\
\cdot & -1 & \cdot & \cdot & \cdot & 1 & \cdot & \cdot & \cdot & \cdot & \cdot & \cdot & \cdot & \cdot & \cdot & \cdot \\
\cdot & \cdot & -1 & \cdot & \cdot & \cdot & 1 & \cdot & \cdot & \cdot & \cdot & \cdot & \cdot & \cdot & \cdot & \cdot \\
\cdot & \cdot & \cdot & -1 & \cdot & \cdot & \cdot & 1 & \cdot & \cdot & \cdot & \cdot & \cdot & \cdot & \cdot & \cdot \\
\cdot & \cdot & \cdot & \cdot & -1 & \cdot & \cdot & \cdot & 1 & \cdot & \cdot & \cdot & \cdot & \cdot & \cdot & \cdot \\
\cdot & \cdot & \cdot & \cdot & \cdot & -1 & \cdot & \cdot & \cdot & 1 & \cdot & \cdot & \cdot & \cdot & \cdot & \cdot \\
\cdot & \cdot & \cdot & \cdot & \cdot & \cdot & -1 & \cdot & \cdot & \cdot & 1 & \cdot & \cdot & \cdot & \cdot & \cdot \\
\cdot & \cdot & \cdot & \cdot & \cdot & \cdot & \cdot & -1 & \cdot & \cdot & \cdot & 1 & \cdot & \cdot & \cdot & \cdot \\
\cdot & \cdot & \cdot & \cdot & \cdot & \cdot & \cdot & \cdot & -1 & \cdot & \cdot & \cdot & 1 & \cdot & \cdot & \cdot \\
\cdot & \cdot & \cdot & \cdot & \cdot & \cdot & \cdot & \cdot & \cdot & -1 & \cdot & \cdot & \cdot & 1 & \cdot & \cdot \\
\cdot & \cdot & \cdot & \cdot & \cdot & \cdot & \cdot & \cdot & \cdot & \cdot & -1 & \cdot & \cdot & \cdot & 1 & \cdot \\
\cdot & \cdot & \cdot & \cdot & \cdot & \cdot & \cdot & \cdot & \cdot & \cdot & \cdot & -1 & \cdot & \cdot & \cdot & 1
\end{array}
\right].
\end{equation}
The first three rows of $D$ correspond to the vertical differences in the first column of the image (differences between the pixels of the image in the first column, rows $1$--$2$, $2$--$3$, and $3$--$4$). The next three rows correspond to the vertical differences in the second column, and so on across all four columns. 

In contrast, the rows of $D_{\text{horiz}}$ encode horizontal differences. Because the image is vectorized in column-major order, moving from a pixel to its horizontal neighbor (same row, next column) requires advancing by the number of rows---in this case, $4$. Thus each row of $D_{\text{horiz}}$ places a $-1$ and $+1$ exactly four indices apart, creating the offset pattern visible in the lower half of $D$.

Using \eqref{Dmatrix} and vectorized $y$, one can find 
\begin{equation}\label{genlasso2d}
    \hat{\mu} = \underset{\mu\in\mathbbm{R}^{RC}}{\text{arg min }} \left\{\|y-\mu\|_2^2 + \lambda\|D\mu\|_1\right\}
\end{equation}
where $\hat{\mu}$ represents a vectorized estimate of the denoised image. This is just the familiar genlasso, and therefore the IRFL can be used to enforce sharp edges and piecewise constant segments through iteration.

\subsubsection*{Adapting IRFL to the Images}

In univariate time series, IRFL efficiently computes a full regularization path using the generalized lasso framework, iteratively reweighting the $\ell_1$ penalty to recover changepoint locations and enforce piecewise constancy. Extending this principle to images, such as $128 \times 128$ images (16,384 pixels), introduces substantial computational challenges: the difference matrix $D$ has tens of thousands of rows, and computing an entire solution path for all $\lambda$ values becomes prohibitive. As a practical concession, we solve \eqref{genlasso2d} for individual values of $\lambda$ using the total variation alternating direction method of multipliers (TV-ADMM) \citep{boyd-2011-admm}, detailed in Appendix \ref{app:pseudocode}. The 2D IRFL proceeds by combining this TV-ADMM solver with iterative reweighting: at iteration $i$, the objective is
\[
\hat\mu^{(i)} = \underset{\mu \in \mathbb{R}^{RC}}{\text{arg min }} \left\{ \frac{1}{2} \|y - \mu\|_2^2 + \lambda \sum_{j=1}^{p} \hat w_j^{(i)} |(D\mu)_j| \right\},
\]
where $p = R(C-1) + C(R-1)$ is the number of edges in the 2D grid and the weights are updated as
\[
\hat w_j^{(i)} = \frac{1}{|(D \hat\mu^{(i-1)})_j| + \varepsilon_n}, \qquad \text{small }\varepsilon_n > 0.
\]
As $(D\hat\mu^{(i-1)})_j$ represents a vertical or horizontal pixel difference, the weight $\hat w_j^{(i)}$ directly reflects the local contrast. When $(D\hat\mu^{(i-1)})_j \approx 0$, the corresponding weight becomes large $(\hat w_j^{(i)} \approx \varepsilon_n^{-1})$, so the next iteration strongly penalizes any deviation from equality across that edge, enforcing local smoothness. Conversely, when $(D\hat\mu^{(i-1)})_j$ is large, the associated weight is small, reducing the penalty and allowing a distinct vertical or horizontal boundary to persist.

By \eqref{0-1 convergence}, the iteratively reweighted penalty converges to a 0/1 penalty on changes in $\mu$:
\[
\sum_{j=1}^{p} \hat w_j^{(i)} |(D\mu)_j|
\;\longrightarrow\;
\sum_{j=1}^{p} \mathbf{1}\{(D\mu)_j \neq 0\}.
\]
In one dimension, $\mathbf{1}\{(D\mu)_j \neq 0\}$ identifies changepoints. In two dimensions, each index $j$ corresponds to a horizontal or vertical adjacency in the image grid, so $\mathbf{1}\{(D\mu)_j \neq 0\}$ flags a local jump across that grid edge. Summing over $j$ therefore measures the \emph{total length of all boundaries} between constant regions, i.e., the discrete perimeter of the segmentation. Since the penalty drives $(D\mu)_j = 0$ within each region, the resulting estimate is piecewise constant with sharp transitions only where supported by the data, yielding a characteristic ``stained glass window'' structure.

To compute these estimates in practice, the TV-ADMM solver is applied across a small grid of regularization values,
\[
\lambda_1 > \lambda_2 > \dots > \lambda_L,
\]
selected to balance edge detection and smoothness (for grayscale images with intensities in $[0,1]$, it is usually enough to set $0\le\lambda_\ell\le0.5$). For each $\lambda_\ell$, IRFL is iterated to convergence, yielding a weighted TV-ADMM solution at every iteration. A standard implementation uses $L = 5$ values spaced between $\lambda_{\max}$ and $\lambda_{\min}$.

Once the solution set \( \{ \mu_{\lambda_\ell} \}_{\ell=1}^L \) has been computed, model selection can be performed by visual inspection.  The full 2D IRFL algorithm is summarized in Appendix \ref{app:pseudocode}.

\subsubsection*{2D IRFL on Grayscale Images}
To demonstrate the effectiveness of IRFL for denoising and edge detection, we tested the 2D IRFL algorithm on several images. Figure \ref{fig:man with camera} features the well-known ``man with a camera" image, commonly used in image processing libraries such as scikit-image. This example illustrates the algorithm’s ability to preserve sharp edges while significantly reducing noise. The second example in Figure \ref{fig:S image} uses the ``S" image from \citet{tibshirani-2011-solutiongenlasso}. In both cases, a grid of images is presented to visualize the denoising progression across IRFL iterations. The first column is the original image and the second is the image with injected white noise. The first iteration of IRFL is done without any adaptive reweighting, and so its representation in Figures \ref{fig:man with camera} and \ref{fig:S image} in the third column is denoted ``FL", for the fused lasso. In a similar way, the second iteration of IRFL is the adafused lasso and so the fourth column is labelled ``Ada FL". The rows correspond to increasing $\lambda$.
\begin{figure}[H]
    \centering
    \includegraphics[width=1\textwidth]{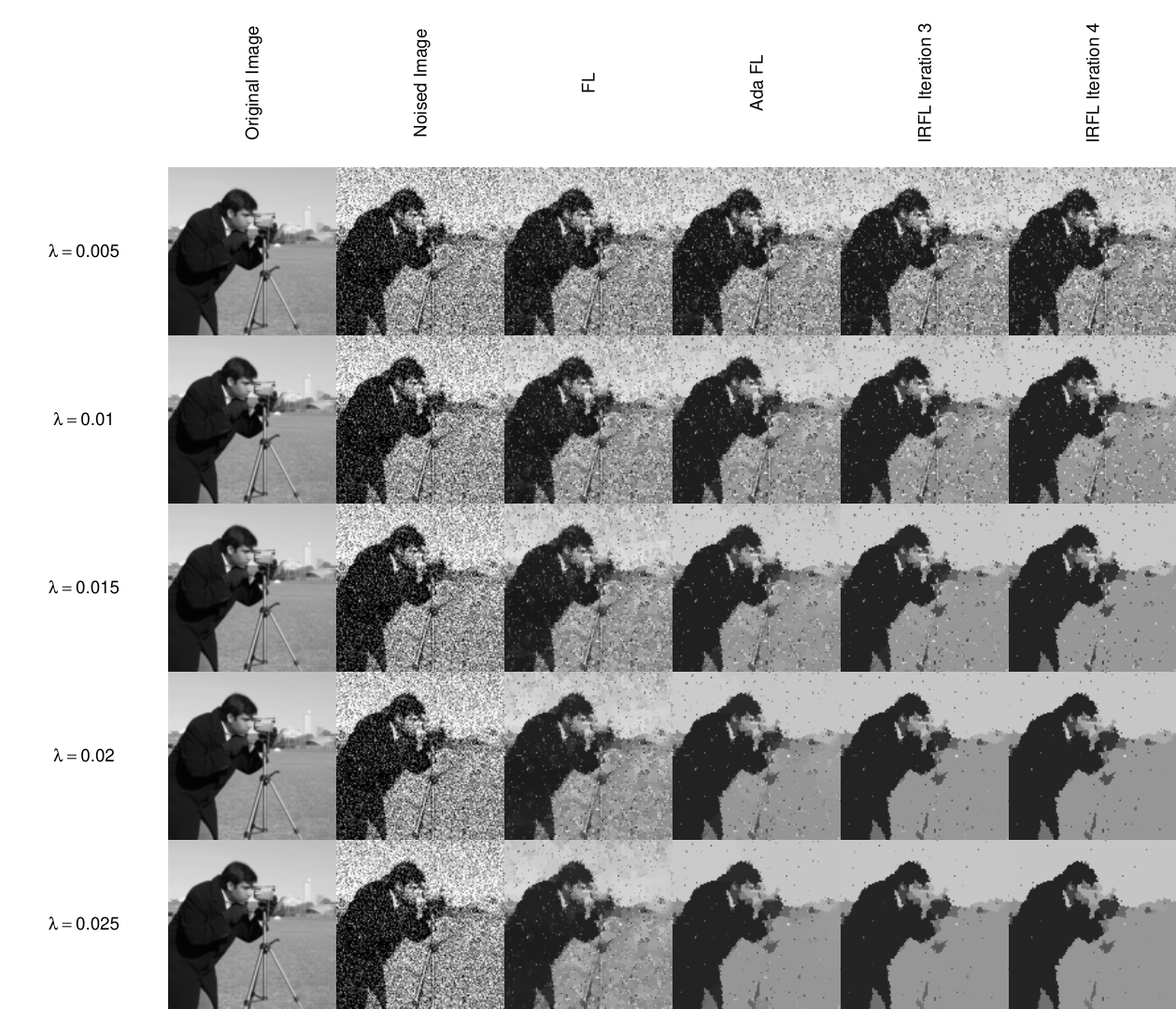}
    \caption[Man with a camera denoised using IRFL]{Columns from left to right: Original image, noised image, FL, Ada FL, IRFL iterations 3 and 4. Rows from top to bottom: $\lambda=0.005$, $0.01$, $0.015$, $0.02$, $0.025$.}
    \label{fig:man with camera}
\end{figure}
One can see that for various values of $\lambda$, there is a clear tradeoff between the amount of detail one wishes to preserve and the amount of noise one wishes to remove. Regularization increases both as $\lambda$ increases (down the grid) and as the number of IRFL iterations increases (left to right on the grid, beginning with the third column from the left). Small values of $\lambda$ are ineffective at removing noise (see, for example, the panel corresponding to $\lambda = 0.005$ at IRFL iteration 4), whereas values of $\lambda$ that are too large tend to remove excessive detail (see, for example, the panel corresponding to $\lambda = 0.025$ at IRFL iteration 4). The precise choice of $\lambda$ that best balances this tradeoff between detail preservation and noise reduction is ultimately a judgment call for the user.

IRFL can detect edges. Figure \ref{fig:man with camera edges} shows the noised image on the left and the same image with the red edges—identified by the fourth IRFL iteration with $\lambda = 0.025$ (using the nonzero entries of $D\hat{\mu}^{(4)}$).
\begin{figure}[H]
    \centering
    \textbf{Detected Edges on ``Man with a Camera" image using IRFL}
    \includegraphics[width=1\textwidth]{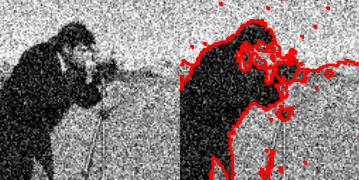}
    \caption[Edges detected on ``Man with A Camera" denoised using IRFL]{From left to right: noised image, and noised image with edges detected by IRFL with $\lambda=0.025$.}
    \label{fig:man with camera edges}
\end{figure}

\subsubsection*{2D IRFL on Color Images}
As was noted in \citet{tibshirani-2011-solutiongenlasso}, the genlasso can be used to estimate segmentwise constant images which are in color by encoding a color scale. An example in \citet{tibshirani-2011-solutiongenlasso} was given denoising an ``S" image, contained below. Following this example, the colors green, blue, purple, and red correspond to the numeric values 1, 2, 3, and 4. The image was then vectorized and IRFL was run as if it were black and white. The results are shown in Figure \ref{fig:S image}.
\begin{figure}[H]
    \centering
    \includegraphics[width=.8\textwidth]{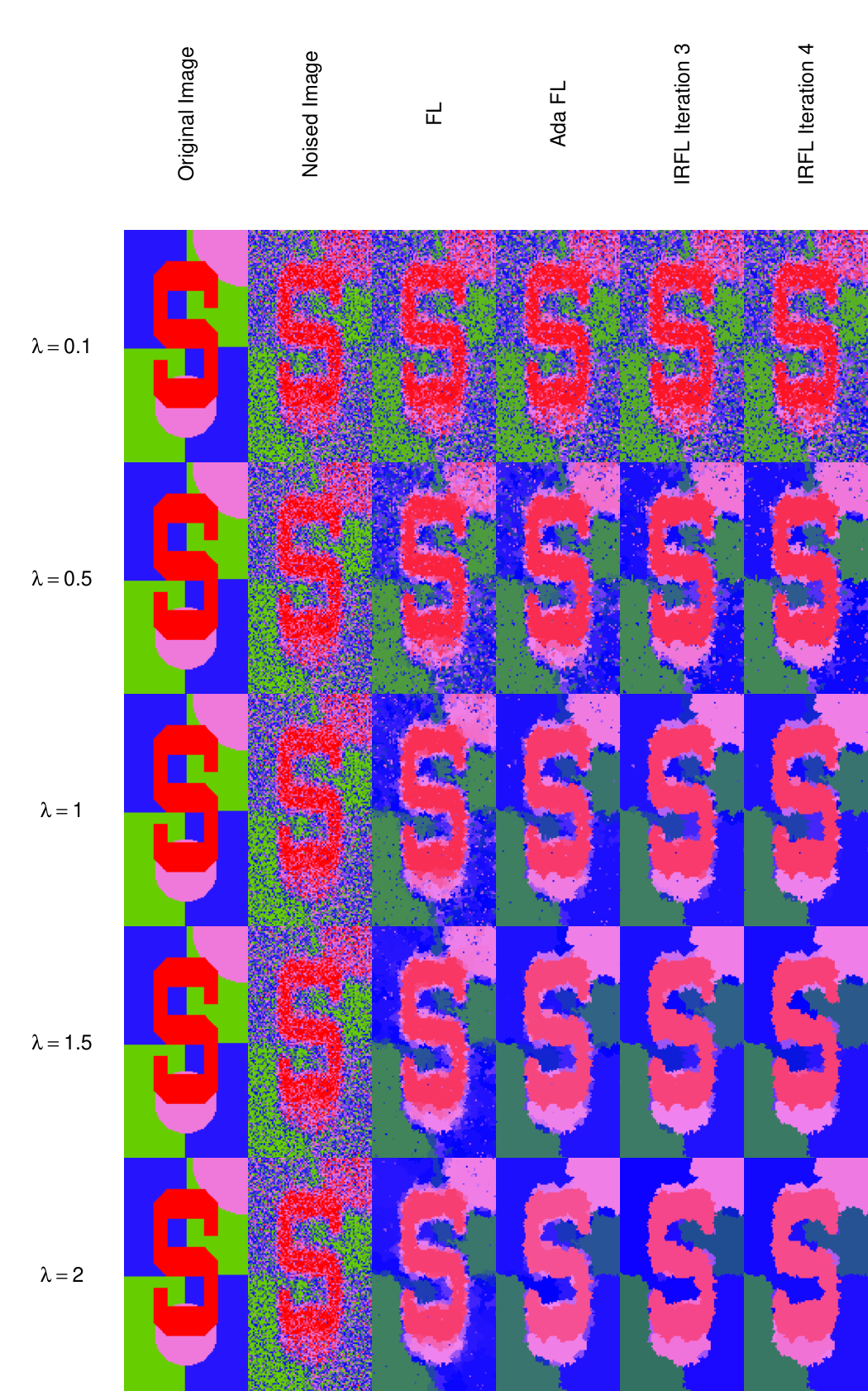}
    \caption[Denoised ``S" image for various $\lambda$, iterations]{Columns from left to right: Original image, noised image, FL, Ada FL, IRFL iterations 3 and 4. Rows from top to bottom: $\lambda=0.1$, $0.5$, $1$, $1.5$, $2$.}
    \label{fig:S image}
\end{figure}
As in the previous grid, larger values of $\lambda$ correspond to harsher regularization, and in such cases the tradeoff between detail and noise reduction is more strongly in favor of noise reduction. 

It seems that perhaps the fourth iteration of IRFL with $\lambda=1$ does the best job of segmenting the image while simultaneously preserving the underlying ``S" shape. To visualize the edges detected by this value of $\lambda$ on the fourth iteration of IRFL, Figure \ref{fig:S edges} includes from left to right the original ``S" image, the noised ``S" image, and the noised ``S" image with edges found by the fourth iteration of IRFL with $\lambda=1$ superimposed.
\begin{figure}[H]
    \centering
    \textbf{Edges Detected on ``S" image using IRFL}
    \vspace{.2cm}
    \includegraphics[width=1\textwidth]{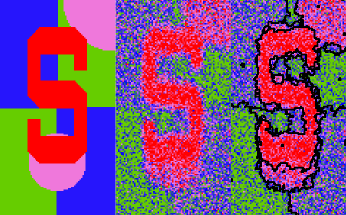}
    \caption[Edges detected on ``S" image denoised using IRFL]{From left to right: Original Image, Noised Image, and Noised Image with edges as detected by IRFL with $\lambda=1$.}
    \label{fig:S edges}
\end{figure}

\subsubsection*{Applications to Biomedical Imaging}
Image denoising is now widely recognized as an important pre-processing step in medical imaging pipelines \citep{kaur-2021-review}. This importance arises because magnetic resonance imaging (MRI), despite its central role in medical diagnostics and treatment planning, is frequently corrupted by acquisition noise, intensity inhomogeneity, and low tissue contrast. Such imperfections not only impair visual interpretation but also degrade the performance of machine learning systems, which rely on clean inputs to achieve clinically acceptable results \citep{song-2021-mri}.

The consequences are particularly acute in the task of image segmentation—partitioning an image into anatomically or pathologically meaningful regions. Segmentation underpins essential clinical applications such as tumor delineation, organ boundary detection, and lesion tracking, which in turn guide radiation therapy planning, surgical navigation, and disease monitoring \citep{chupetlovska-2025-esr}. Yet, when noise corrupts MRI data, traditional segmentation algorithms often become unstable and error-prone. For this reason, effective denoising methods are indispensable, especially those that preserve edge structure: excessive smoothing may obscure lesions, distort anatomical boundaries, or alter radiomic feature distributions \citep{paton-2021-mr}.

The 2D IRFL method helps address these concerns. The $\ell_1$ penalty and data fidelity term used together penalty on within-segment errors in (\ref{genlasso2d}) enforce boundaries between disparate regions and denoise segments within those regions, respectively. The result is a denoised image that maintains the fidelity of clinical features—especially edges.

We applied the IRFL technique to brain MRI scans, encompassing both healthy and pathological cases, to demonstrate its effectiveness in this context. Figure \ref{fig:MRI Y} illustrates the performance of IRFL on a brain MRI scan exhibiting a pathological feature. The original MRI scan of the brain with a cancerous lesion is on the left, while the image on the right has the same image, with red edges superimposed as found by IRFL.

\begin{figure}[H]
    \centering
    \textbf{Edges Detected on MRI using IRFL: Cancer Present}
    \vspace{.2cm}
    \includegraphics[width=1\textwidth]{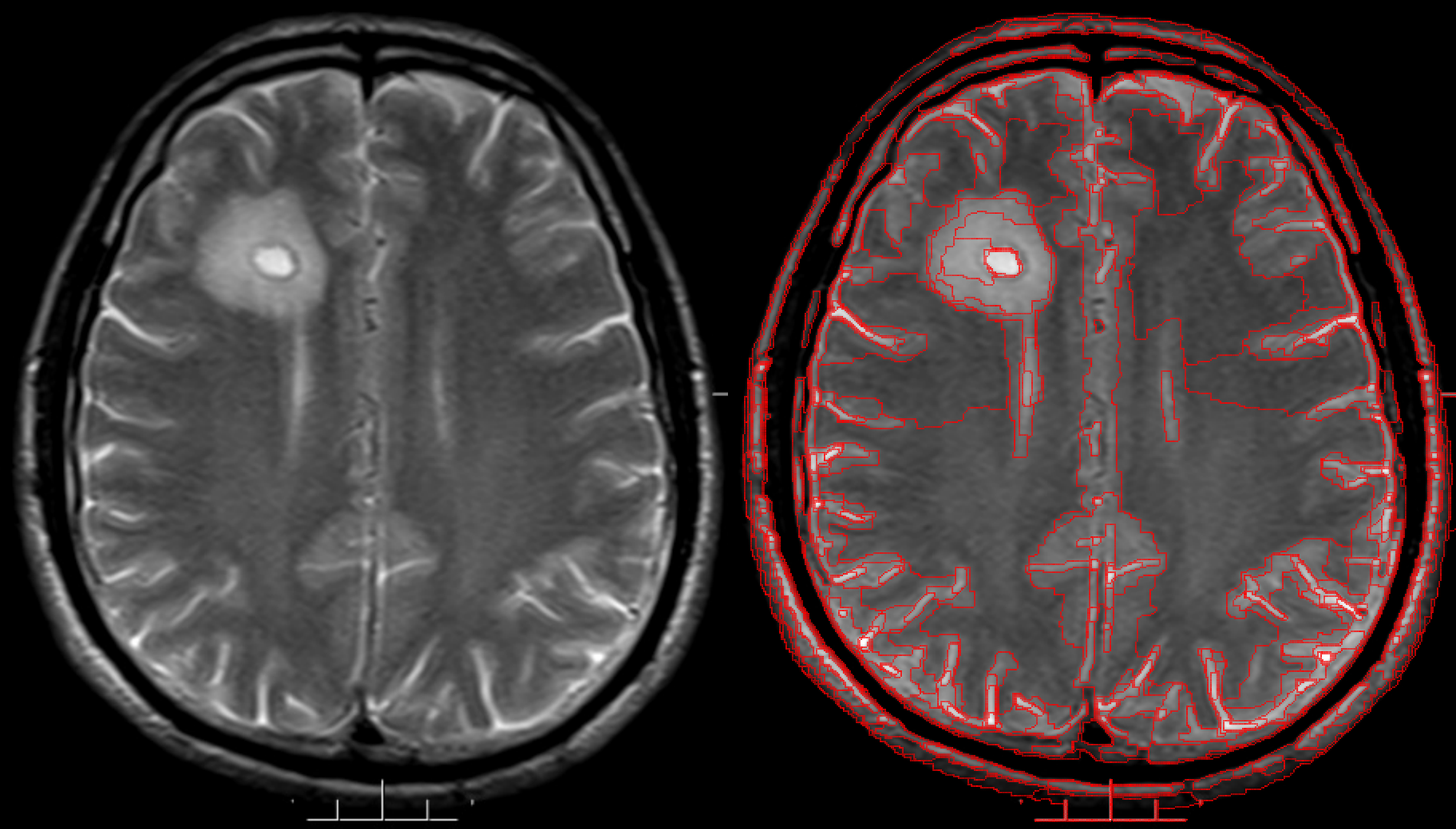}
    \caption[Edges detected on Brain MRI with cancer]{From left to right: MRI image, MRI image with edges detected by IRFL}
    \label{fig:MRI Y}
\end{figure}

\begin{figure}[H]
    \centering
    \textbf{Edges Detected on MRI using IRFL: Cancer Not Present}
    \vspace{.2cm}
    \includegraphics[width=1\textwidth]{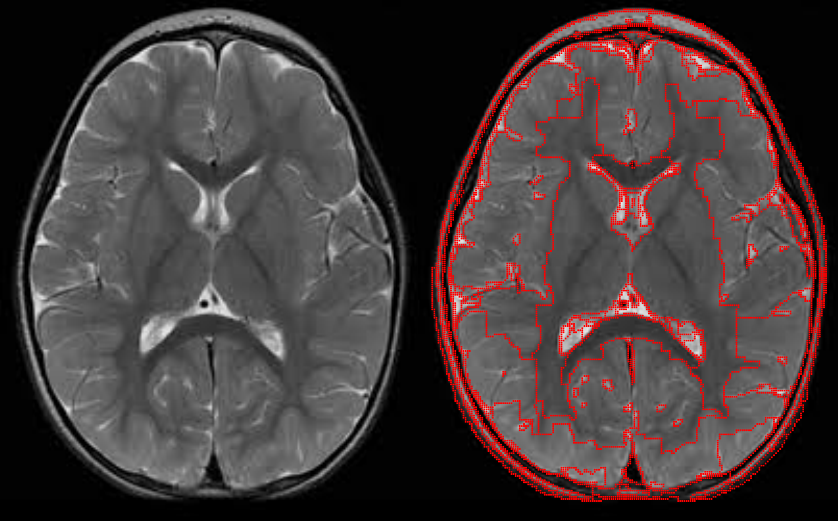}
    \caption[Edges detected on Brain MRI with cancer]{From left to right: MRI image, MRI image with edges detected by IRFL}
    \label{fig:MRI N}
\end{figure}
Taken together, the pathological and non-pathological cases highlight the robustness of IRFL: it suppresses noise and corrects intensity irregularity while preserving the fine-scale anatomical detail necessary for clinical interpretation. This property is especially important in medical imaging workflows, where excessive smoothing can obscure diagnostically relevant structure, and over-segmentation can introduce false boundaries. 

\vspace{0.5em}

\paragraph{Comparison to Other Approaches}
The problem of segmentation and identification in medical imaging is a well-studied field, and there are a number of cutting-edge methods already in wide use. Among these are convolutional neural networks (CNNs) and related deep learning models. However, their deployment in clinical settings is limited by several practical considerations, including infrastructural requirements, sensitivity to training data, and challenges in interpretability. By contrast, the IRFL framework provides a principled alternative that better aligns with the transparency and reproducibility standards expected in clinical imaging analysis.

Specifically, the proposed approach differs from CNN-based pipelines in the following ways:

\begin{itemize}\setlength{\itemsep}{0.35em}
    \item \textbf{Interpretability.} The inference procedure is explicit and produces boundaries that can be directly examined and justified, rather than relying on latent, non-linear feature abstractions.
    \item \textbf{Model Dependence.} The method is grounded in a well-defined statistical model, reducing dependence on training distributions that may not generalize across scanners, institutions, or patient populations.
    \item \textbf{Lightweight and Efficient.} IRFL runs efficiently on standard hardware without requiring GPUs or large-scale computational infrastructure.
    \item \textbf{No Training Data Required.} The method avoids the need for curated, annotated datasets, which are often labor-intensive to generate and can encode institutional or demographic biases.
    \item \textbf{Auditable and Transparent.} Each step of the procedure can be reviewed, validated, and documented, supporting regulatory compliance and clinical trust.
\end{itemize}

These characteristics make IRFL particularly suitable for clinical environments where reproducibility, interpretability, and clear decision pathways are prioritized. At the same time, the framework is flexible: a pre-trained CNN may still be incorporated as an upstream feature extractor if desired, with IRFL providing the final inference. This allows integration of richer feature representations while preserving transparency in the final decision-making stage.

The boundaries recovered by IRFL are not merely visually sharper; they align with core clinical tasks that depend on reliable and reproducible delineation of anatomical structures:

\begin{itemize}\setlength{\itemsep}{0.35em}
  \item \textbf{Clinical measurement.} Precise boundary delineation enables reproducible quantification of tumor size, shape, and margins, which is critical for staging and treatment planning.
  \item \textbf{Longitudinal monitoring.} Stable edge recovery allows objective comparison over time (e.g., lesion growth, shrinkage, or displacement), supporting evidence-based assessment of therapeutic response.
  \item \textbf{Robustness in practice.} IRFL suppresses noise and corrects intensity inhomogeneity while preserving anatomical fidelity, reducing inter-reader variability.
  \item \textbf{Pipeline compatibility.} The resulting clean, structurally coherent images improve the performance of downstream segmentation algorithms and AI-based diagnostic systems.
\end{itemize}
Taken together, these properties position IRFL as both scientifically rigorous and practically deployable within medical imaging workflows.

A similar need arises in other imaging domains where fine structural detail must be recovered from noisy observations. For example, in plant phenotyping, large-scale root imaging experiments generate high-dimensional data that are difficult to process reliably. Traditional image processing pipelines often struggle to separate root structures from background noise, and manual annotation is labor-intensive, subjective, and not scalable \citep{Dadi-2025-RootEx, Handy-2024-RootAnnotationVariation}. As in medical imaging, robust preprocessing methods are essential for producing clean and interpretable inputs for segmentation and quantitative trait extraction.

IRFL directly addresses this challenge. By suppressing irrelevant variation while preserving boundary information, the method produces a piecewise-smooth representation of root architecture that retains biologically meaningful structure. This facilitates accurate and reproducible downstream analyses, including binary masking, skeletonization, and morphological feature extraction. In machine learning workflows, such denoised representations improve the stability and reliability of thresholding and segmentation steps, particularly in heterogeneous or low-contrast imaging conditions.

As an illustration, Figure \ref{fig:Root Hairs} shows an example image found in \citet{pietrzyk-2025-dirtmu_dataset}, the supplemental materials to \citet{pietrzyk-2024-dirtmu}. The image on the left is an image of a root with its many root hairs. On the right, the boundaries of the root hairs have been cleanly delineated using IRFL and marked in red:

\begin{figure}[H]
    \centering
    \textbf{Root Hairs Detected using IRFL}
    \vspace{.2cm}
    \includegraphics[width=1\textwidth]{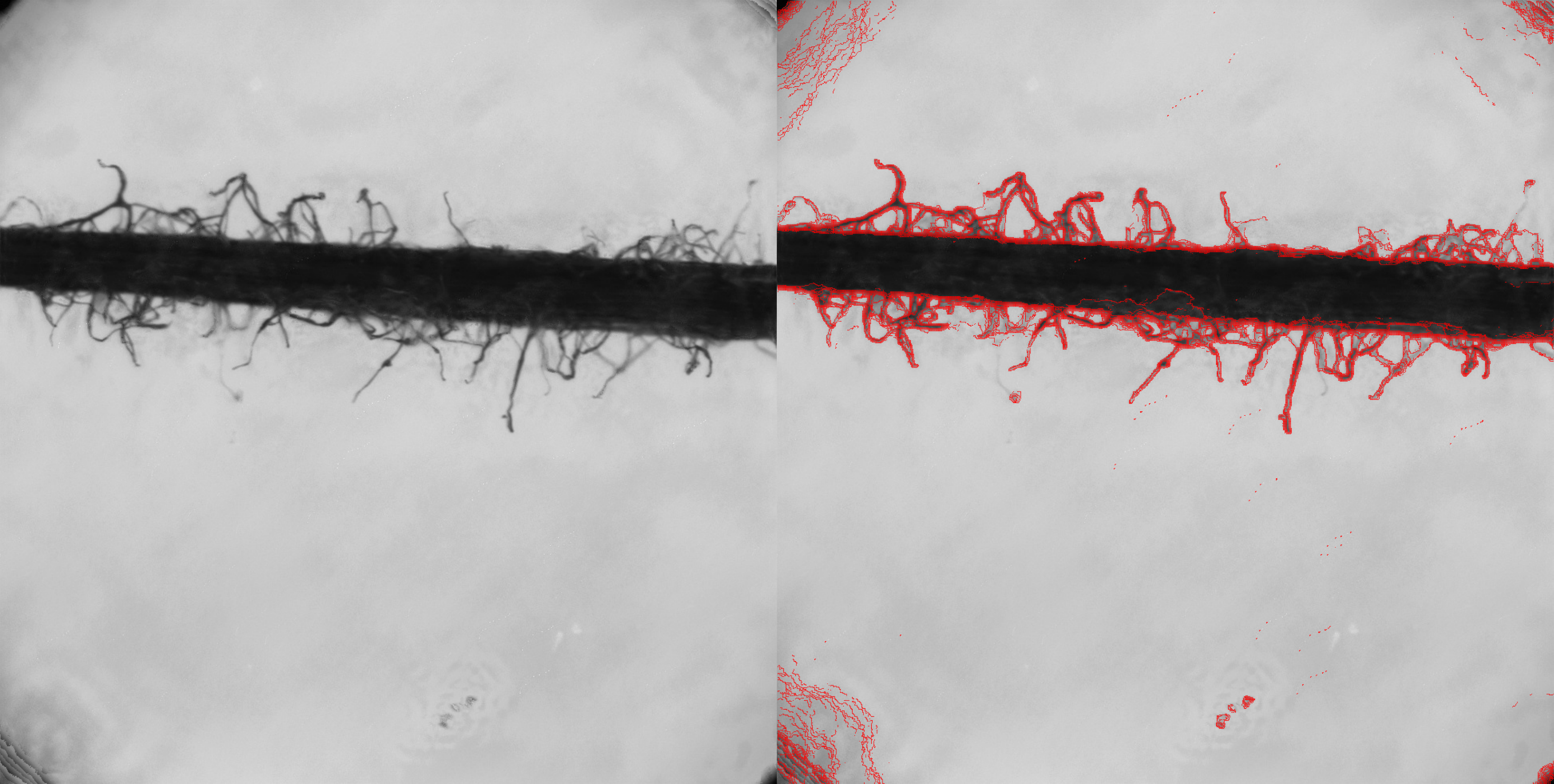}
    \caption[Root Hairs detected using IRFL]{From left to right: Root image, root image with edges detected by IRFL}
    \label{fig:Root Hairs}
\end{figure}
\subsubsection*{Conclusions}
The Iteratively Reweighted Fused Lasso (IRFL) demonstrates potential value as a preprocessing step in machine learning pipelines, particularly for applications such as early cancer detection where interpretability and geometric fidelity are paramount. It may also serve a complementary postprocessing role, aiding in the localization and quantification of tumors once identified.

Beyond biomedical imaging, IRFL has potential applicability across domains as varied as medical diagnostics and agricultural phenotyping, operating successfully in both grayscale and color contexts. Whether segmenting malignant lesions or quantifying fine-scale root hairs, it consistently preserves structural boundaries while suppressing noise—showing that precision, adaptability, and interpretability can be achieved simultaneously rather than traded off.

Ultimately, IRFL is more than a specialized algorithm. It embodies a principled framework for inference under uncertainty: recovering signal where it exists, eliminating spurious variation where it does not, and doing so in a manner that respects the underlying geometry of the data. This unifying perspective highlights that changepoint detection and image segmentation are not separate problems, but instances of a broader challenge—model selection over structure—where carefully designed regularization can reliably recover meaningful patterns even in delicate and noisy settings.


%
%
\subsection*{Data statement}
All simulations of IRFL used in this dissertation may be found at \url{https://github.com/michael-grantham/IRFL}.  The Mauna Loa CO\textsubscript{2} dataset can be located at \url{https://gml.noaa.gov/data/dataset.php?item=mlo-co2-observatory-monthly}.

\newpage
\section{Appendix}
This appendix collects supplementary material referenced in earlier chapters. Section \ref{app:crit} provides critical values for the CUSUM test (Chapter 2.1). Section \ref{IO LS} contains a derivation of the difference between an LS and IO model. Section \ref{app:pseudocode} gives pseudocode for heuristic and exact segmentation algorithms (Chapter 2). Section \ref{Genlasso soln path} outlines supporting derivations for the genlasso solution path (Chapter 3). Section \ref{app:tvadmm} presents the TV-ADMM algorithm (Chapter 5). Section \ref{app:bic} contains the derivation for BIC in AR(1) models.
\subsection{Asymptotic Critical Values of CUSUM}\label{app:crit}
\vspace{1cm}
\begin{table}[H]\label{BB Crit Values}
\begin{center}
\label{tab:CUSUM-crit-values}
\begin{tabular}{ lc l l l l l }
\hline
 & 90\% & 95\% & 97.5\% & 99\% & 99.9\%  \\
\hline
 Critical Values &1.224 & 1.358 & 1.480 & 1.628 & 1.949\\
\hline
\end{tabular}
\caption{Asymptotic critical values for CUSUM test statistics}
\end{center}
\end{table}
These values correspond to the limiting distribution of the scaled CUSUM statistic \eqref{CUSUM-scaled} under the null hypothesis, derived from the supremum of a Brownian Bridge (see Chapter 2.1).

\subsection{Geometric Adjustment of Mean Shifts in AR(1) Models} \label{IO LS}

In the WCM section of Chapter 2, a rough equivalence between the Level Shift (LS) and Innovation Outlier (IO) models was demonstrated in \eqref{Innovation Outlier}. However, there are key differences between the two models that warrant explanation. The LS model can be written
\begin{equation}\label{LS model}
y_t = \phi y_{t-1} + \mu_t + \epsilon_t, \qquad |\phi| < 1,
\end{equation}
where the autoregressive coefficient $\phi$ induces serial dependence within each stationary segment.  
Expanding this recursion gives
\begin{align}
y_t
&= \phi(\phi y_{t-2} + \mu_{t-1} + \epsilon_{t-1}) + \mu_t + \epsilon_t \nonumber\\
\vspace{1cm}&= \phi^2 y_{t-2} + \phi \mu_{t-1} + \mu_t + \phi \epsilon_{t-1} + \epsilon_t \nonumber\\
&= \sum_{j=0}^{\infty} \phi^{j} \mu_{t-j} + \sum_{j=0}^{\infty} \phi^{j} \epsilon_{t-j}.
\label{eq:LS-sum}
\end{align}
The first term in \eqref{eq:LS-sum} represents the weighted contribution of past mean levels, while the second term represents the accumulated autoregressive effect of past errors. In the absence of mean shifts (so that $\mu_{t-j}=\mu_{t-j-1}$ for $j\ge0$), \eqref{eq:LS-sum} simplifies to 
\begin{align}
y_t
&= \sum_{j=0}^{\infty} \phi^{j} \mu_{t-j} + \sum_{j=0}^{\infty} \phi^{j} \epsilon_{t-j}\nonumber\\[4pt]
&=\mu_t\sum_{j=0}^{\infty} \phi^{j} + \sum_{j=0}^{\infty} \phi^{j} \epsilon_{t-j}\nonumber\\[4pt]
&= \frac{\mu_t}{1-\phi} + \sum_{j=0}^{\infty} \phi^{j} \epsilon_{t-j}\nonumber \\[4pt]
&=\mu_t^\prime + \sum_{j=0}^{\infty} \phi^{j} \epsilon_{t-j}
\label{eq:LS-sum-simplified}
\end{align}
where $\mu_t^\prime := \mu_t(1-\phi)^{-1}$.

The IO model can be expressed so that its expected means are comparable to those of the LS model.  Under serial dependence, the expected mean of the process satisfies $\mathbb{E}[y_t] = \mu_t / (1 - \phi) = \mu_t^\prime$.  
Using this notation, the IO model can be written
\begin{equation}
y_t = \mu_t' + \eta_t, \qquad \eta_t = \phi \eta_{t-1} + \epsilon_t,
\end{equation}
where $\epsilon_t$ is white noise with mean zero and variance $\sigma^2$.  
Expanding $\eta_t$ recursively yields
\begin{align}
\eta_t
&= \phi(\phi \eta_{t-2} + \epsilon_{t-1}) + \epsilon_t
  = \phi^2 \eta_{t-2} + \phi \epsilon_{t-1} + \epsilon_t \nonumber\\
&= \sum_{j=0}^{\infty} \phi^{j} \epsilon_{t-j},
\end{align}
so that
\begin{equation}
y_t = \mu_t' + \sum_{j=0}^{\infty} \phi^{j} \epsilon_{t-j}.
\label{eq:IO-sum}
\end{equation}

Comparing \eqref{eq:LS-sum-simplified} and \eqref{eq:IO-sum} shows that the two models share the same autoregressive innovation structure. Asymptotically, in the absence of changepoints, they are pathwise identical.

However, in the presence of mean shifts, they differ. To see the difference explicitly, consider a single changepoint at $t = \tau_1$ where $\mu_t$ jumps from $\mu_1$ to $\mu_2$.  
For the first $k$ points in the new segment ($t = \tau_1 + k - 1$, $k \ge 1$), the mean sequence satisfies $\mu_{t-j} = \mu_2$ for $j \le k-1$ and $\mu_{t-j} = \mu_1$ for $j \ge k$.  
Substituting this pattern into the first term of \eqref{eq:LS-sum},
\begin{align}
\sum_{j=0}^{\infty} \phi^{j} \mu_{t-j}
&= \sum_{j=0}^{k-1} \phi^{j} \mu_2 + \sum_{j=k}^{\infty} \phi^{j} \mu_1 \nonumber\\
&= \mu_2 \frac{1 - \phi^{k}}{1 - \phi}
   + \mu_1 \frac{\phi^{k}}{1 - \phi} \nonumber\\
&= \mu_2' - \phi^{\,k} (\mu_2' - \mu_1').
\label{eq:mean-sum-eq}
\end{align}
Substituting \eqref{eq:mean-sum-eq} back into \eqref{eq:LS-sum} gives the LS model in its adjusted form:
\begin{equation}
y_{\tau_1 + k - 1}
  = \mu_2' - \phi^{\,k} (\mu_2' - \mu_1')
    + \sum_{j=0}^{\infty} \phi^{j} \epsilon_{\tau_1 + k - 1 - j}, \qquad k \ge 1.
\label{eq:LS-final}
\end{equation}
By comparison, the IO model in \eqref{eq:IO-sum} has
\begin{equation}
y_{\tau_1 + k - 1}
  = \mu_2' + \sum_{j=0}^{\infty} \phi^{j} \epsilon_{\tau_1 + k - 1 - j},\qquad k\ge1.
  \label{eq:IO-final}
\end{equation}

Comparing \eqref{eq:LS-final} and \eqref{eq:IO-final} shows that the LS model differs from the IO model only by the additional geometric term
$-\phi^{\,k}(\mu_2' - \mu_1')$, which decays to zero as $k$ increases because $|\phi|<1$  \eqref{LS model}.  
Thus, the LS model can be viewed as an IO model with an additional transient component reflecting the gradual geometric adjustment of the mean following each changepoint.
\subsection{Pseudo Code}\label{app:pseudocode}
Included for reproducibility and comparison, these pseudo code listings formalize algorithms discussed informally in Chapter 2.
\begin{algorithm}[H]
\caption{Binary Segmentation (BS)}
\begin{algorithmic}[1]
\State \textbf{Input:} A time series $\{y_t\}_{t=1}^n$; start value $s$; end value $e$;   AMOC test of choice with test statistic $\mathcal{C}^b_{s,e}$ for $s<b<e$ and critical value $\zeta_n$
\vspace{5pt}
\State Set $\mathcal{C}_{s,e}=\underset{s<b<e}{\text{max }}|\mathcal{C}^b_{s,e}|$\vspace{5pt}
    \State Set $k=\underset{s<b<e}{\text{arg max }}\mathcal{C}^b_{s,e}$\vspace{5pt}
    \If{$\mathcal{C}_{s,e} > \zeta_n$}
        \State Add $k$ to the set of estimated change-points
        \State Recursively run BS on both $\{y_t\}_{t=s}^{k-1}$ and $\{y_t\}_{t=k}^{e}$
    \Else
        \State \textbf{STOP}
    \EndIf
\end{algorithmic}
\end{algorithm}

\begin{algorithm}[H]
\caption{Wild Binary Segmentation (WBS)}
\begin{algorithmic}[1]
\State \textbf{Input:} A time series $\{y_t\}_{t=1}^n$; start value $s$; end value $e$; threshold $\zeta_n$
\State Set $\mathcal{M}_{s,e} =$ the set of indices $m$ for which $[s_m,e_m] \in F_n^M$ and $[s_m,e_m]\subseteq[s,e]$.
\State Augment $\mathcal{M}_{s,e}$ with $\{0\}$, where $[s_0,e_0] = [s,e]$.
\vspace{5pt}
\State Set $(k,m_0) = \underset{m\in\mathcal{M}_{s,e},\hspace{.15cm} s_m<b< e_m}{\text{arg max}}|\mathcal{C}_{s_m,e_m}^b|$
\vspace{5pt}
    \If{$|\mathcal{C}_{s_{m_0},e_{m_0}}^k| > \zeta_n$}
    \vspace{5pt}
        \State Add $k$ to the set of estimated change-points
        \State Recursively run WBS on both $\{y_t\}_{t=s}^{k}$ and $\{y_t\}_{t=k}^{e}$
    \Else
        \State \textbf{STOP}
    \EndIf
\end{algorithmic}
\end{algorithm}

There are also derivations in \citet{fryzlewicz-2014-WBS} which give guidance on the optimal values to choose for $M$ and $\zeta_n$. As $M$ controls the probability of capturing a small changepoint through the random selection of subintervals, it should be chosen to be maximally large without incurring a prohibitive time cost. Anecdotally, it is reported that a value of $M=5000$ for data of length $n\approx5\times10^{3}$ runs in seconds on modern hardware with minimal variability among the detected changepoints from run to run. There is also a derivation of the threshold value, $\zeta_n$, which shows it must be proportional to $\sqrt{2\log(n)}$ (a common threshold in wavelet thresholding literature). Rather than derive the exact constant of proportionality $C$ for a given cost function, a large-scale simulation study was conducted to determine the value of $C$ that minimizes the absolute deviation between the number of detected changepoints and the true number of changepoints.
Rather than derive the exact constant of proportionality $C$ for a given cost function, a large-scale simulation study was conducted to determine the value of $C$ that minimizes the absolute deviation between the number of detected changepoints and the true number of changepoints. In this way, it was determined that an appropriate value of $C$ is merely $\hat\sigma$, where $\hat\sigma$ is the Median Absolute Deviation estimator of $Var(\epsilon_t)$ from \citet{hampel-1974-MAD}. That is, the default value of $\zeta_n$ is given as follows:
\begin{equation*}
    \zeta_n = \hat\sigma\sqrt{2\log(n)}.
\end{equation*}In addition to thresholding with $\zeta_n$, one can also use a BIC stopping criterion based on the number of changepoints located.

\begin{algorithm}[H]
\caption{Wild Contrast Maximization (WCM)}
\begin{algorithmic}[1]
\State \textbf{Input:} Time series $\{y_t\}_{t=1}^{n}$, the number of intervals $M$, the number of possible models $R$ to consider.
\vspace{5pt}\State \textbf{Phase 1: Gappy Sequence Generation}
\vspace{5pt}\State Generate up to $M$ intervals, whether randomly or deterministically, contained in $[1,n]$. If it's not possible to generate $M$ distinct intervals because $e-s$ is too small, then generate as many as possible. Call this number of intervals $\Tilde{M}$, and call the collection of these intervals $\mathcal{M}_{s,e} = \{(s_m,e_m):1\leq m\leq\Tilde{M}\}$
    \State Identify the interval $(s_m,e_m)\in\mathcal{M}_{s,e}$ and changepoint $k_0$ that maximizes $\big|\mathcal{C}_{s_m,e_m}^{k_0}\big|$.
    \State Split the time series at $k_0$ and recursively search the two segments in like manner to locate up to $n-1$ changepoints and corresponding CUSUM scores.
    \vspace{5pt}\State \textbf{Phase 2: Model Selection}
    \vspace{5pt}\State Choose the models corresponding to the $R$ largest gaps in BIC scores of the nested models, ending with a null model (no changepoints). Call these models $\Theta=\{\Theta_0,\Theta_1,\ldots,\Theta_R\}$.
    \State Find the order of serial dependence $\hat p$ of $\Theta_R$ and estimate of autocorrelation parameter(s) $\hat{{\phi}}_R$ which minimizes BIC using OLS.
    \State Over all the segments which introduce $q\geq1$ new changepoints in $\Theta_R$ which are not present in the same segment in $\Theta_{R-1}$, compare the BIC score \textit{with} the new changepoint(s) to the BIC of the same segment \textit{without} such. In the one case, the BIC score of a segment starting at time $s$ and ending at time $e$ is \begin{equation*}
        \text{BIC}^\text{with}_{s,e,\hat\phi} = \frac{e-s+1}{2}\log(\text{SSE}_{q,\hat{p}}/(e-s+1)) + (q+\hat p)\log(e-s+1)
    \end{equation*} and the BIC score of the same segment with no changepoints is \begin{equation*}
        \text{BIC}^\text{without}_{s,e,\hat\phi} = \frac{e-s+1}{2}\log(\text{SSE}_{0,\hat{p}}/(e-s+1)) + (\hat p)\log(e-s+1).
    \end{equation*}  Critically, the estimation of $\phi$ used in both models is estimated in the model \textit{with changepoints}. If any of the segments have a lower BIC without the new changepoint(s), prefer model $\Theta_{R-1}$. Otherwise, retain $\Theta_R$.
    \State Continue recursively backward-selecting until either a local minimum BIC is achieved or the null model is selected.
\end{algorithmic}
\end{algorithm}

Note that in Step 9, the BIC comparisons are done over the segment, and $\hat\phi$ is estimated \textit{with} changepoints. This is done deliberately because very often comparisons of raw BIC score will result in the estimation of models without changepoints--opting instead for a stronger $\phi$--when in fact changepoints are present. This is also confirmed in simulation studies (see Chapter 4). As a technical note, WCM as presented in \citet{cho-2024-WCM} estimates models with AR(p) errors, to which the pseudocode given above can be adapted with minimal and intuitive modifications.

\begin{algorithm}[H]
\begin{algorithmic}[1]
\caption{Narrowest-Over-Threshold (NOT)}
\State \textbf{Input:} A time series $\{y_t\}_{t=1}^n$; start value $s$; end value $e$; threshold $\zeta_n$
\State Set $\mathcal{M}_{s,e} =$ the set of indices $m$ for which $[s_m,e_m] \in F_n^M$ and $[s_m,e_m]\subseteq[s,e]$.
\State Augment $\mathcal{M}_{s,e}$ with $\{0\}$, where $[s_0,e_0] = [s,e]$.
\vspace{5pt}
\State Set $\mathcal{O}_{[s,e]}=\{m\in\mathcal{M}_{s,e}:\underset{ s_m<b< e_m}{\text{max }}|\mathcal{C}_{s_m,e_m}^b|>\zeta_n\}$\vspace{5pt}
\State Set $m_0 = \underset{m\in\mathcal{O}_{[s,e]}}{\text{arg min }}|e_m-s_m|$
\vspace{5pt}
\State Set $k_0 = \underset{ b\in(s_{m_0},e_{m_0})}{\text{arg max }}|\mathcal{C}_{s_{m_0},e_{m_0}}^b|$
\vspace{5pt}
        \State Add $k_0$ to the set of estimated change-points
        \State Recursively run NOT on both $\{y_t\}_{t=s}^{k-1}$ and $\{y_t\}_{t=k}^{e}$
        \State \textbf{STOP}
    \end{algorithmic}
\end{algorithm}

\begin{algorithm}[H]
\caption{Optimal Partition (OP)}
\begin{algorithmic}[1]
\State \textbf{Input:} A time series $\{y_t\}_{t=1}^n$; a cost function $\mathcal{C}(\cdot)$
\State Initialize: $F(1) = -\beta$, $\hat{\tau}(1) = \text{NULL}$
\For{$s = 2,3, \dots, n+1$}
    \State Set $F(s) = \underset{1 \leq r < s}{\text{min }} \big(F(r) + \mathcal{C}(y_{r:(s-1)}) + \beta\big)$
    \State Set $t = \underset{1 \leq r < s}{\text{arg min }} \big(F(r) + \mathcal{C}(y_{r:(s-1)}) + \beta\big)$
    \State Set $\hat{\tau}(s)$ equal to the set of optimal changepoints prior to time $t$ (which is $\hat{\tau}(t)$), union the new best changepoint prior to $s$ (which is $t$).
\EndFor
\State \Return $\hat\tau({n+1})$
\end{algorithmic}
\end{algorithm}

\begin{algorithm}[H]
\caption{Pruned Exact Linear Time (PELT)}
\begin{algorithmic}[1]
\State \textbf{Input:} A time series $\{y_t\}_{t=1}^n$; a cost function $\mathcal{C}(\cdot)$; constant $K$ satisfying (\ref{criterion}).
\State Initialize: $F(1) = -\beta$, $\hat\tau(1) = \text{NULL}$, $R_1=\{1\}$
\For{$s = 2,3, \dots, n+1$}
    \State Set $F(s) = \underset{r\in R_s}{\text{min }} \big(F(r) + \mathcal{C}(y_{r:(s-1)}) + \beta\big)$
    \State Set $t = \underset{r\in R_s}{\text{arg min }} \big(F(r) + \mathcal{C}(y_{r:(s-1)}) + \beta\big)$
    \State Set $\hat\tau(s)$ equal to the set of optimal changepoints at time $t$ (which is stored in $\hat\tau(t)$), union the new best changepoint prior to $s$ (which is $t$).
    \State Set $R_{s+1}$ equal to the set of $r\in R_s$ which satisfy (\ref{criterion}), along with $s$. That is, $R_{s+1} = \big\{r \in R_{s} : F(r) + \mathcal{C}(y_{r:(s-1)}) + K < F(s)\big\} \cup \{s\}$
\EndFor
\State \Return $\hat\tau({n+1})$
\end{algorithmic}
\end{algorithm}

\begin{algorithm}[H]
\caption{Conjugate Pruned Optimal Partition (CPOP)}
\textbf{Input:} Time series $ y = \{y_t\}_{t=1}^n $, positive penalty constant $\beta$, non-negative, non-decreasing penalty function $h(\cdot).$
\begin{algorithmic}[1]
    \State Set $\hat{\mathcal{T}}_1=\{0\}$
    \vspace{5pt}
    \State Set $K=2\beta+h(1)+h(n)$
    \vspace{5pt}
    \For{$t=1,\ldots,n$}
        \vspace{5pt}
        \For{$\tau\in\hat{\mathcal{T}}_t$}
            \If{$\tau = \{0\}$} 
                \State $f^t _\tau(\phi) = \underset{\phi^\prime}{\text{min }}\mathcal{C}(y_{1:t},\phi^\prime,\phi)+h(t)$
            \Else
                \State $f^t _\tau(\phi) = \underset{\phi^\prime}{\text{min }}\left\{f_{\tau_1,\ldots,\tau_{k-1}}^{\tau_k}(\phi^\prime)+\mathcal{C}(y_{(\tau_k+1):t},\phi^\prime,\phi)+h(t-\tau_k)+\beta\right\}$
            \EndIf
        \EndFor
        \vspace{5pt}
        \For{$\tau\in\hat{\mathcal{T}}_t$}
            \State $Int^t _\tau = \left\{\phi:f_\tau^t (\phi)=\underset{\tau^\prime\in\hat{\mathcal{T}}_t}{\text{min }}[f_{\tau^\prime}^t (\phi)]\right\}$
            \vspace{5pt}
            \State $\overset{*}{\mathcal{T}_t} = \{\tau:Int_\tau^t \neq\emptyset\}$
            \vspace{5pt}
            \State $\hat{\mathcal{T}}_{t+1} = \hat{\mathcal{T}}_t\hspace{3pt}\cup\left\{(\tau,t):\tau\in\overset{*}{\mathcal{T}}_t\right\}$
        \EndFor
        \vspace{5pt}
        \State $\hat{\mathcal{T}}_{t+1}=\left\{\tau\in\hat{\mathcal{T}}_{t+1}:\underset{\phi}{\text{min }}f^t _\tau(\phi)\leq\underset{\phi^\prime,\tau^\prime}{\text{min }}[f^t _{\tau^\prime}(\phi^\prime)]+K\right\}$
    \EndFor
    \vspace{5pt}
    \State $\hat f=\underset{\tau,\phi}{\text{min }}f_\tau^n(\phi)$
    \vspace{5pt}
    \State $\hat\tau = \underset{\tau}{\text{arg min }}\left[\underset{\phi}{\text{min }}f_\tau^n(\phi)\right]$
    \vspace{5pt}
    \State \Return $\hat f$, $\hat\tau$
\end{algorithmic}
\end{algorithm}

Similar to the recursion strategy outlined in OP and refined in PELT, on the $s^\text{th}$ iteration the \textbf{for} loop on line 3 loops through all previously computed (un-pruned) changepoint vectors $\tau$ and calculates for each the cost of adding a new changepoint at time $s$. On line 12, the intervals corresponding to each of the possible y-values $\phi$ are created for each candidate changepoint vector $\tau$. Lines 13 and 14 are then the functional pruning step, and line 16 is the inequality-based pruning step. Once this list of potential changepoint vectors and corresponding y-values has been built and populated up to time $n$, all that remains is to calculate the minimum loss $\hat f$ and the changepoint vector corresponding to it, $\hat\tau$.

On line 12, the process for generating intervals of $\phi$ over which a given candidate vector $\tau$ is optimal is non-trivial and requires attention. The original algorithm is given in the supplemental materials to \citet{fearnhead-2019-cpop} and recreated here.

\begin{algorithm}[H]
\caption{Algorithm for calculation of $ Int^t _{\tau} $ at time $t$}
\begin{algorithmic}[1]
\State \textbf{Input: } Set of changepoint candidate vectors $ \hat{\mathcal{T}}_t $ for current timestep $ t $, optimal segmentation functions $ f^t _{\tau}(\phi) $ for $ t $ and $ \tau \in \hat{\mathcal{T}}_t $
\State $ \mathcal{T}_{\text{temp}} = \hat{\mathcal{T}}_t $
\State $ Int^t _{\tau} = \emptyset $ for $ \tau \in \hat{\mathcal{T}}_t $
\State $ \phi_{\text{curr}} = -\infty $
\State $ \tau_{\text{curr}} = \underset{\tau \in \mathcal{T}_{\text{temp}}}{\text{arg min }}[f^t _{\tau}(\phi_{\text{curr}})] $
\vspace{5pt}
\While{$ \mathcal{T}_{\text{temp}} \setminus \{\tau_{\text{curr}}\} \neq \emptyset $}
    \For{$ \tau \in \mathcal{T}_{\text{temp}} \setminus \{\tau_{\text{curr}}\} $}
        \State $ x_{\tau} = \min \{ \phi : f^t _{\tau}(\phi) - f^t _{\tau_{\text{curr}}}(\phi) = 0 \text{ and } \phi > \phi_{\text{curr}} \} $
        \If{$ x_{\tau} = \emptyset $}
            \State $ \mathcal{T}_{\text{temp}} = \mathcal{T}_{\text{temp}} \setminus \{\tau\} $
        \EndIf
    \EndFor
    \State $ \tau_{\text{new}} = \underset{\tau}{\text{arg min }} (x_{\tau}) $
    \vspace{5pt}
    \State $ \phi_{\text{new}} = \min (x_{\tau}) $
    \State $ Int^t _{\tau_{\text{curr}}} = [\phi_{\text{curr}}, \phi_{\text{new}}] \cup Int^t _{\tau_{\text{curr}}} $
    \State $ \tau_{\text{curr}} = \tau_{\text{new}} $
    \State $ \phi_{\text{curr}} = \phi_{\text{new}} $
\EndWhile
\State \Return The intervals $ Int^t _{\tau} $ for $ \tau \in \hat{\mathcal{T}}_t $
\end{algorithmic}
\end{algorithm}
The idea of this algorithm is to partition the values of $\phi$ from $-\infty$ to $\infty$ into intervals over which some changepoint vector $\tau\in\hat{\mathcal{T}}_t$ is optimal, with the understanding that if there is no interval of $\phi$ over which a candidate changepoint vector $\tau\in\hat{\mathcal{T}}_t$ is optimal then it can be discarded from future consideration. On line 6, a \textbf{while}-loop constructs these intervals by initializing $\phi_\text{curr}=-\infty$ and $\tau_\text{curr}$ as the changepoint configuration which minimizes the cost for $\phi=-\infty$. It then recursively partitions intervals of $\phi$ by iterating over all of the candidate changeopint vectors $\tau\in\hat{\mathcal{T}}_t$, and for each of them finding the first value of $\phi$ greater than $\phi_\text{curr}$ where $\tau$ becomes better at partitioning the data than $\tau_\text{curr}$. The minimum of these $\phi$ values is the right endpoint of the presently-determined interval, and the $\tau$ leading to that $\phi$ becomes the new $\tau_\text{curr}$. The loop terminates when there is no $\phi>\phi_\text{curr}$ for which some $\tau\in\hat{\mathcal{T}}_t$ is better at partitioning the data than $\tau_\text{curr}$.

\begin{algorithm}[H]
\caption{\textbf{Pseudocode:} Segment Neighborhood}
\begin{algorithmic}[1]
\State \textbf{Input:} A time series $\{y_t\}_{t=1}^n$; a cost function $\mathcal{C}(\cdot)$
\For{$s=3,4,\ldots,n+1$}
\State $F_1(s) = \underset{r<s}{\text{min }}\left[\mathcal{C}(y_{1:(r-1)}) + \mathcal{C}(y_{r:(s-1)})\right]$
\State $\tau(1,s) = \underset{2\le r<s}{\text{arg min }}\left[\mathcal{C}(y_{1:(r-1)}) + \mathcal{C}(y_{r:(s-1)})\right]$

\EndFor
\For{$m = 2,3,\ldots,m_\text{max}$}\vspace{5pt}
\For{$s = m+2,m+3,\ldots,n+1$}\vspace{5pt}
\State  $F_m(s) = \underset{m<r<s}{\text{min }}\left[F_{m-1}(r) + \mathcal{C}(y_{r:(s-1)})\right]$\vspace{5pt}
\State $\tau(m,s) = \underset{m<r<s}{\text{arg min }}\left[F_{m-1}(r) + \mathcal{C}(y_{r:(s-1)})\right]$\vspace{5pt}
\EndFor
\EndFor
\For{$m=1,2,\ldots,m_\text{max}$}\vspace{5pt}
\State $\tau_m^m =\tau(m,n+1)$\vspace{5pt}
\For{$i = m-1, m-2, \dots, 1$}\vspace{5pt}
\State $\hat\tau_m^i =\tau(i,\hat\tau_m^{i+1})$\vspace{5pt}
\EndFor
\EndFor
\State \Return $\hat\tau_1({n+1})$, $\hat\tau_2({n+1})$, $\ldots$, $\hat\tau_{m_\text{max}}({n+1})$
\end{algorithmic}
\end{algorithm}
In lines 2-11, a recursion is performed to both find the cost associated with the optimal segmentation as well as to store the information necessary to reconstruct the changepoint configurations for each $1\leq m\leq m_\text{max}$. The reconstruction for each of these configurations occurs in lines 12-17.

\begin{algorithm}[H]
\caption{Pruned Segment Neighborhood}
\begin{algorithmic}[1]
\State \textbf{Input:} A time series $\{y_t\}_{t=1}^n$; a cost function $\mathcal{C}(\cdot)$
\For{$s=3,4,\ldots,n+1$}
\State $F_1(s) = \underset{r<s}{\text{min }}\left[\mathcal{C}(y_{1:(r-1)}) + \mathcal{C}(y_{r:(s-1)})\right]$
\State $\tau(1,s) = \underset{r<s}{\text{arg min }}\left[\mathcal{C}(y_{1:(r-1)}) + \mathcal{C}(y_{r:(s-1)})\right]$

\EndFor
\vspace{5pt}
\For{$m = 2,3,\ldots,K$}
\State \textbf{Initialize: } list of potential (unpruned) changepoints  $\mathcal{P} = \{m-1\}$
\vspace{5pt}
\State \textbf{Initialize: } $\mathcal{F}_{m-1}^m(\mu) = F_{m-1}(m)$\vspace{5pt}
\For{$s = m+2,m+3,\ldots,n+1$}\vspace{5pt}
\State \textbf{Initialize: }$\mathcal{F}_m^s(\mu) = F_{m-1}(s)$\vspace{5pt}
\State \textbf{Initialize: }$Set_m^s = D$\vspace{5pt}
\For{$ r\in\mathcal{P}$}\vspace{5pt}
\State $\mathcal{F}_m^r(\mu) = \mathcal{F}_m^{r}(\mu) + \gamma(y_{r+1},\mu)$\vspace{5pt}
\State $I_m^s(r) = \{\mu:\mathcal{F}_m^s(r,\mu)\leq F_{m-1}(s)\}$\vspace{5pt}
\State $    Set_m^s(r) = Set_m^{s}(r)\quad \cap \quad I_m^s(r)$\vspace{5pt}
\If{$Set_m^s(r)=\emptyset$}
\State $\mathcal{P}=\mathcal{P}\hspace{2pt}\setminus\hspace{2pt}\{r\}$
\EndIf
\State $Set_m^s(s) = Set_m^s(s) \hspace{2pt}\setminus\hspace{2pt}I_m^s(r)$
\EndFor
\If{$Set_m^s(r)\neq\emptyset$}
\State $\mathcal{P} = \mathcal{P}\hspace{2pt}\cup\hspace{2pt}\{s\}$
\EndIf
\State $F_m(s) = \underset{r\in\mathcal{P}}{\text{min }}\left[\underset{\mu}{\text{min }}\mathcal{F}_m^s(\mu)\right]$
\State $\tau(m,s)=\underset{r\in\mathcal{P}}{\text{arg min }}\left[\underset{\mu}{\text{min }}\mathcal{F}_m^s(r,\mu)\right]$
\EndFor
\EndFor
\For{$m=1,2,\ldots,m_\text{max}$}\vspace{5pt}
\State $\tau_m^m =\tau(m,n+1)$\vspace{5pt}
\For{$i = m-1, m-2, \dots, 1$}\vspace{5pt}
\State $\hat\tau_m^i =\tau(i,\hat\tau_m^{i+1})$\vspace{5pt}
\EndFor
\EndFor
\State \Return $\hat\tau_1({n+1})$, $\hat\tau_2({n+1})$, $\ldots$, $\hat\tau_{m_\text{max}}({n+1})$
\end{algorithmic}
\end{algorithm}
As in SN, a first-pass through the data is performed on lines 2-5 to build $F_1(s)$ and $\tau(1,s)$ for all $3\leq s\leq n+1$. In the recursion on lines 6-27 as $s$ loops from $m+2$ to $n+1$, intervals of $\mu$ are found for each $r\in\mathcal{P}$ for which $r$ is the optimal changepoint prior to $s$. If no valid interval exists, then $r$ is pruned from consideration. This happens on line 17. On line 22, the current value of $s$ is added to the unpruned changepoints as long as there is some value of $\mu$ such that $F_{m-1}(s)<\mathcal{F}_m^s(\mu)$. On lines 24-25, $F_m(s)$ and $\tau(m,s)$ are updated, and on lines 28-33 a recursion is used to re-build $\hat\tau_m(n+1)$ for $1\leq m\leq K$. 

\begin{algorithm}[H]
\caption{AR1Seg Algorithm}
\textbf{Input:} Time series $ y = \{y_t\}_{t=1}^n $, maximum number of changepoints $ K $.  

\begin{algorithmic}[1]
    \State Estimate $ \tilde{\phi} $ using  
    \[
    \tilde{\phi} = \frac{(\text{med}_{1\leq t\leq n-2}|y_{t+2} - y_t|)^2}{(\text{med}_{1\leq t\leq n-1}|y_{t+1} - y_t|)^2} - 1
    \]

    \State Use pSN on the decorrelated series to find all changepoint models up to $K$ changepoints.

    \State Remove any second changepoint of a consecutive pair introduced spuriously by decorrelation.

    \For{each segmentation model considered}
        \State Compute  
        \[
        SS(y, \tilde{\phi}) = \min_{\delta, \tau} SS(y, \tilde{\phi}, \delta, \tau)
        \]
        \State Compute the corresponding MLE of variance
        \[
        \hat{\sigma}^2 = \frac{1}{n} 
        \sum_{k=1}^{m} \sum_{t=\tau_{k-1}}^{\tau_k-1}
        (y_t - \tilde{\phi} y_{t-1} - \delta_k)^2
        \]
    \EndFor

    \State Select optimal $ \hat{m} $ using the BIC criterion.

    \State \Return Estimated changepoints $\hat{\tau}$, segment means $\hat{\delta}$, and estimated AR(1) parameter $\tilde{\phi}$.
\end{algorithmic}
\end{algorithm}

\begin{algorithm}[H]
\caption{2D Iteratively Reweighted Fused Lasso (2D IRFL)}
\begin{algorithmic}[1]
\State \textbf{Input:} Image vector $y \in \mathbb{R}^{RC}$; structured difference matrix $D_0 \in \mathbb{R}^{p \times RC}$; weight offset $\varepsilon > 0$; number of IRFL iterations $N$; regularization grid $\{\lambda_\ell\}_{\ell=1}^L$.
 \vspace{5pt}
\For{$\ell = 1$ \textbf{to} $L$}
    \State \textbf{Reset per-$\lambda$ state:} $\hat W^{(1)} \gets I_{p \times p}$; \; $\mathbf{D}^{(1)} \gets D_0$
    \For{$i = 1$ \textbf{to} $N$}
        \State \textbf{Solve (TV-ADMM):}
        \begin{equation*}
            \hat{\mu}^{(i)}_{\lambda_\ell}
            = \arg\min_{\mu \in \mathbb{R}^{RC}}
            \frac{1}{2}\,\|y - \mu\|_2^2 \;+\; \lambda_\ell\,\|\mathbf{D}^{(i)}\mu\|_1
        \end{equation*}
        \State \textbf{Reweight:} For $j=1,\ldots,p$ set
        \[
            \hat w^{(i+1)}_j \;=\; \frac{1}{\,\big|(D_0\,\hat{\mu}^{(i)}_{\lambda_\ell})_j\big| + \varepsilon\,},
            \qquad \hat W^{(i+1)} \;=\; \mathrm{diag}\!\big(\hat w^{(i+1)}_1,\ldots,\hat w^{(i+1)}_p\big).
        \]
        \State \textbf{Update $D$:} \quad $\mathbf{D}^{(i+1)} \;\gets\; \hat W^{(i+1)}\, D_0$
    \EndFor
\EndFor
\State \textbf{Output:} Denoised estimates $\big\{\,\hat{\mu}^{(N)}_{\lambda_\ell}\,\big\}_{\ell=1}^{L}$
\end{algorithmic}
\end{algorithm}

\subsection{Estimation of the Genlasso Solution Path}\label{Genlasso soln path}
Ryan Tibshirani in \citet{tibshirani-2011-solutiongenlasso} lays out a method by which (\ref{genlasso-definition}) may be estimated down to a normalization factor. It is derived using the Lagrange dual problem and is largely recreated here, along with clarifying detail. First, the simpler fused lasso problem is rescaled for convenience with the $\ell_1$ penalty on the first differences of $\mu$ written as the $\ell_1$ norm of a matrix product:\begin{equation}
    \label{fusedlasso_2}
    \min_{\mu\in\mathbbm{R}^n}\frac{1}{2}\sum_{t=1}^n(y_t-\mu_t)^2 + \lambda||D\mu||_1
\end{equation} where $D$ is the matrix of first differences, though  more general penalty matrices $D$ are also considered. In order to solve (\ref{fusedlasso_2}), the formula is rewritten as \begin{equation}
    \label{fusedlasso_3}
    \min_{\mu\in\mathbbm{R}^n,z\in\mathbbm{R}^m} \frac{1}{2}\sum_{t=1}^n(y_t-\mu_t)^2 + \lambda||z||_1 \qquad\text{subject to } D\mu=z.
\end{equation}
The Lagrangian is therefore \begin{equation}
    \frac{1}{2}\sum_{t=1}^n(y_t-\mu_t)^2 + \lambda||z||_1 + \nu^\top (D\mu-z)
\end{equation}
which can be minimized over $\mu$ and $z$ separately. The terms involving $\mu$ can be minimized as follows:\begin{align*}
    \min_{\mu}\bigg(\frac{1}{2}\sum_{t=1}^n(y_t-\mu_t)^2 + \nu^\top D\mu\bigg)&=\min_{\mu}\bigg(\frac{1}{2}(y-\mu)^\top (y-\mu) + \nu^\top D\mu\bigg)\\
    &=\min_{\mu}\bigg(\frac{1}{2}y^\top y-y^\top \mu + \frac{1}{2}\mu^\top \mu+\nu^\top D\mu\bigg)
\end{align*}
Taking the partial derivative of the above equation with respect to $\mu$ and setting equal to 0 yields\begin{equation}
\label{primal-dual correspondence}
    -y^\top  + \mu^\top  + \nu^\top D =0 \implies\mu = y-D^\top \nu.
\end{equation}
Plugging this value in for $\mu$ yields
\begin{align}
        \min_{\mu}\bigg(\sum_{t=1}^n(y-\mu_t)^2 + \nu^\top D\mu\bigg) &=\min_{\mu}\bigg(\frac{1}{2}(y-\mu)^\top (y-\mu) + \nu^\top D\mu\bigg)\nonumber\\
        &=\frac{1}{2}\bigg(D^\top \nu\bigg)^\top \bigg(D^\top \nu\bigg) + \nu^\top D\left(y-D^\top \nu\right)\nonumber\\
        &=-\frac{1}{2}\bigg\{- 2y^\top \left(D^\top \nu\right) + \bigg(D^\top \nu\bigg)^\top \bigg(D^\top \nu\bigg)\bigg\}\nonumber\\
        &=-\frac{1}{2}\bigg\{y^\top y- 2y^\top \left(D^\top \nu\right) + \bigg(D^\top \nu\bigg)^\top \bigg(D^\top \nu\bigg)\bigg\}+\frac{1}{2}y^\top y\nonumber\\
        &=-\frac{1}{2}\bigg(y-D^\top \nu\bigg)^\top \bigg(y-D^\top \nu\bigg) + y^\top y\nonumber\\
        &=-\frac{1}{2}\|y-D^\top\nu\|_2^2 + \frac{1}{2}y^\top y.\label{min_mu}
\end{align}
While minimizing over $z$ yields \begin{equation}
\label{minz}
    \min_{z}\left(\lambda||z||_1-\nu^\top z\right)=\begin{cases}
        0& ||\nu||_\infty\leq\lambda\\
        -\infty &\text{else,}
    \end{cases}
\end{equation}
where $||\cdot||_\infty$ is the $\ell_\infty$ norm. That is, \begin{equation*}
    ||\nu||_\infty = \max_i |\nu_i|.
\end{equation*}
To see that (\ref{minz}) is true, observe that if $\nu_i>\lambda$ for any $i$, \begin{align*}
    \min_z\left(\lambda|z_i|-\nu_iz_i\right)\to\left(\lambda-\nu_i\right)z_i\to-\infty\text{ as }z_i\to\infty.
\end{align*}
By contrast, if $\nu_i\leq\lambda$ for all $i$, then $||\nu||_\infty\leq\lambda$ and for all $i$, \begin{align*}
        &\min_z\left(\lambda|z_i|-\nu_iz_i\right)\to\left(\lambda-\nu_i\right)z_i\to\infty\text{ as }z_i\to\infty&\text{and}\\
        &\min_z\left(\lambda|z_i|-\nu_iz_i\right)\to-\left(\lambda+\nu_i\right)z_i\to\infty\text{ as }z_i\to-\infty&\text{and}\\
        &\min_z\left(\lambda|z_i|-\nu_iz_i\right)\to 0 \text{ as }z_i\to 0
\end{align*}
and thus (\ref{minz}) follows. Combining only the portion of (\ref{min_mu}) which depends on $\nu$ along with (\ref{minz}), the dual of (\ref{fusedlasso_2}) can be written as \begin{equation}
    \label{dual}\min_{\nu\in\mathbbm{R}^m}\frac{1}{2}\|y-D^\top\nu\|_2^2\qquad\text{subject to }||\nu||_\infty\leq\lambda.
\end{equation}
It is noted in \citet{tibshirani-2011-solutiongenlasso} that the dual therefore has a very convenient and intuitive set of constraints. Namely, the only $\nu$ of interest are those for which $||\nu||_\infty\leq\lambda$. This is simply a box in $m-$dimensional space, centered at the origin, and with each $\nu_i$ between $-\lambda$ and $\lambda$.
Matrix algebra will reveal that if rank$(D)<m$, (\ref{dual}) is not strictly convex. However, (\ref{fusedlasso_3}) is always strictly convex, always has a unique solution, and is strictly feasible as a result of being unconstrained. Therefore, strong duality holds and for any $\lambda$, by (\ref{primal-dual correspondence}) the relationship between the primal and dual solutions is:\begin{equation}
\label{primal-dual correspondence estimate}
    \hat{\mu} = y-D^\top \hat\nu
\end{equation}
As well, because of the constraint in (\ref{dual}), for any $\lambda$,\begin{equation}
\label{nu_i}
    \hat\nu_i \in\begin{cases}
        \{\lambda\}, & \text{if }(D\hat{\mu})_i>0,\\
        \{-\lambda\}, & \text{if }(D\hat{\mu})_i<0,\\
        [-\lambda,\lambda], & \text{if } (D\hat{\mu})_i=0.
    \end{cases}
\end{equation}
To see this, consider the Lagrangian of (\ref{dual}):\begin{equation}
    \label{lagrangian_of_dual}\mathcal{L} = \frac{1}{2}\|y-D^\top\nu\|_2^2 + \sum_{i=1}^n\alpha_i(\nu_i-\lambda) + \sum_{i=1}^n\beta_i(-\nu_i-\lambda)
\end{equation}
The KKT conditions ensure that the gradient of (\ref{lagrangian_of_dual}) vanishes at stationarity. That is, \begin{align*}
    0 &= \nabla_\nu\mathcal{L}\nonumber\\ &= -D(y-D^\top \nu) + \alpha-\beta \nonumber\\
    &\overset{(\ref{primal-dual correspondence estimate})}{=}-D\hat{\mu} + \alpha-\beta
\end{align*}Of course this means at stationarity for all $i$,\begin{equation*}
    (D\hat{\mu})_i = \alpha_i-\beta_i.
\end{equation*}Complementary slackness for these constraints states for all $i$:\begin{equation*}
    \alpha_i(\nu_i-\lambda)=0\qquad \text{and}\qquad  \beta_i(-\nu_i-\lambda)=0.
\end{equation*} 
Therefore, because $-\lambda\leq\nu_i\leq\lambda$ by (\ref{dual}), if $(D\hat{\mu})_i=0$ then $\alpha_i=\beta_i$ and all that is known about $\nu_i$ is that $\nu_i\in[-\lambda,\lambda]$. However, if $(D\hat{\mu})_i<0$ then $\alpha_i<\beta_i$ which implies $\alpha_i=0$ and $\beta_i>0$ (they cannot both be positive because this would imply that $\nu_i=\pm\lambda$ simultaneously, which is impossible), and this in turn implies that $\nu_i=-\lambda$. An analogous argument shows that if $(D\hat{\mu})_i>0$ then $\nu_i=\lambda$, and so (\ref{nu_i}) is shown.
The implication of $(\ref{nu_i})$ is that the coordinates of $\nu$ which are equal to $\pm\lambda$ correspond precisely to the coordinates of $D\hat{\mu}$ which are not forced to be zero. That is, if $\hat\nu_i=\pm\lambda$ then $|\mu_{i+1}-\mu_i|>0$. For future reference, the set of these indices is denoted by $\mathcal{B}$. That is,\begin{equation}
\label{B}
    \mathcal{B} = \{i:|\hat\nu_i|=\lambda\}.
\end{equation}
As well, the vector $s$ is defined in such a way as to capture the sign of the $\hat\nu_i$. If  $\hat\nu_i=\lambda$, then the entry in $s$ corresponding to it will be $1$, and if $\hat\nu_i=-\lambda$, then the entry in $s$ corresponding to it will be $-1$.
What follows in \citet{tibshirani-2011-solutiongenlasso} is a description of an algorithm used to solve (\ref{fusedlasso_2}) first in the case that $D=D_{1d}$ is the $(n-1)\times n$ matrix of first differences given in (\ref{D Matrix}), before generalizing the algorithm to solve the related equation,\begin{equation}
    \label{General X and D}    \min_{\beta\in\mathbbm{R}^n}\frac{1}{2}||y-X\beta||_2^2+\lambda||D\beta||_1,
\end{equation}
for general $X$ and $D$. First, a lemma called $\textit{The Boundary Lemma}$ is given and proven in supplemental materials \citet{tibshirani-2011-supplement_to_solutiongenlasso}, which states that for any coordinate $i$:
\begin{align*}
&\text{If }\hat\nu_i=\lambda_0, \text{ then }\hat\nu_i=\lambda\text{ for all }\lambda\in[0,\lambda_0]\text{,}\qquad\text{ and}\\
&\text{If }\hat\nu_i=-\lambda_0, \text{ then }\hat\nu_i=-\lambda\text{ for all }\lambda\in[0,\lambda_0].  
\end{align*}
To grasp this intuitively, consider that the constraint set $\{\nu:||\nu||_\infty\leq\lambda\}$ from (\ref{dual}) is a hypercube of side length $2\lambda\subseteq\mathbbm{R}^{n-1}$, and so it is natural to think of the $\nu_i$ with $i\in\mathcal{B}$ (that is, those $\nu_i$ such that $\nu_i=\pm\lambda$) as being on the boundary of this box. The Boundary Lemma states that if some coordinate of $\nu$ is on the boundary for some value of $\lambda$, then it will remain on the boundary as the box shrinks (as $\lambda\to0$). By The Boundary Lemma and because of the primal-dual correspondence articulated in (\ref{primal-dual correspondence estimate}) for the case of $D=D_{1d}$, inspection will reveal that  \begin{equation}
    \label{fusion of mu}
    \text{If }\hat\mu_i = \hat\mu_{i+1}\text{ for }\lambda=\lambda_0,\text{ then }\hat\mu_i=\hat\mu_{i+1}\text{ for }\lambda\geq\lambda_0.
\end{equation}
Stated plainly, this means that if for any value of $\lambda$, two consecutive coordinates of $\mu$ become ``fused", then they can never become ``unfused" for a greater value of $\lambda$. 
Next, some useful notation is introduced. If $\mathcal{B} = [i_1,\ldots,i_k]$, then for a matrix $A$ and vector $x$:
\begin{equation*}
    A_\mathcal{B} = \begin{bmatrix}
        A_{i_1}\\
        \vdots\\
        A_{i_k}
    \end{bmatrix}\qquad\text{and}\qquad x_\mathcal{B} = \begin{bmatrix}
        x_{i_1}\\
        \vdots\\
        x_{i_k}
    \end{bmatrix}
\end{equation*}
where $A_i$ is the $i^{th}$ row of $A$. That is, $A_\mathcal{B}$ is a matrix composed of the rows of $A$ whose indices are in $\mathcal{B}$. Similarly, $A_{-\mathcal{B}}$ and $x_{-\mathcal{B}}$ denotes the matrix or vector formed from the rows of $A$ or entries in $x$ respectively whose indices are not in $\mathcal{B}$. Note then that for any matrix $A$ and vector $x$ for which $A^\top x$ is defined,\begin{align*}
    [A^\top x]_i &=\sum_{j}A_{ji}x_j\\
    &=\sum_{j\in\mathcal{B}}A_{ji}x_j + \sum_{j\not\in\mathcal{B}}A_{ji}x_j\\
    &=[(A_\mathcal{B})^\top x_\mathcal{B}]_i + \left[(A_\mathcal{-B})^\top x_\mathcal{-B}\right]_i
\end{align*}
and therefore,\begin{equation}
\label{decomposition}
    A^\top x = (A_\mathcal{B})^\top x_\mathcal{B} + (A_\mathcal{-B})^\top x_\mathcal{-B}.
\end{equation}
The algorithm used to estimate a solution path for any $\lambda$ to (\ref{dual}) is now described below. Initially $\lambda$ is set to $\infty$. Clearly the solution is just the OLS solution to (\ref{dual}) as the problem is unconstrained. But suppose a solution had been found for some value of $\lambda_k$, so that $\mathcal{B} = \mathcal{B}_{\lambda_k}$ and $s=s_{\lambda_k}$. Then \begin{equation}
\label{dual fusion}
    \hat\nu_\mathcal{B} = \lambda s\qquad\text{for all}\qquad \lambda\in[0,\lambda_k].
\end{equation} Thus as $\lambda$ decreases, (\ref{dual}) the following reduction holds: \begin{align}
    \label{reduced dual}
    &\min_{\nu}\frac{1}{2}||y-D^\top \nu||_2^2\qquad\text{subject to }||\nu||_{\infty}\leq\lambda\nonumber\\
    &\overset{(\ref{decomposition})}{=}\min_{\nu}\frac{1}{2}||y-D^\top _\mathcal{B}\nu_\mathcal{B}-D^\top _\mathcal{-B}\nu_\mathcal{-B}||_2^2\qquad\text{subject to }||\nu||_{\infty}\leq\lambda\nonumber\\
    &\overset{(\ref{dual fusion})}=\min_{\nu_{-\mathcal{B}}}\frac{1}{2}||y-\lambda D^\top _\mathcal{B}s-D^\top _\mathcal{-B}\nu_\mathcal{-B}||_2^2\qquad\text{subject to }||\nu_{-\mathcal{B}}||_{\infty}\leq\lambda
\end{align}
which involves solving only for those coordinates which are in the interior; that is, the coordinates $\nu_i$ satisfying $\nu_i\in[-\lambda,\lambda]$. Because for $\lambda=\lambda_k$ these coordinates $\hat\nu_{-\mathcal{B}}$ must satisfy the previous statement, the constraint in (\ref{reduced dual}) is satisfied for free and the solution is the familiar OLS solution:\begin{equation}
    \hat\nu_{-\mathcal{B}} = \left(D_{-\mathcal{B}}(D_{-\mathcal{B}})^\top \right)^{-1}D_{-\mathcal{B}}\left(y-\lambda_k(D_\mathcal{B})^\top s\right).
\end{equation}
Tibshirani et al. then set the right side of this equality equal to $a-\lambda_kb$ for some vectors $a$ and $b$, and observe that for $\lambda\leq\lambda_k$, $\hat\nu_{-\mathcal{B}}$ will continue to equal $a-\lambda b$ until the moment when one of the coordinates hits the boundary, at which point that coordinate (say, at index $i$) will satisfy the equation $a_i-\lambda b_i=\pm\lambda$. Through simple algebraic manipulation, the value of $\lambda$ for each $i$ which satisfies this equation (denoted $t_i^\text{(hit)}$ for ``hitting time" in \citet{tibshirani-2011-solutiongenlasso}) is:\begin{equation}
    t_i^\text{(hit)} = \frac{a_i}{b_i\pm 1} = \frac{\left[\left(D_{-\mathcal{B}}(D_{-\mathcal{B}})^\top \right)^{-1}D_{-\mathcal{B}}\hspace{.1cm}y\right]_i}{\left[\left(D_{-\mathcal{B}}(D_{-\mathcal{B}})^\top \right)^{-1}D_{-\mathcal{B}}(D_\mathcal{B})^\top s\right]_i\pm 1}.
\end{equation}
Because only one of the above solutions will yield a value $t_i^\text{(hit)}\in[0,\lambda_k]$, the value satisfying this set membership is taken to be $t_i$ for each $i$. The next hitting time is denoted $h_{k+1} = \underset{i}{\text{max}}\hspace{3pt}t_i^\text{(hit)}$, and the next value for $\lambda$, $\lambda_{k+1}$ is chosen as $h_{k+1}$:\begin{equation*} \label{update}
    \lambda_{k+1} = h_{k+1}\quad,\quad i_{k+1}^\text{(hit)} = \underset{i}{\text{arg max }} t_i^\text{(hit)}\quad,\quad s_{k+1} = \text{sign}\left(\hat\nu_{\lambda_{k+1},i_{k+1}^\text{(hit)}}\right)
\end{equation*}
where $\hat\nu_{\lambda_{k+1},i_{k+1}^\text{(hit)}}$ denotes the estimated value of $\nu_{i_{k+1}^\text{(hit)}}$ when $\lambda=\lambda_{k+1}$. Then, $i_{k+1}^\text{(hit)}$ is appended to $\mathcal{B}$ and $s_{k+1}^\text{(hit)}$ is appended to $s$. The process iterates until every coordinate is on the boundary ($\mathcal{B}=\{i\}_{i=1}^{rank(D)}$). In the case of the fused lasso, $D=D_{1d}$, which results in $n-1$ iterations.

For a general $D$, the Boundary Lemma does not hold and so it isn't the case that once a coordinate of $\nu$ is on the boundary for some $\lambda$, it will always be on the boundary as $\lambda\to0$. Therefore, the ``leaving times" of the $\nu_i$ on the boundary must be calculated as well. The leaving time of the $i^{th}$ boundary coordinate is \begin{equation}
    \label{leaving time}
    t_i^\text{(leave)} = \begin{cases}
        c_i/d_i,&\text{if $c_i<0$ and $d_i<0$,}\\
        0& \text{otherwise,}
    \end{cases}
\end{equation}
where \begin{align*}
    c_i &=s_i\cdot\left[D_\mathcal{B}\left[I-\left(D_{-\mathcal{B}}\right)^\top\left(D_{-\mathcal{B}}\left(D_{-\mathcal{B}}\right)^\top\right)^+D_{-\mathcal{B}}\right]y\right]_i\\
    \\
    d_i &= s_i\cdot\left[D_\mathcal{B}\left[I-\left(D_{-\mathcal{B}}\right)^\top\left(D_{-\mathcal{B}}\left(D_{-\mathcal{B}}\right)^\top\right)^+D_{-\mathcal{B}}\right]\left(D_\mathcal{B}\right)^\top s\right]_i\hspace{4pt}.
\end{align*}
The derivation of the above is omitted here but can be found in \citet{tibshirani-2011-supplement_to_solutiongenlasso}. The next leaving time is therefore taken as \begin{equation*}
    l_{k+1} = \max_i t_i^\text{(leave)}\quad,\quad i_{k+1}^{\text{(leave)}} = \underset{i}{\text{arg max }}t_{i}^{\text{(leave)}}\quad,\quad s_{k+1}^{\text{(leave)}} = \text{sign}\left(\hat\nu_{\lambda_{k+1},i_{k+1}^{\text{(leave)}}}\right).
\end{equation*}
In this case, at the $k^{th}$ step the next $\lambda$ is taken to be $\lambda_{k+1} = \max\{h_{k+1}, l_{k+1}\}$. If $h_{k+1}>l_{k+1}$ then the hitting coordinate $i_{k+1}^{\text{(hit)}}$ is added to $\mathcal{B}$ and its sign $s_{k+1}^{\text{(hit)}}$ is appended to $s$. Otherwise, the leaving coordinate $i_{k+1}^{\text{(leave)}}$  and its sign $s_{k+1}^{\text{(leave)}}$ are deleted from $\mathcal{B}$ and $s$, respectively. As in the case that $D=D_{1d}$ above, the process iterates until a full model is calculated. The derivation and statement of the estimation procedure in the case of a general design matrix $X$, while abridged here, is also contained in \citet{tibshirani-2011-supplement_to_solutiongenlasso}.

\subsection{TV-ADMM}\label{app:tvadmm}

TV-ADMM, introduced in \citet{boyd-2011-admm} as a more general ADMM algorithm before specializing for the total variation (generalized lasso) variant, begins solving (\ref{genlasso2d}) by first introducing an auxiliary variable $z$: 
\begin{equation}\label{auxiliary}
\min_{\mu, z} \quad \frac{1}{2} \|y - \mu\|_2^2 + \lambda \|z\|_1 \quad \text{subject to} \quad D\mu - z = 0.
\end{equation}
As is common practice in convex optimization, the Lagrangian of (\ref{auxiliary}) is then found:
\begin{equation}\label{lagrangian}
\mathcal{L}(\mu, z, \nu) = \frac{1}{2} \|y - \mu\|_2^2 + \lambda \|z\|_1 + \nu^\top (D\mu - z).
\end{equation}
As an added step to improve numerical stability and convergence, a quadratic penalty term is also added to the constraint:
\begin{equation}\label{augmented lagrangian}
\mathcal{L}_\rho(\mu, z, \nu) = \frac{1}{2} \|y - \mu\|_2^2 + \lambda \|z\|_1 + \nu^\top (D\mu - z) + \frac{\rho}{2} \|D\mu - z\|_2^2.
\end{equation}
In order to reduce the number of variables over which one must optimize, the
scaled dual variable $u := \frac{1}{\rho} \nu$ is defined, so that $\nu = \rho u$. Substituting this into the augmented Lagrangian gives:
\begin{equation}\label{scaled augmented Lagrangian}
\mathcal{L}_\rho(\mu, z, u) = \frac{1}{2} \|y - \mu\|_2^2 + \lambda \|z\|_1 + \rho u^\top (D\mu - z) + \frac{\rho}{2} \|D\mu - z\|_2^2.
\end{equation}
There is now a convenient way to write the last two terms by completing the square:
\begin{align}
\rho u^\top (D\mu - z) + \frac{\rho}{2} \|D\mu - z\|_2^2 
&= \frac{\rho}{2} \left( 2 u^\top (D\mu - z) + \|D\mu - z\|_2^2 \right) \nonumber \\
&=\frac{\rho}{2}\left(\|u\|_2^2+2u^\top(D\mu-z)+\|D\mu-z\|_2^2\right) - \frac{\rho}{2}\|u\|_2^2\nonumber \\
&= \frac{\rho}{2}  \|D\mu - z + u\|_2^2 - \frac{\rho}{2}\|u\|_2^2 .\label{last two terms}
\end{align}
Substituting the above back into equation (\ref{scaled augmented Lagrangian}), the scaled augmented Lagrangian becomes:
\begin{equation}\label{augmented - lagrangian}
\mathcal{L}_\rho(\mu, z, u) = \frac{1}{2} \|y - \mu\|_2^2 + \lambda \|z\|_1 + \frac{\rho}{2} \|D\mu - z + u\|_2^2 - \frac{\rho}{2} \|u\|_2^2.
\end{equation}
The penalty parameter $\rho$ is user-specified and affects convergence speed. A typical choice is $\rho = 1$. Equation (\ref{augmented - lagrangian}) is then minimized by a repeating sequence of updates until convergence. Variables $\mu$ and $z$  are initialized to zero, while $\mu$ is initialized to $y$:
\[
\mu^{(0)} = y, \quad z^{(0)} = 0, \quad u^{(0)} = 0
\]
with updates as follows:
\begin{align*}
\mu^{(k+1)} &\leftarrow \underset{\mu}{\text{arg min }} \ \frac{1}{2}\|y - \mu\|_2^2 + \frac{\rho}{2}\|D \mu - z^{(k)} + u^{(k)}\|_2^2 \\
z^{(k+1)} &\leftarrow \underset{z}{\text{arg min }} \ \lambda\|z\|_1 + \frac{\rho}{2}\|D \mu^{(k+1)} - z + u^{(k)}\|_2^2 \\
u^{(k+1)} &\leftarrow u^{(k)} + D \mu^{(k+1)} - z^{(k+1)}.
\end{align*}
These updates are derived below.
\subsubsection*{Updates: $\mu$}
The $\mu$-update in TV-ADMM is obtained by minimizing the scaled augmented Lagrangian with respect to $\mu$, keeping $z$ and $u$ fixed at their current iterates. From the augmented Lagrangian:
\begin{equation*}
\mathcal{L}_\rho(\mu, z, u) = \frac{1}{2} \|y - \mu\|_2^2 + \lambda \|z\|_1 + \frac{\rho}{2} \|D\mu - z + u\|_2^2 - \frac{\rho}{2} \|u\|_2^2,
\end{equation*}
the terms involving $\mu$ are isolated:
\begin{equation*}
 \frac{1}{2} \|y - \mu\|_2^2 + \frac{\rho}{2} \|D\mu - z^{(k)} + u^{(k)}\|_2^2.
\end{equation*}
This is a strictly convex quadratic problem in $\mu$, so the optimality condition can be derived by taking the gradient with respect to $\mu$ and setting it to zero. Differentiating the augmented Lagrangian with respect to  $\mu$ yields:
\[
\nabla_\mu \hspace{2pt}\mathcal{L} = (\mu - y) + \rho D^\top (D\mu - z^{(k)} - u^{(k)})
\]
while setting it equal to zero yields
\[
\mu - y + \rho D^\top (D\mu - z^{(k)} - u^{(k)}) = 0.
\]
Upon rearrangement, this becomes
\[
\mu + \rho D^\top D \mu = y + \rho D^\top z^{(k)} - u^{(k)}
\]
which in turn yields the normal equations
\[
( I + \rho D^\top D ) \mu = y + \rho D^\top ( z^{(k)} - u^{(k)} ).
\]
Therefore, $\mu^{(k+1)}$ is determined to be 

\begin{equation*}
\mu^{(k+1)} = ( I + \rho D^T D ) ^{-1}\left(y + \rho D^T ( z^{(k)} - u^{(k)} )\right).
\end{equation*}
\subsubsection*{Updates: $z$}
Next, the $z$ auxiliary variable is updated. By the KKT conditions, at stationarity it must be the case that for all $1\leq i\leq m$ \begin{equation}
    0 \in\nabla_{z_i}\hspace{2pt}\mathcal{L} 
\end{equation}where the subgradient is used because the derivative of $z_i$ is undefined if $z_i=0$. Applying this condition to (\ref{augmented - lagrangian}) yields \begin{equation}\label{inclusion}
    0\in-\rho([D\mu^{(k+1)}]_i-z_i+u^{(k)}_i)+\partial\lambda|z_i|
\end{equation}
where the  subgradient $\partial|z_i|$ is defined as \begin{equation*}
    \partial\lambda|z_i| = \begin{cases}
        \{\lambda\cdot\text{sign}(z_i)\} & z_i\neq 0 \\
        [-\lambda,\lambda] & z_i=0.
    \end{cases}
\end{equation*}
Equation (\ref{inclusion}) means the same thing as 
\begin{equation*}
    \rho([D\mu^{(k+1)}]_i-z_i+u^{(k)}_i)\in\partial\lambda|z_i|,
\end{equation*}
and so, for $z_i\neq0$, \begin{equation*}
    z_i = [D\mu^{(k+1)}+u^{(k)}]_i-\frac{\lambda}{\rho}\cdot\text{sign}(z_i).
\end{equation*}
This is nothing but  $[D \mu^{(k+1)} + u^{(k)}]_i$ after soft-thresholding by $\lambda/\rho$. Therefore,\begin{align*}
    z^{(k+1)}
    &=S_{\lambda/\rho}( D \mu^{(k+1)} + u^{(k)} )
\end{align*}
where $S_{\delta}(\cdot)$ is the elementwise soft-thresholding operator:
\begin{equation*}
S_{\delta}(a) = \text{sign}(a)\cdot ( |a| - \delta)_+
\end{equation*}
\subsubsection*{Updates: $u$}
To update $u$, note that the gradient of the augmented Lagrangian with respect to the dual variable is
\begin{equation*}
    \nabla_u \mathcal{L}_\rho = \rho(D \mu^{(k+1)} - z^{(k+1)}).
\end{equation*}
Thus, the update on $u$ corresponds to a gradient ascent step in the dual variable:
\begin{align*}
    u^{(k+1)} &= u^{(k)} + \eta \, \nabla_u \mathcal{L}_\rho \\
    &= u^{(k)} + \eta \, \rho (D \mu^{(k+1)} - z^{(k+1)}).
\end{align*}
Substituting $\eta = 1/\rho$---which arises naturally from the scaling $u = \nu / \rho$ of the unscaled dual variable---yields the standard scaled ADMM update:
\begin{equation*}
    u^{(k+1)} = u^{(k)} + D \mu^{(k+1)} - z^{(k+1)}.
\end{equation*}
This update represents a unit step in the scaled variable formulation of ADMM, which can be interpreted as a gradient ascent step with effective step size $1/\rho$. 
Convergence of this canonical update is established in \citet{boyd-2011-admm} for any fixed $\rho > 0$ under the usual convexity and saddle-point assumptions.
\subsection{\texorpdfstring{$\mathrm{BIC}_\phi$}{BIC-phi} Derivation}\label{app:bic}

The following derivation outlines how the standard BIC is modified by incorporating the Cholesky factorization of $R_\phi^{-1}$, leading to an explicit form of $\mathrm{BIC}_\phi$ used in the empirical studies of this dissertation.

Recall per \eqref{BIC}, for a fitted model with maximized likelihood $\mathcal{L}(\hat\theta)$, number of estimated parameters $k$, and sample size $n$, the BIC is defined as
\begin{equation}
\label{appendix BIC}
\text{BIC} = -2 \log \mathcal{L}(\hat\theta) + k \log n.
\end{equation}

In the case of Gaussian AR(1) errors with parameter $\phi$ and design matrix $X$,
\begin{equation}
\label{distribution of y}
y\sim\mathcal{N}(X\beta,\sigma^2R_\phi)
\end{equation}
where $R_\phi$ is the correlation matrix from \eqref{AR1_corr}. The full likelihood of $y$ is then
\begin{equation}
\label{full likelihood}
\mathcal{L}(\theta) = (2\pi\sigma^2|R_\phi|)^{-1/2}\exp\left( -\frac{1}{2\sigma^2} (y - X\beta)^\top R_\phi^{-1} (y - X\beta) \right).
\end{equation}
Therefore, the log-likelihood of $y$ is
\begin{align}
\label{loglikelihood}
&\log \mathcal{L}(\theta) = \nonumber\\
\nonumber\\
&-\frac{n}{2}\log(2\pi) - \frac{n}{2}\log \sigma^2 - \frac{1}{2} \log |R_\phi| - \frac{1}{2\sigma^2}(y - X\beta)^\top R_\phi^{-1} (y - X\beta).
\end{align}
The MLE of $\beta$ is found by the generalized least-squares formula:
\begin{equation}
\label{gls formula}
\hat\beta^{\text{GLS}} = (X^\top R_\phi^{-1}X)^{-1}X^\top R_\phi^{-1}y,
\end{equation}
while the MLE of $\sigma^2$ is given by
\begin{equation}
\label{sigma2 MLE}
\hat\sigma^2 = \frac{1}{n}(y - X\hat\beta^{\text{GLS}})^\top R_\phi^{-1} (y - X\hat\beta^{\text{GLS}}) = \frac{1}{n}\text{SSE}_\phi.
\end{equation}
Substituting $\hat\beta^\text{GLS}$ and $\hat\sigma^2$ into the log-likelihood in equation \eqref{loglikelihood} yields

\begin{equation}
\label{maximized loglikelihood}
\log \mathcal{L}(\hat\theta) = -\frac{n}{2} \log(2\pi) - \frac{n}{2} \log\left( \frac{\text{SSE}_\phi}{n} \right) - \frac{1}{2} \log |R_\phi| - \frac{n}{2}.
\end{equation}
As the constant terms do not vary between models and hence do not affect model selection, they are omitted when computing BIC. Thus, after multiplication by $-2$ and adding $k\cdot \log n$, the resulting BIC simplifies to \eqref{BIC AR1}:

\begin{equation*}
\text{BIC}_\phi = n \log\left(\frac{\text{SSE}_\phi}{n}\right) + k \log(n) + \log |R_\phi|.
\end{equation*}

\bibliographystyle{plainnat}
\bibliography{references.bib}

@article{fryzlewicz-2014-WBS,
  title={Wild Binary Segmentation for Multiple Change-Point Detection},
  author={Fryzlewicz, Piotr},
  journal={The Annals of Statistics},
  volume={42},
  number={6},
  pages={2243--2281},
  year={2014},
  publisher={Institute of Mathematical Statistics},
  DOI={10.1214/14-AOS1245}
}

@techreport{Perron-1991_ERPmemo,
  author       = {Perron, Pierre and Vogelsang, Tim},
  title        = {Nonstationarity and level shifts with an application to purchasing power parity},
  institution  = {Princeton University, Department of Economics, ERP Memo},
  number       = {359},
  year         = {1991},
  url          = {https://www.princeton.edu/~erp/ERParchives/archivepdfs/M359.pdf}
}

@article{chen-1993-joint,
  title     = {Joint estimation of model parameters and outlier effects in time series},
  author    = {Chen, Changhu and Liu, Lien-Te},
  journal   = {Journal of the American Statistical Association},
  volume    = {88},
  number    = {421},
  pages     = {284--297},
  year      = {1993},
  publisher = {Taylor \& Francis},
  doi       = {10.1080/01621459.1993.10594321}
}

@article{box-1975-intervention,
  author       = {Box, George E. P. and Tiao, George C.},
  title        = {Intervention Analysis with Applications to Economic and Environmental Problems},
  journal      = {Journal of the American Statistical Association},
  year         = {1975},
  volume       = {70},
  number       = {349},
  pages        = {70--79},
  doi          = {10.1080/01621459.1975.10480264},
  url          = {https://www.jstor.org/stable/2285379}
}

@book{burnham-2002-model,
  title     = {Model Selection and Multimodel Inference: A Practical Information-Theoretic Approach},
  author    = {Burnham, Kenneth P. and Anderson, David R.},
  year      = {2002},
  edition   = {2nd},
  publisher = {Springer},
  address   = {New York}
}

@article{schwarz-1978-bic,
  author  = {Schwarz, Gideon E.},
  title   = {Estimating the Dimension of a Model},
  journal = {The Annals of Statistics},
  year    = {1978},
  volume  = {6},
  number  = {2},
  pages   = {461--464},
  doi     = {10.1214/aos/1176344136}
}

@article{baranowski-2019-narrowest,
  title={Narrowest-Over-Threshold Detection of Multiple Change Points and Change-Point-Like Features},
  author={Baranowski, Rafal and Chen, Yining and Fryzlewicz, Piotr},
  journal={Journal of the Royal Statistical Society: Series B (Statistical Methodology)},
  volume={81},
  number={3},
  pages={649--672},
  year={2019},
  publisher={Royal Statistical Society},
  DOI={10.1111/rssb.12322}
}

@book{csorgo-1997-CUSUM,
  title={Limit Theorems in Change-Point Analysis},
  author={Cs{\"o}rg{\"o}, Mikl{\'o}s and Horv{\'a}th, Lajos},
  publisher={John Wiley \& Sons},
  year={1997},
  series={Wiley Series in Probability and Statistics},
  isbn={978-0-471-95522-1}
}

@article{auger-1989-op,
  title={Algorithms for the Optimal Identification of Segment Neighborhoods},
  author={Auger, Ivan E and Lawrence, Charles E},
  journal={Bulletin of Mathematical Biology},
  volume={51},
  number={1},
  pages={39--54},
  year={1989},
  publisher={Elsevier}
}

@article{scott-1974-cluster,
  title={A Cluster Analysis Method for Grouping Means in the Analysis of Variance},
  author={Scott, A. J. and Knott, M.},
  journal={Biometrics},
  volume={30},
  number={3},
  pages={507--512},
  year={1974},
  publisher={International Biometric Society},
  doi={10.2307/2529204}
}

@article{fearnhead-2019-cpop,
  title={Detecting Changes in Slope with an \(L_0\) Penalty},
  author={Fearnhead, Paul and Maidstone, Robert and Letchford, Adam},
journal = {Journal of Computational and Graphical Statistics},
volume = {28},
number = {2},
pages = {265--275},
year = {2019},
publisher = {ASA Website},
URL ={https://doi.org/10.1080/10618600.2018.1512868},
eprint = { https://doi.org/10.1080/10618600.2018.1512868},
note={Supplementary material available at \url{https://www.tandfonline.com/doi/suppl/10.1080/10618600.2018.1512868}}
}

@article{killick-2012-pelt,
  title={Optimal Detection of Changepoints With a Linear Computational Cost},
  author={Killick, Rebecca and Fearnhead, Paul and Eckley, Idris A},
  journal={Journal of the American Statistical Association},
  volume={107},
  number={500},
  pages={1590--1598},
  year={2012},
  publisher={Taylor \& Francis},
  url={https://doi.org/10.1080/01621459.2012.737745}
}

@article{jackson-2005-op,
  author    = {Brad Jackson and Jeffrey D. Scargle and David Barnes and Sundararajan Arabhi and Alina Alt and Peter Gioumousis and Elyus Gwin and Paungkaew Sangtrakulcharoen and Linda Tan and Tun Tao Tsai},
  title     = {An algorithm for optimal partitioning of data on an interval},
  journal   = {IEEE Signal Processing Letters},
  volume    = {12},
  number    = {2},
  pages     = {105--108},
  year      = {2005},
  doi       = {10.1109/LSP.2004.840679},
  note      = {Available at: \url{https://ieeexplore.ieee.org/document/1381461}}
}

@article{fan-2001-variable,
  title={Variable Selection via Nonconcave Penalized Likelihood and its Oracle Properties},
  author={Fan, Jianqing and Li, Runze},
  journal={Journal of the American Statistical Association},
  volume={96},
  number={456},
  pages={1348--1360},
  year={2001},
  url= {https://doi.org/10.1198/016214501753382273}
}

@article{zou-2006-adaptive,
  title={The Adaptive Lasso and its Oracle Properties},
  author={Zou, Hui},
  journal={Journal of the American Statistical Association},
  volume={101},
  number={476},
  pages={1418--1429},
  year={2006},
  url={https://doi.org/10.1198/016214506000000735}
}

@article{tibshirani-1996-lasso,
  title={Regression Shrinkage and Selection Via the Lasso},
  author={Tibshirani, Robert},
  journal={Journal of the Royal Statistical Society: Series B (Statistical Methodology)},
  volume={58},
  number={1},
  pages={267--288},
  year={1996},
  url={https://doi.org/10.1111/j.2517-6161.1996.tb02080.x}
}

@article{rojas-2014-fusedlasso,
  title={On change point detection using the fused lasso method},
  author={Rojas, Cristian R and Wahlberg, Bo},
  journal={arXiv preprint arXiv:1401.5408},
  year={2014},
  url={https://arxiv.org/abs/1401.5408}
}

@article{gallagher-2014-adalasso,
  title={Detection of multiple undocumented change-points using adaptive Lasso},
  author={Shen, Jie and Gallagher, Colin M. and Lu, QiQi},
  journal={Journal of Applied Statistics},
  volume={41},
  number={6},
  pages={1161--1173},
  year={2014},
  url={https://doi.org/10.1080/02664763.2013.862220}
}

@article{harchaoui-2010-tv,
  title={Multiple Change-Point Estimation With a Total Variation Penalty},
  author={Harchaoui, Zaïd and Lévy-Leduc, Céline},
  journal={Journal of the American Statistical Association},
  volume={105},
  number={492},
  pages={1480--1493},
  year={2010},
  url={https://doi.org/10.1198/jasa.2010.tm09181}
}

@article{ribeiro-2016-homogenisation,
  title={Review and discussion of homogenisation methods for climate data},
  author={Ribeiro, Sara and Caineta, Júlio and Costa, Ana Cristina},
  journal={Physics and Chemistry of the Earth, Parts A/B/C},
  volume={94},
  pages={167--179},
  year={2016},
  url={https://doi.org/10.1016/j.pce.2015.08.007}
}

@article{domonkos-2021-homogenisation,
  title={Efficiency of Time Series Homogenization: Method Comparison with 12 monthly Temperature Test Datasets},
  author={Domonkos, Peter and Guijarro, José A. and Venema, Victor and Brunet, Manola and Sigró, Javier},
  journal={Journal of Climate},
  volume={34},
  number={8},
  pages={2877--2891},
  year={2021},
  url={https://doi.org/10.1175/JCLI-D-20-0611.1}
}

@article{page-1955-test,
  title={A test for a change in a parameter occurring at an unknown point},
  author={Page, E. S.},
  journal={Biometrika},
  volume={42},
  number={3/4},
  pages={523--527},
  year={1955},
  url={https://doi.org/10.1093/biomet/42.3-4.523}
}

@phdthesis{kirch-2006-SCUSUM,
  title={Resampling Methods for the Change Analysis of Dependent Data},
  author={Kirch, Claudia},
  year={2006},
  school={Universität zu Köln},
  type={PhD Thesis},
  url={https://kups.ub.uni-koeln.de/1795/} 
}

@article{sen-1975-tests,
  title={On Tests for Detecting Change in Mean},
  author={Sen, Ashish and Srivastava, Muni S.},
  journal={The Annals of Statistics},
  volume={3},
  number={1},
  pages={98--108},
  year={1975},
  doi= {10.1214/aos/1176343001}
}

@mastersthesis{ghassany-adafused-2010,
  author    = {Ghassany, Mohammad},
  title     = {The adaptive fused lasso regression and its application on microarrays CGH data},
  school    = {Università degli Studi di Milano},
  year      = {2010},
  url       = {https://www.mghassany.com/bibliography/files/ghassany-master-thesis.pdf},
}

@article{bloch-1946-nmr,
  title={Nuclear Induction},
  author={Bloch, Felix},
  journal={Physical Review},
  volume={70},
  number={7-8},
  pages={460},
  year={1946},
  url={https://doi.org/10.1103/PhysRev.70.460}
}

@article{kleinberg-1992-nmr,
  title={Novel NMR apparatus for investigating an external sample},
  author={Kleinberg, Robert L. and Sezginer, A. and Griffin, D.D. and Fukuhara, M.},
  journal={Journal of Magnetic Resonance (1969)},
  volume={97},
  number={3},
  pages={466--485},
  year={1992},
  url={https://doi.org/10.1016/0022-2364(92)90028-6}
}

@article{tibshirani-2011-solutiongenlasso,
  title={The solution path of the generalized lasso},
  author={Tibshirani, Ryan J. and Taylor, Jonathan},
  journal={Annals of Statistics},
  volume={39},
  number={3},
  pages={1335--1371},
  year={2011},
  url={https://doi.org/10.1214/11-AOS878}
}

@article{tibshirani-2011-supplement_to_solutiongenlasso,
  author  = {Tibshirani, Ryan J. and Taylor, Jonathan},
  title   = {Supplement A: Proofs and technical details for ``The solution path of the generalized lasso''},
  journal = {The Annals of Statistics},
  volume  = {39},
  number  = {3},
  pages   = {1335--1371},
  year    = {2011},
  url     = {https://doi.org/10.1214/11-AOS878SUPP},
  note    = {Supplement}
}

@article{fearnhead-2019-detecting,
  title={Changepoint Detection in the Presence of Outliers},
  author={Fearnhead, Paul and Rigaill, Guillem},
  journal={Journal of the American Statistical Association},
  volume={114},
  number={525},
  pages={169--183},
  year={2019},
  doi={10.1080/01621459.2017.1385466}
}

@article{Maidstone-2017-algorithms,
  title = {On optimal multiple changepoint algorithms for large data},
  author = {Maidstone, Robert and Hocking, Toby and Rigaill, Guillem and Fearnhead, Paul},
  journal = {Statistics and Computing},
  volume = {27},
  number = {2},
  pages = {519--533},
  year = {2017},
  url = {https://doi.org/10.1007/s11222-016-9636-3}
}

@article{Rigaill-2012-criteria,
  title = {Exact posterior distributions and model selection criteria for multiple change-point detection problems},
  author = {Rigaill, Guillem and Lebarbier, Émilie and Robin, Stéphane},
  journal = {Statistics and Computing},
  volume = {22},
  number = {4},
  pages = {917--929},
  year = {2012},
  url = {https://doi.org/10.1007/s11222-011-9258-8}
}

@article{Chakar-2017-AR1Seg,
  title = {A robust approach for estimating change-points in the mean of an AR(1) process},
  author = {Chakar, Souhil and Lebarbier, Émilie and Lévy-Leduc, Céline and Robin, Stéphane},
  journal = {Bernoulli},
  volume = {23},
  number = {2},
  pages = {1408--1447},
  year = {2017},
  url = {https://doi.org/10.3150/15-BEJ782}
}

@article{hampel-1974-MAD,
  author = {Hampel, Frank R.},
  title = {The Influence Curve and Its Role in Robust Estimation},
  journal = {Journal of the American Statistical Association},
  volume = {69},
  number = {346},
  pages = {383--393},
  year = {1974},
  url = {https://doi.org/10.1080/01621459.1974.10482962}
}

@article{cho-2024-WCM,
  title={Multiple change point detection under serial dependence: Wild contrast maximisation and gappy Schwarz algorithm},
  author={Cho, Haeran and Fryzlewicz, Piotr},
  journal={Journal of Time Series Analysis},
  volume={45},
  number={3},
  pages={479--494},
  year={2024},
  url={https://doi.org/10.1111/jtsa.12722}
}

@article{lund-2023-practices,
  title={Good Practices and Common Pitfalls in Climate Time Series Changepoint Techniques: A Review},
  author={Lund, Robert B. and Beaulieu, Claudie and Killick, Rebecca and Lu, Qiqi and Shi, Xueheng},
  journal={Journal of Climate},
  volume={36},
  number={23},
  pages={8041--8057},
  year={2023},
  publisher={American Meteorological Society},
  doi={10.1175/JCLI-D-22-0954.1}
}

@article{beaulieu-2024-globalwarmingsurge,
  title={A recent surge in global warming is not detectable yet},
  author={Beaulieu, Claudie and Gallagher, Colin and Killick, Rebecca and Lund, Robert and Shi, Xueheng},
  journal={Communications Earth \& Environment},
  volume={5},
  number={576},
  year={2024},
  month={10},
  doi={10.1038/s43247-024-01711-1},
  url={https://www.nature.com/articles/s43247-024-01711-1}
}

@online{nasa_gis,
  author    = {NASA GISS},
  title     = {GISS Surface Temperature Analysis (GISTEMP)},
  year      = {2025},
  url       = {https://data.giss.nasa.gov/gistemp/},
  note      = {Accessed: 2025-03-13}
}

@article{natarajan-1995-sparse,
  author = {Natarajan, B. K.},
  title = {Sparse Approximate Solutions to Linear Systems},
  journal = {SIAM Journal on Computing},
  volume = {24},
  number = {2},
  pages = {227--234},
  year = {1995},
  doi = {10.1137/S0097539792240406}
}

@book{garey-1979-computers,
  author = {Garey, Michael R. and Johnson, David S.},
  title = {Computers and Intractability: A Guide to the Theory of NP-Completeness},
  publisher = {W. H. Freeman and Company},
  year = {1979},
  isbn = {978-0-7167-1045-5}
}

@misc{tibshirani-2014-closerlook,
  title        = {A Closer Look at Sparse Regression},
  author       = {Tibshirani, Ryan},
  note         = {Amended by Larry Wasserman},
  year         = 2014,
  url          = {https://www.stat.cmu.edu/~larry/=sml/sparsity.pdf}
}

@article{Rigaill-2010-PrunedSN,
  author    = {Guillem Rigaill},
  title     = {Pruned dynamic programming for optimal multiple change-point detection},
  journal   = {arXiv preprint arXiv:1004.0887},
  year      = {2010},
  url       = {https://arxiv.org/abs/1004.0887}
}

@phdthesis{gao-2020-variance,
  title     = {Variance Change Point Detection under A Smoothly-changing Mean Trend with Application to Liver Procurement},
  author    = {Zhenguo Gao},
  school    = {Virginia Polytechnic Institute and State University},
  year      = {2018},
  type      = {Ph.D. dissertation},
  url       = {https://vtechworks.lib.vt.edu/items/f3138cd3-684a-4a87-aaa3-a97a99cd1988}
}

@article{Rinaldo-2009-Refinements,
  author    = {Alessandro Rinaldo},
  title     = {Properties and Refinements of the Fused Lasso},
  journal   = {The Annals of Statistics},
  volume    = {37},
  number    = {5B},
  pages     = {2922--2952},
  year      = {2009},
  doi       = {10.1214/08-AOS665},
  eprint    = {0805.0234},
  archivePrefix = {arXiv},
  primaryClass = {math.ST}
}

@article{kuhn-1955-hungarian,
  author = {Kuhn, Harold W.},
  title = {The Hungarian Method for the Assignment Problem},
  journal = {Naval Research Logistics Quarterly},
  volume = {2},
  number = {1-2},
  pages = {83--97},
  year = {1955},
  doi = {10.1002/nav.3800020109},
  publisher = {Wiley Online Library}
}

@article{Lund-2007-seasonality,
  author       = {Lund, Robert and Wang, Xiaolan L. and Lu, Qi Qi and Reeves, Jaxk and Gallagher, Colin and Feng, Yang},
  title        = {Changepoint Detection in Periodic and Autocorrelated Time Series},
  journal      = {Journal of Climate},
  volume       = {20},
  number       = {10},
  pages        = {5178--5190},
  month        = oct,
  year         = {2007},
  doi          = {10.1175/JCLI4291.1},
}

@article{boyd-2011-admm,
  title={Distributed optimization and statistical learning via the alternating direction method of multipliers},
  author={Boyd, Stephen and Parikh, Neal and Chu, Eric and Peleato, Borja and Eckstein, Jonathan},
  journal={Foundations and Trends in Machine Learning},
  volume={3},
  number={1},
  pages={1--122},
  year={2011},
  publisher={Now Publishers},
  doi={10.1561/2200000016}
}

@misc{pietrzyk-2025-dirtmu_dataset,
  author       = {Pietrzyk, Przemysław and others},
  title        = {{DIRT/$\mu$} root hair image dataset (All\_Supplement\_zip)},
  howpublished = {\url{https://figshare.com/articles/dataset/All_Supplement_zip/24886500/2}},
  year         = {2025},
  note         = {Dataset associated with Pietrzyk et al. "DIRT/$\mu$: automated extraction of root hair traits using combinatorial optimization"},
  doi          = {10.6084/m9.figshare.24886500.v2}
}

@article{pietrzyk-2024-dirtmu,
  author       = {Pietrzyk, Peter and Phan‑Udom, Neen and Chutoe, Chartinun and Pingault, Lise and Roy, Ankita and Libault, Marc and Saengwilai, Patompong Johns and Bucksch, Alexander},
  title        = {{DIRT/$\mu$}: Automated Extraction of Root Hair Traits Using Combinatorial Optimization},
  journal      = {Journal of Experimental Botany},
  volume       = {76},
  number       = {2},
  pages        = {285--298},
  year         = {2025},
  doi          = {10.1093/jxb/erae385},
  url          = {https://academic.oup.com/jxb/article/76/2/285/7756301}
}

@article{chupetlovska-2025-esr,
  title        = {ESR Essentials: a step‑by‑step guide of segmentation for radiologists—practice recommendations by the European Society of Medical Imaging Informatics},
  author       = {Chupetlovska, Kalina and Akinci D’Antonoli, Tuğba and Bodalal, Zuhir and Abdelatty, Mohamed A. and Erenstein, Hendrik and Santinha, João and Huisman, Merel and Visser, Jacob J. and Trebeschi, Stefano and Groot Lipman, Kevin B.W.},
  journal      = {European Radiology},
  year         = {2025},
  note         = {Invited review, Open Access},
  doi          = {10.1007/s00330-025-11621-1}
}

@article{song-2021-mri,
  title        = {Magnetic Resonance Imaging Segmentation via Weighted Level Set Model Based on Local Kernel Metric and Spatial Constraint},
  author       = {Song, Jianhua and Zhang, Zhe},
  journal      = {Entropy},
  volume       = {23},
  number       = {9},
  pages        = {1196},
  year         = {2021},
  doi          = {10.3390/e23091196}
}

@inproceedings{kaur-2021-review,
  title       = {Review on Medical Image Denoising Techniques},
  author      = {Kaur, Simarjeet and Singla, Jimmy and Nikita and Singh, Amar},
  booktitle   = {2021 International Conference on Innovative Practices in Technology and Management (ICIPTM)},
  pages       = {61--66},
  month       = feb,
  year        = {2021},
  publisher   = {IEEE},
  doi         = {10.1109/ICIPTM52218.2021.9388367}
}

@article{paton-2021-mr,
  title   = {MR Denoising Increases Radiomic Biomarker Precision and Reproducibility in Oncologic Imaging},
  author  = {Patón, M.F. and Iglesias, J.E. and Batchelor, P.G. and others},
  journal = {Journal of Digital Imaging},
  volume  = {34},
  number  = {5},
  pages   = {1134--1145},
  year    = {2021},
  doi     = {10.1007/s10278-021-00512-8}
}

@article{Dadi-2025-RootEx,
  title     = {RootEx: An automated method for barley root system extraction and evaluation},
  author    = {Dadi, Maichol and Lumini, Alessandra and Franco, Annalisa},
  journal   = {Computers and Electronics in Agriculture},
  volume    = {230},
  pages     = {110030},
  year      = {2025},
  doi       = {10.1016/j.compag.2025.110030},
  url       = {https://doi.org/10.1016/j.compag.2025.110030}
}

@article{Handy-2024-RootAnnotationVariation,
  title        = {Variation in forest root image annotation by experts, novices, and a CNN},
  author       = {Handy, G. and others},
  journal      = {Plant Methods},
  year         = {2024},
  volume       = {20},
  pages        = {12779},  
  doi          = {10.1186/s13007-024-01279-z},
}

@article{miller-1994-climate,
  title = {The 1976–77 Climate Shift of the Pacific Ocean},
  author = {Miller, Arthur J. and Cayan, Daniel R. and Barnett, Tim P. and Graham, Nicholas E. and Oberhuber, Josef M.},
  journal = {Oceanography},
  volume = {7},
  number = {1},
  pages = {21--26},
  year = {1994},
  doi = {10.5670/oceanog.1994.11}
}

@article{hare-2000-empirical,
  title = {Empirical Evidence for North Pacific Regime Shifts in 1977 and 1989},
  author = {Hare, Steven R. and Mantua, Nathan J.},
  journal = {Progress in Oceanography},
  volume = {47},
  number = {2-4},
  pages = {103--145},
  year = {2000},
  doi = {10.1016/S0079-6611(00)00033-1}
}

@incollection{IPCC-2001-Chapter3,
  author       = {{Intergovernmental Panel on Climate Change (IPCC)}},
  title        = {Chapter 3: The Carbon Cycle and Atmospheric Carbon Dioxide},
  booktitle    = {Climate Change 2001: The Scientific Basis},
  series       = {Contribution of Working Group I to the Third Assessment Report of the IPCC},
  publisher    = {Cambridge University Press},
  address      = {Cambridge, UK / New York, USA},
  year         = {2001},
}

@article{li-2019-elNino1979_80,
  author    = {Li, X. and others},
  title     = {Contributions of atmosphere–ocean interaction and low‑frequency variability to El Ni\~no events during 1979–2017},
  journal   = {Journal of Climate},
  volume    = {32},
  number    = {5},
  pages     = {1379--1398},
  year      = {2019}
}

@article{hoerling-1997-ENSO,
  author  = {Hoerling, Martin P. and Kumar, Arun},
  title   = {Origins of Extreme Climate States During the Exceptional 1982--83 El Ni\~no Event},
  journal = {Journal of Climate},
  volume  = {10},
  number  = {11},
  pages   = {2859--2870},
  year    = {1997}
}

@article{hansen-1992-pinatubo,
  author  = {Hansen, James and Lacis, Andrew and Ruedy, Reto and Sato, Makiko},
  title   = {Potential Climate Impact of Mount Pinatubo Eruption},
  journal = {Geophysical Research Letters},
  volume  = {19},
  number  = {2},
  pages   = {215--218},
  year    = {1992},
  doi     = {10.1029/91GL02788}
}

@article{lheureux-2017-ENSO,
  author  = {L’Heureux, Michelle L. and Stein, Alfred F. and Kumar, Arun and others},
  title   = {Observing and Predicting the 2015/16 El Ni\~no in the United States},
  journal = {Bulletin of the American Meteorological Society},
  volume  = {98},
  number  = {7},
  pages   = {1363--1382},
  year    = {2017},
  doi     = {10.1175/BAMS-D-16-0009.1}
}

@article{lund-1995-climatological,
  author    = {Lund, Robert B. and Hurd, Harry L. and Bloomfield, Peter and Smith, Richard L.},
  title     = {Climatological Time Series with Periodic Correlation},
  journal   = {Journal of Climate},
  year      = {1995},
  volume    = {8},
  number    = {11},
  pages     = {2787--2809},
  doi       = {10.1175/1520-0442(1995)008<2787:CTSWPC>2.0.CO;2}
}

@article{beaulieu-2012-changepoint_CO2,
  author  = {Beaulieu, Claudie and Chen, Jie and Sarmiento, Jorge L.},
  title   = {Change-point analysis as a tool to detect abrupt climate variations},
  journal = {Philosophical Transactions of the Royal Society A},
  year    = {2012},
  volume  = {370},
  number  = {1962},
  pages   = {1228--1249},
  doi     = {10.1098/rsta.2011.0383}
}

@article{robbins-2020-flexible_changepoint,
  author  = {Robbins, Michael W.},
  title   = {A Fully Flexible Change-Point Test for Climate Time Series Applied to the Mauna Loa CO$_2$ Record},
  journal = {Statistica Sinica},
  year    = {2020},
  volume  = {30},
  pages   = {1671--1692}
}

@misc{noaa_maunaloa_co2_trends,
  author       = {National Oceanic and Atmospheric Administration Global Monitoring Laboratory},
  title        = {Trends in atmospheric carbon dioxide at Mauna Loa Observatory},
  howpublished = {\url{https://gml.noaa.gov/ccgg/trends/}},
  note         = {Data record initiated in March 1958 by C. D. Keeling},
  year         = {2025},
}

@Manual{pinheiro-bates-nlme,
  title        = {nlme: Linear and Nonlinear Mixed Effects Models},
  author       = {Jos{\'e} Pinheiro and Douglas Bates and Saikat DebRoy and Deepayan Sarkar and R Core Team},
  year         = {2025},
  note         = {R package version 3.1-168},
  url          = {https://CRAN.R-project.org/package=nlme},
  doi          = {10.32614/CRAN.package.nlme},
}

@article{zhao-2006-modelselection,
  title   = {On Model Selection Consistency of {Lasso}},
  author  = {Zhao, Peng and Yu, Bin},
  journal = {Journal of Machine Learning Research},
  volume  = {7},
  pages   = {2541--2563},
  year    = {2006},
  url     = {https://www.jmlr.org/papers/volume7/zhao06a/zhao06a.pdf}
}

@article{wainwright-2009-sharp-thresholds,
  title   = {Sharp Thresholds for High-Dimensional and Noisy Recovery of Sparsity Using $\ell_{1}$-Constrained Quadratic Programming ({Lasso})},
  author  = {Wainwright, Martin J.},
  journal = {Annals of Statistics},
  volume  = {37},
  number  = {5A},
  pages   = {2465--2493},
  year    = {2009},
  doi     = {10.1214/08-AOS653}
}

@article{bickel-2009-lasso-dantzig,
  title   = {Simultaneous Analysis of {Lasso} and Dantzig Selector},
  author  = {Bickel, Peter J. and Ritov, Ya'acov and Tsybakov, Alexandre B.},
  journal = {Annals of Statistics},
  volume  = {37},
  number  = {4},
  pages   = {1705--1732},
  year    = {2009},
  doi     = {10.1214/08-AOS620}
}

@book{buhlmann-2011-hdstat,
  title     = {Statistics for High-Dimensional Data: Methods, Theory and Applications},
  author    = {B{\"u}hlmann, Peter and van de Geer, Sara},
  year      = {2011},
  publisher = {Springer},
  series    = {Springer Series in Statistics},
  isbn      = {978-3-642-20191-2},
  doi       = {10.1007/978-3-642-20192-9}
}

@article{liu-2018-change,
  title={Change-point detection in panel data with applications in mortality forecasting},
  author={Liu, Shu and Yao, Weixin and Yu, Yan},
  journal={Lifetime Data Analysis},
  volume={24},
  number={4},
  pages={569--593},
  year={2018},
  publisher={Springer},
  doi={10.1007/s10985-018-9418-9}
}

@article{poppick-2016-trends,
  title={Estimating trends in the global mean temperature record},
  author={Poppick, Andrew and Moyer, Elisabeth and Stein, Michael L.},
  journal={arXiv preprint arXiv:1607.03855},
  year={2016},
  url={https://arxiv.org/abs/1607.03855}
}

@article{schotz-2025-rethinking,
  title={Rethinking Climate Econometrics},
  author={Sch{\"o}tz, Martin and Hassel, Malte and Otto, Christian},
  journal={arXiv preprint arXiv:2505.18033},
  year={2025},
  url={https://arxiv.org/abs/2505.18033}
}

@book{hamilton-1994-series,
  author    = {Hamilton, James D.},
  year      = {1994},
  title     = {Time Series Analysis},
  publisher = {Princeton University Press},
  address   = {Princeton, NJ}
}

@book{brockwell-davis-2016-forecasting,
  author    = {Brockwell, Peter J. and Davis, Richard A.},
  year      = {2016},
  title     = {Introduction to Time Series and Forecasting},
  edition   = {3rd},
  publisher = {Springer},
  address   = {New York}
}

@article{Ma-2000-RobustAR,
  author = {Ma, Yan and Genton, Marc G.},
  title = {Highly robust estimation of the autocorrelation coefficient},
  journal = {Journal of Time Series Analysis},
  year = {2000},
  volume = {21},
  number = {6},
  pages = {663--684}
}

@book{hastie-2015-sparsity,
  author    = {Trevor Hastie and Robert Tibshirani and Martin Wainwright},
  title     = {Statistical Learning with Sparsity: The Lasso and Generalizations},
  year      = {2015},
  publisher = {Chapman and Hall/CRC},
  address   = {Boca Raton, FL}
}

@article{efron-2004-lar,
  author    = {Bradley Efron and Trevor Hastie and Iain Johnstone and Robert Tibshirani},
  title     = {Least Angle Regression},
  journal   = {Annals of Statistics},
  year      = {2004},
  volume    = {32},
  number    = {2},
  pages     = {407--499}
}

@article{hastie-2020-bestsubset,
  author    = {Trevor Hastie and Ryan J. Tibshirani and Martin Wainwright},
  title     = {Best Subset, Forward Stepwise, or Lasso? Analysis and Recommendations Based on Extensive Comparisons},
  journal   = {Statistical Science},
  year      = {2020},
  volume    = {35},
  number    = {4},
  pages     = {579--592}
}

@article{Shi-2022-comparison,
  title={A comparison of single and multiple changepoint techniques for time series data},
  author={Shi, Xueheng and Gallagher, Colin and Lund, Robert and Killick, Rebecca},
  journal={Computational Statistics \& Data Analysis},
  volume={170},
  pages={107433},
  year={2022},
  publisher={Elsevier}
}

@article{Lund-2020-WBScomments,
  title={Detecting possibly frequent change-points: wild binary segmentation 2 and steepest-drop model selection},
  author={Lund, Robert and Shi, Xueheng},
  journal={Journal of the Korean Statistical Society},
  volume={49},
  number={4},
  pages={1090--1095},
  year={2020},
  publisher={Springer}
}

@article{Shi-2022-autocovariance,
  title={Autocovariance estimation in the presence of changepoints},
  author={Gallagher, Colin and Killick, Rebecca and Lund, Robert and Shi, Xueheng},
  journal={Journal of the Korean Statistical Society},
  volume={51},
  number={4},
  pages={1021--1040},
  year={2022},
  publisher={Springer}
}

@article{Shi-2022-CET,
  title={Changepoint detection: An analysis of the Central England temperature series},
  author={Shi, Xueheng and Beaulieu, Claudie and Killick, Rebecca and Lund, Robert},
  journal={Journal of Climate},
  volume={35},
  number={19},
  pages={6329--6342},
  year={2022}
}

@misc{noaa_crvfit,
  author       = {{NOAA Global Monitoring Laboratory}},
  title        = {Curve Fitting to Atmospheric CO\textsubscript{2} Time Series},
  year         = {2025},
  howpublished = {\url{https://gml.noaa.gov/ccgg/mbl/crvfit/crvfit.html}},
  note         = {Accessed: 2025-11-20}
}

@article{candes-2008-reweighted,
  title        = {Enhancing Sparsity by Reweighted {$\ell_1$} Minimization},
  author       = {Cand{\`e}s, Emmanuel J. and Wakin, Michael B. and Boyd, Stephen P.},
  journal      = {Journal of Fourier Analysis and Applications},
  volume       = {14},
  number       = {5},
  pages        = {877--905},
  year         = {2008},
  doi          = {10.1007/s00041-008-9045-x}
}

\end{document}